\newcounter{algno}
\titleformat{\chapter}{\singlespacing\bf\huge\raggedright}{\thechapter.}{20pt}{\huge\bf\nonumber}
\titlespacing{\chapter}{0pt}{0pt}{40pt}
\newcommand{\ORD}{\mathcal{O}}
\newcommand{\E}{\mathbf{E}}
\newcommand{\Wt}{\widetilde{W}}
\newcommand{\It}{\tilde{I}}
\newcommand{\cov}{\mathrm{cov}}
\newcommand{\tsbs}{\textsubscript}
\newcommand{\tsps}{\textsuperscript}
\def\beq{\begin{equation}}
\def\eeq{\end{equation}}
\def\beqa{\begin{eqnarray}}
\def\eeqa{\end{eqnarray}}
\def\beqann{\begin{eqnarray*}}
\def\eeqann{\end{eqnarray*}}
\def\bcb{\begin{changebar}}
\def\ecb{\end{changebar}}
\def\Z{{\mathbb Z}}
\theoremstyle{definition}
\newtheorem{definition}{Definition}
\theoremstyle{definition}
\newtheorem{proposition}{Proposition}
\theoremstyle{definition}
\newtheorem{lemma}{Lemma}
\title{Multiscale modelling of replicated nonstationary time series}
\author[1]{Jonathan Embleton\thanks{Corresponding author: je687@york.ac.uk}}
\author[1]{Marina I. Knight}
\author[2]{Hernando Ombao}
\affil[1]{Department of Mathematics, University of York, UK}
\affil[2]{King Abdullah University of Science and Technology (KAUST), Saudi Arabia}
\begin{document}
\begin{spacing}{1}
\maketitle

\begin{abstract}
Within the neurosciences, to observe variability across time in the dynamics of an underlying brain process is neither new nor unexpected.  Wavelets are essential in analyzing brain signals because, even within a single trial, brain signals exhibit nonstationary behaviour. However, neurological signals generated within an experiment may also potentially exhibit evolution across trials (replicates). As neurologists consider localised spectra of brain signals to be most informative, here we develop a novel wavelet-based tool capable to formally represent process nonstationarities across both time and replicate dimensions. Specifically, we propose the Replicate Locally Stationary Wavelet (RLSW) process, that captures the potential nonstationary behaviour within and across trials. Estimation using wavelets gives a natural desired time- and replicate-localisation of the process dynamics. We develop the associated  spectral estimation framework and establish its asymptotic properties. By means of thorough simulation studies, we demonstrate the theoretical estimator properties hold in practice. A real data investigation into the evolutionary dynamics of the hippocampus and nucleus accumbens during an associative learning experiment, demonstrate the applicability of our proposed methodology, as well as the new insights it provides.
\end{abstract}

{\em Keywords:} replicate time series; cross-trial dependence; neuroscience; wavelet-based spectra

\section[1]{Introduction}\label{sec:intro}

In an experimental setting consisting of repeated trials, inference is typically carried out on the average dynamics of the underlying process over all trials. However, a recent study on neurological signals \citep{fo:medbp} suggests that this approach is naive due to its failure to account for the possibility of a change in the process dynamics over the course of the experiment. Their data example focusses on the hippocampus (Hc) and the nucleus accumbens (NAc), both known to play important roles in cognitive processing as they are individually associated with memory recall and the processing of reward, respectively. Recordings of electrical activity (at approximately 1000Hz) using local field potentials (LFPs) were obtained from the Hc and NAc of an awake behaving macaque during an associative learning experiment. For each trial, the macaque was presented with one of four pictures and was then tasked with associating this picture with one of the four doors appearing on the screen. Upon making a correct association, the macaque was rewarded with a small quantity of juice. Plots of the LFPs obtained from trials in which a correct association was made are shown in Figures~\ref{fig:rb7_Ct_Hc} and \ref{fig:rb7_Ct_NAc}. Variability in neuronal activity within both brain regions has also been observed over the trials of a learning experiment in other recent studies,  including \cite{segcin:dynamics, Gorrombao:monkeydepend, abela:2015}. Such traits present the challenge of modelling time series that display potential nonstationary behaviour not only across time, but also across trials.

\begin{figure}[ht!]
\begin{center}
\hspace*{-2em}
\resizebox{!}{0.6\textwidth}{\includegraphics{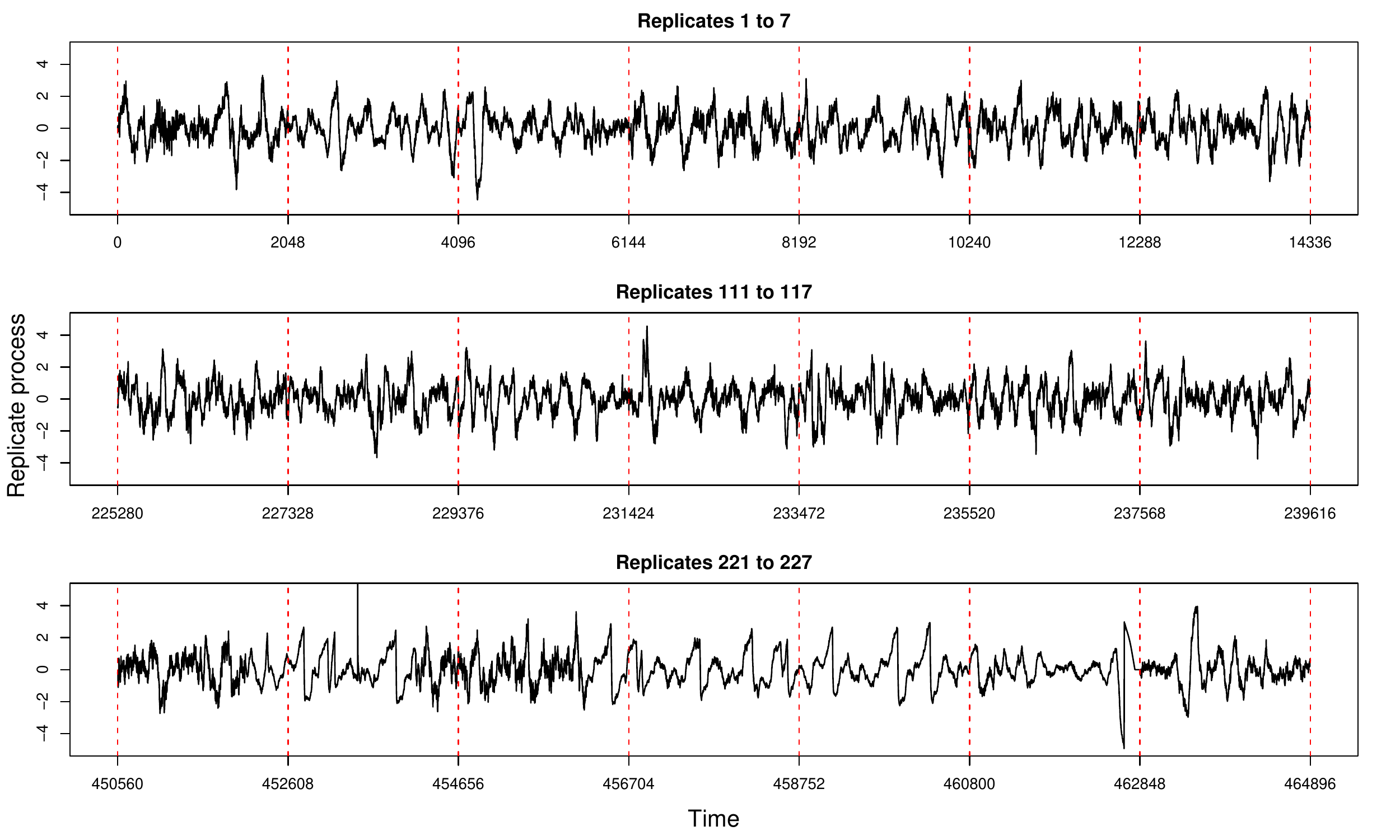}}
\caption{\small Concatenated series of the hippocampus (Hc) data in the correct response trials.}
\label{fig:rb7_Ct_Hc}
\end{center}
\end{figure}

\begin{figure}[ht!]
\begin{center}
\hspace*{-2em}
\resizebox{!}{0.6\textwidth}{\includegraphics{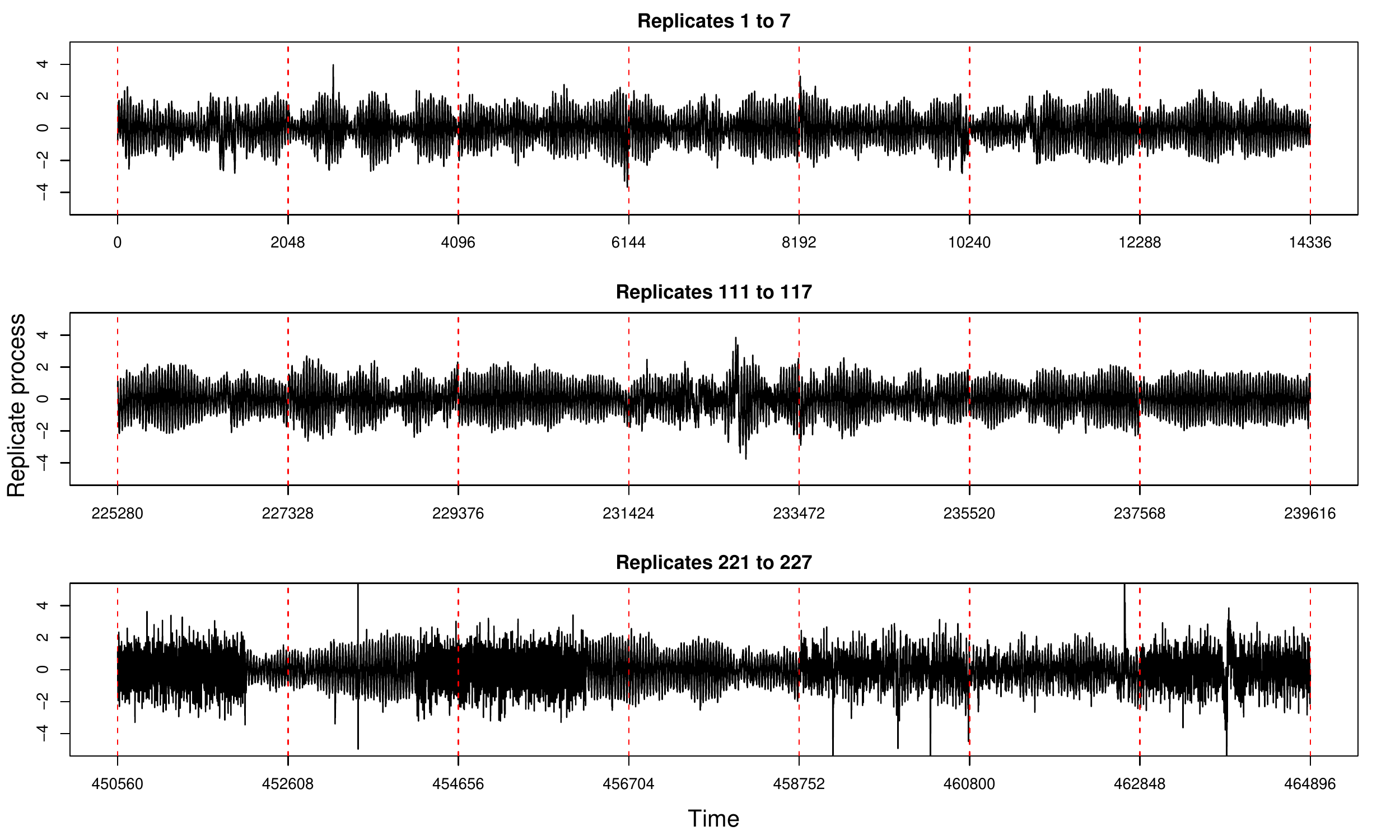}}
\caption{\small Concatenated series of the nucleus accumbens (NAc) data in the correct response trials.}
\label{fig:rb7_Ct_NAc}
\end{center}
\end{figure}

In the specific context of brain signals, the usefulness of time-scale spectral decompositions that are typical of wavelet constructions has already been established in the literature \citep{sand:2010, park:mvlsw}. Our aim is therefore to develop a {\em wavelet}-based model that, in the spectral domain, captures in a scale-dependent manner not only the evolutionary dynamics of the underlying brain process across time (within each trial) but also across the trials of the experiment.  An important assumption typically undertaken is that of independent/ uncorrelated trials. However, some studies document evidence of correlation across trials in an experiment \citep{arieli:1996, huk:2018}. Thus, one important feature of our model is that it accounts for the dependence between trials by means of a coherence quantity that acts as a measure for cross-trial dependence.

Time series data encountered in practice are often of a nonstationary nature and much recent research has been concerned with developing statistical models that capture this behaviour within the series. \cite{priest:1965} was first to study in detail nonstationary processes with time varying spectral characteristics. Further models using Fourier waveforms and adaptations of, have been developed by \cite{dahl:1997} who introduced the methodology of locally stationary processes which allowed for asymptotics to be established, and \cite{omb:2002} whom utilise the SLEX (smoothed localised complex exponentials) library of waveforms which give localisation in both time and frequency. Similar representations have been developed using wavelets. Approaches that use wavelet thresholding for smoothing the spectra of locally stationary time series have been considered by \cite{vSsch:1996} and \cite{neumvs:1997}. The locally stationary wavelet framework was developed by \cite{nvsk:2000}. Their approach used a set of discrete non-decimated wavelets to replace the Fourier exponentials used in \cite{dahl:1997}, and this in turn offered localisation in both time and scale. Extensions to the multivariate setting appear in the work of \cite{omb:2005}, \cite{sand:2010} and \cite{park:mvlsw}, where a structure for coherence is also embedded in the model.

What is not well accounted for in the literature, is the statistical modelling of second order nonstationarity across a collection of (constant mean) time series arising from the same experiment. In earlier work for stationary time series, \cite{digwas:1997} accounted for the stochastic variation that arises across replicated time series via a subject-specific realisation of a random process. They proposed a generalised linear mixed-effects model to estimate population characteristics where the subject-specific random term in the model captures the cross-subject variation. Extension to the nonstationary case has been addressed by \cite{QGL:2009} whom, through the locally stationary processes framework, developed a time-frequency functional model, where the time-varying log-spectra determine the evolution of the stochastic variation. More recently, \cite{fo:medbp} proposed a model that captures the evolution in the spectral characteristics of dynamic brain processes across a collection of trials in an associative learning experiment. Their methodology, developed for multivariate analysis, captures process evolution within a trial and across trials through a Fourier spectrum. The authors refer to trials as {\em replicates}, a term we will also borrow in our nomenclature.

Our approach acknowledges the work of \cite{fo:medbp}, who developed their time--replicate model using Fourier waveforms, but the limitation of their work is the assumption of uncorrelated replicates. To the best of our knowledge, our proposed methodology is the first to model nonstationary stochastic variation across replicate time series in the {\em wavelet domain}. Major advantages of our model are that (i) it offers the superior time-localisation typical of wavelet constructions, and (ii) it takes into account the correlation of brain signals across trials. We propose to model the replicate process within a locally stationary wavelet process paradigm that builds upon the framework introduced by \cite{nvsk:2000} for a single process (here, replicate or trial). Two new models are developed, first under the constraint of uncorrelated replicates, and then relaxing this assumption and allowing for replicate correlation. This amounts to developing novel evolutionary wavelet quantities and associated estimation theory that encompass variation {\em both} across time and replicate. To obtain well-behaved, consistent spectral estimates we propose to perform local smoothing of  the raw wavelet peridograms across replicates, as opposed to employing the smoothing over time typically undertaken in the locally stationary process context. Replicate-coherence estimation theory is also treated and is shown to provide useful information about the process replicate evolution.

The article proceeds as follows. Section~\ref{sec:model} introduces our proposed model as well as its associated estimation theory, both under the assumption of replicate uncorrelation. Section~\ref{sec:sims} details simulation studies that illustrate the behaviour of the proposed methodology. Section~\ref{sec:modelext} describes the new model that allows for correlation between replicate processes. Corresponding estimation theory is developed in Section~\ref{sec:modelext}. Section~\ref{sec:simsext} illustrates through simulation the advantage of our proposed work, both for within- and across- replicate behaviour characterisation. Section~\ref{sec:rda} details an application of the proposed methodologies to a real data study within neuroscience, and Section~\ref{sec:concl} concludes the paper.

\section{Theoretical model}\label{sec:model}

\subsection*{Brief introduction of locally stationary wavelet processes}
Before we describe the framework for the \textit{replicate locally stationary wavelet} model, we recall some of the defining features of the \textit{locally stationary wavelet} (LSW) framework of \cite{nvsk:2000}. The LSW model provides a \textit{time-scale} representation of nonstationary time series with time-varying second order structure, where the building blocks are the discrete non-decimated wavelets (see \cite{vid:smbw} or \cite{nas:wms} for an extensive introduction to wavelets).
For $T=2^{J(T)}$, a sequence of stochastic processes $\{X_{t;T}\}_{t=0,...,T-1}$ is a LSW process if it admits the representation
\begin{equation*} \label{eq:lsw}
X_{t;T} = \sum_{j=1}^{\infty}\sum_{k\in\mathbb{Z}} \omega_{j,k;T}\psi_{j,k}(t)\xi_{j,k},
\end{equation*}
where for scale $j$ and location $k$, $\omega_{j,k;T}$ is the amplitude corresponding to the discrete non-decimated wavelet $\psi_{j,k}(t)$ and $\{\xi_{j,k}\}$ are a set of orthonormal random variables. Modelling under the concept of local stationarity means that the variation of the amplitudes $\{\omega_{j,k;T}\}_k$, happens slowly over time and this is controlled by a smoothly varying continuous Lipschitz function $W_{j}(k/T)$, that can be thought of as a scale ($j$) and time ($k$) dependent transfer function \citep{frynas:HFisz}. \cite{nvsk:2000} further propose the \textit{evolutionary wavelet spectrum} (EWS) as a means to quantify the contribution to the overall process variance at a scale $j$ and rescaled time $z=k/T$ and define this as $S_{j}(z) = |W_{j}(z)|^2$. The raw wavelet periodogram is used for estimation of the EWS and is defined as $I_{j,k;T}=|d_{j,k;T}|^2$ where $d_{j,k;T}$ denotes the process wavelet coefficient at scale $j$ and location $k$ associated to a discrete non-decimated family of wavelets as defined in \cite{nvsk:2000}.

The original LSW model does not capture the dynamics of time series data recorded for several trials over the course of an entire experiment. This setting presents additional challenges, notably the fact that these signals behave in a way that is nonstationary at multiple scales, (i) within the signal in each trial, and (ii) across trials over the course of the entire experiment.

\subsection{Replicate Locally Stationary Wavelet (RLSW) process}
\begin{definition}\label{def:rlsw}
We define a sequence of stochastic processes $\{X_{t;T}^{r;R}\}$, with time $t=0, \ldots, T-1$ where $T=2^{J(T)}$ and replicate $r=0, \ldots, R-1$ where $R=2^{J(R)}$ to be a \textit{replicate locally stationary wavelet} process if it admits the following representation
\begin{equation} \label{eq:rlsw}
X_{t;T}^{r;R} = \sum_{j=1}^{\infty}\sum_{k\in\mathbb{Z}} \omega_{j,k;T}^{r;R}\psi_{j,k}(t)\xi_{j,k}^{r},
\end{equation}
where within each replicate $r$, for a scale $j$ and time $k$, $\omega_{j,k;T}^{r;R}$ are the amplitudes for the non-decimated wavelets $\psi_{j,k}(t)$ and $\{\xi_{j,k}^{r}\}$ are a set of orthonormal random variables with properties as detailed below. Letting $\nu = r/R$ denote rescaled replicate and $z=k/T$ denote rescaled within-trial time, the quantities in~\eqref{eq:rlsw} possess the following properties:
\begin{enumerate}
\item For all $j$, $k$ and $r$, $\E[\xi_{j,k}^{r}] = 0 \ (\Rightarrow \E[X_{t;T}^{r;R}] = 0)$.
\item $\E[\xi_{j,k}^{r} \xi_{j',k'}^{r'}] = \delta_{j,j'}\delta_{k,k'}\delta_{r,r'} \ (= \text{cov}(\xi_{j,k}^{r}, \xi_{j',k'}^{r'}))$. This amounts to assuming uncorrelated replicates.
\item For each scale $j \geq 1$, there exists a Lipschitz continuous transfer function in both rescaled time ($z$) and rescaled replicate ($\nu$), denoted by $\Wt_{j}(z,\nu)$ with the following properties

\begin{enumerate}
\item {\setlength{\abovedisplayskip}{0pt} \begin{align}\label{eq:unifW}
\sum_{j=1}^{\infty} \left|\Wt_{j}(z,\nu)\right|^{2} < \infty \ \text{uniformly in } z \in (0,1), \nu \in (0,1).
\end{align} }
\item Let $L_j^\nu$ denote the bounded Lipschitz constant corresponding to the time dimension at a particular (rescaled) replicate ($\nu$) and scale $j$. Similarly, denote by $N_j^z$  the bounded Lipschitz constant corresponding to the replicate dimension at a particular (rescaled) time ($z$) and scale $j$. Denote $L_j=\sup_{\substack{\nu\in (0,1)}} L_j^\nu$ and $N_j=\sup_{\substack{z\in (0,1)}} N_j^z$, and assume they are uniformly bounded in $j$. Further assume that
\begin{equation*}\label{eq:condL}
\sum_{j=1}^{\infty} 2^{j}L_{j} < \infty \text{ and }
\sum_{j=1}^{\infty} 2^{j} N_j< \infty.
\end{equation*}
\item \label{enum:seq1} There exist sequences of bounded replicate-specific constants $\{C_{j}^{r}\}_r$ and location-specific constants $\{D_{j}^{k}\}_k$, such that for each $T$ and $R$ respectively, the amplitudes are forced to vary slowly, in the sense that
\begin{align}
\sup_{\substack{k=0: T-1}} \left| \omega_{j,k;T}^{r;R} - \Wt_{j}\left(\frac{k}{T},\frac{r}{R}\right) \right| &\leq \frac{C_{j}^{r}}{T}, \quad \forall j, r,\label{eq:supWr}\\
\sup_{\substack{r=0: R-1}}\left| \omega_{j,k;T}^{r;R} - \Wt_{j}\left(\frac{k}{T},\frac{r}{R}\right)\right| &\leq \frac{D_{j}^{k}}{R}, \quad \forall j, k. \label{eq:supWt}
\end{align}
Denote $C_j=\sup_r C_j^r$ and $D_j=\sup_k D_j^k$ and assume the sequences $\{C_{j}\}$, $\{D_{j}\}$ fulfill $\sum_{j=1}^{\infty} 2^{j}C_{j} < \infty$ and  $\sum_{j=1}^{\infty} 2^{j}D_{j} < \infty$.
\end{enumerate}
\end{enumerate}
\end{definition}

\noindent{\bf Remark (rescaled time and replicate evolution).} Within each scale $j$, the transfer function $\Wt_{j}(z,\nu)$ controls the evolution of the amplitudes, forcing them to vary slowly over {\em both} rescaled time ($z$) and replicate ($\nu$) dimensions. The evolution of the amplitudes over time within each replicate happens in a smooth manner. The evolution across replicates is such that while the spectral properties of different replicates may also be different, however across neighbouring replicates there is a larger degree of commonality. Nevertheless, further apart replicates may display different traits. Such a meta-process evolution appears later in Figure~\ref{fig:wavplots_SIM3} (Section~\ref{sec:sims}).

\subsection{Replicate evolutionary wavelet spectrum}

As is common in spectral domain analysis (both Fourier and wavelet-based), we do not work directly with the time- and replicate-specific multiscale transfer functions $\{\Wt_{j}(\cdotp,\cdotp)\}_{j}$, but instead we define a transformed version that quantifies the contribution to the process variance attributed across scales ($j$) to each time {\em and} replicate.

As noted, the current LSW model and its related quantities are not capable of capturing the multiscale evolution of brain signals. Here, we develop a novel evolutionary wavelet spectrum corresponding to a RLSW process, a quantity that for simplicity we refer to as the {\em replicate evolutionary wavelet spectrum}.

\begin{definition}
The replicate evolutionary wavelet spectrum (REWS) at scale $j$, rescaled replicate $\nu$, rescaled within-trial time $z$ is given by
\begin{equation}\label{eq:rews}
S_{j}(z,\nu) = \left|\Wt_{j}(z,\nu)\right|^2 = \lim_{\substack{T\to\infty \\ R\to\infty}}\left(\left|\omega_{j,\lfloor zT\rfloor;T}^{\lfloor \nu R\rfloor;R}\right|^2\right),
\end{equation}
where $\lfloor zT\rfloor$ and $\lfloor \nu R \rfloor$ denote the largest integer less than or equal to $zT$ and $\nu R $, respectively.
\end{definition}

Note that from equations~\eqref{eq:supWr} and ~\eqref{eq:supWt} we directly obtain that for each $T$ and $R$ we have
\begin{align}\label{eq:approx}
\sup_{\substack{r=0: R-1}}\sup_{\substack{k=0: T-1}}\left|\omega_{j,k;T}^{r;R} - \Wt_{j}\left(\frac{k}{T},\frac{r}{R}\right)\right| &= \ORD(C_{j}{T}^{-1}) + \ORD(D_{j}{R}^{-1}),
\end{align}
hence the right-hand equality in equation~\eqref{eq:rews}.

\noindent{\bf Remark (RLSW versus LSW processes).} An innovation of the proposed RLSW model is to impose within each scale not only a smooth spectral behaviour across each (replicate) time series,  but also to constrain the `meta'-spectral evolution across replicates to happen in a smooth manner, as detailed by the conditions in Definition~\ref{def:rlsw}.  Note that a replicate locally stationary wavelet (RLSW) process is thus {\em not} to be understood only as a collection of locally stationary wavelet (LSW) processes that happen to be observed across several replicates, as this would limit its capacity to represent multiscale behaviour across trials.

\noindent{\bf Remark (bounded variation jumps).} Our theoretical development could of course be extended to encompass bounded variation jumps, but this is outside the scope of this work. Nevertheless, we show through simulation that such behaviour is well handled by the proposed methodology.

\begin{definition}
We define the replicate local autocovariance (RLACV) for the replicated process $\{X_{t;T}^{r;R}\}$ for some rescaled time $z$ and rescaled replicate $\nu$ to be given by
\begin{equation*}
c(z,\nu; \uptau) = \sum_{j=1}^\infty S_{j}(z, \nu)\Psi_{j}(\uptau),
\end{equation*}
where $\uptau$ is an integer time-lag, $z\in (0,1)$, $\nu \in (0,1)$ and $\Psi_{j}(\uptau)=\sum_{k\in\Z}\psi_{j,k}(0)\psi_{j,k}(\uptau)$ denotes the scale $j$ autocorrelation wavelet.
\end{definition}

Note that $|c(z,\nu; \uptau)|<\infty$ follows directly from the uniform bounds in $\uptau$ and $\nu$ for both the limiting amplitudes and autocorrelation wavelets (see equation~\eqref{eq:unifW}). The local autocovariance defined above can be shown to be an approximation of the process autocovariance corresponding to a particular (rescaled) replicate, as follows.

\begin{proposition}\label{prop:autocov}
For a RLSW process $\{X_{t;T}^{r;R}\}$ with properties as in Definition \ref{def:rlsw},\\ $\left| \cov(X_{\lfloor zT\rfloor;T}^{\lfloor \nu R\rfloor;T}, X_{\lfloor zT \rfloor +\uptau;T}^{\lfloor \nu R\rfloor;T} )- c(z,\nu; \uptau) \right|=\ORD(T^{-1})+\ORD(R^{-1})$,
uniformly in $\uptau$ at (rescaled) time $z$ and replicate $\nu$.
\end{proposition}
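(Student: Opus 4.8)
The plan is to collapse the process covariance to a single wavelet sum using the within-replicate orthonormality, rewrite $c(z,\nu;\uptau)$ as a matching sum via translation invariance of the wavelets, and then bound the resulting discrepancy scale-by-scale using the amplitude approximation \eqref{eq:approx} and the Lipschitz conditions of Definition~\ref{def:rlsw}. Writing $t=\lfloor zT\rfloor$, $t'=t+\uptau$ and $r=\lfloor \nu R\rfloor$ and expanding both copies of the process via \eqref{eq:rlsw}, Property~2 (so that $\E[\xi_{j,k}^{r}\xi_{j',k'}^{r}]=\delta_{j,j'}\delta_{k,k'}$) collapses the double expansion to
\begin{equation*}
\cov\!\left(X_{t;T}^{r;R},X_{t';T}^{r;R}\right)=\sum_{j=1}^{\infty}\sum_{k\in\Z}\left(\omega_{j,k;T}^{r;R}\right)^{2}\psi_{j,k}(t)\psi_{j,k}(t').
\end{equation*}
Since the non-decimated wavelets are translation invariant, $\sum_{k\in\Z}\psi_{j,k}(t)\psi_{j,k}(t+\uptau)=\Psi_{j}(\uptau)$ for every $t$, so I may write $c(z,\nu;\uptau)=\sum_{j}\sum_{k}\left|\Wt_{j}(z,\nu)\right|^{2}\psi_{j,k}(t)\psi_{j,k}(t')$ as well. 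The quantity to bound is then the single double sum of $\left[\left(\omega_{j,k;T}^{r;R}\right)^{2}-\left|\Wt_{j}(z,\nu)\right|^{2}\right]\psi_{j,k}(t)\psi_{j,k}(t')$, whose absolute convergence (justifying the rearrangement) follows from \eqref{eq:unifW} and the summability of $\{C_{j}\},\{D_{j}\},\{L_{j}\},\{N_{j}\}$.

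Next I would split the bracket through the intermediate value $\left|\Wt_{j}(k/T,r/R)\right|^{2}$ into a term $(\mathrm{A})=(\omega_{j,k;T}^{r;R})^{2}-\left|\Wt_{j}(k/T,r/R)\right|^{2}$ and a term $(\mathrm{B})=\left|\Wt_{j}(k/T,r/R)\right|^{2}-\left|\Wt_{j}(z,\nu)\right|^{2}$. Throughout I use that \eqref{eq:unifW} bounds each $|\Wt_{j}|$, and hence via \eqref{eq:approx} each $|\omega_{j,k;T}^{r;R}|$, by a uniform constant $M$, together with the unit-norm normalisation of the wavelet vectors, which through Cauchy--Schwarz gives $\sum_{k}|\psi_{j,k}(t)\psi_{j,k}(t')|\le\big(\sum_{k}\psi_{j,k}(t)^{2}\big)^{1/2}\big(\sum_{k}\psi_{j,k}(t')^{2}\big)^{1/2}=1$ independently of $\uptau$. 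For $(\mathrm{A})$, the factorisation $\big|\,|a|^{2}-|b|^{2}\big|\le(|a|+|b|)\,|a-b|$ combined with \eqref{eq:approx} gives a bound $\ORD(C_{j}T^{-1}+D_{j}R^{-1})$ uniform in $k$; multiplying by the $\le 1$ wavelet sum and summing over $j$ yields $\ORD(T^{-1})+\ORD(R^{-1})$ because $\sum_{j}C_{j}<\infty$ and $\sum_{j}D_{j}<\infty$.

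For $(\mathrm{B})$ I would split further into a time shift $k/T\to z$ and a replicate shift $r/R\to\nu$. The replicate shift contributes at most $2M N_{j}\,|r/R-\nu|\le 2M N_{j}R^{-1}$ (since $|r/R-\nu|\le R^{-1}$), summing to $\ORD(R^{-1})$ via $\sum_{j}N_{j}<\infty$. The time shift is the delicate step: only $k$ with $\psi_{j,k}(t)\ne 0$ contribute, and as the scale-$j$ wavelet has support of length $\ORD(2^{j})$, any such $k$ satisfies $|k/T-z|\le(|k-t|+1)/T=\ORD(2^{j}/T)$, whence Lipschitz continuity gives $\le 2M L_{j}\,\ORD(2^{j}/T)$; summing over $j$ produces $\ORD(T^{-1})$ precisely because $\sum_{j}2^{j}L_{j}<\infty$. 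Adding the contributions of $(\mathrm{A})$ and $(\mathrm{B})$ gives the claimed rate $\ORD(T^{-1})+\ORD(R^{-1})$, uniformly in $\uptau$, $z$ and $\nu$, since every $k$-sum was controlled by $1$ with no residual dependence on $\uptau$.

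The main obstacle is exactly this time-shift term in $(\mathrm{B})$: one cannot replace $\Wt_{j}(k/T,r/R)$ by $\Wt_{j}(z,\nu)$ at an $\ORD(T^{-1})$ per-scale cost, because the contributing $k$ sweep a window of width $\ORD(2^{j})$, making the per-scale error $\ORD(2^{j}L_{j}T^{-1})$; the argument closes only because Definition~\ref{def:rlsw} imposes $\sum_{j}2^{j}L_{j}<\infty$, which is designed to absorb the growing wavelet support. The remaining care-points, namely the $\uptau$-uniformity and the absolute convergence licensing the rearrangements, both reduce to the Cauchy--Schwarz bound and the stated summability conditions and present no further difficulty.
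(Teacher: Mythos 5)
Your proof is correct and follows essentially the same route as the paper's: collapse the covariance via orthonormality of the $\xi_{j,k}^{r}$, compare term-by-term with $c(z,\nu;\uptau)$ using the amplitude approximation \eqref{eq:approx}, and absorb the $\ORD(2^{j})$-wide wavelet support in the time-Lipschitz step via $\sum_{j}2^{j}L_{j}<\infty$. The only differences are cosmetic bookkeeping: the paper re-indexes $k:=n+\lfloor zT\rfloor$ so as to work with $\Psi_{j}(\uptau)$ directly, while you keep the original index, use Cauchy--Schwarz for the $k$-sums, and make explicit the floor-discretization and replicate-shift errors that the paper absorbs silently.
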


\begin{proof}
The proof appears in Section~\ref{supp:proofs:autocov} of the Supplementary Material and uses the approximation properties in the definition of the RLSW process.
\end{proof}

\subsection{Spectral estimation}\label{sec:specestRLSW}
We start our proposed estimation procedure for the REWS (and thus also for the RLACV), by first computing the raw wavelet periodogram and exploring its asymptotic properties as an estimator for the true, unknown REWS. We note that the theoretical results in this section are derived under the assumption of Gaussianity.

\begin{definition}\label{def:wp1}
We define the raw wavelet periodogram of a RLSW process $\{X_{t;T}^{r;R}\}$ as
\begin{equation}\label{eq:wp1}
I_{j,k;T}^{r;R} = \left|d_{j,k;T}^{ r;R}\right|^{2},
\end{equation}
where for scale $j$, replicate $r$ and location $k$, $d_{j,k;T}^{r;R}=\sum_{t=0}^{T-1} X_{t;T}^{r;R}\psi_{j,k}(t)$ are the process empirical wavelet coefficients constructed using a family of discrete non-decimated wavelets, $\{\psi_{j,k}(t)\}_{j,k}$.
\end{definition}

We note here that unlike the Fourier periodogram, the wavelet-based raw periodogram is typically not an unbiased estimator of the wavelet spectrum, and this will also turn out to be the case here.

For reasons that will become obvious next, we also define a transformed spectral quantity $\beta_{j}(z,\nu)=\sum_{l=1}^\infty A_{j,l} S_{l} (z,\nu)$, where $A_{j,l}= \langle \Psi_{j},\Psi_{l} \rangle = \sum _{\uptau\in\Z}\Psi_{j}(\uptau)\Psi_{l}(\uptau)$ is the inner product matrix of the autocorrelation wavelets. The invertibility of the matrix $A$ and boundedness of its inverse norm \citep{nvsk:2000} ensure that finding a well-behaved estimator of the REWS is equivalent to finding a well-behaved estimator for the spectral quantity $\beta$. Hence we next focus on estimating $\beta$.

\begin{proposition}\label{prop:EI} For a RLSW process $\{X_{t;T}^{r;R}\}$ with properties as in Definition \ref{def:rlsw}, its associated raw wavelet periodogram in equation~\eqref{eq:wp1} has the following asymptotic properties:

Expectation
\begin{equation} \label{eq:EI}
\E\left[I_{j,\lfloor zT\rfloor;T}^{\lfloor \nu R \rfloor;R}\right] = \beta_{j}(z,\nu)  + \textcolor{black}{\ORD(2^{j}T^{-1}) + \ORD(R^{-1})},
\end{equation}

Variance
\begin{equation*}\label{eq:varI}
\text{var}\left(I_{j,\lfloor zT\rfloor;T}^{\lfloor \nu R \rfloor;R}\right) = 2\beta_{j}^2(z,\nu) + \textcolor{black}{\ORD(2^{2j}T^{-1}) + \ORD(2^{j}R^{-1})}.
\end{equation*}
\end{proposition}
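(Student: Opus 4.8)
The plan is to reduce both moments to statements about the empirical wavelet coefficients $d_{j,k;T}^{r;R}$ and then exploit Gaussianity so that the variance follows from the expectation. Substituting representation~\eqref{eq:rlsw} into Definition~\ref{def:wp1} and interchanging the (finite in $t$) sums gives
\[
d_{j,k;T}^{r;R} = \sum_{l=1}^\infty\sum_{m\in\Z}\omega_{l,m;T}^{r;R}\,\xi_{l,m}^{r}\,\langle\psi_{l,m},\psi_{j,k}\rangle,
\qquad \langle\psi_{l,m},\psi_{j,k}\rangle=\sum_{t=0}^{T-1}\psi_{l,m}(t)\psi_{j,k}(t),
\]
a zero-mean Gaussian variable. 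Because $\{\xi_{l,m}^{r}\}$ are orthonormal (property~2 of Definition~\ref{def:rlsw}), all cross terms vanish and
\[
\E\bigl[I_{j,k;T}^{r;R}\bigr]=\E\bigl[(d_{j,k;T}^{r;R})^2\bigr]=\sum_{l=1}^\infty\sum_{m\in\Z}\bigl(\omega_{l,m;T}^{r;R}\bigr)^2\langle\psi_{l,m},\psi_{j,k}\rangle^2 .
\]

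For the expectation I would approximate the amplitudes in three controlled stages. First, replace $\omega_{l,m;T}^{r;R}$ by $\Wt_l(m/T,r/R)$; by~\eqref{eq:approx} this costs $\ORD(C_l T^{-1})+\ORD(D_l R^{-1})$ per term. Second, freeze the time argument, replacing $\Wt_l(m/T,r/R)$ by $\Wt_l(k/T,r/R)$ via the time Lipschitz constant $L_l$; since only $m$ within the $\ORD(2^{j})$-length support of $\psi_{j,k}$ contribute, the accumulated error is $\ORD(2^{j}L_l T^{-1})$. This is the sole source of the $2^{j}$ amplification, and it appears only in time because the transform acts on $t$ while the replicate index $r$ is a spectator (hence no analogous $N_l$ or $2^{j}$ term enters the replicate direction). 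Third, identify $|\Wt_l(k/T,r/R)|^2=S_l(z,\nu)$ at the grid point and invoke the exact identity $\sum_{m\in\Z}\langle\psi_{l,m},\psi_{j,k}\rangle^2=\sum_{\uptau}\Psi_j(\uptau)\Psi_l(\uptau)=A_{j,l}$ (up to $\ORD(2^{j}T^{-1})$ finite-$T$ boundary corrections) to assemble $\sum_l A_{j,l}S_l(z,\nu)=\beta_j(z,\nu)$. Summing the per-scale errors against the $A_{j,l}$ weights and invoking $\sum_l 2^{l}C_l<\infty$, $\sum_l 2^{l}D_l<\infty$, $\sum_l 2^{l}L_l<\infty$ collapses them to $\ORD(2^{j}T^{-1})+\ORD(R^{-1})$, giving~\eqref{eq:EI}.

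For the variance I would use Gaussianity directly: $d_{j,k;T}^{r;R}$ is real, zero-mean Gaussian, so $\mathrm{var}\bigl((d_{j,k;T}^{r;R})^2\bigr)=2\bigl(\E[(d_{j,k;T}^{r;R})^2]\bigr)^2=2\bigl(\E[I_{j,k;T}^{r;R}]\bigr)^2$ (equivalently by applying Isserlis' theorem to the fourth moment of the coefficient). Writing $\E[I]=\beta_j(z,\nu)+e$ with $e=\ORD(2^{j}T^{-1})+\ORD(R^{-1})$ from the expectation result yields $2(\beta_j+e)^2=2\beta_j^2+4\beta_j e+2e^2$. The crucial observation is that $\beta_j$ is \emph{not} uniformly bounded in $j$: since $A_{j,j}=\|\Psi_j\|^2=\ORD(2^{j})$, one has $|\beta_j(z,\nu)|=\ORD(2^{j})$, so the cross term is $4\beta_j e=\ORD(2^{j})\cdot\bigl(\ORD(2^{j}T^{-1})+\ORD(R^{-1})\bigr)=\ORD(2^{2j}T^{-1})+\ORD(2^{j}R^{-1})$, while $2e^2$ is of strictly smaller order and is absorbed. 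This reproduces exactly the stated variance.

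I expect the main obstacle to be the bookkeeping of the scale-indexed approximation errors across the two dimensions simultaneously: one must keep errors separated by origin (the amplitude-to-transfer-function gap through $C_l,D_l$ versus the within-support time variation through $L_l$), verify that only the time direction incurs the $\ORD(2^{j})$ support-length factor while the replicate dimension enters linearly as $\ORD(R^{-1})$, and check that each resulting series converges under the stated $\sum_l 2^{l}(\cdot)<\infty$ conditions. A secondary technical point is the finite-$T$ truncation of the inner-product identity for $A_{j,l}$, whose boundary error must be shown to be of the same $\ORD(2^{j}T^{-1})$ order and hence harmless; and for the variance, correctly identifying the $\ORD(2^{j})$ growth of $\beta_j$ is what converts the tighter expectation error into the claimed variance order.
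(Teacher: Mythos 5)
Your overall structure is sound and both stated conclusions are reached correctly. For the expectation, your route is essentially the paper's: expand $d_{j,k;T}^{r;R}$ via the process representation, kill all cross terms by orthonormality of the $\xi$'s, then perform the same three-stage approximation (amplitudes to $\Wt_l$ via $C_l, D_l$; time-freezing via $L_l$; identification of $\sum_m\langle\psi_{l,m},\psi_{j,k}\rangle^2=A_{j,l}$), and sum the scale-indexed errors using $\sum_l 2^lC_l<\infty$, $\sum_l 2^lD_l<\infty$, $\sum_l 2^lL_l<\infty$ together with $\sum_l 2^{-l}A_{j,l}=1$. For the variance, however, you take a genuinely shorter route: since $d_{j,k;T}^{r;R}$ is a single real zero-mean Gaussian, $\mathrm{var}(I_{j,k;T}^{r;R})=2\bigl(\E[I_{j,k;T}^{r;R}]\bigr)^2$ exactly, so the variance expansion follows from the expectation result once one knows $\beta_j(z,\nu)=\ORD(2^j)$. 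The paper instead computes the full Isserlis decomposition of $\E\bigl[I_{j,k;T}^{r;R}I_{j',k';T}^{r';R}\bigr]$ for arbitrary index pairs and only then specializes to $j=j'$, $k=k'$, $r=r'$. Your shortcut is valid and cleaner for this proposition; what the paper's heavier computation buys is the general cross-covariance formula (their equation~\eqref{eq:covIdt} and the $r\neq r'$ case), which is re-used in the proofs of Propositions~\ref{prop:ESI} and~\ref{prop:ESIts}, so the generality is not wasted there.

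Two points need repair, though neither overturns your conclusions. First, in the time-freezing step you claim the contributing locations satisfy $|m-k|=\ORD(2^j)$ because of the support of $\psi_{j,k}$, and call this ``the sole source of the $2^j$ amplification''. In fact the overlap condition involves both supports: the cross-correlation wavelet $\Psi_{j,l}$ has support of order $2^j+2^l$, so for fine scales $l\gg j$ your per-scale bound is wrong as stated. The correct per-scale error is of the form $\ORD\bigl((2^j+2^l)L_lT^{-1}\bigr)$ against the weight $A_{j,l}$, and the extra $\ORD(T^{-1})\sum_l 2^lL_lA_{j,l}$ piece must be absorbed separately — which it can be, using exactly the condition $\sum_l 2^lL_l<\infty$ you already invoke (this is how the paper handles it), so the final $\ORD(2^jT^{-1})$ survives. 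Second, and more substantively, your justification of $|\beta_j(z,\nu)|=\ORD(2^j)$ — ``since $A_{j,j}=\|\Psi_j\|^2=\ORD(2^j)$'' — is not a proof: $\beta_j=\sum_l A_{j,l}S_l(z,\nu)$ is an infinite sum over scales, and the size of its diagonal coefficient says nothing about the whole sum. This bound is precisely the paper's Lemma~\ref{lem:ASord}, proved by rewriting $\beta_j=\sum_{\uptau}c(z,\nu;\uptau)\Psi_j(\uptau)$ and using $|c(z,\nu;\uptau)|<\infty$ uniformly together with $\sum_{\uptau}|\Psi_j(\uptau)|=\ORD(2^j)$; alternatively one can use $|A_{j,l}|\leq\sum_{\uptau}|\Psi_j(\uptau)|\,|\Psi_l(\uptau)|\leq\sum_{\uptau}|\Psi_j(\uptau)|=\ORD(2^j)$ uniformly in $l$ (since $|\Psi_l|\leq 1$) together with $\sum_l S_l(z,\nu)<\infty$ from equation~\eqref{eq:unifW}. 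Without such an argument your variance proof has a hole exactly at the step you label the crucial observation.
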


\begin{proof}
Appendix~\ref{app:proofs:rawi} contains the proof.
\end{proof}

From Proposition \ref{prop:EI}, we see that the raw periodogram is asymptotically unbiased for $\beta$, but inconsistent due its asymptotically non-vanishing variance. Thus we next propose to smooth the raw periodogram in order to obtain consistency (and then we can correct for bias to obtain an asymptotically unbiased estimator for $S$).

\begin{definition}
We define a replicate-smoothed estimator for the quantity $\beta_{j}(\frac{k}{T},\frac{r}{R})$ to be
\begin{equation}\label{eq:defItilde}
\It_{j,k;T}^{r;R} = (2M + 1)^{-1} \sum_{s=-M}^{M}I_{j,k;T}^{r+s;R},
\end{equation}
where $(2M+1)$ is the length of the smoothing window and $M$ is an integer such that as $T, R \to \infty$, we have that $M \to \infty$ and $M/R \to 0$.
\end{definition}

\noindent{\bf Remark (replicate smoothing).}
Unlike for the usual locally stationary processes where the periodogram is smoothed over time in order to achieve consistency, here we propose a smoothing procedure that operates over {\em replicates} by locally averaging the spectral estimates across a window of neighbouring replicates. This approach is indeed theoretically justified by the assumption of spectral smoothness across the replicate-dimension. In practice, these assumptions will have to be verified in order to determine some empirically guided choice of $M$, as seen in the simulation study Section~\ref{sec:sims}.

\begin{proposition}\label{prop:ESI} For a RLSW process $\{X_{t;T}^{r;R}\}$ with properties as in Definition \ref{def:rlsw}, the replicate-smoothed wavelet periodogram in equation~\eqref{eq:defItilde} has the following asymptotic properties:

Expectation
\begin{equation*} \label{eq:ESI}
\E\left[\It_{j,\lfloor zT\rfloor;T}^{\lfloor \nu R \rfloor;R}\right] = \beta_{j}(z,\nu)
+\textcolor{black}{\ORD(MR^{-1}) + \ORD(2^{j}T^{-1})},
\end{equation*}

Variance
\begin{equation*}\label{eq:varSI}
\text{var}\left(\It_{j,\lfloor zT\rfloor;T}^{\lfloor \nu R \rfloor;R}\right) = 
\textcolor{black}{\ORD(2^{2j}M^{-1})+\ORD(2^jR^{-1})+\ORD(MR^{-2})}.
\end{equation*}
\end{proposition}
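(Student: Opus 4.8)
The plan is to obtain both moments directly from the raw-periodogram moments established in Proposition~\ref{prop:EI}, combined with the linearity of the smoothing operator in~\eqref{eq:defItilde} and the crucial structural fact that, under Gaussianity and the uncorrelated-replicate assumption (property~2 of Definition~\ref{def:rlsw}), the raw periodograms $\{I_{j,k;T}^{r+s;R}\}_{s=-M}^{M}$ are mutually \emph{independent} across the smoothing index $s$: since $d_{j,k;T}^{r;R}$ depends only on the innovation family $\{\xi_{j',k'}^{r}\}$, and distinct-$r$ families are jointly Gaussian and uncorrelated, hence independent, the associated periodograms inherit this independence. Throughout I write $\nu_s = (r+s)/R$ for the rescaled replicate of the $s$-th window term and $\nu = r/R$ for its centre.

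For the expectation, linearity gives $\E[\It_{j,\lfloor zT\rfloor;T}^{\lfloor\nu R\rfloor;R}] = (2M+1)^{-1}\sum_{s=-M}^{M}\E[I_{j,\lfloor zT\rfloor;T}^{\lfloor\nu_s R\rfloor;R}]$. Substituting the raw-periodogram expectation from Proposition~\ref{prop:EI} replaces each summand by $\beta_j(z,\nu_s) + \ORD(2^{j}T^{-1}) + \ORD(R^{-1})$. I would then invoke Lipschitz continuity of $\beta_j$ in the replicate direction---inherited from the Lipschitz transfer functions via $\beta_j=\sum_l A_{j,l}S_l$ and the summability of the constants $N_j$---to write $\beta_j(z,\nu_s) = \beta_j(z,\nu) + \ORD(|s|R^{-1})$. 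Averaging the increments over the window yields a smoothing bias of order $(2M+1)^{-1}\sum_s |s| R^{-1} = \ORD(MR^{-1})$, while the $\ORD(2^{j}T^{-1})$ terms are unaffected by the average and the $\ORD(R^{-1})$ term is absorbed into $\ORD(MR^{-1})$ since $M\ge 1$. This yields the stated expansion $\beta_j(z,\nu)+\ORD(MR^{-1})+\ORD(2^{j}T^{-1})$.

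For the variance, the independence across $s$ is the key: all covariance cross-terms vanish, so $\mathrm{var}(\It_{j,\lfloor zT\rfloor;T}^{\lfloor\nu R\rfloor;R}) = (2M+1)^{-2}\sum_{s=-M}^{M}\mathrm{var}(I_{j,\lfloor zT\rfloor;T}^{\lfloor\nu_s R\rfloor;R})$. Inserting the raw-periodogram variance from Proposition~\ref{prop:EI} gives a leading contribution $(2M+1)^{-2}\sum_s 2\beta_j^2(z,\nu_s)$ plus the averaged corrections $(2M+1)^{-1}[\ORD(2^{2j}T^{-1})+\ORD(2^{j}R^{-1})]$, the latter being dominated by the leading orders below. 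I would then expand $\beta_j^2$ to second order about $\nu$: writing $\beta_j(z,\nu_s)=\beta_j(z,\nu)+e_s$ with $|e_s|\le L_{\beta_j}|s|R^{-1}$, the three pieces $\sum_s\beta_j^2(z,\nu)$, $\sum_s 2\beta_j(z,\nu)e_s$ and $\sum_s e_s^2$ produce, after division by $(2M+1)^2$ and bounding $\beta_j^2$ by its scale envelope of order $2^{2j}$, exactly the three stated orders $\ORD(2^{2j}M^{-1})$, then $\ORD(2^{j}R^{-1})$ (from $\sum_s|s|\asymp M^2$) and $\ORD(MR^{-2})$ (from $\sum_s s^2\asymp M^3$).

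The main obstacle is twofold. First, the vanishing of the cross-replicate covariances must be argued carefully: it is precisely this consequence of Gaussianity together with property~2 that turns a variance-of-a-sum into a sum-of-variances and thereby justifies replicate-smoothing (in place of the usual time-smoothing) as a route to consistency. Second, the precise error bookkeeping in the variance requires a genuinely second-order Lipschitz expansion of $\beta_j^2$ across the window together with uniform control of the replicate-direction Lipschitz constant $L_{\beta_j}$ and of the scale-dependent envelope of $\beta_j$; it is the second-order remainder $\sum_s e_s^2$, scaling like $M^3R^{-2}$, that is responsible for the slightly unexpected $\ORD(MR^{-2})$ term, and separating it from the dominated Proposition~\ref{prop:EI} corrections is the delicate part.
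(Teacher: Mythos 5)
Your proposal is correct and follows essentially the same route as the paper's proof: linearity plus the raw-periodogram expectation and replicate-direction Lipschitz continuity of $\sum_l A_{j,l}S_l$ for the bias, and for the variance the vanishing of cross-$s$ covariances (the paper deduces this from its Isserlis computation showing the cross-terms are zero when $r\neq r'$, which is the same fact as your independence argument) followed by expanding $\bigl(\beta_j(z,\nu)+e_s\bigr)^2$ and bounding the three pieces by $\ORD(2^{2j}M^{-1})$, $\ORD(2^{j}R^{-1})$ and $\ORD(MR^{-2})$ via $\beta_j=\ORD(2^j)$ (Lemma~\ref{lem:ASord}) and $\sum_l N_l A_{j,l}<\infty$. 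The only cosmetic difference is that the paper carries the $S_l$-level Lipschitz constants $N_l$ explicitly through the sums rather than packaging them into a Lipschitz constant for $\beta_j$; the bookkeeping and resulting orders are identical.
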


\begin{proof}
Appendix~\ref{app:proofs:itilde} contains the proof which manipulates the amplitude properties across replicates as opposed to those across time.
\end{proof}

Note that as $T$, $R$ and $M \to \infty$ and using the condition $M/R \to 0$, the bias of the smoothed periodogram becomes asymptotically negligible, while its variance tends to zero for any fixed fine enough scale $j$ (with $2^j=o\left(\mbox{min}\left\{T,R,(2M+1)^{1/2}\right\}\right)$). The usual bias--variance trade-off here is manifest through the increase of $M$ resulting in a decrease of the variance at the price of an increase in the bias. As the replicate-smoothed periodogram proposed above is an asymptotically unbiased and consistent estimator for the true $\beta$, the relationship between the true spectral quantities $\beta$ and $S$ suggests a natural way of constructing a well-behaved spectral estimator for the unknown spectrum $S$.
We thus propose to estimate the unknown REWS by means of
\begin{equation}\label{eq:Sestr}
\hat{S}_{j}(z,\nu) = \sum_{l=1}^{J}A_{j,l}^{-1}\It_{l,\lfloor zT \rfloor;T}^{\lfloor \nu R \rfloor;R},
\end{equation}
where $A_{j,l}^{-1}$ is the $(j,l)$ entry of the inverse of the inner product matrix $A$ of the autocorrelation wavelets and $J=\lfloor\alpha J(T)\rfloor$ with $\alpha\in(0,1)$.

\begin{proposition}\label{prop:CS}
For a RLSW process $\{X_{t;T}^{r;R}\}$ with properties as in Definition \ref{def:rlsw}, the spectral estimator proposed in equation~\eqref{eq:Sestr} is asymptotically unbiased and consistent for $S_{j}(z,\nu)$ for each fixed scale $j$, rescaled time and replicate $z$ and $\nu$, respectively, provided that $M/R \to 0$ as $T$, $R$ and $M \to \infty$.
\end{proposition}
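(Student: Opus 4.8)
The plan is to exploit the linearity of the estimator $\hat{S}_{j}(z,\nu) = \sum_{l=1}^{J}A_{j,l}^{-1}\It_{l,\lfloor zT\rfloor;T}^{\lfloor \nu R\rfloor;R}$ in the smoothed periodograms and reduce everything to the first- and second-order properties already established in Proposition~\ref{prop:ESI}, proving mean-square consistency by controlling the bias and the variance separately. Writing $\It_l$ as shorthand for $\It_{l,\lfloor zT\rfloor;T}^{\lfloor \nu R\rfloor;R}$, I would throughout use that the $j$-th row of $A^{-1}$ is absolutely summable with off-diagonal entries decaying in $|l-j|$ (the invertibility and bounded inverse norm of $A$ from \cite{nvsk:2000}), so that for each fixed $j$ the weighted row sums $\sum_{l}|A_{j,l}^{-1}|2^{\gamma l}$ are of order $2^{\gamma j}$.

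For the bias I would write $\E[\hat{S}_{j}(z,\nu)] = \sum_{l=1}^{J}A_{j,l}^{-1}\E[\It_l]$ and substitute the expansion of Proposition~\ref{prop:ESI}. The leading part $\sum_{l=1}^{J}A_{j,l}^{-1}\beta_l(z,\nu)$ is handled by inserting $\beta_l = \sum_{m=1}^{\infty}A_{l,m}S_m$ and interchanging the sums to obtain $\sum_{m=1}^{\infty}S_m\sum_{l=1}^{J}A_{j,l}^{-1}A_{l,m}$; since $\sum_{l=1}^{\infty}A_{j,l}^{-1}A_{l,m}=\delta_{j,m}$, this equals $S_j(z,\nu)$ minus a truncation remainder $\sum_{m=1}^{\infty}S_m\sum_{l>J}A_{j,l}^{-1}A_{l,m}$. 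The remainder vanishes as $J\to\infty$ because $\sum_m S_m<\infty$ (from \eqref{eq:unifW}), $A_{l,m}$ is uniformly bounded, and the tail $\sum_{l>J}|A_{j,l}^{-1}|\to 0$; note that $J=\lfloor\alpha J(T)\rfloor\to\infty$ as $T\to\infty$. The two remaining error contributions are $\ORD(MR^{-1})\sum_l|A_{j,l}^{-1}|=\ORD(MR^{-1})$, which vanishes since $M/R\to 0$, and $\sum_l|A_{j,l}^{-1}|\ORD(2^l T^{-1})=\ORD(2^j T^{-1})$, which vanishes for each fixed $j$. Hence $\E[\hat{S}_{j}(z,\nu)]\to S_j(z,\nu)$.

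For the variance I would use bilinearity, $\text{var}(\hat{S}_j)=\sum_{l=1}^{J}\sum_{m=1}^{J}A_{j,l}^{-1}A_{j,m}^{-1}\cov(\It_l,\It_m)$, and bound the cross-covariances by Cauchy--Schwarz, giving $\text{var}(\hat{S}_j)\leq\big(\sum_{l=1}^{J}|A_{j,l}^{-1}|\,\text{var}(\It_l)^{1/2}\big)^{2}$. Substituting the variance order from Proposition~\ref{prop:ESI}, namely $\text{var}(\It_l)^{1/2}=\ORD(2^l M^{-1/2})+\ORD(2^{l/2}R^{-1/2})+\ORD(M^{1/2}R^{-1})$, and using the weighted summability of the $A^{-1}$ rows to replace $2^{\gamma l}$ by $2^{\gamma j}$, the inner sum is $\ORD(2^j M^{-1/2})+\ORD(2^{j/2}R^{-1/2})+\ORD(M^{1/2}R^{-1})$, so that $\text{var}(\hat{S}_j)=\ORD(2^{2j}M^{-1})+\ORD(2^j R^{-1})+\ORD(MR^{-2})$. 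Each term vanishes for fixed $j$: $2^{2j}M^{-1}\to 0$ under the scale condition $2^{j}=o((2M+1)^{1/2})$, $2^j R^{-1}\to 0$ as $R\to\infty$, and $MR^{-2}=(M/R)R^{-1}\to 0$ since $M/R\to 0$. Combining the vanishing squared bias with the vanishing variance yields $\E[(\hat{S}_j(z,\nu)-S_j(z,\nu))^2]\to 0$, i.e. mean-square and hence in-probability consistency.

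The main obstacle I anticipate is the variance step, specifically justifying that the weighted inverse-matrix row sums $\sum_l|A_{j,l}^{-1}|2^{\gamma l}$ are genuinely of order $2^{\gamma j}$ for fixed $j$, rather than of the crude order $2^{\gamma J}=T^{\gamma\alpha}$ obtained from $\|A_{j,\cdot}^{-1}\|_1$ alone; this requires the off-diagonal decay of the inverse autocorrelation-wavelet matrix and is exactly what keeps the variance rate consistent with the single-scale condition $2^{j}=o(\min\{T,R,(2M+1)^{1/2}\})$. A secondary technical point is confirming the validity of the summation interchange and the truncation-tail estimate in the bias step, which follow from the absolute convergence guaranteed by $\sum_j S_j<\infty$ together with the boundedness of the autocorrelation-wavelet inner products.
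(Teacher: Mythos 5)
Your bias computation follows essentially the same route as the paper's proof: linearity of expectation, substitution of the expansion from Proposition~\ref{prop:ESI}, and the identity $\sum_{l}A^{-1}_{j,l}A_{l,m}=\delta_{j,m}$ (your explicit handling of the truncation tail at $J$ is, if anything, more careful than the paper's). Note, though, that the paper never claims your sharper remainder rate $\sum_{l\leq J}\lvert A^{-1}_{j,l}\rvert\,\ORD(2^{l}T^{-1})=\ORD(2^{j}T^{-1})$: it uses only entrywise boundedness of $A^{-1}_{j,l}$ together with $\sum_{l=1}^{J}2^{l}=\ORD(T^{\alpha})$, giving the cruder remainder $\ORD(T^{\alpha-1})$, which vanishes because $\alpha<1$. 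That is harmless for the bias. Where the two arguments genuinely diverge is consistency: you pursue mean-square consistency via an explicit variance bound, whereas the paper argues convergence in probability directly, applying the continuous mapping theorem to the linear map $g(x_{1},\ldots,x_{J})=\sum_{l=1}^{J}A^{-1}_{j,l}x_{l}$ together with the coordinatewise consistency $\It_{l}\stackrel{P}{\rightarrow}\beta_{l}$ from Proposition~\ref{prop:ESI}, and therefore never needs to sum standard deviations across scales.

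The gap is exactly the obstacle you flagged, and it is not repairable as stated: the weighted row-sum property $\sum_{l\leq J}\lvert A^{-1}_{j,l}\rvert 2^{\gamma l}=\ORD(2^{\gamma j})$ is not established anywhere (the paper cites only invertibility and bounded inverse norm from Nason et al.\ (2000)), and it is in fact inconsistent with the one sharp entry decay that is known, namely the Haar rate $A^{-1}_{j,l}=\ORD(2^{-(j+l)/2})$ quoted in the paper's own proof. Plugging that decay into your inner sum gives $\sum_{l\leq J}2^{-(j+l)/2}2^{l}=\ORD\bigl(2^{-j/2}2^{J/2}\bigr)=\ORD\bigl(2^{-j/2}T^{\alpha/2}\bigr)$, which diverges as $T\to\infty$ rather than being $\ORD(2^{j})$. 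Consequently your variance bound is not $\ORD(2^{2j}M^{-1})+\ORD(2^{j}R^{-1})+\ORD(MR^{-2})$ but contains a term of order $2^{-j}T^{\alpha}M^{-1}$, which vanishes only under an additional coupling such as $T^{\alpha}=o(M)$ --- a condition the proposition does not assume (only $M\to\infty$ and $M/R\to0$ are imposed, with no link between $M$ and $T$). So the mean-square route cannot be completed as written: either you prove an off-diagonal decay for $A^{-1}$ strong enough to absorb the $2^{l}$ growth of the fine-scale periodogram variances (strictly stronger than the Haar rate), or you fall back on the paper's softer in-probability argument, which confines the quantitative statements to each fixed scale and lets the continuous mapping theorem do the aggregation.
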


\begin{proof}
Appendix~\ref{app:proofs:corrper} contains the proof which hinges on the properties of the replicate-smoothed periodogram shown in Proposition~\ref{prop:ESI} above.
\end{proof}

\noindent{\bf Remark (replicate and time smoothing).}
The results in Proposition \ref{prop:ESI} highlight the small sample dependence of the bias and variance of the smoothed periodogram on the number of replicates $R$, on the time series length $T$ and on the smoothing window $(2M+1)$, as well as well as on the ratio of (replicate) smoothing window to the total number of replicates. While still having a bias--variance tradeoff, the variance can be further improved by additionally smoothing across the time-dimension.

Specifically, using a time-smoothing window of length $(2M_T+1)$ such that $M_T \to \infty$ and $M_T/T \to 0$ (the reader may also refer to \cite{park:mvlsw}) and chosen as usual under LSW modelling (see e.g. \cite{nas:station}), and preserving the previous notation of $(2M+1)$ for the replicate-smoothing window, we define the replicate- and time-smoothed periodogram
\begin{equation}\label{eq:defItildetilde}
\tilde{\It}_{j,k;T}^{r;R} = (2M + 1)^{-1} (2M_T + 1)^{-1} \sum_{s=-M}^{M} \sum_{t=-M_T}^{M_T} I_{j,k+t;T}^{r+s;R},
\end{equation}
to act as an estimator for the transformed spectral quantity $\beta_{j}(\frac{k}{T},\frac{r}{R})$. Below we show that this estimator has desirable asymptotic properties, leading to faster convergence than its counterpart involving only replicate-smoothing.

\begin{proposition}\label{prop:ESIts} For a RLSW process as in Definition \ref{def:rlsw} and satisfying the additional assumption of autocovariance summability, $\sup_{z,\nu}\sum_{n\in\Z}\left|c(z,\nu;n\right|)=\ORD(1)$, the smoothed time- and replicate-specific wavelet periodogram defined in equation~\eqref{eq:defItildetilde} has the following asymptotic properties:

Expectation
\begin{equation*} \label{eq:ESIts}
\E\left[\tilde{\It}_{j,\lfloor zT\rfloor;T}^{\lfloor \nu R \rfloor;R}\right] = \beta_{j}(z,\nu)+ \ORD(M_T T^{-1})
+\textcolor{black}{\ORD(MR^{-1}) + \ORD(2^{j}T^{-1})},
\end{equation*}

Variance
\begin{equation*}\label{eq:varSIts}
\text{var}\left(\tilde{\It}_{j,k;T}^{r;R}\right)
=\ORD(2^{2j}(M_TM)^{-1})+\ORD(2^{j}(MR)^{-1}).
\end{equation*}
\end{proposition}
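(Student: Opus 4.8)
The plan is to reduce the doubly-smoothed periodogram to quantities already controlled in Propositions~\ref{prop:EI}--\ref{prop:ESI}: linearity disposes of the expectation, while for the variance Gaussianity together with the uncorrelated-replicate assumption decouples the replicate direction from the time direction. For the expectation I would write $\E[\tilde{\It}_{j,k;T}^{r;R}]$ (with $z=k/T$, $\nu=r/R$) as the double average over $|s|\le M$, $|t|\le M_T$ of $\E[I_{j,k+t;T}^{r+s;R}]$ and insert the single-ordinate expansion $\beta_j((k+t)/T,(r+s)/R)+\ORD(2^jT^{-1})+\ORD(R^{-1})$ from Proposition~\ref{prop:EI}. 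The $\ORD$-terms are uniform in $s,t$ and pass through the average unchanged (with $\ORD(R^{-1})$ absorbed into $\ORD(MR^{-1})$), while the joint Lipschitz continuity of $\beta_j$ --- inherited from that of $\Wt_j$ and the summability of the autocorrelation-wavelet inner products --- bounds $|\beta_j((k+t)/T,(r+s)/R)-\beta_j(z,\nu)|$ by a constant multiple of $|t|/T+|s|/R$, so the symmetric averages yield the smoothing biases $\ORD(M_TT^{-1})$ and $\ORD(MR^{-1})$.

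The variance is the substantive step. Expanding it as a quadruple covariance sum,
\begin{equation*}
\text{var}\left(\tilde{\It}_{j,k;T}^{r;R}\right)=\frac{1}{(2M+1)^2(2M_T+1)^2}\sum_{s,s'=-M}^{M}\sum_{t,t'=-M_T}^{M_T}\cov\left(I_{j,k+t;T}^{r+s;R},\,I_{j,k+t';T}^{r+s';R}\right),
\end{equation*}
I would use Gaussianity to replace each covariance by $2[\cov(d_{j,k+t;T}^{r+s;R},d_{j,k+t';T}^{r+s';R})]^2$. The decisive simplification is property~2 of Definition~\ref{def:rlsw}: $d^{r+s}$ and $d^{r+s'}$ are built from the mutually uncorrelated families $\{\xi^{r+s}\}$ and $\{\xi^{r+s'}\}$, so this summand vanishes exactly whenever $s\neq s'$. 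The quadruple sum therefore collapses to
\begin{equation*}
\text{var}\left(\tilde{\It}_{j,k;T}^{r;R}\right)=\frac{1}{(2M+1)^2}\sum_{s=-M}^{M}\text{var}\left(\frac{1}{2M_T+1}\sum_{t=-M_T}^{M_T}I_{j,k+t;T}^{r+s;R}\right),
\end{equation*}
an outer replicate-average of single-replicate, time-smoothed periodogram variances --- the same device as in Proposition~\ref{prop:ESI}, but now acting on a time-smoothed inner object.

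It remains to bound the single-replicate time-smoothed variance, for which I would split the inner double sum over $t,t'$ into its diagonal and off-diagonal parts. The diagonal returns $\sum_t\text{var}(I_{j,k+t}^{r+s})$, whose leading $2\beta_j^2$ (bounded by $\ORD(2^{2j})$), the finite-sample corrections of Proposition~\ref{prop:EI}, and the within-window Lipschitz variation of $\beta_j^2$ are read off directly. The off-diagonal terms $\sum_{t\neq t'}[\cov(d_{j,k+t},d_{j,k+t'})]^2$ are where the extra hypothesis enters: the summability $\sup_{z,\nu}\sum_n|c(z,\nu;n)|=\ORD(1)$ transfers, via the autocorrelation-wavelet representation of $\cov(d_{j,k+t},d_{j,k+t'})$, into summable decay of these cross-covariances in $|t-t'|$, which is precisely what forces time-averaging to reduce the variance by a factor $M_T^{-1}$ rather than leaving it $\ORD(1)$; after the outer $(2M+1)^{-1}$ this gives the leading $\ORD(2^{2j}(M_TM)^{-1})$. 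The residual $\ORD(2^j(MR)^{-1})$ I expect to trace to the finite-$R$ amplitude error of~\eqref{eq:supWt}, which --- being to leading order coherent across the time window, unlike the finite-$T$ error of~\eqref{eq:supWr} --- yields a non-decaying off-diagonal covariance on which only the outer replicate factor acts.

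I anticipate the main obstacle to be exactly this off-diagonal control: establishing, uniformly in $t$ and in the replicate, that autocovariance summability delivers the required decay of $\cov(d_{j,k+t},d_{j,k+t'})$ in $|t-t'|$, and then tracking which finite-$T$ and finite-$R$ corrections survive the two successive normalisations to leave precisely $\ORD(2^{2j}(M_TM)^{-1})+\ORD(2^j(MR)^{-1})$. Boundary windows near $k=0,T-1$ and $r=0,R-1$ are handled by the usual interior or periodic-extension conventions and contribute only lower-order terms.
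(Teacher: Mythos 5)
Your proposal is correct and takes essentially the same route as the paper: the expectation is obtained by the double average plus the Proposition~\ref{prop:EI} expansion and Lipschitz continuity in both arguments, and the variance is obtained by collapsing to $s=s'$ via uncorrelated replicates, applying Isserlis' theorem to write periodogram covariances as squared coefficient covariances (the paper reuses equation~\eqref{eq:covIdt}), and invoking the autocovariance-summability assumption to bound the time-lagged covariance sums by $\ORD(2^{j})$ (equation~\eqref{eq:Aeta}), which delivers the $M_T^{-1}$ gain. Your attribution of the residual $\ORD(2^{j}(MR)^{-1})$ to the finite-$R$ correction, on which only the replicate-smoothing factor acts, is exactly how that term arises in the paper's bookkeeping.
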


\begin{proof}
Appendix~\ref{app:proofs:itildetilde} contains the proof which makes use of the smoothing in both directions.
\end{proof}

The replicate- and time- smoothed periodogram can then be used to further build a well-behaved estimator of the unknown REWS by means of
\begin{equation*}\label{eq:Sestrt}
\hat{\hat{S}}_{j}(z,\nu) = \sum_{l=1}^{J}A_{j,l}^{-1}\tilde{\It}_{l,\lfloor zT \rfloor;T}^{\lfloor \nu R \rfloor;R}.
\end{equation*}
It is straightforward to show that this is also asymptotically unbiased and consistent for $S_j(z,\nu)$, in the same manner as in the proof of Proposition \ref{prop:CS}.

\section{Simulation Study}\label{sec:sims}

Here we aim to assess the behaviour of our proposed RLSW methodology as well as compare it to a classical approach involving the LSW methodology \citep{nvsk:2000}. Specifically, we evaluate (i) the classical approach where one would independently estimate the spectrum for each replicate using a localised time smoother and then average over all replicates (`LSW'), (ii) our proposed methodology involving localised smoothing over replicates (`RLSW$_1$'), and (iii) our proposed methodology involving localised smoothing over time and replicates (`RLSW$_2$'). In order to match the current practice for LSW estimation, e.g. \cite{nvsk:2000, park:mvlsw}, we have set $J=J(T)$ (corresponding to $\alpha=1$), although in a bivariate spectral estimation context \cite{sand:2010} set a similar measure to $\alpha=0.7$ and remark on its improved results when compared to $\alpha=1$. We carry out simulations over $N=100$ runs and explore performance across a range of time series lengths $T$ from $128$ to $1024$, number of replicates $R$ from $64$ to $512$ and smoothing windows $(2M + 1)$ from $9$ to $25$. We report the mean squared errors (MSE) and squared bias results.

Overall, based on our findings, we recommend the use of the method involving both time and replicate smoothing (RLSW$_2$) with a window length choice of $(2M+1)=0.15 R$ as a rule of thumb.

\noindent{\bf Illustrative example}. We choose to present here the behaviour of our proposed methodology on a process with a challenging spectral structure, as shown in Figure~\ref{fig:wavplots_SIM3} and mathematically defined in Section~\ref{supp:extradetail} of the Supplementary Material. Further simulation findings are detailed in Section~\ref{supp:furthersims} of the Supplementary Material, including MSE and squared bias tables.

For $R = 256$ replicates each of length $T = 256$, the process places spectral content at level $j = 5$, manifest through a decreasing amplitude of the cosine across the last 192 replicates, and at level $j = 6$, where the periodicity of the cosine increases across the first 128 replicates.

\begin{figure}[h!]
\begin{center}
\resizebox{!}{0.4\textwidth}{\includegraphics{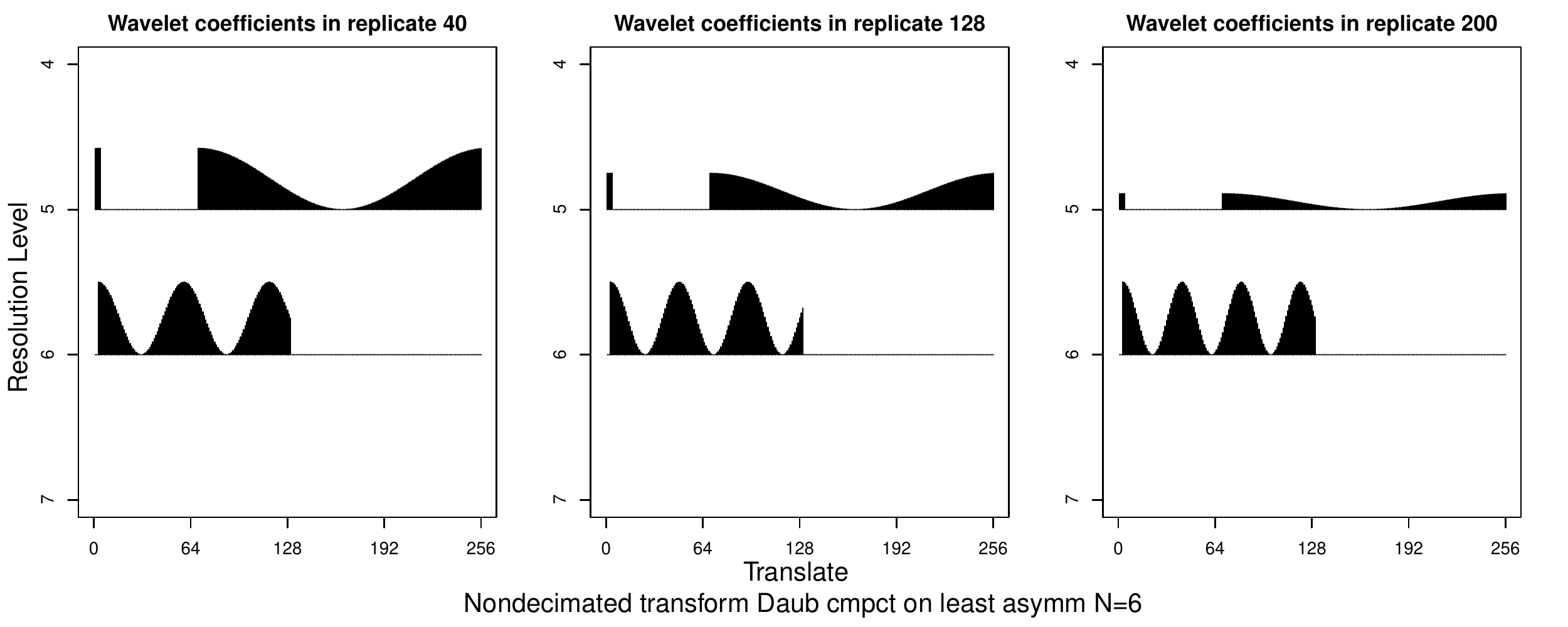}}
\end{center}
\vspace{-1em}
\caption{\small True evolutionary wavelet spectra for replicates 40, 128 and 200.}
\label{fig:wavplots_SIM3}
\end{figure}

A concatenated realisation of this process is shown in Figure~\ref{fig:reptsBY7_SIM3proc}. Note however that this is an abuse of representation, since each replicate is a time series of its own, and the sole purpose of this visualisation is to highlight the evolution of the meta-process. Furthermore this process departs somewhat from the requirement that the amplitudes evolve slowly over both rescaled time $(z)$ and replicate $(\nu)$ dimensions, however we show that despite this the methodology still performs well.

\begin{figure}[h!]
\begin{center}
\resizebox{!}{0.6\textwidth}{\includegraphics{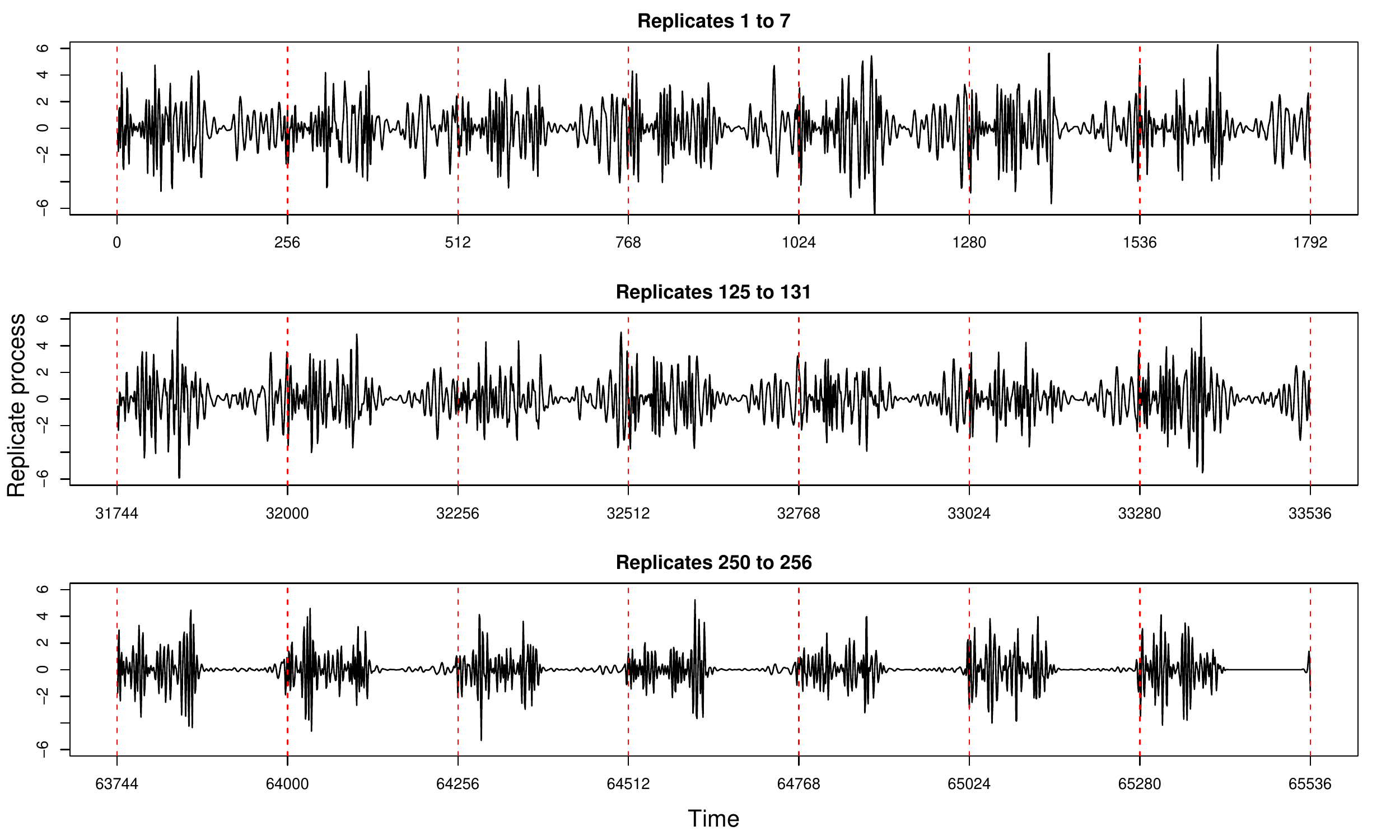}}
\end{center}
\vspace{-2em}
\caption{\small Realisation of a RLSW process with spectra as in Figure~\ref{fig:wavplots_SIM3}.}
\label{fig:reptsBY7_SIM3proc}
\end{figure}

Spectral estimates have been computed using discrete non-decimated wavelets built by means of Daubechies Least Asymmetric family with 6 vanishing moments (see \cite{daub:ten} for an understanding of Daubechies compactly supported wavelets). For the RLSW method, local averaging involved windows of 9 replicates corresponding to $M = 4$ and we note that numerical MSE results in Appendix~\ref{app:tables:sims} highlight that we chose to visually present here some of our least performant results. The LSW and RLSW($_1$) spectral estimates appear in Figure~\ref{fig:coef_SIM3proc_R256_T256}, along with the truth.

From the figures we get a visual clarification that the RLSW model is doing a good job at capturing the evolving characteristics of the spectra across replicates and the leakage across the neighbouring levels $j = 5$ and 6 is minor. They also highlight that when neglecting the possibility of evolutionary behaviour over replicates, when it is in fact present as seen for levels $5$ and $6$ in the top row plots of the true spectrum, the LSW model struggles to reflect this and either under or over-estimates, as seen in the middle row plots. The bottom row plots show that the RLSW$_1$ estimates do indeed pick up the evolution over replicates. Figures~\ref{fig:lev2_SIM3proc_R256_T256} and~\ref{fig:lev3_SIM3proc_R256_T256} further support the evolutionary behaviour of the spectral quantities over time and replicates in levels 5 and 6, respectively.

\begin{figure}[h!]
\begin{center}
\resizebox{!}{0.6\textwidth}{\includegraphics{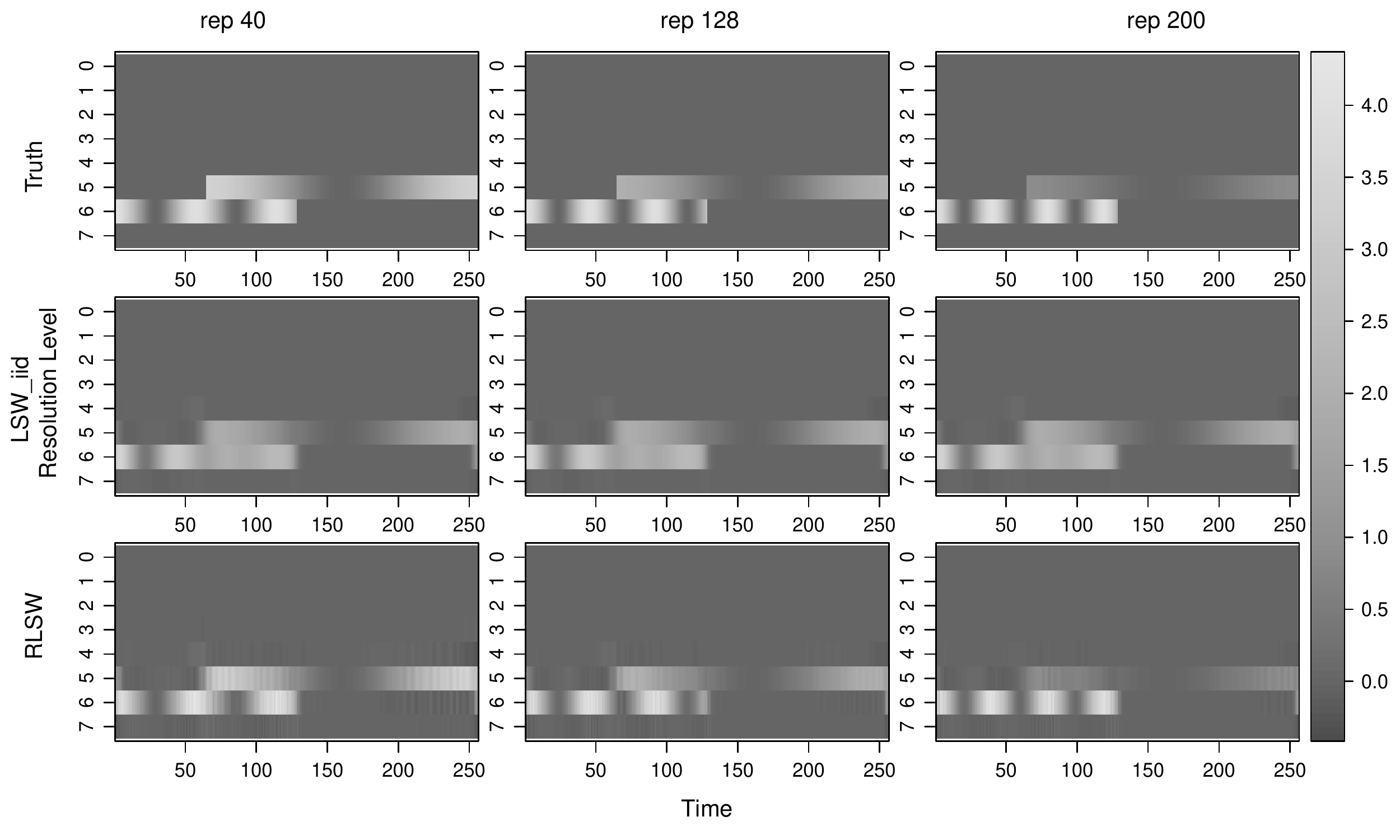}}
\end{center}
\vspace{-2em}
\caption{\small Time-scale plots for replicates 40, 128 and 200, respectively the first, second and third columns. Estimates are averaged  over 100 realisations. \textit{Top}: true spectra; \textit{Middle}: estimates from the LSW method averaged over all replicates; \textit{Bottom}: estimates using RLSW($_1$). }
\label{fig:coef_SIM3proc_R256_T256}
\end{figure}

\begin{figure}[htbp!]
\begin{center}
\resizebox{!}{0.6\textwidth}{\includegraphics{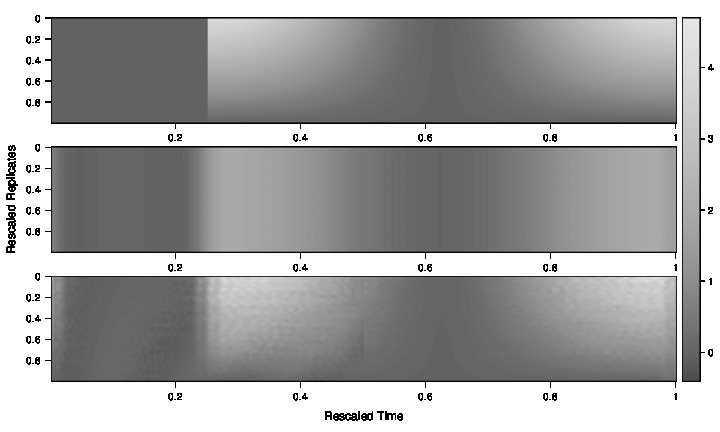}}
\end{center}
\vspace{-2em}
\caption{\small Time-replicate spectral plots for level 5. Estimates are averaged  over 100 realisations. \textit{Top}: true spectra; \textit{Middle}:  estimated spectra from the LSW method averaged over all replicates; \textit{Bottom}: estimates using RLSW($_1$). }
\label{fig:lev2_SIM3proc_R256_T256}
\end{figure}

\begin{figure}[htbp!]
\begin{center}
\resizebox{!}{0.6\textwidth}{\includegraphics{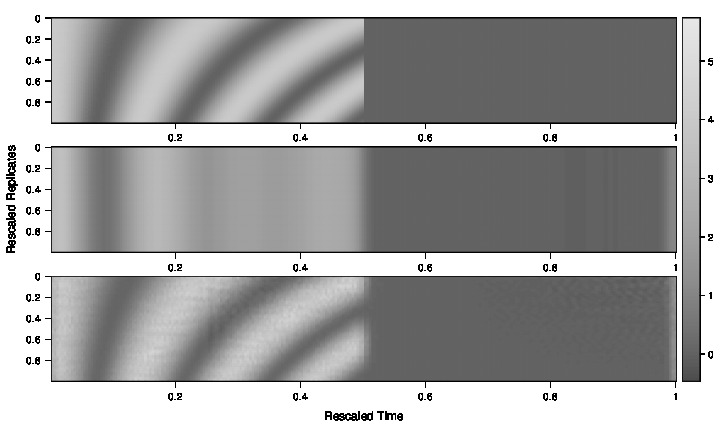}}
\end{center}
\vspace{-2em}
\caption{\small Time-replicate spectral plots for level 6. Estimates are averaged  over 100 realisations. \textit{Top}: true spectra; \textit{Middle}:  estimated spectra from the LSW method averaged over all replicates; \textit{Bottom}: estimates using RLSW($_1$).}
\label{fig:lev3_SIM3proc_R256_T256}
\end{figure}

Histograms of the MSEs over the 100 simulations are shown in Section~\ref{supp:extradetail} (Supplementary Material) and highlight not only how the increase in $M$ improves performance but also how the increase in $R$ and $T$ reduces the MSEs, thus demonstrating the expected asymptotic behaviour of our smoothed estimator. To numerically strengthen our visual inference, we examine the MSEs and squared bias results in Table~\ref{tab:msesim3} (Appendix~\ref{app:tables:sims}). The best results in terms of lowest MSEs are obtained by RLSW$_2$. Specifically, RLSW$_2$ does incur a somewhat higher bias than RLSW$_1$, but both our methods have a substantially lower bias than the LSW.
The higher bias of the LSW estimates is unsurprisingly resulting from averaging over all the replicates and thus failing to account for the evolutionary behaviour through replicates. The benefit of taking a local smoothing approach over both time and replicates is that it always results in spectral estimates with lower bias and MSE when compared to LSW, although it is worth pointing out that taking a local smoothing approach over replicates only, while yielding lower bias, might increase the MSE for inappropriately small windows.

Asymptotically, the MSEs associated to our methods decay much faster than for the LSW and as we increase the local averaging window length ($2M + 1$), the performance of our the RLSW methodology improves. A replicate window length choice of $(2M+1)$ equal to about 15\% of $R$ appears to work well across all our investigations.

\section{RLSW model embedding replicate coherence}\label{sec:modelext}
So far, a serious limitation is the assumption that the replicate time series are uncorrelated. We now develop the theory to allow for cross-trial dependence by means of a coherence structure along the replicate dimension. This is a major innovation of this work, as to the best of our knowledge this is the first paper that accounts for correlation across trials.

\begin{definition}\label{def:rlswc}
Let the sequence of stochastic processes $\{X_{t;T}^{r;R}\}_{t,r}$ be a \textit{replicate locally stationary wavelet} process as defined in Definition \ref{def:rlsw} with the following amendments to the properties:
\begin{enumerate}
\setcounter{enumi}{1}
\item (\textit{replacing property} 2) Additional to $\{\xi_{j,k}^{r}\}_{j,k}$ being orthonormal within replicate $r$, we have $\E[\xi_{j,k}^{r} \xi_{j',k'}^{r'}] = \delta_{j,j'}\delta_{k,k'}\rho_{j,k;T}^{r,r';R}$, where $\{\rho_{j,k;T}^{r,r';R}\}_k$ determine the innovation dependence structure between replicates $r$ and $r'$, at each scale $j$. Note that the within-replicate orthonormality induces $\left|\rho_{j,k;T}^{r,r';R}\right| \leq1$ for all $j$, $k$ and $r$, $r'$, with equality when $r=r'$.
\item (\textit{replacing property} 3(a))
\begin{align}\label{eq:unifWc}
\sum_{j=1}^{\infty} \left|\Wt_{j}(z,\nu)\Wt_{j}(z,\nu^\prime)\right| < \infty \ \text{uniformly in } z \in (0,1), \nu , \nu^\prime \in (0,1).
\end{align}
\item (\textit{additional property}) For each scale $j \geq 1$, there exists a Lipschitz continuous function in rescaled time ($z$) and rescaled replicate arguments $\nu$ and $\nu'$, denoted by $\rho_{j}(z, \nu, \nu')$, which constrains the covariance structure and fulfills the assumptions below, as follows
\begin{enumerate}
\item Let $Q_j^{\nu,\nu'}$ denote the bounded Lipschitz constant corresponding to the time dimension at particular (rescaled) replicates $\nu$ and $\nu'$, at scale $j$. Similarly, denote by $P_j^z$  the bounded Lipschitz constant corresponding to the replicate dimension at a particular (rescaled) time ($z$), at scale $j$. Denote $Q_j=\sup_{\substack{\nu,\nu'\in (0,1)}} Q_j^{\nu,\nu'}$, $P_j=\sup_{\substack{z\in (0,1)}} P_j^z$ and assume they are uniformly bounded in $j$. Further assume that
\begin{equation*}\label{eq:condLrho}
\sum_{j=1}^{\infty} 2^{j}Q_{j} < \infty \text{ and }
\sum_{j=1}^{\infty} 2^{j} P_j< \infty.
\end{equation*}
\item \label{enum:seq2} There exist sequences of bounded replicate-specific constants $\{\tilde{C}_{j}^{r,r'}\}_{r,r'}$ and location-specific constants $\{\tilde{D}_{j}^{k}\}_k$, such that for each $T$ and $R$ respectively, the covariances are forced to vary slowly, in the sense that
\begin{align}
\sup_{\substack{k=0: T-1}} \left| \rho_{j,k;T}^{r,r';R} - \rho_{j}\left(\frac{k}{T},\frac{r}{R},\frac{r'}{R}\right) \right| &\leq \frac{\tilde{C}_{j}^{r,r'}}{T}, \quad \forall j, r, r'\label{eq:supcohr}\\
\sup_{\substack{r,r'=0: R-1}}\left| \rho_{j,k;T}^{r,r';R} - \rho_{j}\left(\frac{k}{T},\frac{r}{R},\frac{r'}{R}\right)\right| &\leq \frac{\tilde{D}_{j}^{k}}{R}, \quad \forall j, k. \label{eq:supcoht}
\end{align}
Denote $\tilde{C}_j=\sup_{r,r'} \tilde{C}_j^{r,r'}$ and $\tilde{D}_j=\sup_k \tilde{D}_j^k$ and assume the sequences $\{\tilde{C}_{j}\}$, $\{\tilde{C}_{j}\}$ fulfill $\sum_{j=1}^{\infty} 2^{j}\tilde{C}_{j} < \infty$ and $\sum_{j=1}^{\infty} 2^{j}\tilde{D}_{j} < \infty$.
\end{enumerate}
\end{enumerate}
\end{definition}

\noindent{\bf Remark (rescaled replicate dependence).}
Note that from equations~\eqref{eq:supcohr} and ~\eqref{eq:supcoht} we directly obtain for each $T$ and $R$
\begin{align*}\label{eq:approxc}
\sup_{\substack{r,r'=0: R-1}}\sup_{\substack{k=0: T-1}}\left|\rho_{j,k;T}^{r,r';R} - \rho_{j}\left(\frac{k}{T},\frac{r}{R},\frac{r'}{R}\right) \right| &= \ORD(\tilde{C}_{j}{T}^{-1}) + \ORD(\tilde{D}_{j}{R}^{-1}).
\end{align*}

Hence for some rescaled time $z$ and rescaled replicates $\nu$ and $\nu'$, we have in the limit
\begin{equation*}
\rho_{j}(z, \nu,\nu') = \lim_{\substack{T\to\infty \\ R\to\infty}}\left(\rho_{j,\lfloor zT\rfloor;T}^{\lfloor \nu R\rfloor, \lfloor \nu' R\rfloor;R}\right).
\end{equation*}

For a scale $j$ and time location $k$, the quantity $\rho_{j}(\frac{k}{T}, \frac{r}{R},\frac{r'}{R})$ thus gives a measure of the dependence between replicates $r$ and $r'$.

\subsection{Evolutionary wavelet replicate--(cross-)spectrum and coherence}
We next define a measure for the scale, time and cross-replicate contribution to the overall process variance.

\begin{definition}
The {\em evolutionary wavelet replicate--cross-spectrum} defined at scale $j$, rescaled time $z$ within rescaled replicates $\nu$ and $\nu'$ is given by
\begin{equation*}
S_{j}(z,\nu,\nu') = \Wt_{j}(z,\nu)\Wt_{j}(z,\nu')\rho_{j}(z, \nu, \nu').
\end{equation*}
\end{definition}

In order to simplify the terminology, we also refer to this quantity as the \textit{replicate evolutionary wavelet cross-spectrum} (REWCS). Note in the above definition that the replicate--cross-spectrum corresponding to any rescaled replicate $\nu$, is nothing else but the spectrum corresponding to that replicate, \i.e. $S_{j}(z,\nu,\nu) =S_{j}(z,\nu)$. Alternatively, $|\tilde{W}_{j}(z,\nu)|=\left(S_{j}(z,\nu,\nu)\right)^{1/2}$.

\begin{definition}
For a replicate locally stationary wavelet process $\{X_{t;T}^{r;R}\}$, we define the replicate local cross-covariance (RLCCV) at rescaled time $z\in (0,1)$ within rescaled replicates $\nu$ and $\nu'$, both in $(0,1)$, to be given by
\begin{equation*}
c(z,\nu, \nu'; \uptau) = \sum_{j=1}^{\infty}S_{j}(z, \nu, \nu')\Psi_{j}(\uptau),
\end{equation*}
where $\uptau \in \mathbb{Z}$ is an integer time-lag and we recall that $\Psi_{j}(\uptau)=\sum_{k\in\Z}\psi_{j,k}(0)\psi_{j,k}(\uptau)$ is the scale $j$ autocorrelation wavelet.
\end{definition}

Akin to established time series literature, we can next rephrase the dependence measure as
\begin{equation} \label{eq:lsrc}
\rho_{j}(z, \nu, \nu') = \frac{S_{j}(z,\nu,\nu')}{\left\{S_{j}(z,\nu)S_{j}(z,\nu')\right\}^{1/2}},
\end{equation}
and we shall refer to it as the \textit{locally stationary replicate--coherence}, with values ranging from $-1$, indicating an absolute negative correlation, to $1$ indicating an absolute positive correlation.

Note that $|c(z,\nu, \nu'; \uptau)|<\infty$ follows directly from the coherence range between $-1$ and $1$, and from the uniform bounds in lag ($\uptau$) and rescaled-replicate time ($\nu$, $\nu'$) for both the limiting amplitudes and the autocorrelation wavelets (see equation~\eqref{eq:unifWc}).

Unsurprisingly, the local autocovariance can be shown to be an approximation of the process autocovariance, as follows.

\begin{proposition}\label{prop:autocovc}
For a RLSW process $\{X_{t;T}^{r;R}\}$ with properties as in Definition \ref{def:rlswc},\\ $\left| \cov(X_{\lfloor zT\rfloor;T}^{\lfloor \nu R\rfloor;T}, X_{\lfloor zT \rfloor +\uptau ;T}^{\lfloor \nu' R\rfloor;T} )- c(z,\nu,\nu'; \uptau) \right|=\ORD(T^{-1})+\ORD(R^{-1})$,
uniformly in $\uptau$ at (rescaled) time $z$ and replicates $\nu$, $\nu'$.
\end{proposition}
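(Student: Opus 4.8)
The plan is to follow the strategy of the proof of Proposition~\ref{prop:autocov}, now carrying the additional cross-replicate coherence approximation and accounting for the presence of two distinct replicate indices.

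First I would expand the covariance using the RLSW representation \eqref{eq:rlsw} together with the amended second-moment property (property 2 of Definition~\ref{def:rlswc}). Writing $k_0=\lfloor zT\rfloor$, $r=\lfloor\nu R\rfloor$, $r'=\lfloor\nu' R\rfloor$, mean-zeroness and $\E[\xi_{j,k}^{r}\xi_{j',k'}^{r'}]=\delta_{j,j'}\delta_{k,k'}\rho_{j,k;T}^{r,r';R}$ collapse the double sum to
\[
\cov\!\left(X_{k_0;T}^{r;R},X_{k_0+\uptau;T}^{r';R}\right)
=\sum_{j=1}^{\infty}\sum_{k\in\Z}\omega_{j,k;T}^{r;R}\,\omega_{j,k;T}^{r';R}\,\rho_{j,k;T}^{r,r';R}\,\psi_{j,k}(k_0)\psi_{j,k}(k_0+\uptau).
\]
The translation invariance of the non-decimated wavelets gives $\sum_k \psi_{j,k}(k_0)\psi_{j,k}(k_0+\uptau)=\Psi_j(\uptau)$ independently of $k_0$, so the target $c(z,\nu,\nu';\uptau)=\sum_j \Wt_j(z,\nu)\Wt_j(z,\nu')\rho_j(z,\nu,\nu')\Psi_j(\uptau)$ carries the same wavelet weighting. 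The problem thus reduces to controlling, scale by scale, the discrepancy between the discrete product $\omega_{j,k;T}^{r;R}\omega_{j,k;T}^{r';R}\rho_{j,k;T}^{r,r';R}$ and the limiting product $\Wt_j(z,\nu)\Wt_j(z,\nu')\rho_j(z,\nu,\nu')$.

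Next I would split that discrepancy telescopically into three pieces, one replacing each of $\omega^{r}$ and $\omega^{r'}$ by the corresponding $\Wt_j(\cdot,\cdot)$ and one replacing $\rho_{j,k;T}^{r,r';R}$ by $\rho_j(\cdot,\cdot,\cdot)$, each multiplied by the remaining, uniformly bounded, factors. Boundedness of the amplitudes and $|\rho|\le 1$ hold since \eqref{eq:unifWc} with $\nu'=\nu$ recovers square-summability of $\{\Wt_j\}$, while within-replicate orthonormality forces $|\rho_{j,k;T}^{r,r';R}|\le 1$. The amplitude replacements are controlled by the retained part 3(c) of Definition~\ref{def:rlsw} (equations \eqref{eq:supWr}--\eqref{eq:supWt}, giving $\ORD(C_j T^{-1})+\ORD(D_j R^{-1})$), and the coherence replacement by the new bounds \eqref{eq:supcohr}--\eqref{eq:supcoht} (giving $\ORD(\tilde C_j T^{-1})+\ORD(\tilde D_j R^{-1})$). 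I would then pass from the continuous functions evaluated at $(k/T,r/R,r'/R)$ to their values at $(z,\nu,\nu')$ by Lipschitz continuity: since $k$ ranges over the $\ORD(2^j)$-length support of $\psi_{j,k}(k_0)$ we have $|k/T-z|=\ORD(2^j T^{-1})$ while $|r/R-\nu|,|r'/R-\nu'|=\ORD(R^{-1})$, contributing $\ORD(L_j 2^j T^{-1})+\ORD(N_j R^{-1})$ for each $\Wt_j$ factor and $\ORD(Q_j 2^j T^{-1})+\ORD(P_j R^{-1})$ for $\rho_j$.

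Finally I would reassemble. Using $\sum_k|\psi_{j,k}(k_0)\psi_{j,k}(k_0+\uptau)|\le 1$ (Cauchy--Schwarz with $\sum_k\psi_{j,k}(t)^2=\Psi_j(0)=1$), the scale-$j$ error is bounded by a constant multiple of $\bigl(C_j+\tilde C_j+(L_j+Q_j)2^j\bigr)T^{-1}+\bigl(D_j+\tilde D_j+N_j+P_j\bigr)R^{-1}$, uniformly in $\uptau$. Summing over $j$ and invoking the summability assumptions ($\sum_j 2^j C_j$, $\sum_j 2^j\tilde C_j$, $\sum_j 2^j L_j$, $\sum_j 2^j Q_j<\infty$ for the $T^{-1}$ terms, and the analogous $2^j$-weighted bounds on $D_j,\tilde D_j,N_j,P_j$ for the $R^{-1}$ terms, all of which dominate the unweighted sums actually needed) yields the claimed $\ORD(T^{-1})+\ORD(R^{-1})$ bound. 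The main obstacle I anticipate is the bookkeeping of the $2^j$ factors: only the time-dimension Lipschitz replacements acquire the $2^j$ weight from the wavelet support, and it is precisely these that force the $2^j$-weighted summability on $L_j$ and $Q_j$. Keeping the cross-terms of the triple product controlled while maintaining uniformity in $\uptau$, so that the bound does not degrade as $|\uptau|$ grows within the support, is where the care is needed.
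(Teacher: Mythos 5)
Your proposal is correct and follows essentially the same route as the paper's proof: expand the covariance via the RLSW representation and the amended innovation property, reduce the problem to comparing the discrete product $\omega^{r}_{j,k}\omega^{r'}_{j,k}\rho^{r,r'}_{j,k}$ with the limiting cross-spectrum $S_j(z,\nu,\nu')$, and control the discrepancy through the approximation bounds and Lipschitz continuity, with the $2^j$ weight arising only from the time-dimension replacements and absorbed by the assumed $2^j$-weighted summability of the constants. The paper simply packages your explicit three-term telescoping into its stated product-approximation bound (with constants $C_j'$, $D_j'$) and the time-Lipschitz continuity of the product (Lemma~\ref{lem:lipz}), and then invokes the argument of Proposition~\ref{prop:autocov} verbatim, so your write-up is, if anything, a more detailed version of the same proof.
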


\begin{proof}
The proof appears in Section~\ref{supp:proofs:autocovc} of the Supplementary Material.
\end{proof}

\subsection{Estimation theory}\label{sec:modelext_est}
Recall, in the absence of cross-trial dependence, the raw wavelet periodogram is given by $I_{j,k;T}^{r;R} = |d_{j,k;T}^{ r;R}|^{2}$. We next introduce a cross-replicate version.

\begin{definition}
For a scale $j$ and location $k$, we define the raw wavelet cross-periodogram between replicates $r$ and $r'$ of a RLSW process to be
\begin{equation}\label{eq:rawIc}
I_{j,k;T}^{(r,r');R} = d_{j,k;T}^{ r;R}d_{j,k;T}^{ r';R}.
\end{equation}
\end{definition}

\noindent The theoretical results below are derived under the Gaussianity assumption.

\begin{proposition}\label{prop:EIc}
For a RLSW process $\{X_{t;T}^{r;R}\}$ as in Definition~\ref{def:rlswc}, the wavelet cross-periodogram has the following asymptotic properties:

Expectation
\begin{equation*} 
\E\left[I_{j,\lfloor kT \rfloor;T}^{(\lfloor \nu R \rfloor,\lfloor \nu' R \rfloor);R}\right] = \sum_{l=1}^{\infty}A_{j,l} S_{l}\left(z,\nu,\nu'\right) + \ORD(2^jT^{-1}) + \ORD(R^{-1}),
\end{equation*}

Variance
\begin{align*}\nonumber
\text{var}\left(I_{j,\lfloor kT \rfloor;T}^{\lfloor \nu R \rfloor,\lfloor \nu' R \rfloor;R}\right) &= \left(\sum_{l=1}^{\infty}A_{j,l} S_{l}\left(z,\nu\right)\right)
\left(\sum_{l=1}^{\infty}A_{j,l} S_{l}\left(z,\nu'\right)\right) \nonumber \\
& \qquad + \left(\sum_{l=1}^{\infty}A_{j,l} S_{l}\left(z,\nu,\nu'\right)\right)^{2} + \ORD(2^{2j}T^{-1}) + \ORD(2^{j}R^{-1}).\nonumber
\end{align*}
\end{proposition}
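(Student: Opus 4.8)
The plan is to prove the two statements separately, mirroring the proof of Proposition~\ref{prop:EI} for the auto-periodogram but carrying the two distinct replicate indices $r,r'$ and the coherence factor throughout. The expectation is a direct second-moment computation exploiting the amended covariance structure of the innovations; the variance will then be reduced to expectations by exploiting Gaussianity, so that it becomes essentially a corollary of the expectation result together with Proposition~\ref{prop:EI}.

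For the expectation, I would first expand each empirical coefficient via the representation~\eqref{eq:rlsw}, writing
\[
d_{j,k;T}^{r;R} = \sum_{j_1}\sum_{k_1}\omega_{j_1,k_1;T}^{r;R}\,\xi_{j_1,k_1}^{r}\,\langle\psi_{j_1,k_1},\psi_{j,k}\rangle ,
\]
and analogously for $d_{j,k;T}^{r';R}$. Taking the expectation of $I_{j,k;T}^{(r,r');R}=d_{j,k;T}^{r;R}d_{j,k;T}^{r';R}$ and invoking the amended property $\E[\xi_{j_1,k_1}^{r}\xi_{j_2,k_2}^{r'}]=\delta_{j_1,j_2}\delta_{k_1,k_2}\rho_{j_1,k_1;T}^{r,r';R}$ collapses the double scale/location sum to
\[
\E\!\left[I_{j,k;T}^{(r,r');R}\right]=\sum_{j_1}\sum_{k_1}\omega_{j_1,k_1;T}^{r;R}\,\omega_{j_1,k_1;T}^{r';R}\,\rho_{j_1,k_1;T}^{r,r';R}\,\langle\psi_{j_1,k_1},\psi_{j,k}\rangle^{2}.
\]
I would then replace the three sampled quantities by their limiting counterparts $\Wt_{j_1}(z,\nu)$, $\Wt_{j_1}(z,\nu')$ and $\rho_{j_1}(z,\nu,\nu')$. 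The replacements along the replicate axis, controlled by~\eqref{eq:supWt} and~\eqref{eq:supcoht}, contribute the $\ORD(R^{-1})$ term, whereas the replacement of the time argument, controlled by~\eqref{eq:supWr} and~\eqref{eq:supcohr} together with the Lipschitz bounds and the $\ORD(2^{j})$ growth of the summed autocorrelation-wavelet inner products, contributes the $\ORD(2^{j}T^{-1})$ term. Finally, the exact identity $\sum_{k_1}\langle\psi_{j_1,k_1},\psi_{j,k}\rangle^{2}=A_{j,j_1}$ (obtained by writing the inner products as cross-correlations and summing over the location), combined with the definition $S_{j_1}(z,\nu,\nu')=\Wt_{j_1}(z,\nu)\Wt_{j_1}(z,\nu')\rho_{j_1}(z,\nu,\nu')$, yields the leading term $\sum_{l}A_{j,l}S_{l}(z,\nu,\nu')$.

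For the variance I would invoke Gaussianity: since $\{X_{t;T}^{r;R}\}$ is Gaussian, $d_{j,k;T}^{r;R}$ and $d_{j,k;T}^{r';R}$ are jointly Gaussian with mean zero, so Isserlis' (Wick's) theorem gives the exact identity
\[
\text{var}\!\left(I_{j,k;T}^{(r,r');R}\right)=\E\!\left[(d_{j,k;T}^{r;R})^{2}\right]\E\!\left[(d_{j,k;T}^{r';R})^{2}\right]+\left(\E\!\left[d_{j,k;T}^{r;R}d_{j,k;T}^{r';R}\right]\right)^{2}.
\]
The first two factors are exactly the auto-periodogram expectations, supplied by Proposition~\ref{prop:EI} as $\sum_l A_{j,l}S_l(z,\nu)+\ORD(2^{j}T^{-1})+\ORD(R^{-1})$ and its $\nu'$ analogue; the last factor is the cross-periodogram expectation just derived. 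Substituting these three expansions and expanding the products reproduces the two stated leading terms, $\big(\sum_l A_{j,l}S_l(z,\nu)\big)\big(\sum_l A_{j,l}S_l(z,\nu')\big)$ and $\big(\sum_l A_{j,l}S_l(z,\nu,\nu')\big)^2$. The error orders then follow because the row sums of the inner-product matrix satisfy $\sum_l|A_{j,l}|=\ORD(2^{j})$, so each leading spectral sum is itself $\ORD(2^{j})$; consequently every cross term of the form (leading)$\times$(error) is $\ORD(2^{j})\cdot[\ORD(2^{j}T^{-1})+\ORD(R^{-1})]=\ORD(2^{2j}T^{-1})+\ORD(2^{j}R^{-1})$, matching the claim.

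The main obstacle is the error bookkeeping in the expectation step: one must handle the two transfer functions and the coherence simultaneously, justify interchanging the limit with the scale/location summation, and show that the combined approximation and Lipschitz errors, once weighted by $\langle\psi_{j_1,k_1},\psi_{j,k}\rangle^{2}$ and summed over $k_1$ and $j_1$, collapse to $\ORD(2^{j}T^{-1})+\ORD(R^{-1})$ uniformly, without residual dependence on $\nu,\nu'$. This coupling of the cross-replicate amplitudes with the coherence is the genuinely new ingredient relative to the single-replicate theory of \cite{nvsk:2000}; once it is controlled, the variance is immediate from Gaussianity and Proposition~\ref{prop:EI}.
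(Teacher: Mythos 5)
Your proposal is correct and takes essentially the same route as the paper: the paper omits this proof, stating that it follows the auto-periodogram argument of Appendix~\ref{app:proofs:rawi}, and your expectation computation is exactly that argument carried over with the coherence factor (innovation-covariance collapse, the approximations~\eqref{eq:supcohr}--\eqref{eq:supcoht}, and the Lipschitz/compact-support bookkeeping that the paper packages in Lemmas~\ref{lem:lipz} and~\ref{lem:ASordc}). Your variance step --- Isserlis applied directly to the jointly Gaussian coefficients $d_{j,k;T}^{r;R}$, $d_{j,k;T}^{r';R}$ and then substituting the expectation expansions --- is the same Wick decomposition underlying the paper's $\alpha+\beta+\gamma$ split (and is precisely the coefficient-level identity the paper itself uses in equation~\eqref{eq:covtau}), just organized more economically by recognizing each factor as a periodogram expectation.
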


\begin{proof}
The proofs follow similar steps to their non-coherence counterpart in Appendix~\ref{app:proofs:rawi} and are thus omitted here.
\end{proof}
\noindent{\bf Remark (replicate smoothing).} In line with the spectral `similarity' amongst neighbouring replicates, we proceed by smoothing the cross-periodograms across replicates for consistency and then correcting them for bias.

\begin{definition}
We define a smoothed estimator across the replicate dimension to be
\begin{equation}\label{eq:reppersmooth}
\It_{j,k;T}^{(r,r');R} = (2M + 1)^{-1} \sum_{s=-M}^{M}I_{j,k;T}^{(r+s,r'+s);R},
\end{equation}
where $(2M+1)$ is the length of the smoothing window and $M$ is an integer such that as $T, R \to \infty$, we have that $M \to \infty$ and $M/R \to 0$.
\end{definition}

\begin{proposition}\label{prop:ESIc} Under the properties of Definition \ref{def:rlswc} and the additional assumption\\ $\sup_{z,\nu\in (0,1)}\sum_{\eta\in\Z} \left| c(z,\nu,\nu+\frac{\eta}{R};\uptau) \right|=\ORD(1)$ for any time lag $\uptau$, the {\em replicate-smoothed} wavelet cross-periodogram defined above has the following asymptotic properties:

Expectation
\begin{equation*}\label{eq:ESIc}
\E\left[\It_{j,\lfloor kT \rfloor;T}^{(\lfloor \nu R \rfloor,\lfloor \nu' R \rfloor);R}\right] = \sum_{l=1}^{\infty}A_{j,l} S_{l}\left(z,\nu,\nu'\right)+ \ORD(MR^{-1}) + \ORD(2^{j}T^{-1}),
\end{equation*}

Variance
\begin{equation*}\nonumber
\text{var}\left(\It_{j,\lfloor kT \rfloor;T}^{(\lfloor \nu R \rfloor,\lfloor \nu' R \rfloor);R}\right)=\ORD(2^{2j}M^{-1})+ \ORD(2^j R^{-1})+ \ORD(M^2R^{-2}).
\end{equation*}
\end{proposition}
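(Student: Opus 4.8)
The strategy is to build directly on Proposition~\ref{prop:EIc}, which supplies the first two moments of the raw cross-periodogram~\eqref{eq:rawIc}, and to propagate them through the replicate-averaging in~\eqref{eq:reppersmooth}: linearity handles the mean, while the Gaussian fourth-moment (Isserlis/Wick) factorisation handles the variance. For the expectation, by linearity
\[
\E\left[\It_{j,k;T}^{(r,r');R}\right] = (2M+1)^{-1}\sum_{s=-M}^{M}\E\left[I_{j,k;T}^{(r+s,r'+s);R}\right],
\]
and Proposition~\ref{prop:EIc} rewrites each summand as $\sum_{l}A_{j,l}S_{l}(z,(r+s)/R,(r'+s)/R)$ plus $\ORD(2^{j}T^{-1})+\ORD(R^{-1})$. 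I would then use the Lipschitz continuity of the cross-spectrum in both replicate arguments (inherited from the Lipschitz $\Wt_{j}$ and $\rho_{j}$ via properties~\ref{enum:seq2} of Definition~\ref{def:rlswc}) to replace $S_{l}(z,(r+s)/R,(r'+s)/R)$ by $S_{l}(z,\nu,\nu')$ at a cost of $\ORD(|s|/R)$ per term; averaging over $|s|\le M$ produces the $\ORD(MR^{-1})$ bias, which absorbs the per-term $\ORD(R^{-1})$, and leaves the $\ORD(2^{j}T^{-1})$ term, giving the stated expectation.

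For the variance, writing $d_{p}:=d_{j,k;T}^{p;R}$, I would expand
\[
\text{var}\left(\It_{j,k;T}^{(r,r');R}\right) = (2M+1)^{-2}\sum_{s,s'=-M}^{M}\cov\left(d_{r+s}d_{r'+s},\,d_{r+s'}d_{r'+s'}\right),
\]
and apply Isserlis' theorem to each zero-mean Gaussian quadruple,
\[
\cov(d_{r+s}d_{r'+s},d_{r+s'}d_{r'+s'}) = \E[d_{r+s}d_{r+s'}]\E[d_{r'+s}d_{r'+s'}] + \E[d_{r+s}d_{r'+s'}]\E[d_{r'+s}d_{r+s'}].
\]
Each factor is a wavelet-coefficient cross-covariance equal, at leading order, to $\sum_{l}A_{j,l}S_{l}=\sum_{\uptau}\Psi_{j}(\uptau)\,c(z,\cdot,\cdot;\uptau)$ evaluated at the relevant replicate pair. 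I would reindex the double sum by the replicate separation $\eta=s-s'$ (there being $\ORD(M)$ ordered pairs per $\eta$) and control the sum over $\eta$ through the new summability hypothesis $\sup_{z,\nu}\sum_{\eta\in\Z}|c(z,\nu,\nu+\eta/R;\uptau)|=\ORD(1)$ together with the autocorrelation-wavelet bound $\sum_{\uptau}|\Psi_{j}(\uptau)|=\ORD(2^{j})$, which jointly give $\sum_{\eta}|\E[d_{r}d_{r+\eta}]|=\ORD(2^{j})$. Combining the $\ORD(M)$ pairs per separation, the $\ORD(2^{j})$ separation-sum, a bounded $\ORD(2^{j})$ companion factor, and the $(2M+1)^{-2}$ prefactor delivers the leading $\ORD(2^{2j}M^{-1})$. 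The corrections $\ORD(2^{j}R^{-1})$ and $\ORD(M^{2}R^{-2})$ follow by carrying the per-coefficient errors (of size $\ORD(2^{j}T^{-1})+\ORD(R^{-1})$ from the definition, and $\ORD(M/R)$ from the within-window Lipschitz variation of the cross-spectrum) through the two bilinear Isserlis terms.

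The main obstacle, and the crux distinguishing this result from its uncorrelated counterpart Proposition~\ref{prop:ESI}, is that here the off-diagonal covariances ($s\ne s'$) no longer vanish, since neighbouring replicates are genuinely dependent through $\rho_{j}$. Their accumulation can only be tamed via the replicate-separation summability assumption, and the delicate bookkeeping required to show that these off-diagonal contributions remain of the same $\ORD(2^{2j}M^{-1})$ order as the diagonal — rather than summing to an $\ORD(1)$ quantity that would defeat consistency — is the technical heart of the proof. The attendant $\ORD(M^{2}R^{-2})$ term, which replaces the $\ORD(MR^{-2})$ of the uncorrelated case, arises precisely from the compounding of two $\ORD(M/R)$ within-window approximation errors over these non-vanishing off-diagonal terms and is the signature of the cross-replicate dependence.
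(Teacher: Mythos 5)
Your proposal is correct and follows essentially the same route as the paper's own proof: linearity plus Proposition~\ref{prop:EIc} and the replicate-Lipschitz continuity of the cross-spectrum (Lemma~\ref{lem:lipr}) for the expectation, and for the variance the same double-sum expansion, Isserlis factorisation, reindexing by the separation $\eta=s'-s$, and the same use of the summability assumption combined with $\sum_{\uptau}\left|\Psi_{j}(\uptau)\right|=\ORD(2^{j})$ to bound the $\eta$-sum at $\ORD(2^{j})$ against a uniformly $\ORD(2^{j})$ companion factor, with identical error bookkeeping yielding $\ORD(2^{2j}M^{-1})+\ORD(2^{j}R^{-1})+\ORD(M^{2}R^{-2})$. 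The only detail you gloss over, and which the paper states explicitly, is that approximating the mixed-replicate Isserlis factors $\E\bigl[d_{j,k}^{r+s}d_{j,k}^{r'+s+\eta}\bigr]$ by same-replicate cross-spectral quantities costs an extra $\ORD(|r-r'|R^{-1})$ and hence requires $|r-r'|$ to remain bounded.
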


\begin{proof}
Appendix~\ref{app:proofs:itildec} contains the proof.
\end{proof}

The bias of the replicate-smoothed wavelet cross-periodogram becomes asymptotically negligible and $\text{var}\left(\It_{j,k;T}^{(r,r');R}\right) \to 0$ for any scale $j$ with $2^j=o\left( \mbox{min}\{ T, R, (2M+1)^{1/2}\}\right)$ as $T$, $R$, $M \to \infty$ and $M/R \to 0$. Then correcting for the bias will yield a desirable spectral estimator, as follows.

\begin{proposition}\label{prop:CSC}
The following is an asymptotically unbiased and consistent replicate-smoothed estimator for the REWCS
\begin{equation}\label{eq:corrspecc}
\hat{S}_{j}(z,\nu, \nu') = \sum_{l=1}^{J}A_{j,l}^{-1}\It_{l,\lfloor zT \rfloor;T}^{(\lfloor \nu R \rfloor,\lfloor \nu' R \rfloor);R},
\end{equation}
where $A_{j,l}^{-1}$ is the $(j,l)$ entry of the inverse of the inner product matrix $A$ of the autocorrelation wavelets and $J=\lfloor \alpha J(T) \rfloor$ with $\alpha \in (0,1)$.
\end{proposition}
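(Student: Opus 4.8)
The plan is to follow the same route as the proof of Proposition~\ref{prop:CS}, now feeding in the expectation and variance of the replicate-smoothed wavelet cross-periodogram established in Proposition~\ref{prop:ESIc}. Writing $\beta_{j}(z,\nu,\nu')=\sum_{l=1}^{\infty}A_{j,l}S_{l}(z,\nu,\nu')$ for the cross-spectral analogue of the transformed quantity $\beta$, Proposition~\ref{prop:ESIc} tells us that $\It_{l}^{(\nu,\nu')}$ is asymptotically unbiased and consistent for $\beta_{l}(z,\nu,\nu')$. Since $\hat{S}_{j}(z,\nu,\nu')$ in~\eqref{eq:corrspecc} is obtained by applying the bounded inverse matrix $A^{-1}$ (truncated at level $J=\lfloor\alpha J(T)\rfloor$) to these estimators, the task reduces to verifying that this linear map transfers the good behaviour of $\{\It_{l}^{(\nu,\nu')}\}_{l}$ to $\hat{S}_{j}$.

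For asymptotic unbiasedness, I would take expectations through the finite sum in~\eqref{eq:corrspecc}, substitute the expectation from Proposition~\ref{prop:ESIc}, and split the result into a leading term $\sum_{l=1}^{J}A_{j,l}^{-1}\beta_{l}(z,\nu,\nu')=\sum_{l=1}^{J}A_{j,l}^{-1}\sum_{l'=1}^{\infty}A_{l,l'}S_{l'}(z,\nu,\nu')$ plus accumulated error terms $\sum_{l=1}^{J}A_{j,l}^{-1}\bigl(\ORD(MR^{-1})+\ORD(2^{l}T^{-1})\bigr)$. Interchanging the order of summation in the leading term and using $\sum_{l=1}^{\infty}A_{j,l}^{-1}A_{l,l'}=\delta_{j,l'}$ recovers $S_{j}(z,\nu,\nu')$ exactly, leaving a truncation remainder $\sum_{l'=1}^{\infty}S_{l'}(z,\nu,\nu')\sum_{l=J+1}^{\infty}A_{j,l}^{-1}A_{l,l'}$; this vanishes as $J\to\infty$ by the boundedness of the $A^{-1}$ norm together with the uniform summability of the cross-spectrum guaranteed by~\eqref{eq:unifWc} (recalling $|\rho_{j}|\le1$), exactly as in \cite{nvsk:2000}. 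The error terms are controlled via the boundedness of $A^{-1}$: the $\ORD(MR^{-1})$ contribution remains $\ORD(MR^{-1})$, while summing $\ORD(2^{l}T^{-1})$ over $l\le J$ yields $\ORD(2^{J}T^{-1})=\ORD(T^{\alpha-1})$, which tends to zero since $\alpha\in(0,1)$.

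For consistency I would bound the variance of the linear combination in~\eqref{eq:corrspecc} by $\left(\sum_{l=1}^{J}|A_{j,l}^{-1}|\,\{\text{var}(\It_{l}^{(\nu,\nu')})\}^{1/2}\right)^{2}$ via the Cauchy--Schwarz inequality applied to the cross-scale covariances, and then insert the variance bound $\ORD(2^{2l}M^{-1})+\ORD(2^{l}R^{-1})+\ORD(M^{2}R^{-2})$ from Proposition~\ref{prop:ESIc}. For a fixed scale $j$, the boundedness and the decay in $|j-l|$ of the entries of $A^{-1}$ keep the weighted sum over scales under control, so that each of the three contributions vanishes under the scale restriction $2^{j}=o(\min\{T,R,(2M+1)^{1/2}\})$ together with $M/R\to0$ as $T,R,M\to\infty$; hence the variance tends to zero, and combining with the bias gives mean-square convergence. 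The main obstacle is precisely this bookkeeping: one must show that applying the truncated inverse $A^{-1}$ neither amplifies the slowly vanishing error terms in the bias nor destroys consistency through the $2^{2l}$ growth in the variance bound, both of which are tamed only by exploiting the boundedness and decay properties of $A^{-1}$ alongside the fixed-scale restriction on $j$.
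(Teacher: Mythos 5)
Your proposal is correct in substance and largely parallels the paper: the paper's proof of Proposition~\ref{prop:CSC} is literally a one-line reference to the proof of Proposition~\ref{prop:CS}, and your bias argument reproduces the expectation part of that proof --- linearity of $\E$, substitution of Proposition~\ref{prop:ESIc}, the identity $\sum_{l}A^{-1}_{j,l}A_{l,l'}=\delta_{j,l'}$, and the bound $\sum_{l\le J}2^{l}=\ORD(T^{\alpha})$ giving residual errors $\ORD(T^{\alpha-1})+\ORD(MR^{-1})$ --- with the added care of isolating the truncation remainder $\sum_{l'}S_{l'}(z,\nu,\nu')\sum_{l>J}A^{-1}_{j,l}A_{l,l'}$, which the paper silently absorbs by writing $(A^{-1}A)_{j,l'}$ as though the inner sum were untruncated. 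Where you genuinely diverge is the consistency half: the paper deduces $\hat{S}_{j}(z,\nu,\nu')\stackrel{P}{\to}S_{j}(z,\nu,\nu')$ from componentwise consistency of the smoothed cross-periodograms via the continuous mapping theorem applied to the linear map $g(x_{1},\ldots,x_{J})=\sum_{l}A^{-1}_{j,l}x_{l}$, whereas you establish mean-square convergence directly through a Cauchy--Schwarz bound on the variance of the linear combination. Your route buys the stronger $L^{2}$ conclusion, but it forces exactly the cross-scale bookkeeping you flag: with $\mathrm{var}\bigl(\It_{l}^{(\cdot,\cdot)}\bigr)=\ORD(2^{2l}M^{-1})+\ORD(2^{l}R^{-1})+\ORD(M^{2}R^{-2})$, the weighted sum of standard deviations over $l\le J$ is of order $2^{J}M^{-1/2}$ under mere boundedness of the $A^{-1}$ entries, improving only to roughly $2^{(J-j)/2}M^{-1/2}$ under Haar-type decay $A^{-1}_{j,l}=\ORD(2^{-(j+l)/2})$, so a vanishing variance genuinely requires a rate condition tying $2^{J}=T^{\alpha}$ to $M$, beyond the stated $M/R\to0$ and the restriction on the fixed output scale $j$. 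This is not a defect relative to the paper, however: the paper's continuous-mapping argument carries the identical unstated requirement, hidden in the phrase that the smoothed periodograms are consistent ``for all fine enough scales $l$'' even though the estimator sums over all $l\le J$; your write-up simply makes explicit the bookkeeping both proofs need.
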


\begin{proof}
The proof follows the same steps as for Proposition~\ref{prop:CS}.
\end{proof}

This paves the way towards proposing the replicate--coherence estimator
\begin{equation}\label{eq:rhoc}
\hat{\rho}_{j}(z,\nu, \nu') = \frac{\hat{S}_{j}(z,\nu,\nu')}{\left\{\hat{S}_{j}(z,\nu)\hat{S}_{j}(z,\nu')\right\}^{1/2}},
\end{equation}
where the involved spectral quantities are consistently estimated as proposed in Proposition~\ref{prop:CSC} (see also equation~\eqref{eq:reppersmooth}) and the use of the same smoothing windows guarantees that the values of the resulting coherence estimator are indeed quantities between $-1$ and $1$ (a proof appears in Section~\ref{supp:proofs:rhoclim} of the Supplementary Material).

The following proposition shows that the step of examining $\hat{\rho}_{j}(z,\nu, \nu')$ is theoretically justified.
\begin{proposition}\label{prop:rhoc}
Under the assumptions of Proposition~\ref{prop:ESIc}, the coherence estimator in~\eqref{eq:rhoc} is asymptotically consistent for the true coherence $\rho_{j}(z,\nu, \nu')$.
\end{proposition}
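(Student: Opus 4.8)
The plan is to show that the coherence estimator $\hat{\rho}_{j}(z,\nu,\nu')$ converges (in an appropriate probabilistic sense, e.g. in mean square or in probability) to the true coherence $\rho_{j}(z,\nu,\nu')$ given in~\eqref{eq:lsrc}. Since $\hat{\rho}_j$ is a ratio, I would exploit the fact that, by Proposition~\ref{prop:CSC}, each of the three constituent spectral estimators $\hat{S}_{j}(z,\nu,\nu')$, $\hat{S}_{j}(z,\nu)=\hat{S}_{j}(z,\nu,\nu)$ and $\hat{S}_{j}(z,\nu')=\hat{S}_{j}(z,\nu',\nu')$ is asymptotically unbiased and consistent for its respective limit $S_{j}(z,\nu,\nu')$, $S_{j}(z,\nu)$ and $S_{j}(z,\nu')$. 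Thus I already have, under the asymptotic regime $T,R,M\to\infty$ with $M/R\to 0$, convergence of the numerator and of both factors in the denominator to their population counterparts.

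First I would establish the behaviour of the denominator. The quantity $\left\{S_{j}(z,\nu)S_{j}(z,\nu')\right\}^{1/2}=|\Wt_{j}(z,\nu)||\Wt_{j}(z,\nu')|$ is strictly positive wherever the coherence is well-defined, so I would assume (or note as a mild regularity condition) that this product is bounded away from zero at the fixed $(z,\nu,\nu')$ under consideration. Consistency of $\hat{S}_{j}(z,\nu)$ and $\hat{S}_{j}(z,\nu')$ for the positive limits $S_{j}(z,\nu)$ and $S_{j}(z,\nu')$, together with the continuous mapping theorem applied to the square-root and the product, then gives consistency of $\left\{\hat{S}_{j}(z,\nu)\hat{S}_{j}(z,\nu')\right\}^{1/2}$ for $\left\{S_{j}(z,\nu)S_{j}(z,\nu')\right\}^{1/2}$.

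Next I would combine numerator and denominator. Since division by a nonzero limit is continuous, a further application of the continuous mapping theorem (or equivalently Slutsky's theorem, writing $\hat{\rho}_j$ as the product of the numerator and the reciprocal of the denominator) yields
\begin{equation*}
\hat{\rho}_{j}(z,\nu,\nu') = \frac{\hat{S}_{j}(z,\nu,\nu')}{\left\{\hat{S}_{j}(z,\nu)\hat{S}_{j}(z,\nu')\right\}^{1/2}} \;\longrightarrow\; \frac{S_{j}(z,\nu,\nu')}{\left\{S_{j}(z,\nu)S_{j}(z,\nu')\right\}^{1/2}} = \rho_{j}(z,\nu,\nu'),
\end{equation*}
which is precisely the claimed limit. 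The boundedness of $\hat{\rho}_j$ between $-1$ and $1$ (already guaranteed by the use of common smoothing windows, as referenced in the text) ensures no pathological behaviour of the ratio.

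The main obstacle I anticipate is handling the denominator rigorously: the map $x\mapsto x^{-1/2}$ is only continuous and well-behaved away from zero, so the argument genuinely requires that $S_{j}(z,\nu)$ and $S_{j}(z,\nu')$ be strictly positive at the point of interest, and care is needed to control the probability that the random denominator $\left\{\hat{S}_{j}(z,\nu)\hat{S}_{j}(z,\nu')\right\}^{1/2}$ comes close to zero. I would address this by noting that consistency of the denominator estimators for a strictly positive limit implies that the event $\{\hat{S}_{j}(z,\nu)\hat{S}_{j}(z,\nu') < \varepsilon\}$ has vanishing probability for small enough $\varepsilon>0$, on which event the ratio is controlled, so that the continuous-mapping conclusion holds in probability. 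If a mean-square statement is desired instead, the uniform boundedness $|\hat{\rho}_j|\le 1$ together with convergence in probability and the bounded convergence theorem delivers $L^2$ convergence as well.
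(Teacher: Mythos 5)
Your proposal is correct and follows essentially the same route as the paper: the paper's proof is a one-line appeal to the continuous mapping theorem combined with the consistency of $\hat{S}_{j}(z,\nu,\nu')$, $\hat{S}_{j}(z,\nu)$ and $\hat{S}_{j}(z,\nu')$ from Proposition~\ref{prop:CSC}, exactly the ingredients you use. Your extra care about the denominator being bounded away from zero (and the resulting restriction to points where the coherence is well-defined) is a detail the paper leaves implicit, and spelling it out only strengthens the argument.
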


\begin{proof}
See Appendix~\ref{app:proofs:rhoc}.
\end{proof}

\section{Coherence illustration via simulation} \label{sec:simsext}

We shall now investigate through simulation the performance of our proposed methodology for coherence estimation. We use the mean squared error (MSE) and squared bias of the estimates $\hat \rho$, averaged over all time-scale points and replicates, as detailed in Appendix~\ref{app:tables:simsext}. We display the behaviour of our estimators on a simulated example below, and provide a further simulation study in Section~\ref{supp:furthersimscoh} of the Supplementary Material.

\noindent{\bf Illustrative example}. We simulate a replicate locally stationary wavelet process with $R=256$ replicates that feature dependence, measured at $T=512=2^{9}$ time points. The locally stationary wavelet autospectra are defined by a sine wave whose periodicity and magnitude evolve slowly over the replicates in such a way that the spectral characteristics of neighbouring replicates do not look too dissimilar whilst there is a noticeable difference between replicates further apart (for their mathematical expression, see Simulation 1, Section~\ref{supp:furthersims} of the Supplementary Material). Here we have $J(T)=9$ (in short, $J$) and the spectral characteristics are placed in level $j=J(T)-4=5$. In addition to the autospectral characteristics, we also define a challenging cross-replicate spectral structure by means of defining their (true) coherence at each level $j$ and location $k$. Specifically, we set the coherence to be zero over the last $256$ locations, yielding an $R \times R$ coherence identity matrix. For level $j=5$ and time $k = 1,\ldots, 256$ we define the non-zero replicate coherence matrices as follows: the first $128$ replicates have a strong positive coherence (0.99) with one another, however this coherence becomes negative (-0.71) with the last $128$ replicates. A (weaker) positive coherence (0.5) also exists between the last $128$ replicates.  The expressions of the non-zero coherence matrices can be found in Section~\ref{supp:furthersims} of the Supplementary Material and the illustrative true coherence structures for replicates 50 (top row) and 200 (bottom row) can be visualised in Figure~\ref{fig:coh_sim2} (left panels).

Coherence estimates obtained using the methodology proposed in Section~\ref{sec:modelext_est} are represented in Figure~\ref{fig:coh_sim2} (right panels) for replicates 50 (top row) and 200 (bottom row). Non-decimated discrete wavelets built using Daubechies least asymmetric family with 10 vanishing moments and local averaging over a window of 9 replicates were employed.

\begin{figure}[htbp!]
\begin{center}
\resizebox{!}{0.6\textwidth}{\includegraphics{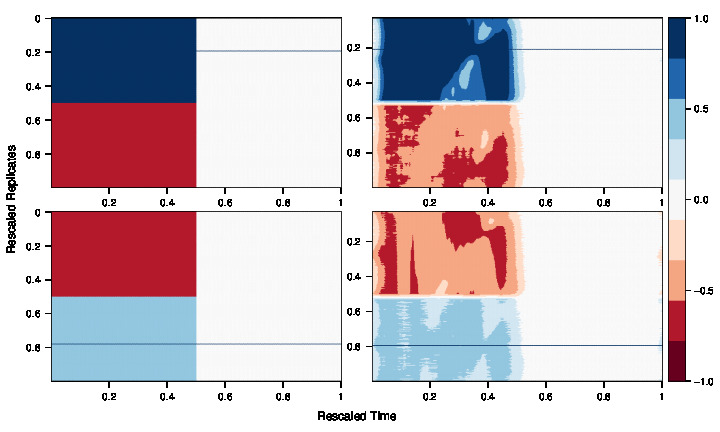}}
\end{center}
\vspace{-2em}
\caption{\small Coherence plots for replicates 50 (\textit{top row}) and 200 (\textit{bottom row}) over rescaled time and replicates in level $5$. \textit{Left}: true coherence; \textit{Right}: coherence estimates averaged over 100 simulations.}
\label{fig:coh_sim2}
\end{figure}

In terms of correctly estimating the coherence structure switch over times and replicates, as well as identifying the positive or negative character of the coherence, the proposed estimation procedure does a good job. We do however note that the estimated coherence intensity does exhibit some bias, which we may attribute to the smoothing performed in order to address the practical computation considerations (see the remark in Appendix~\ref{app:tables:simsext}). Nevertheless, we could argue that the model does give a good indication for the degree of the positiveness of the coherence, approximately $0.99$ and $0.5$ for replicates 50 and 200 respectively (right panels of Figure~\ref{fig:coh_sim2}).

Tables~\ref{tab:rhomsesim2} (Appendix~\ref{app:tables:simsext}) and~\ref{tab:rhomsesim1} (Section~\ref{supp:furthersimscoh} in the Supplementary Material) illustrate MSE results for two smoothing approaches: the first involves smoothing only over a window of replicates; the second involves local averaging through time and through replicates.

For both simulations, the results paint the same picture. As we increase the replicate smoothing window (such that $M/R \to 0$) the performance of our models improves in terms of MSEs, and the double smoothing over time and replicates further reduces the MSEs. The price to pay for double smoothing as usual is a slightly higher bias (than when using averaging over replicates only). In order to ensure that our spectral estimates are positive, our correction procedure uses the correction matrix $A^{-1}$ truncated at zero. Inevitably this introduces bias, evident through the increasing MSEs as $R$ and $T$ increase. In their work on bivariate coherence estimation, \cite{sand:2010} reported better results when additionally employing smoothing over scales. We conjecture that this is also applicable for our work, but leave the further numerical treatment for future research.

\section{Analysis of Macaque Local Field Potentials} \label{sec:rda}

We perform our analysis on the dataset of local field potentials (LFPs) from the hippocampus (Hc) and nucleus accumbens (NAc) of a macaque over the course of an associative learning experiment. Due to their roles in the consolidation of memory information and the processing of rewarding stimuli, the Hc and NAc have been studied in relation to learning tasks for monkeys, rats and humans \citep{wirth:2003, abela:2015, segcin:dynamics}. Our RLSW-based spectral analysis offers not only confirmations to the results of previous studies, but also provides additional insights to the understanding of the dynamics of the LFPs through capturing the evolutionary characteristics of brain processes {\em within} and {\em across} the trials of the experiment, in a {\em scale-dependent} manner, through the use of the wavelet transform. One notable benefit of our model in contrast to previous Fourier-based methodology \citep{fo:medbp}, is its superior time-localisation, as we will see next.

Each trial consists of $T = 2048$ time points, corresponding to approximately 2 seconds of data. The design splits each trial into four time blocks of 512 milliseconds each and it is in the final time block that the macaque was tasked with associating one of four doors (appearing on a screen) with the picture visual presented in the second time block. The macaque had to learn the associations through repeated trials and for each correct association made the macaque was given a juice reward. Further details are given in Section~\ref{supp:exper} of the Supplementary Material. The data has been grouped into sets of `correct' and `incorrect' responses, in order to investigate how the contributions of the Hc and NAc to the learning process differ between groups \citep{Gorrombao:monkeydepend}. The groups containing the correct and incorrect responses consist of 241 and 264 trials, respectively.

We opt to carry out the analysis on $R = 256$ replicates (trials). To obtain a dyadic number of replicates necessary for estimation (here, 256) for the correct response group, we mirror the last 15 trials. This is for computational purposes only, and we naturally discard the corresponding estimates from our discussions and plots. To ensure comparability across trials, each trial is standardised to have mean zero and unit variance. Plots for the correct responses appeared in Figures~\ref{fig:rb7_Ct_Hc} and~\ref{fig:rb7_Ct_NAc} (Section~\ref{sec:intro}) and for the incorrect responses in Figures~\ref{fig:rb7_It_Hc} and~\ref{fig:rb7_It_NAc} of the Supplementary Material Section~\ref{supp:exper}, which also details the implemented methodology.

\subsection{Results}

\textit{Hippocampus}.
The proposed spectral estimates for the correct and incorrect sets of trials appear in Figures~\ref{fig:coef_CtHc_M10D10} (below) and~\ref{fig:coef_ItHc_M10D10} (Section~\ref{supp:exper}, Supplementary Material), respectively. The bottom row plots show the RLSW($_2$)-spectral estimates averaged across 30 replicates to illustrate the process behaviour in the beginning, middle and end of the experiment. These demonstrate (i) the sequenced activation of within-trial time blocks and (ii) the evolutionary behaviour of the wavelet spectrum along the course of the experiment. For both correct and incorrect datasets, the `activity' is primarily captured within the coarser levels of the wavelet periodograms, approximately corresponding to frequencies 2-8Hz. Of these, the theta band frequencies 4-8Hz are typical of slow activity, known for their association to hippocampal activity in mammals and to promote memory \citep{buzs06}. \cite{fo:medbp} report the low frequency range 1-12Hz to account for most variability in
the Hc data. Our analysis offers a finer characterisation that does not fully support activation of low delta waves (under 2Hz), known to be typical of deep sleep, and shows weak alpha band (8-12Hz) and low beta band (12-16Hz) alertness at certain time blocks within each trial. Due to its construction, the RLSW model has been able to capture the process evolutionary behaviour across trials, which naturally cannot be achieved by LSW alone. The LSW-based estimates do capture some of the time-dependencies but remained constant across replicates. The RLSW estimates have the capacity to highlight the individual time blocks as they activate through the course of the experiment, an insight invisible to LSW and weakly represented in the Fourier approach of \cite{fo:medbp}.

\begin{figure}[h!]
\centering
\resizebox{!}{0.6\textwidth}{\includegraphics{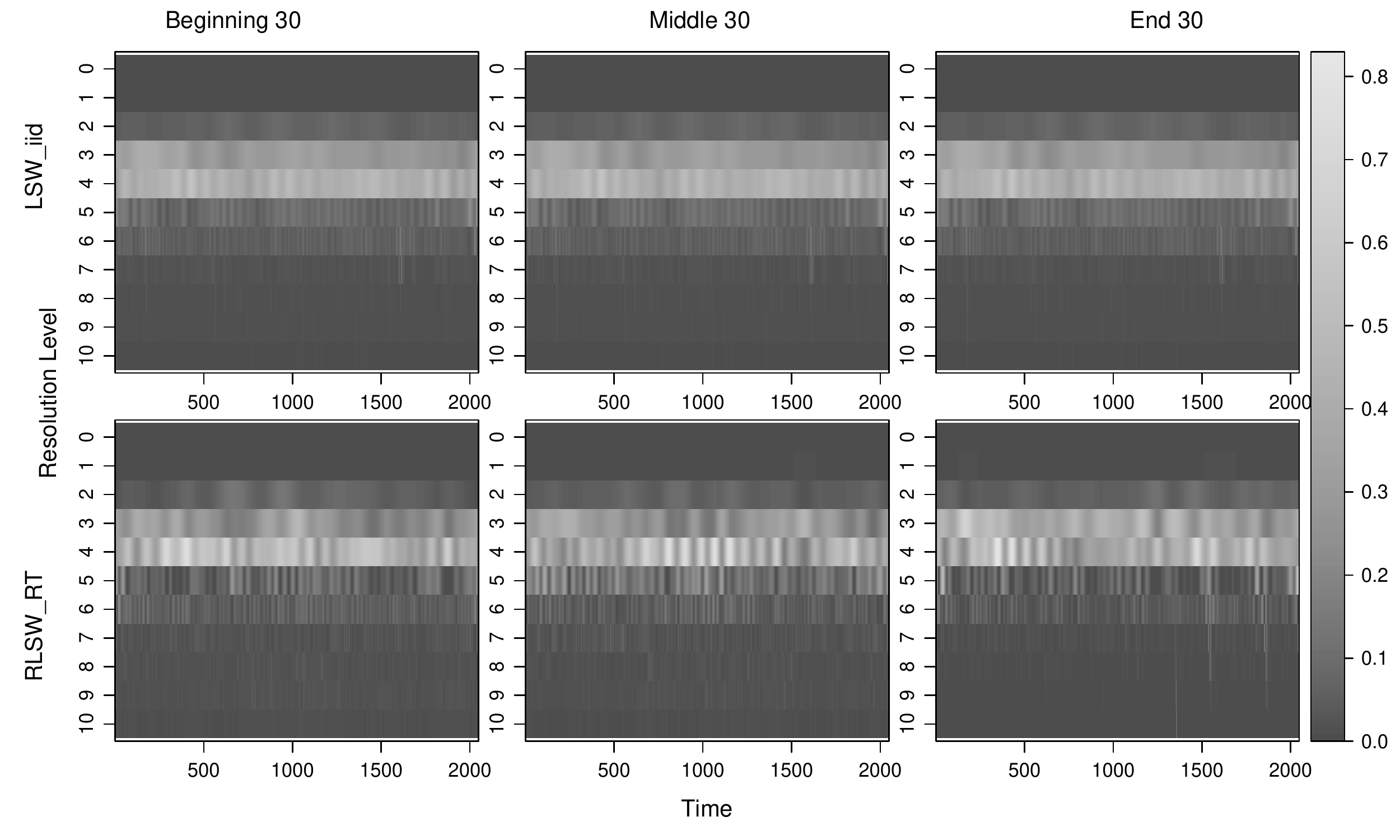}}
\caption{\small Time-scale hippocampus (Hc) plots for the correct trials. Spectral estimates are shown for the average over 30 replicates in the beginning, middle and end of the experiment. \textit{Top}: estimates from the LSW method averaged over all replicates; \textit{Bottom}: RLSW method with smoothing over time and replicates.}
\label{fig:coef_CtHc_M10D10}
\end{figure}

For the correct Hc trials, the RLSW models capture the bulk of `activity' in levels 3 and 4. In early replicates this is fairly even through time, while for middle replicates the bursts of activity shift centrally within time, thus coinciding with the second block of the macaque being shown the visual stimulus and possibly with the expectation of the picture to continue being shown. For the final trials, the activity is clearly localised around time-point 500 (corresponding to the visual exposure) and towards the final quarter of time (corresponding to the time when the macaque made the correct association). When compared to a Fourier approach, our wavelet-based analysis thus brings to the fore novel information that links the experimental time blocks to Hc activation. Specifically, as the correct trials progress, the activity in the Hc is evident at the visual cue time and also at the selection task time, thus suggesting learning of the picture associations.

Although we cannot compare the correct and incorrect trials like-for-like, we are still able to see evidence of evolutionary behaviour across the incorrect trials. As the experiment progresses, there is evidence of less spectral activity in the incorrect trials, with a brief Hc activation in the visual exposure block for the middling trials, and a burst of Hc activity localised in the last time block, when the task is carried out, for the end trials. The spectrum suggests that whereas the Hc displays prolonged activity in the second time block for the correct trials (corresponding to the picture being presented), this feature is not as marked in the incorrect trials and thus the macaque is not making the association between the picture presented and the selection task.
Scientific literature has shown \citep{segcin:dynamics} that during a learning experiment, activity in the Hc decreases as associations/rules are learned but would spike upon the application of an association/rule. The capacity of our model to extract time-localised information thus highlights novel traits that suggest that the macaque in this experiment has not yet fully learned the associations, but evidence of learning is indeed present.

Figure \ref{fig:Hc_avlevplots} further illustrates spectral evolution through time is captured by both the LSW and RLSW models. As the experiment progresses, the RLSW model identifies that the activity in the Hc increases across the correct trials with higher activity along the middle and end replicates, and (within-replicate) time dependent peaks gradually spanning the course of the experiment. The final trials display more Hc activity towards the task time than starting trials, again indicating a learning process. In contrast, the incorrect trials  display a much less structured behaviour, with time-dependent activity distributed more evenly across the replicates.

\begin{figure}[h!]
\centering
\resizebox{!}{0.6\textwidth}{\includegraphics{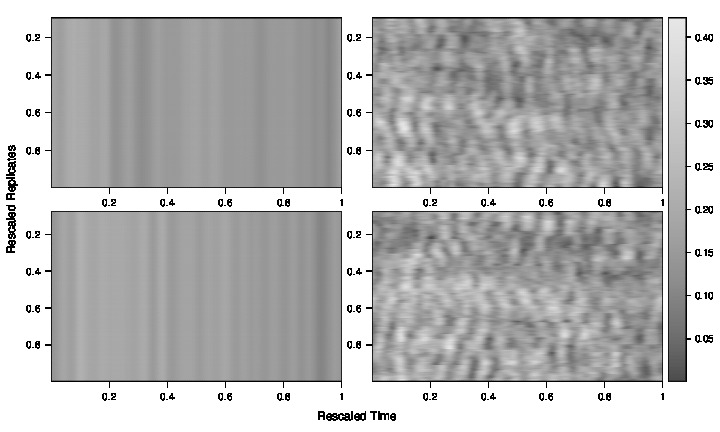}}
\caption{\small Time-replicate plots of the hippocampus (Hc) average spectra in levels 3 and 4 for the correct (\textit{Top}) and incorrect (\textit{Bottom}) trials. \textit{Left}: estimates from the LSW method averaged over all replicates; \textit{Right}: estimates from the RLSW method with smoothing over time and replicates.}
\label{fig:Hc_avlevplots}
\end{figure}

\textit{Nucleus accumbens}. The resulting (R)LSW spectral estimates appear in Figures \ref{fig:coef_CtNAc_M10D6} (below) and \ref{fig:coef_ItNAc_M10D6} (Section~\ref{supp:exper}, Supplementary Material) for the correct and incorrect sets of trials, respectively. These plots are to be understood in the same manner as those for the hippocampus. \cite{fo:medbp} find that the bulk of variability in the NAc is accounted for by (high) beta band frequencies (20-30Hz), while we place this in the wider range of beta band waves 16-30Hz, associated to focussed activity. Our analysis also offers evidence for low gamma frequency waves (31-60Hz), typical of working memory activation \citep{iacc:2016}. Additional to Fourier analysis, the RLSW model also shows that nonstationarity across time is clearly present, as well as some spectral evolution across the replicates. Although not as obvious as for the Hc, for the beginning and middle replicates of the correct group, NAc activity is manifest towards the trial start and end, while for the final replicates activity is captured in the final quarter of time. A similar pattern of behaviour is displayed by the incorrect group of trials. Also note that the NAc activity decreases in intensity from the beginning to end replicates for both groups.

\begin{figure}[h!]
\centering
\resizebox{!}{0.6\textwidth}{\includegraphics{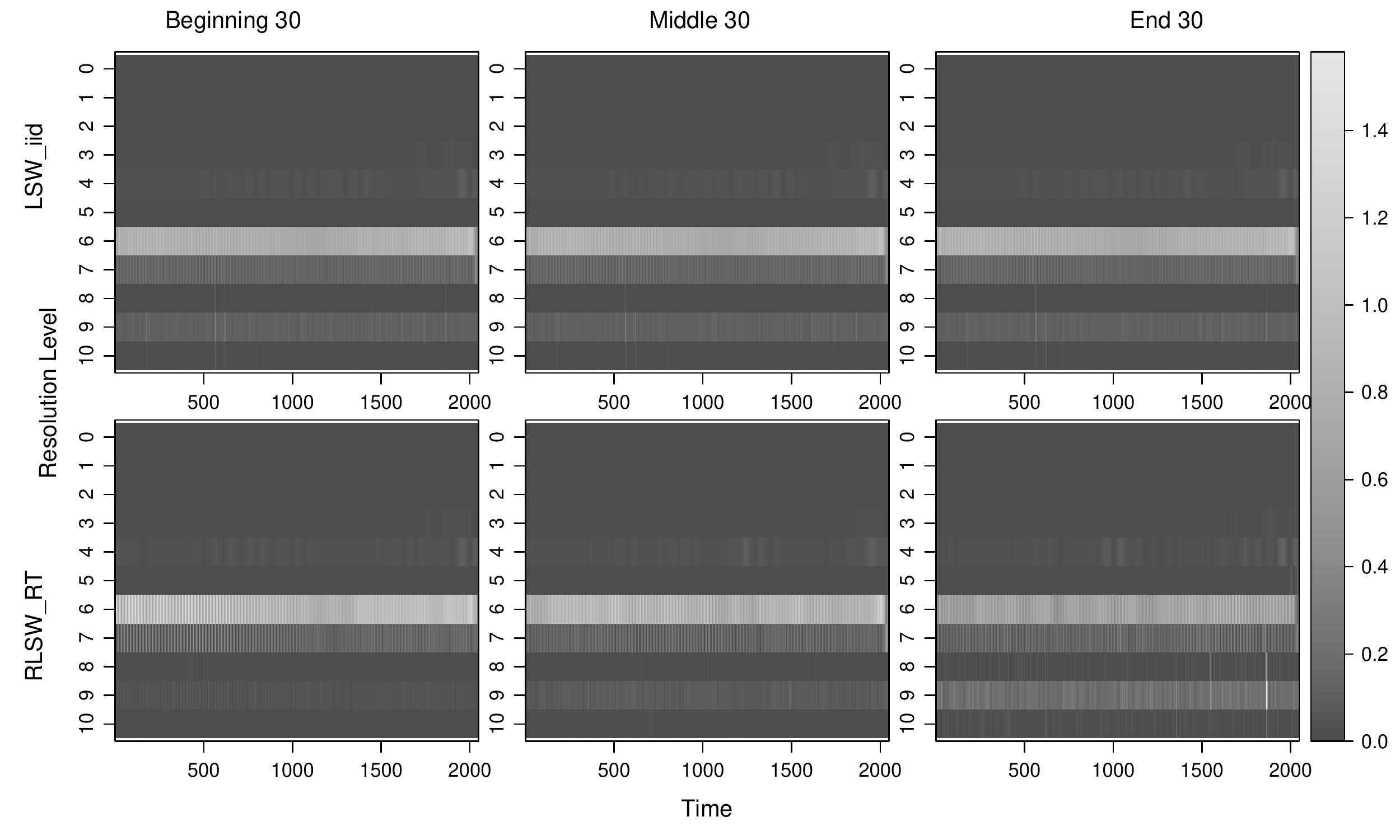}}
\caption{\small Time-scale nucleus accumbens (NAc) plots for the correct trials. Spectral estimates are shown for the average over 30 replicates in the beginning, middle and end of the experiment. \textit{Top}: estimates from the LSW method averaged over all replicates; \textit{Bottom}: RLSW method with smoothing over time and replicates.}
\label{fig:coef_CtNAc_M10D6}
\end{figure}

The NAc is part of the ventral striatum and plays a role in the processing of rewarding stimuli. The activity seen in the final 512 milliseconds can be attributed to the macaque expecting and receiving the juice reward in the correct trials, or expectation of reward in the incorrect trials. The impact of reward expectation \citep{schultz:rewardexp, Hollerman:1998, mulder:2005} could also explain the activity we see at the trial start for the beginning replicates and its observed periodicity across the experiment. Upon receiving no reward in an incorrect trial, the NAc activity decreases and with it the reward expectation falls for the next trial. The opposite holds for a correct trial when the reward is given. Our analysis reflects the results of other studies on learning experiments \citep{Hollerman:1998, fo:medbp} that highlight that the activity in the ventral striatum decreases as the stimuli are learned.

Plots for the average RLSW spectral estimates of the NAc in the finer levels 6 and 7 appear in Figure \ref{fig:NAc_avlevplots}. Our previous comments are also reflected in these plots. Evolution in the spectra across time is captured by both models, with the NAc activity displaying periodic patterns, while the RLSW-based estimation highlights a decrease in NAc activity along the trials of the experiment.

\begin{figure}[h!]
\centering
\resizebox{!}{0.6\textwidth}{\includegraphics{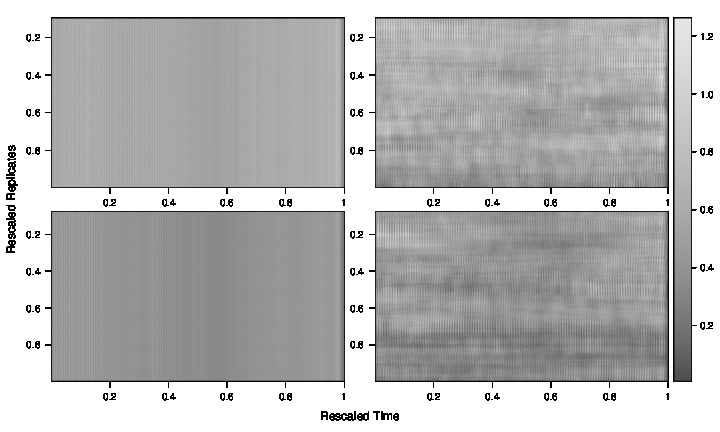}}
\caption{\small Time-replicate plots of the nucleus accumbens (NAc) average spectra in levels 6 and 7 for the correct (\textit{Top}) and incorrect (\textit{Bottom}) trials. \textit{Left}: estimates from the LSW method averaged over all replicates; \textit{Right}: estimates from the RLSW method with smoothing over time and replicates.}
\label{fig:NAc_avlevplots}
\end{figure}

\textit{Remarks}. Our analysis demonstrates how the simplifying assumption of trials that are identical realisations of the same process, leading one to draw conclusions solely based on averaging across all replicates, could cause an important understanding in the process evolution through the experiment to be missed. Our proposed RLSW methodology has captured the spectral {\em time-} and {\em replicate-}evolutionary behaviour, thus yielding new scale-based results and refining the findings of \cite{fo:medbp} in the Fourier domain. Nevertheless, these results are still underpinned by the assumption of uncorrelated trials. We next explore whether this assumption is tenable.

\subsection{Allowing for correlation between trials}
We relax the assumption of uncorrelated trials and allow for (potential) dependencies across trials. Hence we now estimate the locally stationary replicate-coherence, a quantity defined in~\eqref{eq:lsrc}.

The aim of our analysis is to investigate whether dependencies across trials exist.
For the Hc data, we were unable to capture any substantial coherence structure over the trials (correct and incorrect groups). However, our analysis did find evidence of a moderate dependence across neighbouring trials in the NAc at the beta band frequencies (16-30Hz), known to be responsible for brain activity related to reward feedback mechanisms.
The estimated NAc replicate-coherence (absolute value) is shown in Figure~\ref{fig:NAc_cohplots} at level 6 for trials 20, 100 and 200 in the incorrect group and for correct trial 200, depicting typical behaviour. Some burst areas are present, indicating a moderate neighbouring replicate coherence, with most meaningful values either side of 0.4 and a few above 0.5. For the beginning and middling incorrect trials, this is apparent in the time periods leading up to and inclusive of the trial task phase, upon which the macaque would receive a juice reward if the task was done correctly.

\begin{figure}[htbp!]
\centering
\resizebox{!}{0.6\textwidth}{\includegraphics{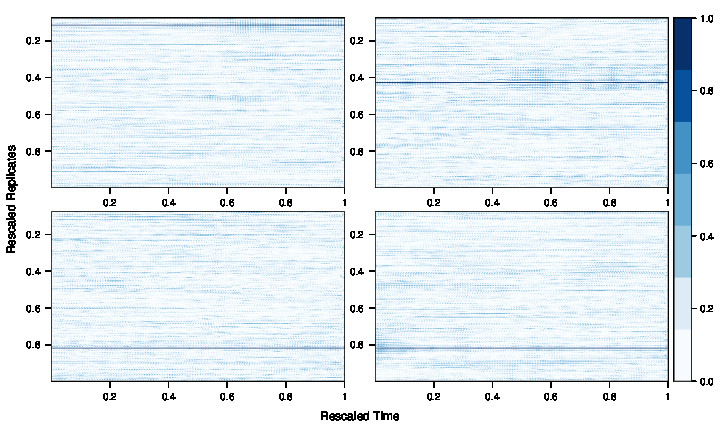}}
\caption{\small Level 6 absolute value coherence for NAc incorrect trials: 20 (\textit{top left}), 100 (\textit{top right}) and 200 (\textit{bottom left}); and for NAc correct trial 200 (\textit{bottom right}).}
\label{fig:NAc_cohplots}
\end{figure}

\textit{Remarks}. Our analysis provides novel evidence in the temporal and scale (frequency)-dimensions that mild to moderate dependence is exhibited in both the correct and incorrect trials. This is primarily evident in the final correct trials, potentially as the manifest result of learning, and at the onset of the incorrect trials as the likely result to the expectation of reward.
Within the neuroscience literature (e.g.  \cite{Gorrombao:monkeydepend}), dependence between brain regions is also of interest, with coherence measures setup between channels of interest (here, Hc and NAc). Such measures have not formed the scope of our work here, but in relating our replicate-coherence results to the reported evolutionary coherence between the Hc and NAc, the dependence we observed at the beginning of NAc correct trial 200 (approximately rescaled replicate 0.8) is reminiscent of the dependence between Hc and NAc captured at rescaled replicate-time 0.8 by \cite{fo:medbp}.

\section{Concluding remarks}\label{sec:concl}
Our work proposed a novel wavelet-based methodology that successfully captures nonstationary process characteristics across a replicated time series, often encountered in experimental studies. Its desirable properties were evidenced by simulation studies and through a real data application from the neurosciences. However, the methodology itself is not restricted to use within this field, and the authors envisage its utility in other experimental areas where wavelet spectral analysis has proved to be ideally suited, e.g. circadian biology \citep{hargreaves19:wavelet}. This work demonstrates the dangers of approaching replicated time series as identical process realisations and the misleading results this can yield when studying the process dynamics along the course of the experiment.  Our work proposes statistical models and associated estimation theory for processes that embed not only uncorrelated replicates, but also allow for dependence between replicates. A next natural step would be to extend the RLSW methodology to a multivariate setting (\cite{sand:2010,park:mvlsw}) and to additionally define and investigate the variate replicate coherence.

\addtocounter{section}{0}
\appendixtitleon

\begin{appendices}
\section{Supporting evidence for simulation studies}\label{app:tables}
\subsection{MSE and squared bias tables for Section~\ref{sec:sims}}\label{app:tables:sims}
As means to quantify the performance of the models, we employ the {mean squared error} (MSE) and {squared bias}, calculated as the average over all time-scale points and replicates as follows
\begin{align*}
MSE(\hat{S}) &=
{(R  \, J \, T)}^{-1} \sum_{r, j, k}\left[ \frac{1}{N}\sum_{n=1}^{N} \left( \hat{S}_{j}^{(n)}\left(\frac{k}{T},\frac{r}{R}\right) - S_{j}\left(\frac{k}{T},\frac{r}{R}\right) \right)^{2}\right],\\
Bias^2(\hat{S}) &=
{(R \, J \, T)}^{-1} \sum_{r, j, k}\left[ \frac{1}{N}\sum_{n=1}^{N} \hat{S}_{j}^{(n)}\left(\frac{k}{T},\frac{r}{R}\right) - S_{j}\left(\frac{k}{T},\frac{r}{R}\right) \right]^{2}.
\end{align*}

\noindent{\em Remarks on estimates at the boundaries}. We do not assess edges that involve local averaging over the first and last $(M-1)$ replicates. This has also been accounted for when calculating the MSE and squared bias. As a result, the reported measures whose values correspond to modelling via LSW will appear to change (in a very minor way), when in fact they should be the same for all choices of $M$ at fixed $R$ and $T$.

\begin{table}[htbp!]
\scriptsize
\hspace{-6em}
\begin{tabular}{@{\extracolsep{3pt}}ccccccccc@{}}
\multicolumn{1}{c}{} & \multicolumn{7}{c}{Mean squared errors $\left( \times 100 \right)$} & \multicolumn{1}{c}{}
\\
\cline{1-9}\noalign{\smallskip}
\multicolumn{3}{c}{} & \multicolumn{2}{c}{LSW} & \multicolumn{2}{c}{RLSW\tsbs{1}} & \multicolumn{2}{c}{RLSW\tsbs{2}}\\
\cline{4-5}\cline{6-7}\cline{8-9}\noalign{\smallskip}
R & T & M & mse & bias\tsps{2} & mse & bias\tsps{2} & mse & bias\tsps{2} \\
\firsthline\noalign{\smallskip}
256 & 128 & 4 & 17.55 & 17.27 & 22.31 & 2.53 & 11.52 & 3.88
\smallskip \\
 & & 12 & 16.74 & 16.46 & 9.45 & 2.40 & 6.54 & 3.82
\smallskip \\
 & 256 & 4 & 14.21 & 13.95 & 19.62 & 1.04 & 8.46 & 1.29
\smallskip \\
 & & 12 & 13.48 & 13.22 & 7.58 & 0.92 & 3.83 & 1.23
\smallskip \\
 & 512 & 4 & 12.25 & 12.01 & 17.59 & 0.53 & 7.15 & 0.52
\smallskip \\
 & & 12 & 11.59 & 11.35 & 6.51 & 0.42 & 2.85 & 0.48
\smallskip \\
\lasthline
\end{tabular}
\quad
\begin{tabular}{@{\extracolsep{3pt}}ccccccccc@{}}
\multicolumn{1}{c}{} & \multicolumn{7}{c}{Mean squared errors $\left( \times 100 \right)$} & \multicolumn{1}{c}{}
\\
\cline{1-9}\noalign{\smallskip}
\multicolumn{3}{c}{} & \multicolumn{2}{c}{LSW} & \multicolumn{2}{c}{RLSW\tsbs{1}} & \multicolumn{2}{c}{RLSW\tsbs{2}}\\
\cline{4-5}\cline{6-7}\cline{8-9}\noalign{\smallskip}
R & T & M & mse & bias\tsps{2} & mse & bias\tsps{2} & mse & bias\tsps{2} \\
\firsthline\noalign{\smallskip}
512 & 256 & 4 & 14.29 & 14.14 & 19.68 & 1.05 & 8.49 & 1.28
\smallskip \\
 & & 12 & 13.89 & 13.76 & 7.62 & 0.92 & 3.81 & 1.22
\smallskip \\
 & 512 & 4 & 12.30 & 12.18 & 17.61 & 0.53 & 7.15 & 0.52
\smallskip \\
 & & 12 & 11.96 & 11.84 & 6.54 & 0.42 & 2.85 & 0.48
\smallskip \\
 & 1024 & 4 & 10.92 & 10.81 & 15.83 & 0.32 & 6.29 & 0.25
\smallskip \\
 & & 12 & 10.61 & 10.50 & 5.78 & 0.22 & 2.38 & 0.21
\smallskip \\
\lasthline
\end{tabular}
\caption{\small MSE and squared bias results, averaged over all time-scale points and replicates for the simulation in Section~\ref{sec:sims}. `LSW' denotes the classical approach of averaging over the replicates. `RLSW$_1$' denotes our proposed approach using localised replicate smoothing. `RLSW$_2$' denotes our proposed approach using localised time and replicate smoothing. Our proposed methods `RLSW$_1$'and `RLSW$_2$' use a replicate smoothing window of length $(2M+1)$, while the time-smoothing window for `LSW' and `RLSW$_2$' is automatically chosen.}
\label{tab:msesim3}
\end{table}

\subsection{MSE and squared bias tables for Section~\ref{sec:simsext}}\label{app:tables:simsext}

As numerical tools that quantify the performance of an estimate $\hat{\rho}_{j}(z,\nu)$ across all time points $z=k/T$ with $k=0, \ldots, T-1$, replicates $\nu=r/R$ with $r=0, \ldots, R-1$ and scales $j=1, \ldots, J$, we will use the mean squared error and squared bias, defined in this context as

\begin{equation}\nonumber
MSE(\hat{\rho})=
( R^{*} \, J \, T )^{-1} \sum_{r, r^{*}, j, k}\left[ \frac{1}{N}\sum_{n=1}^{N} \left( \hat{\rho}_{j}^{(n)}\left(\frac{k}{T},\frac{r}{R},\frac{r^{*}}{R}\right) - \rho_{j}\left(\frac{k}{T},\frac{r}{R},\frac{r^{*}}{R}\right) \right)^{2}\right],
\end{equation}

\begin{equation}\nonumber
Bias^2(\hat{\rho}) =
(R^{*} \, J \, T)^{-1} \sum_{r, r^{*}, j, k}\left( \frac{1}{N}\sum_{n=1}^{N} \hat{\rho}_{j}^{(n)}\left(\frac{k}{T},\frac{r}{R},\frac{r^{*}}{R}\right) - \rho_{j}\left(\frac{k}{T},\frac{r}{R},\frac{r^{*}}{R}\right) \right)^{2},
\end{equation}
where due to the symmetry of the coherence matrix, we have used $R^{*} = R(R+1)/2$ and $r^{*} = r' \geq r$, and $N$ denotes the number of simulation runs. As in Section~\ref{sec:sims}, we also adopt here $J=J(T)$.

\noindent{\em Remarks on implementation}. From the theoretical model construction, the autospectra are positive quantities. However, our spectral estimates may take values that are negative or close to zero after correction, and this in turn can cause problems when normalising for coherence estimation. In order to bypass this issue, we choose to correct our raw wavelet periodogram estimates before smoothing. The theoretical properties of the coherence estimator show that using replicate-smoothing does yield an estimator with good properties, albeit its rate of convergence is heavily dependent on the smoothing window width $(2M+1)$. A local averaging window over time for smoothing each replicate before applying smoothing over replicates could also be employed, just as proposed for spectral estimation in Section~\ref{sec:specestRLSW}. A possible further step, to be done after smoothing through time, is to smooth over scales as proposed by \cite{sand:2010} for estimating the linear dependence between bivariate LSW time series. However, we do not pursue this approach here.

\begin{table}[htbp!]
\small
\hspace{1em}
\centering
\begin{tabular}{cccccccc}
\cline{1-8}\noalign{\smallskip}
R & T & M & & mse\tsbs{1} & bias\tsps{2}\tsbs{1} & mse\tsbs{2} & bias\tsps{2}\tsbs{2} \\
\firsthline\noalign{\smallskip}
128 & 256 & 7 & & 18.10 & 9.02 & 15.50 & 10.71
\smallskip \\
 & & 12 & & 14.53 & 8.22 & 13.05 & 9.64
\smallskip \\
128 & 512 & 7 & & 18.91 & 10.56 & 16.72 & 12.37
\smallskip \\
 &  & 12 & & 15.10 & 9.40 & 13.95 & 10.92
\smallskip \\
256 & 512 & 7 & & 20.82 & 11.58 & 18.40 & 13.60
\smallskip \\
 & & 12 & & 17.78 & 10.93 & 16.43 & 12.81
\smallskip \\
\lasthline
\end{tabular}
\caption{\small MSE and squared bias $\left( \times 100 \right)$, averaged over all time-scale point and replicates for the simulation in Section~\ref{sec:simsext}. Subscripts 1 and 2 denote the models with smoothing over replicates only and time-replicate smoothing, respectively.}\label{tab:rhomsesim2}
\end{table}

\section{Proofs of results on the asymptotic behaviour of proposed estimators under the uncorrelated replicates assumption}\label{app:proofs}

In this section, we give details of the proofs in Section~\ref{sec:model} using the notation described therein.

For ease we recall here that the auto- and cross-correlation wavelets are defined  for $\uptau\in\Z$ as $\Psi_j(\uptau)=\sum_{k\in \Z}\psi_{j,k}(0)\psi_{j,k}(\uptau)$ and $\Psi_{j,l}(\uptau)=\sum_{k\in \Z}\psi_{j,k}(0)\psi_{l,k}(\uptau)$ respectively, where in general $\psi_{j,k}(\uptau)=\psi_{j,k-\uptau}$ are compactly supported discrete wavelets as defined in \cite{nvsk:2000}. Note that from their construction, $\Psi_j(-\uptau)=\Psi_j(\uptau)$ and $\Psi_{j,l}(-\uptau)=\Psi_{l,j}(\uptau)$, and both have compact support of order $2^j$, $2^j+2^l$ respectively \citep{nvsk:2000, sand:2010}.

In what follows, wherever the summation domain is not specified, it is to be understood as $\Z$ for time indices (e.g. $k, \, n, \, \uptau$) and as $\Z^\star_+$ (strictly positive integers) for scale indices (e.g. $j, \, j', \, l$).

We also recall the autocorrelation wavelet inner product matrices, defined as
\beqann
A_{j,l}&=&\sum_{\uptau} \Psi_j(\uptau)\Psi_l(\uptau)=\sum_{\uptau} \left| \Psi_{j,l}(\uptau)\right|^2,\\
A_{j,l}^{\uptau}&=&\sum_{n} \Psi_{j}(n) \Psi_{l}(n+\uptau)=\sum_{n}\Psi_{j,l}(n)\Psi_{j,l}(n+\uptau) ,
\eeqann
with properties $A_{j,l}\geq 0$, $A_{j,l}=A_{l,j}$, $\sum_j 2^{-j} A_{j,l}=1$ \citep{fryz03:forecasting}.

In the proofs that follow, we make use of the results in the following lemmas, whose proofs appear in Section~\ref{supp:proofs:sec3} of the Supplementary Material.

\begin{lemma}\label{lem:ASord}
Under the assumptions of Definition~\ref{def:rlsw}, we have for every rescaled time and replicate, $z, \, \nu$ respectively, and scale $j$,
 $\sum_{l=1}^{\infty}A_{j,l} S_{l}\left(z,\nu\right)=\ORD(2^{j})$.
\end{lemma}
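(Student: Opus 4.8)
The plan is to decouple the two ingredients in the sum: a pointwise upper bound on the autocorrelation-wavelet inner products $A_{j,l}$ that is uniform in $l$, and the uniform summability of the spectrum $\{S_{l}(z,\nu)\}_{l}$ over scales. Neither ingredient requires the detailed structure (compact support, decay) of the $\Psi_{j}$; the bound will follow from the two abstract properties of $A$ already recalled and from assumption~\eqref{eq:unifW}.

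First I would record a uniform bound on the matrix entries. For each fixed $l$, the normalisation $\sum_{j'}2^{-j'}A_{j',l}=1$ together with the nonnegativity $A_{j',l}\ge 0$ (both noted in the preamble to this appendix) forces every individual term in that series to be bounded by the total, namely $2^{-j}A_{j,l}\le\sum_{j'}2^{-j'}A_{j',l}=1$, i.e.\ $A_{j,l}\le 2^{j}$ for all $l$. This is the only property of $A$ I need, and crucially it holds uniformly in $l$, unlike a Cauchy--Schwarz bound $A_{j,l}\le(A_{j,j}A_{l,l})^{1/2}$, which would grow with $l$ and is therefore useless here.

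Next I would control the spectrum. Since $S_{l}(z,\nu)=|\Wt_{l}(z,\nu)|^{2}\ge 0$, the uniform summability assumption~\eqref{eq:unifW} supplies a finite constant
\begin{equation*}
C_{S}:=\sup_{z,\nu\in(0,1)}\sum_{l=1}^{\infty}\left|\Wt_{l}(z,\nu)\right|^{2}<\infty,
\end{equation*}
so that $\sum_{l=1}^{\infty}S_{l}(z,\nu)\le C_{S}$ for every $z,\nu\in(0,1)$.

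Combining the two, and using that every summand $A_{j,l}S_{l}(z,\nu)$ is nonnegative, I would conclude
\begin{equation*}
0\le\sum_{l=1}^{\infty}A_{j,l}S_{l}(z,\nu)\le 2^{j}\sum_{l=1}^{\infty}S_{l}(z,\nu)\le C_{S}\,2^{j},
\end{equation*}
which is precisely the claimed $\ORD(2^{j})$ bound, with a constant independent of $z$ and $\nu$ and hence uniform over rescaled time and replicate. There is no genuine analytic obstacle in this argument; the one point deserving care is to invoke the correct normalisation of $A$ (summation over the \emph{scale} index $j$ at fixed $l$, which makes $A_{j,l}\le 2^{j}$ legitimate uniformly in $l$), rather than a bound that degrades as $l\to\infty$.
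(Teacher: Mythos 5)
Your proof is correct, and the constant in your $\ORD(2^{j})$ bound is explicitly uniform in $(z,\nu)$, as the lemma intends; but your route is genuinely different from the paper's. The paper never bounds the entries $A_{j,l}$ individually: it expands $A_{j,l}=\sum_{\uptau}\Psi_{j}(\uptau)\Psi_{l}(\uptau)$, exchanges the sums over $l$ and $\uptau$ so that the inner sum becomes the local autocovariance $c(z,\nu;\uptau)=\sum_{l}S_{l}(z,\nu)\Psi_{l}(\uptau)$, and then applies the triangle inequality together with $|c(z,\nu;\uptau)|<\infty$ (itself a consequence of \eqref{eq:unifW}) and the $\ell^{1}$ bound $\sum_{\uptau}|\Psi_{j}(\uptau)|=\ORD(2^{j})$ of \citep{nvsk:2000}. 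You instead prove the elementwise bound $A_{j,l}\le 2^{j}$ uniformly in $l$, using only nonnegativity and the normalisation $\sum_{j}2^{-j}A_{j,l}=1$ recalled from \citep{fryz03:forecasting}, and then invoke the uniform spectral summability \eqref{eq:unifW} directly. Your argument is the more elementary of the two: it requires no property of the autocorrelation wavelets beyond the two abstract facts about $A$ stated in the appendix preamble, bypasses the local autocovariance entirely, and your side remark about avoiding the Cauchy--Schwarz bound $A_{j,l}\le(A_{j,j}A_{l,l})^{1/2}$ is well taken, since that bound degrades as $l$ grows. What the paper's sum-exchange technique buys is reusability: the same manipulation, with $c(z,\nu;\cdot)$ reappearing, is exactly what drives the companion results for the lagged matrices $A^{\uptau}_{j,l}=\sum_{n}\Psi_{j}(n)\Psi_{l}(n+\uptau)$ and for the cross-scale sums (Lemmas~\ref{lem:ASordtau} and~\ref{lem:ASordjj}), where your normalisation trick is not available --- those entries need not be nonnegative and satisfy no identity analogous to $\sum_{j}2^{-j}A_{j,l}=1$ --- so there one is forced back to the $\ell^{1}$ and support properties of the autocorrelation wavelets on which the paper's argument already runs.
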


\begin{lemma}\label{lem:ASordtau}
Under the assumptions of Definition~\ref{def:rlsw}, we have for every rescaled time and replicate, $z, \, \nu$ respectively, lag $\uptau$ and scale $j$,
$\sum_{l=1}^{\infty}A_{l,j}^{\uptau} S_{l}\left(z,\nu\right)=\ORD(2^{j})$.
\end{lemma}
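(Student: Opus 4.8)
The plan is to establish a uniform-in-$\uptau$ bound on the lagged inner-product coefficients $A_{l,j}^{\uptau}$ and then combine it with the uniform summability of the spectrum in \eqref{eq:unifW}, mirroring the argument underlying Lemma~\ref{lem:ASord} (which is the special case $\uptau=0$, since $A_{j,l}=A_{j,l}^{0}$). First I would record the two structural facts about the autocorrelation wavelets that do all the work. Since $\Psi_j(\uptau)=\sum_k\psi_{j,k}(0)\psi_{j,k}(\uptau)$ and the $\{\psi_{j,k}\}_k$ are orthonormal, Cauchy--Schwarz gives $|\Psi_j(\uptau)|\le\Psi_j(0)=1$ for every $\uptau$; and, as recalled in the excerpt, $\Psi_j$ has compact support of order $2^j$, so $\Psi_j(\uptau)=0$ whenever $|\uptau|\ge c\,2^j$ for a constant $c$ depending only on the underlying filter.

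Second, I would derive the key estimate $|A_{l,j}^{\uptau}|\le C\,2^{\min(l,j)}$, uniformly in $\uptau$. Writing $A_{l,j}^{\uptau}=\sum_n\Psi_l(n)\Psi_j(n+\uptau)$, the idea is to bound whichever factor has the larger support by $1$ and to use compact support to limit the number of nonzero summands to that of the smaller support. Concretely, if $l\le j$ then $\sum_n|\Psi_l(n)\Psi_j(n+\uptau)|\le\sum_{n\in\mathrm{supp}\,\Psi_l}|\Psi_l(n)|\le C\,2^{l}$, while if $l>j$ the shifted factor $\Psi_j(\cdot+\uptau)$ confines the sum to $\ORD(2^j)$ terms, giving $\ORD(2^j)$. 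Crucially, the shift by $\uptau$ only translates the support of $\Psi_j$ and leaves its length unchanged, so the bound is uniform in $\uptau$. This is the step I expect to be the main (though mild) obstacle, since the cruder Cauchy--Schwarz bound $|A_{l,j}^{\uptau}|\le A_{l,l}^{1/2}A_{j,j}^{1/2}=\ORD(2^{(l+j)/2})$ is too lossy to combine with merely summable $S_l$ and would not yield the sharp $2^{\min(l,j)}$ factor that the argument requires.

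Finally, using $S_l(z,\nu)=|\Wt_l(z,\nu)|^2\ge 0$ and splitting the series at $l=j$, I would bound each piece by the uniform spectral sum. Since $2^{\min(l,j)}\le 2^j$ for all $l$,
\[
\sum_{l=1}^{\infty}\left|A_{l,j}^{\uptau}\right|S_l(z,\nu)\le C\sum_{l=1}^{\infty}2^{\min(l,j)}S_l(z,\nu)\le C\,2^{j}\sum_{l=1}^{\infty}S_l(z,\nu),
\]
and the last sum is finite uniformly in $z,\nu$ by \eqref{eq:unifW}. Hence $\sum_{l=1}^{\infty}A_{l,j}^{\uptau}S_l(z,\nu)=\ORD(2^j)$, uniformly in $z$, $\nu$ and $\uptau$, as required.
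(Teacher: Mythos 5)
Your proof is correct, and it takes a genuinely different route from the paper's. The paper's own argument never bounds the coefficients $A_{l,j}^{\uptau}$ individually: it exchanges the order of summation, recognizes the inner sum as the replicate local autocovariance, writing $\sum_{l}A_{l,j}^{\uptau}S_{l}(z,\nu)=\sum_{n}c(z,\nu;n)\Psi_{j}(n+\uptau)$, and then bounds this by $\sup_{n}\left|c(z,\nu;n)\right|\cdot\sum_{n}\left|\Psi_{j}(n+\uptau)\right|=\ORD(1)\,\ORD(2^{j})$, using the uniform boundedness of $c$ and the shift-invariance of the $\ell^{1}$ norm of $\Psi_{j}$. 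You instead prove the entrywise estimate $|A_{l,j}^{\uptau}|\leq C\,2^{\min(l,j)}$ uniformly in $\uptau$ (bounding the wider factor by $1$ and counting the $\ORD(2^{\min(l,j)})$ nonvanishing terms) and then sum against $S_{l}\geq 0$ using the uniform summability in~\eqref{eq:unifW}. Both arguments ultimately rest on the same three wavelet facts ($|\Psi_{j}|\leq 1$, compact support of order $2^{j}$, and translation leaving the support length unchanged), so they are comparable in length; yours buys a reusable, slightly sharper piece of information (the uniform entrywise bound on $A^{\uptau}$, which also immediately covers Lemma~\ref{lem:ASord} as the case $\uptau=0$), while the paper's formulation dovetails with the autocovariance machinery it reuses later, e.g.\ in the proof of Proposition~\ref{prop:ESIts}, where the same display is summed over $\uptau$ under the additional autocovariance-summability assumption to get~\eqref{eq:Aeta}. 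One small wording correction: the non-decimated family $\{\psi_{j,k}\}_{k}$ at a fixed scale is \emph{not} orthonormal; the bound $|\Psi_{j}(\uptau)|\leq\Psi_{j}(0)=1$ only requires each discrete wavelet to have unit $\ell^{2}$ norm, which is what actually holds and is all your Cauchy--Schwarz step uses.
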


\begin{lemma}\label{lem:ASordjj}
Under the assumptions of Definition~\ref{def:rlsw}, we have  for every rescaled time and replicate, $z, \, \nu$ respectively, lag $\uptau$ and scales $j, j'$,
$$\sum_{l=1}^{\infty} \left|\sum _{n\in \Z} \Psi_{j,j'}(n) \Psi_{l}(n+\uptau) S_{l}\left(z,\nu\right)\right|=\ORD(2^{(j+j')/2}).$$
\end{lemma}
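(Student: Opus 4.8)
The plan is to mirror the argument behind Lemma~\ref{lem:ASordtau}, in which the fixed-scale object supplies a bound uniform in the summation scale $l$, after which one sums against the spectrum using $\sum_l S_l=\ORD(1)$. Here the fixed-scale object is the cross-correlation wavelet $\Psi_{j,j'}$. First I would record that every autocorrelation wavelet is uniformly bounded, $|\Psi_l(m)|\le\Psi_l(0)=1$ for all $l,m$ (Cauchy--Schwarz applied to $\Psi_l(m)=\sum_k\psi_{l,k}(0)\psi_{l,k}(m)$, using the normalisation $\sum_k\psi_{l,k}^2=1$). Consequently, for every $l$ and every lag $\uptau$,
\[
\left|\sum_{n\in\Z}\Psi_{j,j'}(n)\Psi_l(n+\uptau)\right|\le\sum_{n\in\Z}|\Psi_{j,j'}(n)|\,|\Psi_l(n+\uptau)|\le\sum_{n\in\Z}|\Psi_{j,j'}(n)|=:\|\Psi_{j,j'}\|_{1},
\]
a bound independent of $l$ and $\uptau$. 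Summing over $l$ against the non-negative spectrum then gives $\sum_{l}\bigl|\sum_n\Psi_{j,j'}(n)\Psi_l(n+\uptau)\bigr|S_l(z,\nu)\le\|\Psi_{j,j'}\|_1\sum_l S_l(z,\nu)=\ORD(\|\Psi_{j,j'}\|_1)$, since $\sum_l S_l<\infty$ uniformly by~\eqref{eq:unifW}. The problem is thereby reduced to showing $\|\Psi_{j,j'}\|_1=\ORD(2^{(j+j')/2})$.

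The crux is to bound this $\ell^1$ norm by the $\ell^2$ norm over the compact support. By Cauchy--Schwarz, $\|\Psi_{j,j'}\|_1\le|\mathrm{supp}\,\Psi_{j,j'}|^{1/2}\,\|\Psi_{j,j'}\|_2$, where the support has size $\ORD(2^{j}+2^{j'})=\ORD(2^{\max(j,j')})$ and $\|\Psi_{j,j'}\|_2^2=\sum_\tau|\Psi_{j,j'}(\tau)|^2=A_{j,j'}$. I would then invoke the \emph{sharp} estimate $A_{j,j'}=\ORD(2^{\min(j,j')})$: assuming without loss of generality $j\le j'$, $A_{j,j'}=\sum_\tau\Psi_j(\tau)\Psi_{j'}(\tau)\le\|\Psi_{j'}\|_\infty\,\|\Psi_j\|_1\le\|\Psi_j\|_1$, while $\|\Psi_j\|_1\le|\mathrm{supp}\,\Psi_j|^{1/2}A_{j,j}^{1/2}=\ORD(2^{j})$, using $A_{j,j}\le2^{j}$ (the $m=j$ term of the normalisation $\sum_{m}2^{-m}A_{m,j}=1$). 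Combining the two factors yields $\|\Psi_{j,j'}\|_1=\ORD\bigl((2^{\max(j,j')})^{1/2}(2^{\min(j,j')})^{1/2}\bigr)=\ORD(2^{(j+j')/2})$, since $\max(j,j')+\min(j,j')=j+j'$, which closes the argument.

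The main obstacle is exactly this final combination. The naive bound ($|\Psi_l|\le1$ over a support of size $2^{\max(j,j')}$) delivers only $\ORD(2^{\max(j,j')})$, which is strictly too weak whenever $j\ne j'$. Recovering the advertised geometric-mean rate requires extracting the \emph{larger} scale from the support count and the \emph{smaller} scale from the $\ell^2$ norm $A_{j,j'}^{1/2}$, so that the two exponents add up to $j+j'$. The only genuinely nontrivial ingredient is therefore the sharp estimate $A_{j,j'}=\ORD(2^{\min(j,j')})$ --- in particular one must resist the cruder diagonal bound $A_{j,j'}\le(A_{j,j}A_{j',j'})^{1/2}=\ORD(2^{(j+j')/2})$, which would over-count and destroy the geometric mean --- and I would establish it as above via the uniform boundedness and $\ell^1$ control of the coarser-scale autocorrelation wavelet.
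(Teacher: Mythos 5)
Your proof is correct, and it takes a genuinely different route from the paper's. The paper first establishes the re-association identity $\sum_{n}\Psi_{j,j'}(n)\Psi_{l}(n+\uptau)=\sum_{u}\Psi_{j,l}(u)\Psi_{j',l}(u+\uptau)$, so that the summation scale $l$ enters through cross-correlation wavelets; it then applies Cauchy--Schwarz twice --- once in the lag variable, giving the bound $A_{j,l}^{1/2}A_{j',l}^{1/2}$, and once in the scale $l$ weighted by the spectrum, giving $\left(\sum_{l}A_{j,l}S_{l}\right)^{1/2}\left(\sum_{l'}A_{j',l'}S_{l'}\right)^{1/2}$ --- and concludes via Lemma~\ref{lem:ASord}. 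You instead decouple the spectrum from the wavelet geometry at the outset: the uniform bound $|\Psi_{l}|\le 1$ makes the inner sum at most $\sum_{n}|\Psi_{j,j'}(n)|$ uniformly in $l$ and $\uptau$, the spectrum contributes only $\sum_{l}S_{l}=\ORD(1)$ via~\eqref{eq:unifW}, and the entire burden falls on the deterministic estimate $\sum_{n}|\Psi_{j,j'}(n)|=\ORD(2^{(j+j')/2})$, which you prove by the support-times-$\ell^2$ splitting together with the sharp off-diagonal bound $A_{j,j'}=\ORD(2^{\min(j,j')})$. Every ingredient you invoke is available in the paper (unit normalisation of the discrete wavelets, support of $\Psi_{j,j'}$ of order $2^{j}+2^{j'}$, non-negativity of the $A$ entries, and $\sum_{m}2^{-m}A_{m,j}=1$), and your two key observations are sound: the H\"{o}lder bound $A_{j,j'}\le\|\Psi_{j'}\|_{\infty}\|\Psi_{j}\|_{1}$ (for $j\le j'$) does yield $\ORD(2^{\min(j,j')})$, and you are right that both the naive support bound and the cruder diagonal bound $(A_{j,j}A_{j',j'})^{1/2}$ would lose the geometric-mean rate. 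As for what each route buys: yours is more modular and self-contained, isolating a reusable estimate on the $\ell^1$ norm of the cross-correlation wavelet and requiring only summability of the spectrum rather than Lemma~\ref{lem:ASord}; the paper's packaging is not incidental either, since its re-association identity is reused explicitly in the variance part of the proof of Proposition~\ref{prop:EI}, and Lemma~\ref{lem:ASord} is already part of the surrounding machinery there.
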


\subsection{Proof of Proposition~\ref{prop:EI} }\label{app:proofs:rawi}

\subsubsection*{Proof of Proposition \ref{prop:EI} (Expectation)}
\begin{proof}[\unskip\nopunct]
For a fuller understanding of the process behaviour over replicates, we start by defining the cross-replicate-periodogram as $I_{j,k;T}^{(r,r');R} = d_{j,k;T}^{ r;R} d_{j,k;T}^{ r';R}$. Taking the expectation, it follows that
\begin{align} 
\E\left[I_{j,k;T}^{(r,r');R}\right] &= \E\left[d_{j,k;T}^{ r;R} d_{j,k;T}^{ r';R}\right] \nonumber \\
&= \E\left[\left\{ \sum_{t} X_{t;T}^{r;R}\psi_{j,k}(t) \right\} \left\{ \sum_{t'} X_{t';T}^{r;R}\psi_{j,k}(t') \right\}\right] \nonumber \\
&= \E\left[\left\{ \sum_{t} \sum_{l=1}^{\infty}\sum_{m\in\mathbb{Z}} \omega_{l,m;T}^{r;R}\psi_{l,m}(t)\xi_{l,m}^{r}\psi_{j,k}(t) \right\}\right. \nonumber \\ & \quad\qquad\qquad \times \left.\left\{ \sum_{t'} \sum_{l'=1}^{\infty}\sum_{m'\in\mathbb{Z}} \omega_{l',m';T}^{r;R}\psi_{l',m'}(t')\xi_{l',m'}^{r}\psi_{j,k}(t') \right\}\right] \nonumber \nonumber \\
&= \sum_{t}\sum_{t'} \sum_{l=1}^{\infty}\sum_{l'=1}^{\infty}\sum_{m\in\mathbb{Z}}\sum_{m'\in\mathbb{Z}}\omega_{l,m;T}^{r;R}\omega_{l',m';T}^{r';R}\psi_{l,m}(t)\psi_{l',m'}(t')\psi_{j,k}(t)\psi_{j,k}(t')\E\left[\xi_{l,m}^{r}\xi_{l',m'}^{r'}\right] \nonumber \\
&= \sum_{l=1}^{\infty}\sum_{m\in\mathbb{Z}} (\omega_{l,m;T}^{r;R})^{2} \sum_{t}\psi_{l,m}(t)\psi_{j,k}(t)\sum_{t'}\psi_{l,m}(t')\psi_{j,k}(t') \nonumber \\
&= \E\left[I_{j,k;T}^{r;R}\right],\nonumber
\end{align}
since for the orthogonal increments, $\E\left[\xi_{l,m}^{r}\xi_{l',m'}^{r'}\right] = \delta_{l,l'}\delta_{m,m'}\delta_{r,r'}$ and so $l=l'$, $m=m'$ and $r=r'$.

Using the properties in equation~\eqref{eq:approx} and the definition of the cross-correlation wavelet functions ($\Psi_{j,l}$), we obtain by replacing $n: =m-k$

\beqann
\E\left[I_{j,k;T}^{r;R}\right]&=& \sum_{l=1}^{\infty}\sum_{n\in\mathbb{Z}} (\omega_{l,n+k;T}^{r;R})^{2} \sum_{t}\psi_{l,n+k}(t)\psi_{j,k}(t)\sum_{t'}\psi_{l,n+k}(t')\psi_{j,k}(t') ,\\
&=& \sum_{l=1}^{\infty}\sum_{n\in\mathbb{Z}} \left(S_{l}\left(\frac{n+k}{T},\frac{r}{R}\right) + \ORD(D_{l}R^{-1})+\ORD(C_{l}T^{-1})\right)
\left\{ \sum_{t}\psi_{l,n+k}(t)\psi_{j,k}(t) \right\}^2, \\
&=& \sum_{l=1}^{\infty}\sum_{n\in\mathbb{Z}} \left(S_{l}\left(\frac{k}{T},\frac{r}{R}\right) + \ORD(L_{l} |n| T^{-1})+\ORD(D_{l}R^{-1})+\ORD(C_{l}T^{-1})\right) \left(\Psi_{j,l}(n)\right)^2,
\eeqann
where in the last equality we used the Lipschitz continuity of the spectrum in the (rescaled) time argument and a substitution $n:= -n$.

Let us now bound the order terms, as follows.

Since the number of terms in the wavelet cross-correlation is finite and bounded as a function of $n$ \citep{nvsk:2000} and the compact support of $\Psi_{j,l}^2$ is bounded by $K(2^{j}+2^{l})$ for some constant $K$ \citep{sand:2010}, we have
\beqann
\sum_{l}\sum_{n} L_{l} |n| \ORD(T^{-1})\Psi_{j,l}^2(n)
&\leq& \ORD(T^{-1}) \sum_{l} (2^{j}+2^{l}) L_l A_{j,l},\\
&\leq& \ORD(T^{-1} 2^j) \sum_{l} L_l A_{j,l} + \ORD(T^{-1}) \sum_{l} 2^{l}L_l A_{j,l} , \\
&=& \ORD(2^j T^{-1}),
\eeqann
where we used
$\sum_{l} L_l A_{j,l}=\sum_{l} 2^l L_l 2^{-l}A_{j,l}\leq \sum_{l^\prime} 2^{l^\prime} L_{l^\prime} \sum_{l} 2^{-l} A_{j,l} < \infty$ since $\sum_{l} 2^{l}L_l<\infty$ and $\sum_{l} 2^{-l} A_{j,l}=1$ \citep{fryz03:forecasting}, and
\beqann
T^{-1}\sum_{l} 2^l L_l A_{j,l}&=&2^j T^{-1}
\sum_{l} 2^l L_l 2^{-j}A_{j,l},\\
&\leq& 2^j T^{-1}  \sum_{l} \left( 2^{l} L_{l} \sum_{j^\prime} 2^{-j^\prime} A_{j^\prime,l} \right),\\
&=& \ORD(2^j T^{-1}),
\eeqann
again as $\sum_{l} 2^{l}L_l<\infty$ and $\sum_{j} 2^{-j} A_{j,l}=1$.

Using the definition of the $A$ matrix, $A_{j,l}=\sum_{n} \Psi_{j,l}^2(n)$, in the next order term, we obtain
\beqann
\sum_{l}\sum_{n} D_{l} \ORD(R^{-1})\Psi_{j,l}^2(n)
&=& \ORD (R^{-1}) \sum_{l} 2^l D_l 2^{-l} A_{j,l}, \\
&\leq& \ORD(R^{-1}) \sum_{l^\prime} 2^{l^\prime}D_{l^\prime} \sum_{l} 2^{-l}A_{j,l},\\
&=& \ORD(R^{-1})
\eeqann
where we used the condition $\sum_{l} 2^l D_l<\infty$ and $\sum_{l} 2^{-l} A_{j,l}=1$ \citep{fryz03:forecasting}.
Using the same set of arguments and the condition $\sum_{l} 2^l C_l<\infty$, we also have $\sum_{l}\sum_{n} C_{l} \ORD(T^{-1})\Psi_{j,l}^2(n)=\ORD(T^{-1})$.

Hence, retaining the maximum order terms, we obtain

\beqann
\E\left[I_{j,k;T}^{r;R}\right]
&=& \sum_{l=1}^{\infty}\sum_{n\in\mathbb{Z}} S_{l}\left(\frac{k}{T},\frac{r}{R}\right) \left(\Psi_{j,l}(n)\right)^2 + \ORD(2^j T^{-1})+ \ORD(R^{-1}),\\
&=& \sum_{l=1}^{\infty} A_{j,l}S_{l}\left(\frac{k}{T},\frac{r}{R}\right) + \ORD(2^j T^{-1})+  \ORD(R^{-1}).
\eeqann
\end{proof}

\subsubsection*{Proof of Proposition \ref{prop:EI} (Variance)}
\begin{proof}[\unskip\nopunct]
For ease of notation let $X_{t;T}^{r;R} = X_{t}^{r}$ and $d_{j,k;T}^{ r;R}=d_{j,k}^{ r}$.
\begin{align}\label{eq:EII1}
\E\left[I_{j,k;T}^{r;R}I_{j',k';T}^{r';R}\right] &= \E\left[d_{j,k}^{ r} d_{j,k}^{ r}d_{j',k'}^{ r'} d_{j',k'}^{ r'}\right] \nonumber \\
&= \E\left[ \sum_{t} X_{t}^{r}\psi_{j,k}(t) \sum_{t'} X_{t'}^{r}\psi_{j,k}(t')\right. \nonumber \\
& \qquad\qquad\qquad\qquad \times \left.\sum_{h} X_{h}^{r'}\psi_{j',k'}(h)\sum_{h'} X_{h'}^{r'}\psi_{j',k'}(h')\right] \nonumber \\
&= \sum_{t}\sum_{t'}\sum_{h}\sum_{h'}\psi_{j,k}(t)\psi_{j,k}(t')\psi_{j',k'}(h)\psi_{j',k'}(h')\E\left[ X_{t;T}^{r;R}X_{t';T}^{r;R} X_{h;T}^{r';R}X_{h';T}^{r';R}\right].
\end{align}
Using the result of \cite{isserlis:1918} for \textcolor{black}{zero-mean Gaussian random variables}, we can write that
\begin{equation} \label{eq:iss}
\E\left[ X_{t}^{r}X_{t'}^{r} X_{h}^{r'}X_{h'}^{r'}\right] = \E\Big[ X_{t}^{r}X_{t'}^{r}\Big]\E\left[ X_{h}^{r'}X_{h'}^{r'}\right] + \E\left[ X_{t}^{r}X_{h}^{r'}\right]\E\left[ X_{t'}^{r}X_{h'}^{r'}\right] + \E\left[ X_{t}^{r}X_{h'}^{r'}\right]\E\left[ X_{t}^{r}X_{h}^{r'}\right],
\end{equation}
and substituting \eqref{eq:iss} into \eqref{eq:EII1} re-write
\beq\label{eq:EII}
\E\left[I_{j,k;T}^{r;R}I_{j',k';T}^{r';R}\right] = \alpha + \beta + \gamma.
\eeq

Let us now consider
\begin{align} \label{eq:EXX}
\E\Big[ X_{t}^{r}X_{t'}^{r}\Big] &= \E\left[\left\{ \sum_{l=1}^{\infty}\sum_{m\in\mathbb{Z}} \omega_{l,m;T}^{r;R}\psi_{l,m}(t)\xi_{l,m}^{r} \right\}\left\{ \sum_{l'=1}^{\infty}\sum_{m'\in\mathbb{Z}} \omega_{l',m';T}^{r;R}\psi_{l',m'}(t')\xi_{l',m'}^{r}\right\}\right] \nonumber \\
&= \sum_{l=1}^{\infty}\sum_{m\in\mathbb{Z}} \omega_{l,m;T}^{r;R}\psi_{l,m}(t)\sum_{l'=1}^{\infty}\sum_{m'\in\mathbb{Z}} \omega_{l',m';T}^{r;R}\psi_{l',m'}(t')\E\left[\xi_{l,m}^{r}\xi_{l',m'}^{r}\right] \nonumber\\
&= \sum_{l=1}^{\infty}\sum_{m\in\mathbb{Z}} (\omega_{l,m;T}^{r;R})^{2}\psi_{l,m}(t)\psi_{l,m}(t'),
\end{align}
since for the orthogonal increments, $\E\left[\xi_{l,m}^{r}\xi_{l',m'}^{r}\right] = \delta_{l,l'}\delta_{m,m'}$ and so \eqref{eq:EXX} is non-zero only when $l=l'$ and $m=m'$. Then for the first term in equation \eqref{eq:iss}, we have
\begin{equation} \label{eq:EEXX}
\E\Big[ X_{t}^{r}X_{t'}^{r}\Big]\E\left[ X_{h}^{r'}X_{h'}^{r'}\right] = \sum_{l=1}^{\infty}\sum_{m\in\mathbb{Z}} (\omega_{l,m;T}^{r;R})^{2}\psi_{l,m}(t)\psi_{l,m}(t')\sum_{e=1}^{\infty}\sum_{f\in\mathbb{Z}} (\omega_{e,f;T}^{r';R})^{2}\psi_{e,f}(h)\psi_{e,f}(h').
\end{equation}
Thus the first term in equation \eqref{eq:EII}, $\alpha$, is
\begin{align*}
\alpha &= \sum_{t}\sum_{t'}\sum_{h}\sum_{h'}\psi_{j,k}(t)\psi_{j,k}(t')\psi_{j',k'}(h)\psi_{j',k'}(h')\E\Big[ X_{t}^{r}X_{t'}^{r}\Big]\E\left[ X_{h}^{r'}X_{h'}^{r'}\right] \\
&= \sum_{l=1}^{\infty}\sum_{m\in\mathbb{Z}} (\omega_{l,m;T}^{r;R})^{2}\sum_{t}\psi_{l,m}(t)\psi_{j,k}(t)\sum_{t'}\psi_{l,m}(t')\psi_{j,k}(t') \\
& \qquad\qquad \times \sum_{e=1}^{\infty}\sum_{f\in\mathbb{Z}} (\omega_{e,f;T}^{r';R})^{2}\sum_{h}\psi_{e,f}(h)\psi_{j',k'}(h)\sum_{h'}\psi_{e,f}(h')\psi_{j',k'}(h') \\
&= \left(\sum_{l=1}^{\infty}A_{j,l} S_{l}\left(\frac{k}{T},\frac{r}{R}\right) + \ORD(2^{j}T^{-1}) + \ORD(R^{-1})\right) \\
& \qquad\qquad \times \left(\sum_{e=1}^{\infty}A_{j',e} S_{e}\left(\frac{k'}{T},\frac{r'}{R}\right) + \ORD(2^{j'}T^{-1}) + \ORD(R^{-1})\right),
\end{align*}
where we used the asymptotic expectation result obtained above (see equation~\eqref{eq:EI}).

Using the result of  Lemma~\ref{lem:ASord}, we obtain
\begin{align*}
\alpha&=\left( \sum_{l=1}^{\infty}A_{j,l} S_{l}\left(\frac{k}{T},\frac{r}{R}\right)\right) \left(\sum_{e=1}^{\infty}A_{j',e} S_{e}\left(\frac{k'}{T},\frac{r'}{R}\right)\right)+ \ORD(2^{j+j'}T^{-1})+\ORD(2^{\mbox{max}\{j,j'\}}R^{-1}),
\end{align*}
where we used $\ORD(2^{\mbox{max}\{j,j'\}}R^{-1})=\ORD(2^{j}R^{-1})+\ORD(2^{j'}R^{-1})$.

In the same manner as for \eqref{eq:EXX} and \eqref{eq:EEXX}, we can show that
\begin{align} \label{eq:EEXX2}
\E\left[ X_{t}^{r}X_{h}^{r'}\right]\E\left[ X_{t'}^{r}X_{h'}^{r'}\right] &= \sum_{l=1}^{\infty}\sum_{m\in\mathbb{Z}} (\omega_{l,m;T}^{r;R})^{2}\psi_{l,m}(t)\psi_{l,m}(h)\sum_{l'=1}^{\infty}\sum_{m'\in\mathbb{Z}} (\omega_{l',m';T}^{r;R})^{2}\psi_{l',m'}(t')\psi_{l',m'}(h'), \nonumber \\
\E\left[ X_{t}^{r}X_{h'}^{r'}\right]\E\left[ X_{t'}^{r}X_{h}^{r'}\right] &= \sum_{l=1}^{\infty}\sum_{m\in\mathbb{Z}} (\omega_{l,m;T}^{r;R})^{2}\psi_{l,m}(t)\psi_{l,m}(h')\sum_{l'=1}^{\infty}\sum_{m'\in\mathbb{Z}} (\omega_{l',m';T}^{r;R})^{2}\psi_{l',m'}(t')\psi_{l',m'}(h),
\end{align}
noting that the above expressions are non-zero only when $r=r'$ since $\E\left[\xi_{j,k}^{r}\xi_{j',k'}^{r'}\right] = \delta_{j,j'}\delta_{k,k'}\delta_{r,r'}$. Then the second term in \eqref{eq:EII}, $\beta$, is
\begin{align*}
\beta &= \sum_{t}\sum_{t'}\sum_{h}\sum_{h'}\psi_{j,k}(t)\psi_{j,k}(t')\psi_{j',k'}(h)\psi_{j',k'}(h')\E\left[ X_{t}^{r}X_{h}^{r'}\right]\E\left[ X_{t'}^{r}X_{h'}^{r'}\right] \\
&= \sum_{l=1}^{\infty}\sum_{m\in\mathbb{Z}} (\omega_{l,m;T}^{r;R})^{2}\sum_{t}\psi_{l,m}(t)\psi_{j,k}(t)\sum_{h}\psi_{l,m}(h)\psi_{j',k'}(h)\\
& \qquad\qquad\qquad\qquad\qquad \times \sum_{l'=1}^{\infty}\sum_{m'\in\mathbb{Z}}(\omega_{l',m';T}^{r;R})^{2}\sum_{t'}\psi_{l',m'}(t')\psi_{j,k}(t') \sum_{h'}\psi_{l',m'}(h')\psi_{j',k'}(h')\\
&= \left(\sum_{l=1}^{\infty}\sum_{m\in\mathbb{Z}} (\omega_{l,m;T}^{r;R})^{2}\sum_{t}\psi_{l,m}(t)\psi_{j,k}(t)\sum_{h}\psi_{l,m}(h)\psi_{j',k'}(h)\right)^{2}.
\end{align*}
Making the substitution $m=n+k$, and noticing from the above that the term $\beta \geq 0$, it follows that
\begin{align*}
\beta^{\frac{1}{2}} &= \left| \sum_{l=1}^{\infty}\sum_{n\in\mathbb{Z}} (\omega_{l,n+k;T}^{r;R})^{2}
\sum_{t}\psi_{l,n+k-t}\psi_{j,k-t}\sum_{h}\psi_{l,n+k-h}\psi_{j',k'-h} \right|,\\
&= \left| \sum_{l=1}^{\infty}\sum_{n\in\mathbb{Z}} \left\{S_{l}\left(\frac{n+k}{T},\frac{r}{R}\right)+\ORD(D_{l}R^{-1})+ \ORD(C_{l}T^{-1}) \right\} \sum_{t}\psi_{l,n-t}\psi_{j,-t}\sum_{h}\psi_{l,n+k-h}\psi_{j',k'-h}\right|\\
&= \left| \sum_{l=1}^{\infty}\sum_{n\in\mathbb{Z}} \left\{S_{l}\left(\frac{k}{T},\frac{r}{R}\right)+\ORD(D_{l}R^{-1})+ \ORD(C_{l}T^{-1})+ \ORD(|n|L_{l}T^{-1}) \right\}\right. \\
&\quad \left. \times \sum_{t}\psi_{l,n-t}\psi_{j,-t}\sum_{h}\psi_{l,n+k-h}\psi_{j',k'-h}\right|,
\end{align*}
where we have used the spectrum definition and the Lipschitz continuity of $S_{j}(\cdotp,r/R)$ in the rescaled time argument.

Using a substitution $u=h-t$, the sums of wavelet products above can be manipulated as follows,
\begin{align}
\sum_{n\in\mathbb{Z}}\sum_{t}\psi_{l,n-t}\psi_{j,-t}
\sum_{h}\psi_{l,n+k-h}\psi_{j',k'-h} &= \sum_{n\in\mathbb{Z}}\sum_{t}\psi_{l,n-t}\psi_{j,-t}
\sum_{u}\psi_{l,n+k-u-t}\psi_{j',k'-u-t} \nonumber \\
&= \sum_{u}\sum_{t}\psi_{j,-t}\psi_{j',k'-u-t}
\sum_{n\in\mathbb{Z}}\psi_{l,n-t}\psi_{l,n+k-u-t} \nonumber\\
&= \sum_{u}\Psi_{j,j'}(u-k')\Psi_{l}(u-k), \nonumber \\
&= \sum_{u}\Psi_{j,j'}(u)\Psi_{l}(u-k+k'), \nonumber 
\end{align}
where for the last equality we used a substitution $u:=u-k'$. Equivalently, using the first result in the proof of Lemma~\ref{lem:ASordjj}, the above could have been directly written as $\sum_{u}\Psi_{j,l}(u)\Psi_{j',l}(u-k+k')$.

Hence
\beq\nonumber
\beta^{\frac{1}{2}} = \left| \sum_{l=1}^{\infty}\sum_{u\in\Z} \Psi_{j,j'}(u)\Psi_{l}(u-k+k') S_{l}\left(\frac{k}{T},\frac{r}{R}\right)
+I+II+III \right|.
\eeq

The term $I$ can be bounded as follows
\beqann
|I|&=&\left|\sum_l \ORD(D_{l}R^{-1})\sum_{u}\Psi_{j,l}(u)\Psi_{j',l}(u-k+k')\right|,\\
&\leq&
\ORD(R^{-1})\sum_{l} \left( D_l \sum_u \left| \Psi_{j,l}(u)\Psi_{j',l}(u-k+k') \right| \right), \mbox{then using Cauchy Schwarz inequality}\\
&\leq&
\ORD(R^{-1})\sum_{l} D_l \left(\sum_u \left| \Psi_{j,l}(u)\right|^2\right)^{1/2}
\left(\sum_u \left| \Psi_{j',l}(u-k+k')\right|^2\right)^{1/2},\\
&=&
\ORD(R^{-1})\sum_{l} D_l (A_{j,l})^{1/2} (A_{j',l})^{1/2}, \mbox{from the definition of }A_{j,l},\\
&=&
\ORD(R^{-1})\sum_{l} (D_l A_{j,l})^{1/2} (D_l A_{j',l})^{1/2},\\
&\leq&
\ORD(R^{-1})\left(\sum_{l} D_l A_{j,l}\right)^{1/2}
\left(\sum_{l} D_l A_{j',l}\right)^{1/2},\mbox{ using again Cauchy Schwarz inequality}\\
&\leq&
\ORD(R^{-1})\ORD(1), \mbox{ as we saw in the expectation proof part }\sum_{l} D_l A_{j,l}=\ORD(1),\\
&=&
\ORD(R^{-1}).
\eeqann

Similarly, $|II| =\left| \sum_l \ORD(C_{l}T^{-1})\sum_{u}\Psi_{j,l}(u)\Psi_{j',l}(u-k+k')\right|=\ORD(T^{-1})$.

We bound the term $III$  by noting that the function $\Psi_{j,l}\Psi_{j',l}$ is compactly supported, with the support bounded by $K'\left( 2^l +2^{\mbox{min}\{j,j'\}}\right)$, hence
\beqann
|III|&\leq& \sum_l \ORD(L_{l}T^{-1})\left( 2^l +2^{\mbox{min}\{j,j'\}}\right)\sum_{u} \left|\Psi_{j,l}(u)\Psi_{j',l}(u-k+k')\right|,\\
&\leq&
\ORD(T^{-1})\sum_l L_{l}\left( 2^l +2^{\mbox{min}\{j,j'\}}\right) (A_{j,l})^{1/2} (A_{j',l})^{1/2}, \mbox{ as above}\\
&=&
\ORD(T^{-1})\sum_l L_{l}2^l  (A_{j,l})^{1/2} (A_{j',l})^{1/2}\\
&+&
\ORD(T^{-1})\sum_l L_{l}2^{\mbox{min}\{j,j'\}} (A_{j,l})^{1/2} (A_{j',l})^{1/2},\\
&\leq&
\ORD(T^{-1})2^{(j+j')/2}\sum_l L_{l}2^l  (2^{-j/2}A_{j,l}^{1/2}) (2^{-j'/2}A_{j',l}^{1/2})\\
&+&
\ORD(T^{-1})2^{(j+j')/2}\sum_l L_{l}  (A_{j,l})^{1/2} (A_{j',l})^{1/2}, \mbox{ as } 2^{\mbox{min}\{j,j'\}} \leq 2^{(j+j')/2}.
\eeqann
The term $\sum_l L_{l}2^l  (2^{-j/2}A_{j,l}^{1/2}) (2^{-j'/2}A_{j',l}^{1/2})\leq \left(\sum_l L_{l}2^l 2^{-j}A_{j,l}\right)^{1/2}\left(\sum_l L_{l}2^l 2^{-j'}A_{j',l}\right)^{1/2}$ from the Cauchy-Schwarz inequality, and using $\sum_{j} 2^{-j}A_{j,l}=1$ and $\sum_{l} 2^l L_l<\infty$, we obtain its $\ORD(1)$ bound.

The term $\sum_l L_{l}  (A_{j,l})^{1/2} (A_{j',l})^{1/2} \leq \left(\sum_l L_{l}A_{j,l}\right)^{1/2} \left(\sum_l L_{l}A_{j',l}\right)^{1/2}=\ORD(1)$ since we already established that $\sum_l L_{l}A_{j,l}<\infty$ in the expectation part of the proof. Thus term $III$ is bounded by $\ORD(2^{(j+j')/2}T^{-1})$.

We therefore have \beq \nonumber
\beta^{\frac{1}{2}} = \left| \sum_{l=1}^{\infty}\sum_{u\in\Z} \Psi_{j,j'}(u)\Psi_{l}(u-k+k') S_{l}\left(\frac{k}{T},\frac{r}{R}\right) \right|
+\ORD(2^{(j+j')/2}T^{-1})+\ORD(R^{-1}),
\eeq
and using the result in Lemma~\ref{lem:ASordjj}, we obtain
\beq\nonumber
\beta = \left(\sum_{l=1}^{\infty}\sum_{u\in\Z} \Psi_{j,j'}(u)\Psi_{l}(u-k+k') S_{l}\left(\frac{k}{T},\frac{r}{R}\right)\right)^2
+\ORD(2^{(j+j')/2+(j+j')/2}T^{-1})+
\ORD(2^{(j+j')/2}R^{-1}),
\eeq
where we retained the highest order terms only.

The third term in \eqref{eq:EII}, $\gamma$, can be establish using precisely the same arguments.

Using all previous results, when $r \neq r'$
\begin{align*}
\E\left[I_{j,k;T}^{r;R}I_{j',k';T}^{r';R}\right] &= \alpha + \beta + \gamma = \alpha \nonumber\\
&= \sum_{l=1}^{\infty}A_{j,l} S_{l}\left(\frac{k}{T},\frac{r}{R}\right)\sum_{e=1}^{\infty}A_{j',e} S_{e}\left(\frac{k'}{T},\frac{r'}{R}\right)+ \ORD(2^{j+j'}T^{-1}) \nonumber \\
& \quad +\ORD(2^{\text{max}\{j,j'\}} R^{-1}),
\end{align*}
as the $\beta$ and $\gamma$ terms are both $0$ due to \eqref{eq:EEXX2}.

When $r = r'$, we obtain the following
\begin{align}\label{eq:covIdt}
\E\left[I_{j,k;T}^{r;R}I_{j',k';T}^{r;R}\right] &= \alpha + \beta + \gamma \nonumber\\
&= \sum_{l=1}^{\infty}A_{j,l} S_{l}\left(\frac{k}{T},\frac{r}{R}\right)\sum_{e=1}^{\infty}A_{j',e} S_{e}\left(\frac{k'}{T},\frac{r}{R}\right)+ \ORD(2^{j+j'}T^{-1})+\ORD(2^{\text{max}(j,j')} R^{-1}) \nonumber \\
& \quad+ 2\left(\sum_{l=1}^{\infty}\sum_{u\in\Z}\Psi_{j,j'}(u)\Psi_{l}(u+k-k')S_{l}\left(\frac{k}{T},\frac{r}{R}\right) \right)^2 \nonumber \\
& \quad+ \ORD(2^{j+j'}T^{-1})
+\ORD(2^{(j+j')/2}R^{-1}).
\end{align}

Thus for $r = r', j = j'$ and $k = k'$, the variance of $I_{j,k;T}^{r;R}$ can be written as
\begin{align*}
\text{var}\left(I_{j,k;T}^{r;R}\right) &= \E\left[I_{j,k;T}^{r;R}I_{j,k;T}^{r;R}\right] - \E\left[I_{j,k;T}^{r;R}\right]\E\left[I_{j,k;T}^{r;R}\right] \\
&= 3\left(\sum_{l=1}^{\infty}A_{j,l} S_{l}\left(\frac{k}{T},\frac{r}{R}\right)\right)^{2} + \ORD(2^{2j}T^{-1})+\ORD(2^{j}R^{-1}) \\
& \qquad\qquad\qquad\qquad\qquad - \left(\sum_{l=1}^{\infty}A_{j,l} S_{l}\left(\frac{k}{T},\frac{r}{R}\right) + \ORD(2^{j}T^{-1}) + \ORD(R^{-1})\right)^{2} \\
&= 2\left(\sum_{l=1}^{\infty}A_{j,l} S_{l}\left(\frac{k}{T},\frac{r}{R}\right)\right)^{2} + \ORD(2^{2j}T^{-1}) + \ORD(2^{j}R^{-1}),
\end{align*}
where for the last equality we used the result in Lemma~\ref{lem:ASord} and we have only kept the highest order terms.
\end{proof}

\subsection{Proof of Proposition~\ref{prop:ESI} }\label{app:proofs:itilde}

\subsubsection*{Proof of Proposition \ref{prop:ESI} (Expectation)}
\begin{proof}[\unskip\nopunct]
From the definition of the replicate-smoothed periodogram in \eqref{eq:defItilde}, we have
\begin{equation}\nonumber
\E\left[\It_{j,k;T}^{r;R}\right] = \frac{1}{2M+1}\sum_{s=-M}^{M}\E\left[I_{j,k;T}^{r+s;R}\right]
\end{equation}
and substituting the asymptotic result for the expectation  $\E\left[I_{j,k;T}^{r+s;R}\right]$ (see equation~\eqref{eq:EI}), we further obtain
\begin{equation}\nonumber
\E\left[\It_{j,k;T}^{r;R}\right] = (2M+1)^{-1}\sum_{s=-M}^{M}\left\{\sum_{l=1}^{\infty}A_{j,l} S_{l}\left(\frac{k}{T},\frac{r+s}{R}\right) + \ORD(2^jT^{-1}) + \ORD(R^{-1})\right\}.
\end{equation}
Using the Lipschitz continuity of the spectrum $S_{j}(k/T,\cdotp)$ in replicate-dimension, we obtain
\begin{align*}
\E\left[\It_{j,k;T}^{r;R}\right] &= (2M+1)^{-1}\sum_{s=-M}^{M}\left\{\sum_{l=1}^{\infty}A_{j,l} \left(S_{l}\left(\frac{k}{T},\frac{r}{R}\right)+ \ORD(|s|N_{l}R^{-1})\right) + \ORD(2^{j}T^{-1}) + \ORD(R^{-1})\right\}\\
&= (2M+1)^{-1}\sum_{s=-M}^{M}\left\{\sum_{l=1}^{\infty}A_{j,l} S_{l}\left(\frac{k}{T},\frac{r}{R}\right)+ \ORD(|s|R^{-1})\sum_l N_l A_{j,l}+ \ORD(2^{j}T^{-1}) + \ORD(R^{-1})\right\} \\
&= \sum_{l=1}^{\infty}A_{j,l} S_{l}\left(\frac{k}{T},\frac{r}{R}\right)+ \ORD(MR^{-1}) + \ORD(2^j T^{-1}),
\end{align*}
where we have used that $\sum_l N_l A_{j,l} < \infty$ as $\sum_{l}2^{l}N_{l} < \infty$ and $\sum_l 2^{-l} A_{j,l}=1$. \\
\end{proof}

\subsubsection*{Proof of Proposition \ref{prop:ESI} (Variance)}
\begin{proof}[\unskip\nopunct]
Let us take
\begin{align} \label{eq:CSII}
\text{var}\left(\It_{j,k;T}^{r;R}\right)&= \text{var}\left( (2M+1)^{-1} \sum_{s=-M}^{M} I_{j,k;T}^{r+s;R} \right) \nonumber\\
& = (2M+1)^{-2} \left( \E\left[\sum_{s=-M}^{M} I_{j,k;T}^{r+s;R}\sum_{s'=-M}^{M} I_{j,k;T}^{r+s';R} \right] - \E\left[\sum_{s=-M}^{M} I_{j,k;T}^{r+s;R}\right]\E\left[\sum_{s'=-M}^{M} I_{j,k;T}^{r+s';R}\right] \right) \nonumber\\
&= (2M+1)^{-2}\sum_{s=-M}^{M}\sum_{s'=-M}^{M} \cov\left(I_{j,k;T}^{r+s;R},I_{j,k;T}^{r+s';R}\right).
\end{align}
As the replicates are uncorrelated, we have $s = s'$ in \eqref{eq:CSII} and using the variance result for the raw periodogram (Proposition~\ref{prop:EI}) we obtain
\begin{align*}
\text{var}\left(\It_{j,k;T}^{r;R}\right)
&=(2M+1)^{-2}\sum_{s=-M}^{M}\text{var}\left(I_{j,k;T}^{r+s;R}\right) \nonumber \\
&= (2M+1)^{-2}\sum_{s=-M}^{M}\left\{2\left(\sum_{l=1}^{\infty}A_{j,l} S_{l}\left(\frac{k}{T},\frac{r+s}{R}\right)\right)^{2} + \ORD(2^{2j}T^{-1}) + \ORD(2^{j}R^{-1})\right\} \nonumber\\
&= (2M+1)^{-2}\sum_{s=-M}^{M}\left\{2\left(\sum_{l=1}^{\infty}A_{j,l} S_{l}\left(\frac{k}{T},\frac{r}{R}\right)+ \ORD(|s|R^{-1})\sum_l N_l A_{j,l}
\right)^{2}\right\}  \\
&\quad +(2M+1)^{-2}\sum_{s=-M}^{M}\left\{\ORD(2^{2j}T^{-1}) + \ORD(2^{j}R^{-1})\right\} \nonumber\\
&=\ORD(M^{-1})\left(\sum_{l=1}^{\infty}A_{j,l}S_{l}\left(\frac{k}{T},\frac{r}{R}\right)\right)^{2}+
(2M+1)^{-2}\sum_{s=-M}^{M}\left\{
\ORD(2^j|s|R^{-1})+
\ORD(|s|^2R^{-2})\right\} \\
&\quad +\ORD(2^{2j}(MT)^{-1}) + \ORD(2^{j}(MR)^{-1}),
\end{align*}
from Lemma~\ref{lem:ASord} and as $\sum_l N_l A_{j,l}<\infty$.

Retaining the largest order terms, it then follows that
\begin{align*}
\text{var}\left(\It_{j,k;T}^{r;R}\right)
&=\ORD(2^{2j}M^{-1})+\ORD(2^jR^{-1})+\ORD(MR^{-2})+\ORD(2^{2j}(MT)^{-1}) + \ORD(2^{j}(MR)^{-1}),\\
&=\ORD(2^{2j}M^{-1})+\ORD(2^jR^{-1})+\ORD(MR^{-2}).
\end{align*}

The expectation and variance results show that for fixed coarse enough scales $j$ (to guard against asymptotic bias and non-vanishing variance), the proposed replicate-smoothed periodogram is an asymptotically consistent estimator for the spectral quantity $\beta$, as it is asymptotically unbiased and its variance converges to zero as $T \to \infty, R \to \infty$, $M \to \infty$ and $M/R \to 0$. \\
\end{proof}

\subsection{Proof of Proposition~\ref{prop:CS} }\label{app:proofs:corrper}
\begin{proof}
As $M, T \rightarrow \infty$, for each $j$, $z$ and $\nu$, the consistency result $\hat{S}_{j}(z,\nu)\stackrel{P}{\rightarrow} {S}_{j}(z,\nu)$ follows from the consistency results $\It_{l,\lfloor zT \rfloor;T}^{\lfloor \nu R \rfloor;R} \stackrel{P}{\rightarrow}  \beta_{l}(z,\nu)$  for all fine enough scales $l$ (as shown in Proposition~\ref{prop:ESI}) and then using the continuous mapping theorem \citep{billingsley1999convergence} for the continuous function $g(x_1, \ldots, x_J)= \sum_{l=1}^J A^{-1}_{j,l} x_l$ that defines their linear combination with coefficients given by the matrix $A^{-1}$ entries.

Additionally, using the properties of the matrix $A$, we obtain the estimator asymptotic unbiasedness from the linearity of the expectation operator and from the asymptotic unbiasedness of the corrected periodogram, as follows
\beqann
\E(\hat{S}_{j}(z,\nu))&=&\E\left(\sum_{l=1}^J A^{-1}_{j,l} \It_{l,\lfloor zT \rfloor;T}^{\lfloor \nu R \rfloor;R}\right),\\&=& \sum_{l=1}^J A^{-1}_{j,l} \E\left(\It_{l,\lfloor zT \rfloor;T}^{\lfloor \nu R \rfloor;R}\right),\mbox{ then from the expectation part of Proposition~\ref{prop:ESI}}\\
&=&\sum_{l=1}^J A^{-1}_{j,l} \left(\beta_{l}(z,\nu) + \ORD(2^l T^{-1})+\ORD(MR^{-1})\right), \mbox{ and using the definition of }\beta\\
&=& \sum_{l=1}^J A^{-1}_{j,l}\sum_{l'} A_{l,l'}S_{l'}(z,\nu)+ \sum_{l=1}^J A^{-1}_{j,l}\left( \ORD(2^l T^{-1})+\ORD(MR^{-1}) \right),\\
&=& \sum_{l'} \left(\sum_{l=1}^J A^{-1}_{j,l}A_{l,l'}\right)S_{l'}(z,\nu)+ \sum_{l=1}^J A^{-1}_{j,l}\left( \ORD(2^l T^{-1})+\ORD(MR^{-1}) \right),\\
&=& \sum_{l'} \left( A^{-1}A\right)_{j,l'}S_{l'}(z,\nu)
+ \sum_{l=1}^J A^{-1}_{j,l}\left( \ORD(2^l T^{-1})+\ORD(MR^{-1}) \right),\\
&=& S_{j}(z,\nu)+\ORD(T^{\alpha-1})+\ORD(MR^{-1}),
\eeqann
where we used the boundedness of $A^{-1}_{j,l}$ and $\sum_{l=1}^J 2^l =\ORD(T^\alpha)$.

In fact, it can be shown that for Haar wavelets, the above approximation rate is  $\ORD(T^{-1/2})+\ORD(MR^{-1})$ since $A^{-1}_{j,l} =\ORD(2^{-(j+l)/2})$ \citep{nvsk:2000}.
\end{proof}

\subsection{Proof of Proposition~\ref{prop:ESIts} }\label{app:proofs:itildetilde}

\subsubsection*{Proof of Proposition \ref{prop:ESIts} (Expectation)}
\begin{proof}[\unskip\nopunct]
From the definition of the time- and replicate-smoothed periodogram in \eqref{eq:defItildetilde}, we have
\begin{equation}\nonumber
\E\left[\tilde{\It}_{j,k;T}^{r;R}\right] = \frac{1}{2M+1}\frac{1}{2M_T+1}\sum_{s=-M}^{M}\sum_{t=-M_T}^{M_T}\E\left[I_{j,k+t;T}^{r+s;R}\right]
\end{equation}
and substituting the asymptotic result for the expectation  $\E\left[I_{j,k+t;T}^{r+s;R}\right]$ (see for instance equation~\eqref{eq:EI}), we further obtain
\begin{equation*}
\E\left[\tilde{\It}_{j,k;T}^{r;R}\right] = (2M+1)^{-1}(2M_T+1)^{-1}\sum_{s=-M}^{M} \sum_{t=-M_T}^{M_T}\left\{\sum_{l=1}^{\infty}A_{j,l} S_{l}\left(\frac{k+t}{T},\frac{r+s}{R}\right) + \ORD(2^jT^{-1}) + \ORD(R^{-1})\right\}.
\end{equation*}
Now using the Lipschitz continuity of the spectrum $S_{l}(\cdotp,\cdotp)$ in both time and replicate dimensions, we have
\begin{align*}
\E\left[\tilde{\It}_{j,k;T}^{r;R}\right] &= (2M+1)^{-1}(2M_T+1)^{-1}\sum_{s=-M}^{M}\sum_{t=-M_T}^{M_T}
\left\{\sum_{l=1}^{\infty}A_{j,l} S_{l}\left(\frac{k}{T},\frac{r}{R}\right)\right\}\\
&+(2M+1)^{-1}(2M_T+1)^{-1}\sum_{s=-M}^{M}\sum_{t=-M_T}^{M_T}\sum_{l=1}^{\infty}A_{j,l}
\left(\ORD(|t|L_{l}T^{-1})+\ORD(|s|N_{l}R^{-1})\right)\\
&+ \ORD(2^{j}T^{-1}) + \ORD(R^{-1}),\\
&=  (2M+1)^{-1}(2M_T+1)^{-1}\sum_{s=-M}^{M}\sum_{t=-M_T}^{M_T}
\sum_{l=1}^{\infty}A_{j,l} S_{l}\left(\frac{k}{T},\frac{r}{R}\right)\\
&+ (2M+1)^{-1}(2M_T+1)^{-1}\sum_{s=-M}^{M}\sum_{t=-M_T}^{M_T} \left\{ \ORD(|t|T^{-1})\sum_l L_l A_{j,l} + \ORD(|s|R^{-1})\sum_l N_l A_{j,l} \right\} \\
&+ \ORD(2^{j}T^{-1}) + \ORD(R^{-1}),\\
&= \sum_{l=1}^{\infty}A_{j,l} S_{l}\left(\frac{k}{T},\frac{r}{R}\right)+ \ORD(M_T T^{-1})+ \ORD(MR^{-1}) + \ORD(2^j T^{-1}),
\end{align*}
where we have used that $\sum_l L_l A_{j,l} < \infty$ and $\sum_l N_l A_{j,l} < \infty$ as previously shown, and as usual we retained the highest order terms.
\end{proof}

\subsubsection*{Proof of Proposition \ref{prop:ESIts} (Variance)}
\begin{proof}[\unskip\nopunct]
Under the assumption of summable autocovariance,
$$\sup_{z,\nu}\sum_{n\in\Z}\left|c\left(z,\nu;n \right)\right| =\ORD(1),$$ the result in Lemma  \ref{lem:ASordtau} further implies
\begin{equation}\label{eq:Aeta}
\sum_{\uptau\in \Z}\left|\sum_{l=1}^{\infty}A_{l,j}^{\uptau} S_{l}\left(z,\nu\right)\right|=\ORD(2^{j}).
\end{equation}
This can be easily seen by taking
\begin{align*}
\sum_{\uptau}\left|\sum_{l=1}^{\infty}A_{l,j}^{\uptau} S_{l}\left(z,\nu\right)\right| & = \sum_{\uptau} \left|\sum_{l=1}^{\infty}\sum _{n}\Psi_{l}(n) \Psi_{j}(n+\uptau) S_{l}\left(z,\nu\right)\right| \mbox{ from the definition of the matrix }A^\uptau\\
&= \sum_{\uptau} \left|\sum _{n}\left(\sum_{l=1}^{\infty} S_{l}\left(z,\nu\right)\Psi_{l}(n)\right)\Psi_{j}(n+\uptau)\right|  \\
&= \sum_{\uptau} \left|\sum _{n}c\left(z,\nu;n \right)\Psi_{j}(n+\uptau)\right| \mbox { from the local autocovariance definition}\\
& \leq \sum _{n} \left( \sum_{\uptau} \left|\Psi_{j}(n+\uptau)\right| \right) \left|c\left(z,\nu;n \right)\right| \mbox { using the triangle inequality}\\
&= \ORD(2^{j}) \sum_n \left|c\left(z,\nu;n \right)\right| = \ORD(2^{j}),
\end{align*}
where we used $\sup_{z,\nu}\sum_n \left|c\left(z,\nu;n \right)\right|=\ORD(1)$ and that $\sum _{\uptau}\left|\Psi_{j}(n+\uptau)\right| = \ORD(2^{j})$.

In the same vein, we show that \begin{equation}\label{eq:Aetal}\sum_{\uptau\in\Z} \left|\sum_{l=1}^\infty L_{l} A_{j,l}^{\uptau} \right|= \ORD(2^j),\end{equation} and similarly for terms involving $N_l$, since
\begin{align*}
\sum_{\uptau}\left|\sum_{l=1}^{\infty}L_l A_{l,j}^{\uptau} \right| & = \sum_{\uptau} \left|\sum_{l=1}^{\infty} L_l \sum _{n}\Psi_{j}(n) \Psi_{l}(n+\uptau) \right| \mbox{ from the definition of the matrix }A^\uptau\\
&\leq \sum_{l=1}^{\infty} L_l \sum_{\uptau} \sum _{n} \left| \Psi_{j}(n)\Psi_{l}(n+\uptau)\right|  \mbox { using the triangle inequality} \\
&=  \sum_{l=1}^{\infty} L_l  \sum _{n} \left(\left| \Psi_{j}(n)\right| \sum_{\uptau}\left| \Psi_{l}(n+\uptau)\right| \right)\\
& \leq K \sum_{l=1}^{\infty} 2^l L_l  \sum _{n} \left| \Psi_{j}(n)\right| \\
&= \ORD(2^{j}),
\end{align*}
where we used (in order) that $\sum _{\uptau}\left|\Psi_{l}(n+\uptau)\right| = \ORD(2^{l})$,  $\sum _{n}\left|\Psi_{j}(n)\right| = \ORD(2^{j})$ and $\sum_l 2^l L_l <\infty$.

We also note here that  $\left| \sum_{l=1}^\infty 2^{-l} A_{j,l}^{\uptau} \right|= \ORD(1)$, which can be immediately obtained by taking
\begin{align*}
\left| \sum_l 2^{-l} A_{j,l}^{\uptau} \right|&= \left|\sum_{l} 2^{-l} \sum _{n}\Psi_{j}(n) \Psi_{l}(n+\uptau) \right|,\\
&= \left|\sum _{n}\Psi_{j}(n) \sum_{l} 2^{-l} \Psi_{l}(n+\uptau) \right|,\\
&= \left|\sum _{n}\Psi_{j}(n) \delta_{0,n+\tau} \right|, \\
&=\left| \Psi_{j}(-\uptau) \right|= \ORD(1).
\end{align*}
where we used $\sum_{l} 2^{-l} \Psi_{l}(n+\uptau)=\delta_{0,n+\tau}$ as shown in  \cite{fryz03:forecasting}.

Now let us take
\begin{align} \label{eq:CSIIts}
\text{var}\left(\tilde{\It}_{j,k;T}^{r;R}\right)&= \text{var}\left( (2M_T+1)^{-1}\sum_{t=-M_T}^{M_T} \It_{j,k+t;T}^{r;R} \right), \nonumber\\
&= (2M_T+1)^{-2} \sum_{t=-M_T}^{M_T} \sum_{\eta=-M_T-t}^{M_T-t}
\cov\left(\It_{j,k+t;T}^{r;R}, \It_{j,k+t+\eta;T}^{r;R}\right), \nonumber\\
&= (2M_T+1)^{-2} (2M+1)^{-2} \sum_{t=-M_T}^{M_T} \sum_{\eta=-M_T-t}^{M_T-t}  \sum_{s=-M}^{M} \cov\left(I_{j,k+t;T}^{r+s;R}, I_{j,k+t+\eta;T}^{r+s;R}\right),
\end{align}
where we have used the definition of the replicate-smoothed periodogram in~\eqref{eq:defItilde} and the assumption of uncorrelated replicate series.

Using the covariance definition and the result in equation~\eqref{eq:covIdt} of the variance proof for the raw periodogram (Appendix \ref{app:proofs:rawi} for Proposition~\ref{prop:EI}), as well as the expectation result for the raw periodogram (Proposition~\ref{prop:EI}), we can re-write equation~\eqref{eq:CSIIts} above as follows

\begin{align*}
\text{var}\left(\tilde{\It}_{j,k;T}^{r;R}\right)
&= (2M_T+1)^{-2} (2M+1)^{-2} \sum_{t=-M_T}^{M_T} \sum_{\eta=-M_T-t}^{M_T-t}  \sum_{s=-M}^{M}
\left\{2\left(\sum_{l=1}^{\infty}A_{j,l}^{\eta} S_{l}\left(\frac{k+t}{T},\frac{r+s}{R}\right)\right)^{2}\right\} \\
&+ (2M_T+1)^{-2} (2M+1)^{-2} \sum_{t=-M_T}^{M_T} \sum_{\eta=-M_T-t}^{M_T-t}  \sum_{s=-M}^{M} \left\{ \ORD(2^{2j}T^{-1}) + \ORD(2^{j}R^{-1})\right\}.
\end{align*}

From the Lipschitz continuity of $S_{l}(\cdotp,\cdotp)$, we have
\begin{align*}
\sum_{l=1}^{\infty}A_{j,l}^{\eta} S_{l}\left(\frac{k+t}{T},\frac{r+s}{R}\right)&= \sum_{l=1}^{\infty}A_{j,l}^{\eta} S_{l}\left(\frac{k}{T},\frac{r}{R}\right)+ \left(\ORD(|t|T^{-1})\sum_l L_l A_{j,l}^{\eta}+ \ORD(|s|R^{-1})\sum_l N_l A_{j,l}^{\eta}\right),\\
&= \sum_{l=1}^{\infty}A_{j,l}^{\eta} S_{l}\left(\frac{k}{T},\frac{r}{R}\right)+
\left(\mbox{term }I \right), 
\end{align*}
where the notation `term $I$' above is used for brevity. Replacing this into the variance formula above, we obtain

\begin{align*}
\text{var}\left(\tilde{\It}_{j,k;T}^{r;R}\right)
&\leq (2M_T+1)^{-2} (2M+1)^{-2} \sum_{t=-M_T}^{M_T} \sum_{s=-M}^{M} \sum_{\eta\in\Z}
2\left(\sum_{l=1}^{J}A_{j,l}^{\eta} S_{l}\left(\frac{k}{T},\frac{r}{R}\right)\right)^2\\
&+ (2M_T+1)^{-2} (2M+1)^{-2} \sum_{t=-M_T}^{M_T} \sum_{s=-M}^{M} \sum_{\eta\in\Z} 2\left(\mbox{term }I\right)^2  \\
&+ (2M_T+1)^{-2} (2M+1)^{-2} \sum_{t=-M_T}^{M_T}\sum_{s=-M}^{M}
\sum_{\eta\in\Z}
4\left(\sum_{l=1}^{J}A_{j,l}^{\eta}S_{l}\left(\frac{k}{T},\frac{r}{R}\right)\right)
\times \left(\mbox{term }I\right) \\
&+(2M_T+1)^{-2} (2M+1)^{-2} \sum_{t=-M_T}^{M_T} \sum_{\eta=-M_T-t}^{M_T-t}  \sum_{s=-M}^{M}\left\{\ORD(2^{2j}T^{-1}) + \ORD(2^{j}R^{-1})\right\}.
\end{align*}

As $\sum_{\eta} \left(\sum_{l}A_{j,l}^{\eta} S_{l}\left(\frac{k}{T},\frac{r}{R}\right)\right)^2 \leq \left(\sum_{\eta} \left|\sum_{l}A_{j,l}^{\eta} S_{l}\left(\frac{k}{T},\frac{r}{R}\right)\right| \right)^2$, from~\eqref{eq:Aeta} we obtain  \\ $\sum_{\eta} \left(\sum_{l}A_{j,l}^{\eta} S_{l}\left(\frac{k}{T},\frac{r}{R}\right)\right)^2=\ORD(2^{2j})$, hence the first term in the variance sum above is $\ORD(2^{2j}(M_TM)^{-1})$.

To determine the order of the second term in the variance sum, as above we first observe that $\sum_\eta \left(\sum_l L_l A_{j,l}^{\eta} \right)=\ORD(2^{2j})$ from equation~\eqref{eq:Aetal}, and similarly for the terms involving $N_l$ (and cross-terms), which leads to the second term in the variance sum above to be rephrased as
$$(2M_T+1)^{-2} (2M+1)^{-2} \ORD(2^{2j}) \sum_{t=-M_T}^{M_T} \sum_{s=-M}^{M} \left\{ \ORD(|t|^2T^{-2})+\ORD(|s|^2R^{-2})+\ORD(|t||s|(TR)^{-1})\right\}.$$

The third (cross-)term of the variance sum can be obtained from  $$\sum_\eta \left|\sum_{l}A_{j,l}^{\eta} S_{l}\left(\frac{k}{T},\frac{r}{R}\right)\right| \left|\sum_l L_l A_{j,l}^{\eta} \right|=\ORD(2^{2j}),$$ using similar arguments as above and results~\eqref{eq:Aeta} and ~\eqref{eq:Aetal}; the same holds for terms involving $N_l$. Hence the third term can be re-expressed as
$$(2M_T+1)^{-2} (2M+1)^{-2} \ORD(2^{2j}) \sum_{t=-M_T}^{M_T} \sum_{s=-M}^{M} \left\{\ORD(|t|T^{-1})+ \ORD(|s|R^{-1})\right\}.$$

Now replacing all order terms in the variance formula above leads to
\begin{align*}
\text{var}\left(\tilde{\It}_{j,k;T}^{r;R}\right)
&= \ORD(2^{2j}(M_TM)^{-1})\\
&+ \ORD(2^{2j}M_T T^{-2}M^{-1})+\ORD(2^{2j}(M_T)^{-1}M R^{-2})+\ORD(2^{2j}(TR)^{-1})\\
&+ \ORD(2^{2j}(MT)^{-1})+\ORD(2^{2j}(M_TR)^{-1})\\
&+ \ORD(2^{2j}(MT)^{-1}) + \ORD(2^{j}(MR)^{-1}).
\end{align*}

Retaining the largest order terms, it then follows that
\begin{align*}
\text{var}\left(\tilde{\It}_{j,k;T}^{r;R}\right)
&=\ORD(2^{2j}(M_TM)^{-1})+\ORD(2^{j}(MR)^{-1}),
\end{align*}
which completes the proof.
\end{proof}

\section{Proofs of results on the asymptotic behaviour of proposed estimators embedding replicate coherence}\label{app:proofsc}

In this section, we give details of the proofs in Section~\ref{sec:modelext}, using the notation described therein. In the proofs that follow, we make use of the following results.

\begin{lemma}\label{lem:lipz}
Under the assumptions of Definition~\ref{def:rlswc}, we have a sequence $\{B_j\}$ of uniformly bounded Lipschitz constants in $j$ with $\sum_j2^jB_j<\infty$ such that
\begin{align}
& \left|\Wt_{j}\left(\frac{k+n}{T},\frac{r}{R}\right)\Wt_{j}\left(\frac{k+n}{T},\frac{r'}{R}\right)\rho_{j}\left(\frac{k+n}{T},\frac{r}{R},\frac{r'}{R}\right) \right. \nonumber \\
& \qquad \qquad \left. - \Wt_{j}\left(\frac{k}{T},\frac{r}{R}\right)\Wt_{j}\left(\frac{k}{T},\frac{r'}{R}\right)\rho_{j}\left(\frac{k}{T},\frac{r}{R},\frac{r'}{R}\right)\right| \leq |n| B_{j}T^{-1}, \label{eq:crosslipz}
\end{align}
for any replicates $r$, $r'$ and times $k$, $n$.
\end{lemma}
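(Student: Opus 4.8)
The plan is to exploit the fact that the quantity in~\eqref{eq:crosslipz} is a product of three scalar functions, each Lipschitz in the time argument, and to reduce the Lipschitz property of the product to that of its factors via a telescoping decomposition. Writing $z=k/T$ and abbreviating $f(\cdot)=\Wt_{j}(\cdot,r/R)$, $g(\cdot)=\Wt_{j}(\cdot,r'/R)$, $h(\cdot)=\rho_{j}(\cdot,r/R,r'/R)$, with $f_0=f(z)$, $g_0=g(z)$, $h_0=h(z)$ and the unsubscripted symbols evaluated at $z+n/T$, I would first write the standard product identity
\[
fgh-f_0g_0h_0=(f-f_0)\,g\,h+f_0\,(g-g_0)\,h+f_0\,g_0\,(h-h_0),
\]
and apply the triangle inequality.

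Next I would bound each of the three resulting terms. For the differences $f-f_0$ and $g-g_0$ I would invoke the time-dimension Lipschitz property of $\Wt_{j}$ from Definition~\ref{def:rlsw}, property 3(b), giving $|f-f_0|\le L_j^{r/R}|n|/T\le L_j|n|/T$ and likewise for $g$; for $h-h_0$ I would use the time-dimension Lipschitz property of $\rho_{j}$ from Definition~\ref{def:rlswc}, property 4(a), giving $|h-h_0|\le Q_j^{r/R,r'/R}|n|/T\le Q_j|n|/T$. For the magnitude factors I would use $|h|\le 1$ (the coherence bound) together with the uniform amplitude bound $\bar W_j:=\sup_{z,\nu}|\Wt_{j}(z,\nu)|$, which is finite and uniformly bounded in $j$ because \eqref{eq:unifWc} taken at $\nu=\nu'$ (equivalently \eqref{eq:unifW}) forces $\sup_{z,\nu}\sum_{j'}|\Wt_{j'}(z,\nu)|^2<\infty$, so that each $\bar W_j^2$ is no larger than this finite supremum. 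Collecting the three terms yields the bound $|n|B_j/T$ with $B_j=2L_j\bar W_j+\bar W_j^2 Q_j$, up to the obvious constants.

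Finally I would verify the required summability $\sum_j 2^j B_j<\infty$. Since $\bar W_j\le C$ uniformly in $j$, we have $\sum_j 2^jB_j\le 2C\sum_j 2^jL_j+C^2\sum_j 2^jQ_j$, and both series on the right converge by the assumptions $\sum_j 2^jL_j<\infty$ in Definition~\ref{def:rlsw}, property 3(b), and $\sum_j 2^jQ_j<\infty$ in Definition~\ref{def:rlswc}, property 4(a). The sequence $\{B_j\}$ is then a sequence of uniformly bounded Lipschitz constants with the stated summability, completing the argument.

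There is no serious obstacle here; the only point requiring care is the interplay between the magnitude bounds and the $2^j$ weighting. The Lipschitz constants $L_j$ and $Q_j$ are only guaranteed to be summable \emph{after} multiplication by $2^j$, so the argument relies on the amplitudes $\bar W_j$ being uniformly bounded (rather than growing in $j$) in order not to spoil this weighted summability. This uniform boundedness is exactly what the square-summability condition~\eqref{eq:unifWc} supplies, which is why I would flag the reduction to it as the substantive step rather than the routine telescoping bound.
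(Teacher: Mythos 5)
Your proposal is correct, and it follows the same basic route as the paper: both reduce the Lipschitz continuity of the cross-spectrum to the Lipschitz continuity (in time) of the three factors $\Wt_{j}(\cdot,\nu)$, $\Wt_{j}(\cdot,\nu')$ and $\rho_{j}(\cdot,\nu,\nu')$, and then combine the individual constants into $B_j$. The difference is one of rigour, and it favours you. The paper's proof invokes ``standard Lipschitz function theory'' with the claim that the product of Lipschitz functions on a compact interval is Lipschitz \emph{with constant the maximum of the individual constants}, and accordingly sets $B_{j}=\sup_{\nu,\nu'}\max\{L_{j}^{\nu},L_{j}^{\nu'},Q_{j}^{\nu,\nu'}\}$. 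As stated this is false: the Lipschitz constant of a product necessarily involves the sup-norms of the factors (e.g.\ $f(x)=g(x)=x+10$ on $[0,1]$ each have constant $1$, but $fg$ has constant $\approx 22$), so the paper's $B_j$ is only valid up to an unstated multiplicative factor coming from boundedness of the amplitudes and coherence. Your telescoping decomposition supplies exactly this missing ingredient explicitly: the bound $|\rho_j|\leq 1$, the uniform amplitude bound $\bar W_j\leq C$ extracted from \eqref{eq:unifWc} at $\nu=\nu'$, and the resulting constant $B_j=2L_j\bar W_j+\bar W_j^{2}Q_j$, whose $2^j$-weighted summability then follows from $\sum_j 2^jL_j<\infty$ and $\sum_j 2^jQ_j<\infty$ precisely because $\bar W_j$ does not grow with $j$. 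Your closing remark identifying this uniform boundedness as the substantive step is exactly the point the paper glosses over.
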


Note that the above result means that the replicate cross-spectrum, \i.e. $S_{j}(\cdotp,\nu,\nu')$, is Lipschitz continuous in the rescaled time argument for any rescaled replicates $\nu$, $\nu'$.

\begin{proof}
The proof can be seen in Section~\ref{supp:proofs:sec5} of the Supplementary Material.
\end{proof}

\begin{lemma}\label{lem:lipr}
Under the assumptions of Definition~\ref{def:rlswc}, we have a sequence $\{B_j^\prime\}$ of uniformly bounded Lipschitz constants in $j$ with $\sum_j2^jB^\prime_j<\infty$ such that
\begin{align*}
& \left|\Wt_{j}\left(\frac{k}{T},\frac{r+s}{R}\right)\Wt_{j}\left(\frac{k}{T},\frac{r'+s}{R}\right)\rho_{j}\left(\frac{k}{T},\frac{r+s}{R},\frac{r'+s}{R}\right) \right. \nonumber\\
& \qquad \qquad \left. - \Wt_{j}\left(\frac{k}{T},\frac{r}{R}\right)\Wt_{j}\left(\frac{k}{T},\frac{r'}{R}\right)\rho_{j}\left(\frac{k}{T},\frac{r}{R},\frac{r'}{R}\right)\right| \leq |s| B^\prime_{j}R^{-1}, 
\end{align*}
for any times $k$, $n$ and replicates $r$, $r'$.
\end{lemma}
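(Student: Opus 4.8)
The plan is to mirror the argument already used for Lemma~\ref{lem:lipz}, simply replacing the time-direction Lipschitz bounds by their replicate-direction counterparts. The quantity to control is the increment, across a shift of $s$ replicates, of the triple product $\Wt_j(z,\cdot)\Wt_j(z,\cdot)\rho_j(z,\cdot,\cdot)$ (which is precisely the cross-spectrum $S_j(z,\nu,\nu')$). Writing $z=k/T$ throughout and abbreviating $a=\Wt_j(z,(r+s)/R)$, $b=\Wt_j(z,(r'+s)/R)$, $c=\rho_j(z,(r+s)/R,(r'+s)/R)$ and $a'=\Wt_j(z,r/R)$, $b'=\Wt_j(z,r'/R)$, $c'=\rho_j(z,r/R,r'/R)$, I would first telescope
\begin{equation*}
abc-a'b'c' = (a-a')bc + a'(b-b')c + a'b'(c-c'),
\end{equation*}
so that each summand isolates the increment of a single factor.

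Next I would bound the three summands separately. For the factors $a,a',b,b'$ I use the retained property~3(b) of Definition~\ref{def:rlsw}, namely the Lipschitz continuity of $\Wt_j(z,\cdot)$ in the rescaled replicate argument with constant $N_j^z$: since the shift from $r/R$ to $(r+s)/R$ equals $|s|/R$, we get $|a-a'|\le N_j^z|s|/R$ and likewise $|b-b'|\le N_j^z|s|/R$. For the coherence increment I invoke the additional property of Definition~\ref{def:rlswc}, the Lipschitz continuity of $\rho_j(z,\cdot,\cdot)$ in both replicate arguments with constant $P_j^z$; telescoping once more inside $\rho_j$ (changing one replicate argument at a time) yields $|c-c'|\le 2P_j^z|s|/R$. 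The remaining magnitudes are controlled by $|c|,|c'|\le 1$ (the coherence bound from the amended property~2) and by $\bar W_j:=\sup_{z,\nu}|\Wt_j(z,\nu)|$, which is finite uniformly in $j$ because the amended property~3(a), taken at $\nu=\nu'$, gives $\sup_{z,\nu}\sum_{j}|\Wt_j(z,\nu)|^2<\infty$ and hence $\bar W_j\le C^{1/2}$ for a fixed constant $C$. Collecting the three bounds gives
\begin{equation*}
|abc-a'b'c'|\le \frac{|s|}{R}\left(2N_j^z\bar W_j + 2P_j^z\bar W_j^2\right)\le \frac{|s|}{R}\left(2N_j\bar W_j + 2P_j\bar W_j^2\right),
\end{equation*}
so I set $B_j':=2N_j\bar W_j + 2P_j\bar W_j^2$, which is uniform in $z$, $r$, $r'$ and $s$.

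Finally I would verify the two required properties of $\{B_j'\}$. Uniform boundedness in $j$ follows since $N_j$, $P_j$ are uniformly bounded (Definition~\ref{def:rlswc}) and $\bar W_j\le C^{1/2}$. For the summability, using $\bar W_j\le C^{1/2}$ I bound $\sum_j 2^j B_j'\le 2C^{1/2}\sum_j 2^j N_j + 2C\sum_j 2^j P_j$, both finite by the assumptions $\sum_j 2^j N_j<\infty$ and $\sum_j 2^j P_j<\infty$. The argument is essentially routine once the telescoping is set up; the only point requiring a little care is the coherence factor $\rho_j$, whose increment depends on \emph{both} replicate arguments and therefore needs the extra internal telescoping step and the joint Lipschitz property introduced in Definition~\ref{def:rlswc}, together with checking that the uniform bound $\bar W_j$ transfers the summability from the $N_j$, $P_j$ sequences to $B_j'$.
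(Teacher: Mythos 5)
Your proof is correct and follows essentially the same route as the paper: both reduce the claim to the replicate-direction Lipschitz bounds $N_j^z$ for $\Wt_j$ and $P_j^z$ for $\rho_j$ (with an internal telescoping step for the two replicate arguments of $\rho_j$) and then pass to the product. The only difference is explicitness --- the paper invokes ``standard Lipschitz function theory'' for the product and takes $B_j'=\sup_z\max\{N_j^z,P_j^z\}$, whereas your telescoping decomposition supplies the boundedness factors that this product rule tacitly requires ($\bar W_j \le C^{1/2}$ from property 3(a) at $\nu=\nu'$, and $|\rho_j|\le 1$), yielding the more precise constant $B_j'=2N_j\bar W_j+2P_j\bar W_j^2$, whose uniform boundedness and summability you verify correctly.
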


Note that the above result effectively states that the replicate cross-spectrum, \i.e. $S_{j}(z,\nu+\cdotp,\nu'+ \cdotp)$, is Lipschitz continuous in the rescaled replicate arguments.

\begin{proof}
The proof can be seen in Section~\ref{supp:proofs:sec5} of the Supplementary Material.
\end{proof}

\begin{lemma}\label{lem:ASordc}
Under the assumptions of Definition~\ref{def:rlswc}, we have
 $\sum_{l=1}^{J}A_{j,l} S_{l}\left(z,\nu,\nu'\right)=\ORD(2^{j})$.
\end{lemma}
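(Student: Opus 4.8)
The plan is to follow the proof of Lemma~\ref{lem:ASord} almost verbatim, the only changes being that the autospectrum there, $S_l(z,\nu)=|\Wt_l(z,\nu)|^2$, is replaced by the cross-spectrum $S_l(z,\nu,\nu')=\Wt_l(z,\nu)\Wt_l(z,\nu')\rho_l(z,\nu,\nu')$, and that the uniform summability condition~\eqref{eq:unifW} is replaced throughout by its coherence counterpart~\eqref{eq:unifWc}. First I would pass to absolute values and invoke the bound $|\rho_l(z,\nu,\nu')|\le 1$, which holds since the replicate--coherence takes values in $[-1,1]$ (see~\eqref{eq:lsrc} and property~2 of Definition~\ref{def:rlswc}), together with $A_{j,l}\ge 0$, to obtain
\[
\Bigl|\sum_{l=1}^{J} A_{j,l}\,S_l(z,\nu,\nu')\Bigr|
\le \sum_{l=1}^{J} A_{j,l}\,\bigl|\Wt_l(z,\nu)\,\Wt_l(z,\nu')\bigr|,
\]
so that it suffices to bound the right-hand side by $\ORD(2^j)$.

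I would then assemble two ingredients. The first is the order bound $A_{j,l}=\ORD(2^{\min\{j,l\}})$: writing $A_{j,l}=\sum_{\uptau}\Psi_j(\uptau)\Psi_l(\uptau)$, the product $\Psi_j\Psi_l$ is nonzero only on the lags where the factor of narrower support is nonzero, i.e.\ on $\ORD(2^{\min\{j,l\}})$ lags, and each factor satisfies $|\Psi_j(\uptau)|\le\Psi_j(0)=1$, so at most $\ORD(2^{\min\{j,l\}})$ summands, each of modulus at most $1$, contribute. The second is the pair of consequences of~\eqref{eq:unifWc}, namely that each term is uniformly bounded, $\sup_l\sup_{z,\nu,\nu'}|\Wt_l(z,\nu)\Wt_l(z,\nu')|=\ORD(1)$, and that the full series is uniformly summable, $\sup_{z,\nu,\nu'}\sum_{l=1}^{\infty}|\Wt_l(z,\nu)\Wt_l(z,\nu')|=\ORD(1)$; both are immediate since each term is dominated by the convergent sum.

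With these in hand I would split the (finite) sum according to whether $l\le j$ or $l>j$,
\[
\sum_{l=1}^{J} A_{j,l}\,\bigl|\Wt_l(z,\nu)\,\Wt_l(z,\nu')\bigr|
\le C\sum_{l\le j} 2^{l}\,\bigl|\Wt_l(z,\nu)\,\Wt_l(z,\nu')\bigr|
+ C\,2^{j}\!\!\sum_{j<l\le J}\!\!\bigl|\Wt_l(z,\nu)\,\Wt_l(z,\nu')\bigr|,
\]
and bound the two pieces separately. The first is $\ORD(2^j)$ because $\sum_{l\le j}2^{l}=\ORD(2^j)$ and each term is uniformly bounded; the second is $\ORD(2^j)$ because the residual tail is dominated by the uniformly convergent full series. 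Summing the two gives $\ORD(2^j)$, uniformly in $z$, $\nu$, $\nu'$ and in the truncation level $J$, which is the claim.

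I do not expect a genuine obstacle, since the argument is a routine adaptation of Lemma~\ref{lem:ASord}; the only points requiring care are (i) handling the possibly indefinite sign of $S_l$, which is precisely where the coherence bound $|\rho_l|\le1$ is used to reduce to the non-negative quantity $|\Wt_l(z,\nu)\Wt_l(z,\nu')|$, and (ii) checking that every bound is uniform in the rescaled arguments, so that the conclusion is genuinely a uniform $\ORD(2^j)$. Condition~\eqref{eq:unifWc} is exactly what supplies this uniformity, taking over the role played by~\eqref{eq:unifW} in the uncorrelated setting.
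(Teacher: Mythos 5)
Your proof is correct, but it takes a genuinely different route from the paper's. The paper proves Lemma~\ref{lem:ASordc} by declaring that it ``follows the same steps as Lemma~\ref{lem:ASord}'', whose proof works in the lag domain: one swaps the order of summation to write $\sum_{l} A_{j,l} S_{l}(z,\nu,\nu') = \sum_{\uptau} c(z,\nu,\nu';\uptau)\,\Psi_{j}(\uptau)$, where $c$ is the replicate local cross-covariance, and then bounds this by $\sup_{\uptau}\left|c(z,\nu,\nu';\uptau)\right| \sum_{\uptau}\left|\Psi_{j}(\uptau)\right| = \ORD(1)\,\ORD(2^{j})$, using the finiteness of the RLCCV (which the paper had already noted follows from the coherence range $[-1,1]$ together with~\eqref{eq:unifWc}) and the standard bound $\sum_{\uptau}\left|\Psi_{j}(\uptau)\right|=\ORD(2^{j})$ of \cite{nvsk:2000}. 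You instead stay entirely in the scale domain: after reducing to $\sum_{l} A_{j,l}\left|\Wt_{l}(z,\nu)\Wt_{l}(z,\nu')\right|$ via $|\rho_{l}|\leq 1$ and $A_{j,l}\geq 0$, you invoke the support-counting bound $A_{j,l}=\ORD(2^{\min\{j,l\}})$ (valid, since $|\Psi_{j}(\uptau)|\leq\Psi_{j}(0)=1$ and the supports have sizes $\ORD(2^{j})$, $\ORD(2^{l})$) and split the sum at $l=j$, with~\eqref{eq:unifWc} supplying both the uniform bound on individual terms and the uniform summability of the tail. The two arguments rest on the same structural inputs, and indeed the paper's bound $\sum_{\uptau}|\Psi_{j}(\uptau)|=\ORD(2^{j})$ is itself a support-counting fact of the same kind you use; what yours buys is that it avoids introducing the cross-covariance object and the interchange of two infinite sums, and it makes the uniformity in $z$, $\nu$, $\nu'$ and in the truncation level $J$ fully explicit, while the paper's route is more compact and reuses machinery (the RLCCV and its finiteness) already established in the text.
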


\begin{proof}
The proof follows the same steps as for Lemma~\ref{lem:ASord}.
\end{proof}

\begin{lemma}\label{lem:ASordjjc}
Under the assumptions of Definition~\ref{def:rlswc}, we have
$$\sum_{l=1}^{J} \sum _{n}\Psi_{l}(n+\uptau) \Psi_{j,j'}(n) S_{l}\left(z,\nu,\nu'\right)=\ORD(2^{\mbox{max}\{j,j'\}}).$$
\end{lemma}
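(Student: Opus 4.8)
The plan is to mirror the proof of Lemma~\ref{lem:ASordjj}, replacing the autospectrum $S_{l}(z,\nu)$ by the cross-spectrum $S_{l}(z,\nu,\nu')$ and sourcing the required scale-summability from the coherence-embedding assumption~\eqref{eq:unifWc} rather than from~\eqref{eq:unifW}. The first ingredient I would record is that the cross-spectrum is uniformly summable across scales: since $S_{l}(z,\nu,\nu')=\Wt_{l}(z,\nu)\Wt_{l}(z,\nu')\rho_{l}(z,\nu,\nu')$ with $\left|\rho_{l}(z,\nu,\nu')\right|\leq 1$, we have $\left|S_{l}(z,\nu,\nu')\right|\leq\left|\Wt_{l}(z,\nu)\Wt_{l}(z,\nu')\right|$, whence $\sum_{l}\left|S_{l}(z,\nu,\nu')\right|\leq\sum_{l}\left|\Wt_{l}(z,\nu)\Wt_{l}(z,\nu')\right|=\ORD(1)$ uniformly in $z,\nu,\nu'$ by~\eqref{eq:unifWc}. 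This is the cross-replicate analogue of the bound underlying Lemma~\ref{lem:ASordc}.

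Next I would invoke the cross-correlation wavelet identity used inside the proof of Lemma~\ref{lem:ASordjj}, namely $\sum_{n}\Psi_{j,j'}(n)\Psi_{l}(n+\uptau)=\sum_{n}\Psi_{j,l}(n)\Psi_{j',l}(n+\uptau)$, to rewrite the inner sum of the statement. Taking absolute values and applying the triangle inequality over $l$ then yields
\[
\left|\sum_{l=1}^{J}\sum_{n}\Psi_{l}(n+\uptau)\Psi_{j,j'}(n)S_{l}(z,\nu,\nu')\right|\leq\sum_{l=1}^{J}\left|S_{l}(z,\nu,\nu')\right|\left|\sum_{n}\Psi_{j,l}(n)\Psi_{j',l}(n+\uptau)\right|,
\]
and a Cauchy--Schwarz step on the inner sum gives $\left|\sum_{n}\Psi_{j,l}(n)\Psi_{j',l}(n+\uptau)\right|\leq(A_{j,l})^{1/2}(A_{j',l})^{1/2}$, exactly as in the bounding of term $I$ in the variance proof of Proposition~\ref{prop:EI}, using $A_{j,l}=\sum_{\uptau}\left|\Psi_{j,l}(\uptau)\right|^{2}$.

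To finish, I would use the elementary consequence of $\sum_{j}2^{-j}A_{j,l}=1$ together with $A_{j,l}\geq0$, which forces $A_{j,l}\leq 2^{j}$ for every $l$, and likewise $A_{j',l}\leq 2^{j'}$. Hence $(A_{j,l})^{1/2}(A_{j',l})^{1/2}\leq 2^{(j+j')/2}\leq 2^{\max\{j,j'\}}$ uniformly in $l$, so that
\[
\left|\sum_{l=1}^{J}\sum_{n}\Psi_{l}(n+\uptau)\Psi_{j,j'}(n)S_{l}(z,\nu,\nu')\right|\leq 2^{\max\{j,j'\}}\sum_{l=1}^{J}\left|S_{l}(z,\nu,\nu')\right|=\ORD(2^{\max\{j,j'\}}),
\]
by the uniform summability established in the first step.

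I do not anticipate a serious obstacle; the argument is a routine adaptation of Lemma~\ref{lem:ASordjj}. The only point needing care is that the relevant scale-summability must now be drawn from the product bound~\eqref{eq:unifWc} rather than from the square-summability~\eqref{eq:unifW}, with the coherence factor absorbed via $\left|\rho_{l}\right|\leq 1$ before summing. I would also remark in passing that the same chain of inequalities in fact delivers the sharper rate $\ORD(2^{(j+j')/2})$, of which the stated $\ORD(2^{\max\{j,j'\}})$ is a sufficient weakening.
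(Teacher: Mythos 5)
Your proof is correct, but it takes a genuinely different route from the paper's. The paper proves this lemma by following the same steps as Lemma~\ref{lem:ASordtau}: interchange the sums over $l$ and $n$, recognise $\sum_{l} S_{l}(z,\nu,\nu')\Psi_{l}(n+\uptau)$ as the local cross-covariance $c(z,\nu,\nu';n+\uptau)$, which is uniformly bounded (via~\eqref{eq:unifWc} and $|\rho_{l}|\le 1$), and then invoke the $\ell^{1}$ bound $\sum_{n}|\Psi_{j,j'}(n)|=\ORD(2^{j}+2^{j'})$ coming from the compact support and boundedness of the cross-correlation wavelet; this yields $\ORD(2^{\max\{j,j'\}})$ directly. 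You instead adapt Lemma~\ref{lem:ASordjj}: the wavelet identity, Cauchy--Schwarz over $n$ giving $(A_{j,l})^{1/2}(A_{j',l})^{1/2}$, the pointwise bound $A_{j,l}\le 2^{j}$ (from $A_{j,l}\ge 0$ and $\sum_{j}2^{-j}A_{j,l}=1$), and the absolute scale-summability $\sum_{l}|S_{l}(z,\nu,\nu')|=\ORD(1)$ drawn from~\eqref{eq:unifWc} with $|\rho_{l}|\le 1$. Your replacement of the final Cauchy--Schwarz over $l$ in Lemma~\ref{lem:ASordjj}'s proof by the cruder pointwise $A$-matrix bound is precisely the adjustment that makes the argument survive the passage to cross-spectra: that last Cauchy--Schwarz step exploits nonnegativity of the autospectrum (writing $S_{l}=S_{l}^{1/2}S_{l}^{1/2}$), which fails for the signed cross-spectrum, and this is plausibly why the paper switched to the covariance route and stated only the $2^{\max\{j,j'\}}$ rate. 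What each approach buys: the paper's argument is shorter, reusing the boundedness of the RLCCV already recorded in the text; yours requires no positivity of $S_{l}(z,\nu,\nu')$ and, as you observe, actually delivers the sharper bound $\ORD(2^{(j+j')/2})$, of which the stated $\ORD(2^{\max\{j,j'\}})$ is a weakening.
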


\begin{proof}
The proof follows the same steps as for Lemma~\ref{lem:ASordtau}.
\end{proof}

\subsection{Proof of Proposition \ref{prop:ESIc}}\label{app:proofs:itildec}
\subsubsection*{Proof of Proposition \ref{prop:ESIc} (Expectation)}
The proof follows similar steps to the proof of its non-coherence counterpart in Appendix~\ref{app:proofs:itilde}.

\subsubsection*{Proof of Proposition \ref{prop:ESIc} (Variance)}
\begin{proof}[\unskip\nopunct]
\begin{align} \label{eq:CSIIc}
\text{var}\left(\It_{j,k;T}^{(r,r');R}\right)&= \text{var}\left( (2M+1)^{-1} \sum_{s=-M}^{M} I_{j,k;T}^{(r+s,r'+s);R} \right) \nonumber\\
&= (2M+1)^{-2}\sum_{s=-M}^{M}\sum_{s'=-M}^{M} \cov\left(I_{j,k;T}^{(r+s,r'+s);R},I_{j,k;T}^{(r+s',r'+s');R}\right) \nonumber\\
&= (2M+1)^{-2}\sum_{s=-M}^{M}\sum_{\eta=-M-s}^{M-s} \cov\left(I_{j,k;T}^{(r+s,r'+s);R},I_{j,k;T}^{(r+s+\eta,r'+s+\eta);R}\right)
\end{align}
where we have let $\eta=s'-s$.

Let us now take
\begin{align}\label{eq:covtau}
\cov\left(I_{j,k;T}^{(r+s,r'+s);R},I_{j,k;T}^{(r+s+\eta,r'+s+\eta);R}\right) &= \E\left[d_{j,k}^{ r+s} d_{j,k}^{ r'+s}d_{j,k}^{ r+s+\eta} d_{j,k}^{ r'+s+\eta}\right] \nonumber \\
& \qquad - \E\left[d_{j,k}^{ r+s} d_{j,k}^{ r'+s}\right]\E\left[d_{j,k}^{ r+s+\eta} d_{j,k}^{ r'+s+\eta}\right] \nonumber \\
&= \E\left[d_{j,k}^{ r+s} d_{j,k}^{ r+s+\eta}\right]\E\left[d_{j,k}^{ r'+s} d_{j,k}^{ r'+s+\eta}\right] \nonumber \\
& \qquad + \E\left[d_{j,k}^{ r+s} d_{j,k}^{ r'+s+\eta}\right]\E\left[d_{j,k}^{ r'+s} d_{j,k}^{ r+s+\eta}\right],
\end{align}
where we have use Isserlis' theorem \citep{isserlis:1918} in the last equality.

Using the expectation of the wavelet cross-periodogram in Proposition \ref{prop:EIc} and the result in Lemma~\ref{lem:lipr}, we can rewrite the terms in equation~\eqref{eq:covtau} as follows.

\begin{align*}
\E\left[d_{j,k}^{ r+s} d_{j,k}^{ r+s+\eta}\right] &= \E\left[I_{j,k;T}^{(r+s,r+s+\eta);R}\right] \\
&= \sum_{l=1}^{J}A_{j,l} S_{l}\left(\frac{k}{T},\frac{r+s}{R},\frac{r+s+\eta}{R}\right) + \ORD(2^j T^{-1}) + \ORD(R^{-1}), \\
&= \sum_{l=1}^{J}\left\{A_{j,l} S_{l}\left(\frac{k}{T},\frac{r}{R},\frac{r+\eta}{R}\right) + \ORD(|s| B^\prime_l R^{-1})\right\} + \ORD(2^j T^{-1}) + \ORD(R^{-1}), \\
&= \sum_{l=1}^{J} A_{j,l} S_{l}\left(\frac{k}{T},\frac{r}{R},\frac{r+\eta}{R}\right)+ \ORD(2^j T^{-1}) + \ORD(|s| R^{-1}),
\end{align*}
where we have also used the Lipschitz constants' property $\sum_l 2^l B^\prime_l <\infty$  that yields $\sum_l B^\prime_l A_{j,l}<\infty$.

Using the same steps, we also have
\begin{align*}
\E\left[d_{j,k}^{ r'+s} d_{j,k}^{ r'+s+\eta}\right] &= \E\left[I_{j,k;T}^{(r'+s,r'+s+\eta);R}\right] \\
&= \sum_{l=1}^{J} A_{j,l} S_{l}\left(\frac{k}{T},\frac{r'}{R},\frac{r'+\eta}{R}\right)+ \ORD(2^j T^{-1}) + \ORD(|s| R^{-1}).
\end{align*}

Similarly, from the expectation of the wavelet cross-periodogram in Proposition \ref{prop:EIc} and the cross-spectrum Lipschitz continuity in replicate time, it can be shown that
\begin{align*}
\E\left[d_{j,k}^{ r+s} d_{j,k}^{ r'+s+\uptau}\right] &= \E\left[I_{j,k;T}^{(r+s,r'+s+\uptau);R}\right] \\
&= \sum_{l=1}^{J} A_{j,l} S_{l}\left(\frac{k}{T},\frac{r+s}{R},\frac{r'+s+\eta}{R}\right)+ \ORD(2^j T^{-1}) + \ORD(R^{-1}),\\
&= \sum_{l=1}^{J} A_{j,l} S_{l}\left(\frac{k}{T},\frac{r+s}{R},\frac{r+s+\eta}{R}\right)+ \ORD(2^j T^{-1}) + \ORD(|r-r'| R^{-1}),
\end{align*}
and using the same arguments as above, one obtains
\begin{align*}
\E\left[d_{j,k}^{ r+s} d_{j,k}^{ r'+s+\eta}\right] &= \sum_{l=1}^{J} A_{j,l} S_{l}\left(\frac{k}{T},\frac{r}{R},\frac{r+\eta}{R}\right)+ \ORD(2^j T^{-1}) + \ORD(|s| R^{-1}),
\end{align*}
under the condition that the replicates $r, r'$ are such that $|r-r'|<\infty$.

Similarly,
\begin{align*}
\E\left[d_{j,k}^{ r'+s} d_{j,k}^{ r+s+\eta}\right] &= \sum_{l=1}^{J} A_{j,l} S_{l}\left(\frac{k}{T},\frac{r'}{R},\frac{r'+\eta}{R}\right)+ \ORD(2^j T^{-1}) + \ORD(|s| R^{-1}).
\end{align*}

Thus we can write the covariance in equation~\eqref{eq:covtau} as follows
\begin{align*}
\cov &\left(I_{j,k;T}^{(r+s,r'+s);R},I_{j,k;T}^{(r+s+\eta,r'+s+\eta);R}\right) \\
&= \E\left[d_{j,k}^{ r+s} d_{j,k}^{ r+s+\eta}\right]\E\left[d_{j,k}^{ r'+s} d_{j,k}^{ r'+s+\eta}\right] + \E\left[d_{j,k}^{ r+s} d_{j,k}^{ r'+s+\eta}\right]\E\left[d_{j,k}^{ r'+s} d_{j,k}^{ r+s+\eta}\right] \\
&= 2\left(\sum_{l=1}^{J}A_{j,l}S_{l}\left(\frac{k}{T},\frac{r}{R},\frac{r+\eta}{R}\right) + \ORD(2^j T^{-1}) + \ORD(|s|R^{-1}) \right) \\
& \qquad \times \left( \sum_{l=1}^{J}A_{j,l}S_{l}\left(\frac{k}{T},\frac{r'}{R},\frac{r'+\eta}{R}\right) + \ORD(2^j T^{-1}) + \ORD(|s|R^{-1}) \right).
\end{align*}

Using this expression in equation~\eqref{eq:CSIIc}, we obtain the variance of the replicate-smoothed wavelet cross-periodogram to be
\begin{align*}
\text{var}\left(\It_{j,k;T}^{(r,r');R}\right)&= \ORD(M^{-2})\sum_{s=-M}^{M}\sum_{\eta=-M-s}^{M-s} \left[ \sum_{l=1}^{J}A_{j,l}S_{l}\left(\frac{k}{T},\frac{r}{R},\frac{r+\eta}{R}\right)
\sum_{l=1}^{J}A_{j,l}S_{l}\left(\frac{k}{T},\frac{r'}{R},\frac{r'+\eta}{R}\right) \right. \\
& \qquad \qquad \qquad \quad \left. + \left(
\sum_{l=1}^{J}A_{j,l}S_{l}\left(\frac{k}{T},\frac{r}{R},\frac{r+\eta}{R}\right) \ORD(2^j T^{-1}) \right)\right. \\
& \qquad \qquad \qquad \quad \left. + \left(
\sum_{l=1}^{J}A_{j,l}S_{l}\left(\frac{k}{T},\frac{r}{R},\frac{r+\eta}{R}\right) \ORD(|s|R^{-1}) \right) \right. \\
& \qquad \qquad \qquad \quad \left. + \left(
\sum_{l=1}^{J}A_{j,l}S_{l}\left(\frac{k}{T},\frac{r'}{R},\frac{r'+\eta}{R}\right) \ORD(2^j T^{-1}) \right)\right. \\
& \qquad \qquad \qquad \quad \left. + \left(
\sum_{l=1}^{J}A_{j,l}S_{l}\left(\frac{k}{T},\frac{r'}{R},\frac{r'+\eta}{R}\right) \ORD(|s|R^{-1}) \right) \right. \\
& \qquad \qquad \qquad \quad \left. + \ORD(|s|^2R^{-2})+ \ORD(2^j |s| (TR)^{-1})+\ORD(2^{2j} T^{-2})
\vphantom{\sum_{l=1}^{J}S_{l}\left(\frac{k}{T},\frac{r}{R}\right)} \right].
\end{align*}

Under the assumption $\sup_{z,\nu,\uptau}\sum_{\eta\in\Z} \left| c(z,\nu,\nu+\frac{\eta}{R};\uptau) \right|=\ORD(1)$, we obtain for time $k$, replicate $r$ and replicate-lag $\eta$ that $\sum_{\eta\in\Z} \left| \sum_{l}A_{j,l}S_{l}\left(\frac{k}{T},\frac{r}{R},\frac{r+\eta}{R}\right) \right| =\ORD(2^j)$, since using the definition of the $A$ matrix and of the local cross-covariance we have
\begin{align*}
\sum_{\eta\in\Z} \left| \sum_{l}A_{j,l}S_{l}\left(\frac{k}{T},\frac{r}{R},\frac{r+\eta}{R}\right) \right| &=\sum_{\eta\in\Z} \left|  \sum_{\uptau\in\Z} \left( \sum_l S_{l}\left(\frac{k}{T},\frac{r}{R},\frac{r+\eta}{R}\right) \Psi_{l}(\uptau) \right) \Psi_{j}(\uptau)  \right|,\\
&= \sum_{\eta} \left|  \sum_{\uptau}  c\left(\frac{k}{T},\frac{r}{R},\frac{r+\eta}{R}; \uptau\right)\Psi_{j}(\uptau)  \right|, \\
&\leq \sum_{\uptau} \left( \left|\Psi_{j}(\uptau)\right|  \sum_{\eta} \left|c\left(\frac{k}{T},\frac{r}{R},\frac{r+\eta}{R}; \uptau\right) \right| \right), \\
&=\ORD(1) \sum_{\uptau}\left|\Psi_{j}(\uptau)\right|= \ORD(2^j),
\end{align*}
where we used the triangle inequality and the autocorrelation wavelet property $\sum_{\uptau}\left|\Psi_{j}(\uptau)\right|= \ORD(2^j)$.

Using Lemma~\ref{lem:ASordc} and the property above, we readily obtain that the first term in the variance equation is of order $\ORD(2^{2j}M^{-1})$; the second and fourth terms, are both of order $\ORD(2^{2j}(MT)^{-1})$; and the third and fifth terms are both of order $\ORD(2^j R^{-1})$.

Now considering the final order terms,
$$
\ORD(M^{-2})\sum_{s=-M}^{M}\sum_{\eta=-M-s}^{M-s} \left[\ORD(|s|^2R^{-2})+ \ORD(2^j |s| (TR)^{-1})+\ORD(2^{2j} T^{-2})\right],$$ we have
$$
\ORD(M^2R^{-2})+\ORD(2^j M(TR)^{-1})+ \ORD(2^{2j}T^{-2}).
$$

Putting these results together in the variance equation, we obtain
\begin{align*}
\text{var}\left(\It_{j,k;T}^{(r,r');R}\right)&=
\ORD(2^{2j}M^{-1})+\ORD(2^{2j}(MT)^{-1})+\ORD(2^j R^{-1})\\
& \qquad + \ORD(M^2R^{-2})+\ORD(2^j M(TR)^{-1})+ \ORD(2^{2j}T^{-2}),\\
&= \ORD(2^{2j}M^{-1})+ \ORD(2^j R^{-1})+ \ORD(M^2R^{-2}).
\end{align*}
\end{proof}

\subsection{Proof of Proposition \ref{prop:rhoc}}\label{app:proofs:rhoc}
\begin{proof}[\unskip\nopunct]
It follows directly from the continuous mapping theorem \citep{billingsley1999convergence} and the consistency results for the corrected replicate-smoothed periodogram in~\eqref{eq:corrspecc}, \i.e. $\hat{S}_{j}(z,\nu, \nu') \stackrel{P}{\to} {S}_{j}(z,\nu, \nu')$ as $M, T, R \to \infty$ and $M/R \to 0$.
\end{proof}

\end{appendices}

\end{spacing}

\newpage
\begin{spacing}{1}

\begin{center}
\vspace*{2em}
  {\LARGE Multiscale modelling of replicated nonstationary time series:\\[.2cm] Supplementary Material}\\[.6cm]
  {\large Jonathan Embleton\footnote[1]{Corresponding author: je687@york.ac.uk}$^{1}$, Marina I. Knight$^1$, and Hernando Ombao$^2$\\[.4cm]}
  {\large ${}^1$Department of Mathematics, University of York, UK\\
  ${}^2$King Abdullah University of Science and Technology (KAUST), Saudi Arabia\\[.7cm] \large}
{\large \today\\[1.5cm]}
\end{center}

\setcounter{page}{1}
\setcounter{equation}{57}
\setcounter{figure}{19}
\setcounter{table}{2}
\setcounter{lemma}{0}

\begin{appendix}
\section{Experimental data description and overview of implemented methodology}\label{supp:exper}
Each trial (replicate) consists of $T = 2048$ time points, corresponding to approximately 2 seconds of data. The design of the experiment splits each trial into four time blocks of 512 milliseconds each, as follows. For the first block the macaque fixated on a screen; a picture (one of four) was then presented on the screen for the next time block; this was followed by an empty screen for the next interval; for the last 512 milliseconds the macaque was presented with a picture of four doors, one of which associated with the picture visual from the second time block. The macaque's task was to select the correct door using a joystick. Correct and incorrect choices were signified via a visual cue and a juice reward was given each time a correct selection was made. The macaque had to learn the associations through repeated trials. The data has been grouped into sets of `correct' and `incorrect' responses, in order to investigate how the contributions of the Hc and NAc to the learning process differ between groups \citep{Gorrombao:monkeydepend}. The plots of the incorrect responses appear below, as well as their corresponding proposed spectral estimates.

For both the correct and incorrect sets of the hippocampus (Hc) trial data, we compute the wavelet periodograms using non-decimated discrete wavelets built by means of Daubechies Least Asymmetric wavelet family with 10 vanishing moments. Similarly, for both sets of the nucleus accumbens (NAc) trial data we again choose Daubechies Least Asymmetric wavelet family, but however we now opt for a coarser choice of wavelet with 6 vanishing moments to reflect the behaviour of the signal. In accordance with our simulation study findings, to obtain an asymptotically unbiased and consistent estimator for the replicate evolutionary wavelet spectrum (REWS), we smooth the wavelet periodograms using a local averaging window over 21 replicates ($M = 10$ neighbouring replicates) and then correct for bias. For completeness, we also run the analysis to include a time-smoothing step before locally averaging across replicates, as this was shown to lead to better performance (see the simulation study in Section \ref{sec:sims}). For comparison, we additionally report the LSW estimator embedding averaging over all replicates. Note that averaging over all replicates here refers to the averaging over the first 241 and 256 correct and incorrect response trials, respectively. To explore the consistency of the results, the analysis has also been repeated using wavelets with different vanishing moments and varying smoothing windows across the replicates, which yielded extremely similar results to those reported here.

\begin{figure}[h!]
\begin{center}
\hspace*{-2em}
\resizebox{!}{0.6\textwidth}{\includegraphics{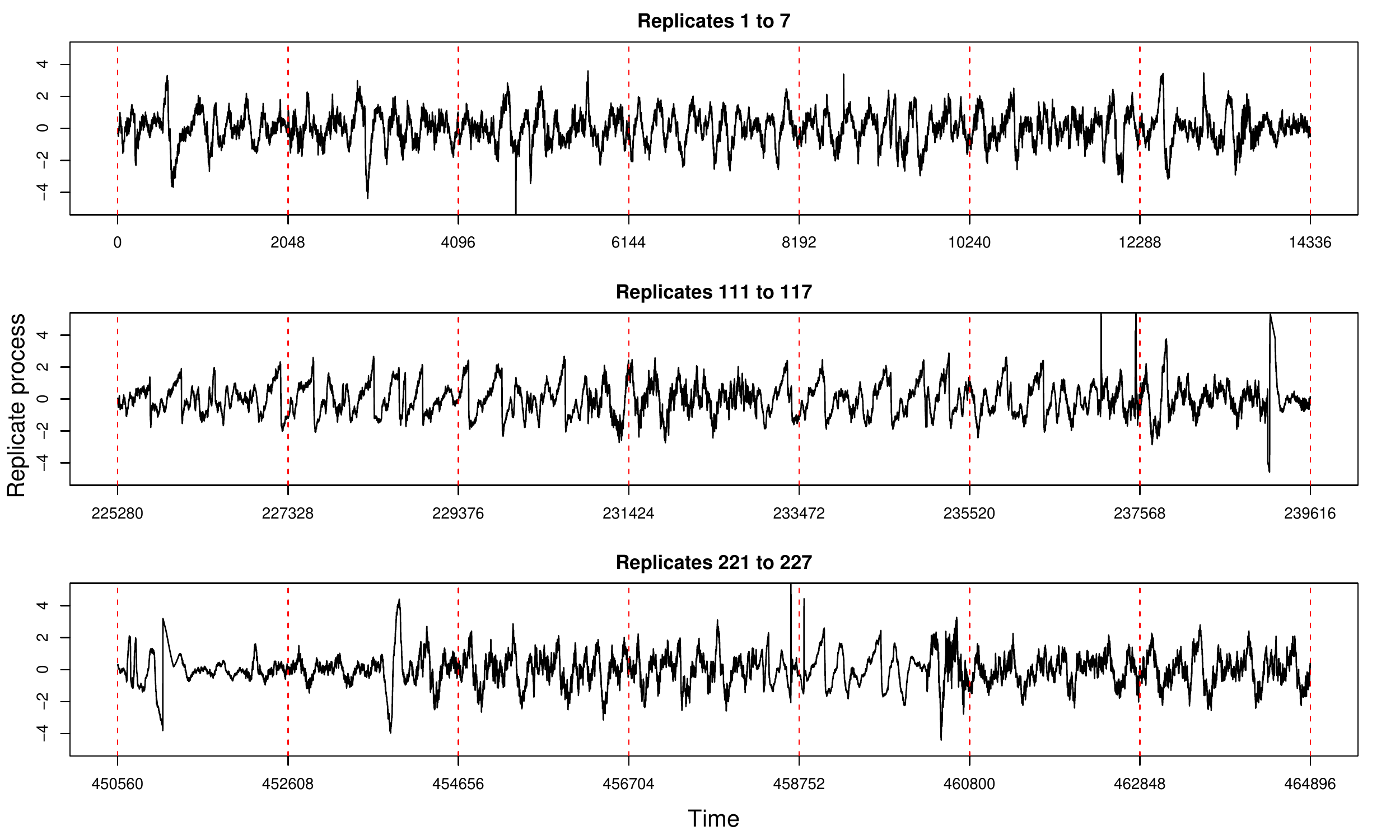}}
\caption{\small Concatenated series of the hippocampus (Hc) data in the incorrect response trials.}
\label{fig:rb7_It_Hc}
\end{center}
\end{figure}

\begin{figure}[h!]
\begin{center}
\hspace*{-2em}
\resizebox{!}{0.6\textwidth}{\includegraphics{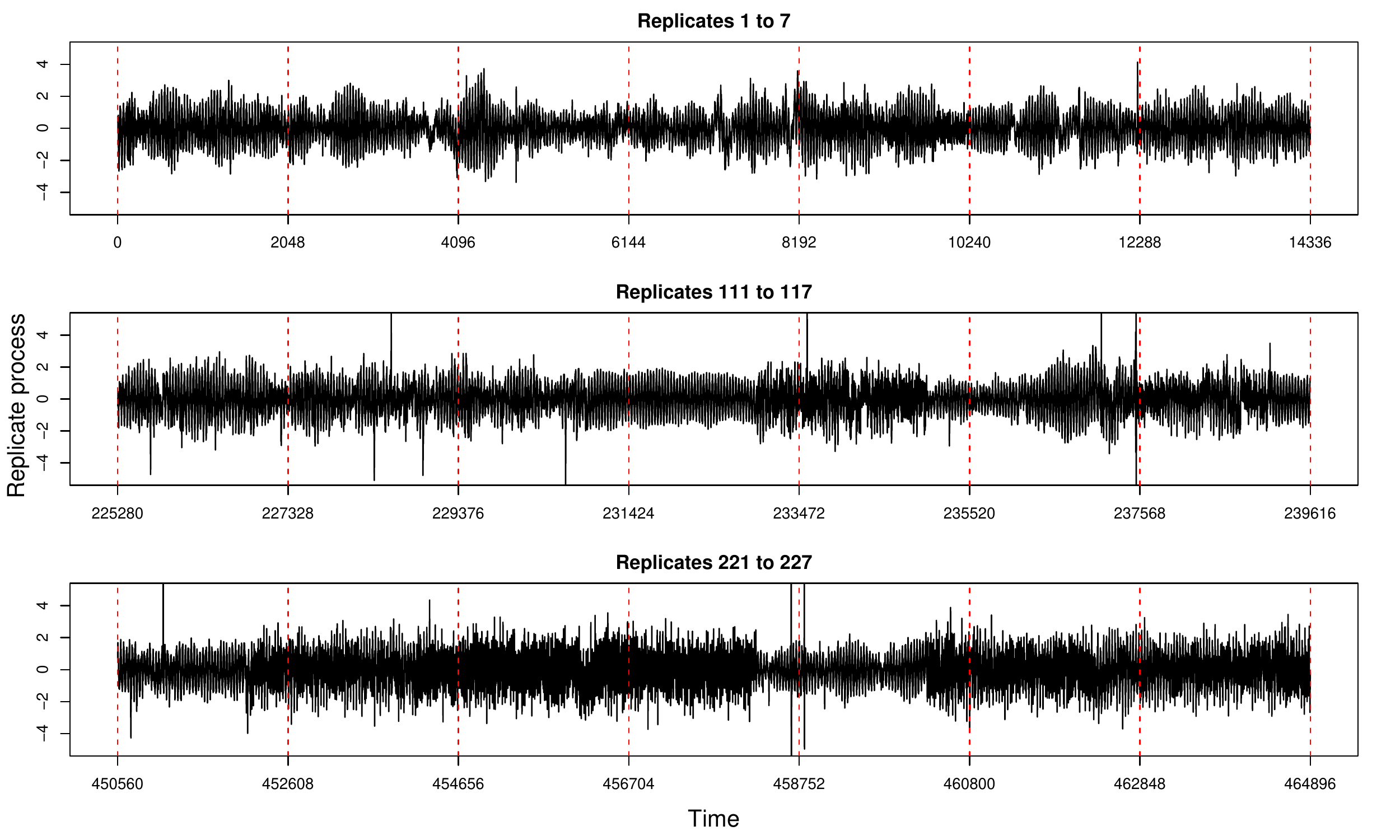}}
\caption{\small Concatenated series of the nucleus accumbens (NAc) data in the incorrect response trials.}
\label{fig:rb7_It_NAc}
\end{center}
\end{figure}

\begin{figure}[h!]
\centering
\resizebox{!}{0.6\textwidth}{\includegraphics{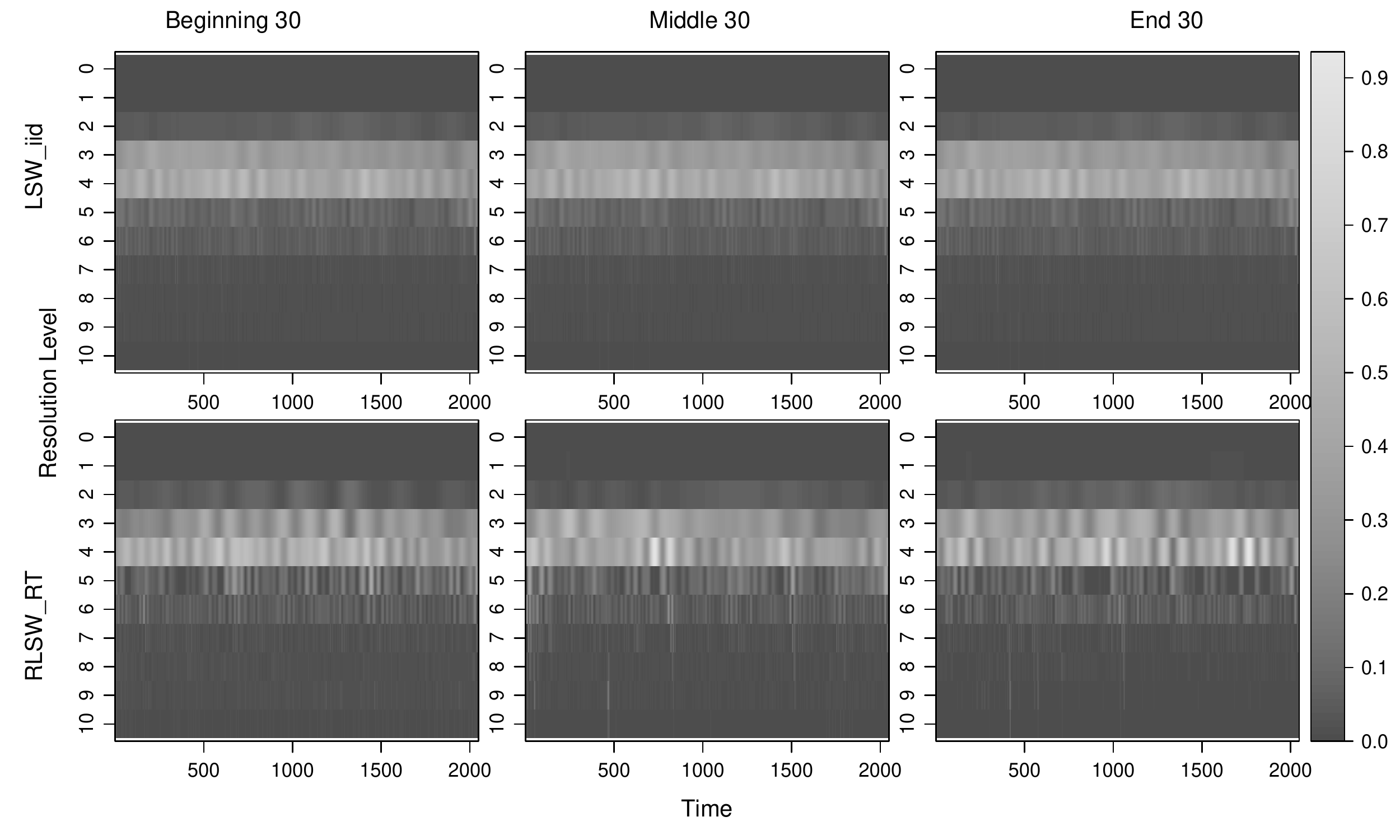}}
\caption{\small Time-scale hippocampus (Hc) plots for the incorrect trials. Spectral estimates are shown for the average over 30 replicates in the beginning, middle and end of the experiment. \textit{Top}: estimates from the LSW method averaged over all replicates; \textit{Bottom}: RLSW method with smoothing over time and replicates.}
\label{fig:coef_ItHc_M10D10}
\end{figure}

\begin{figure}[h!]
\centering
\resizebox{!}{0.6\textwidth}{\includegraphics{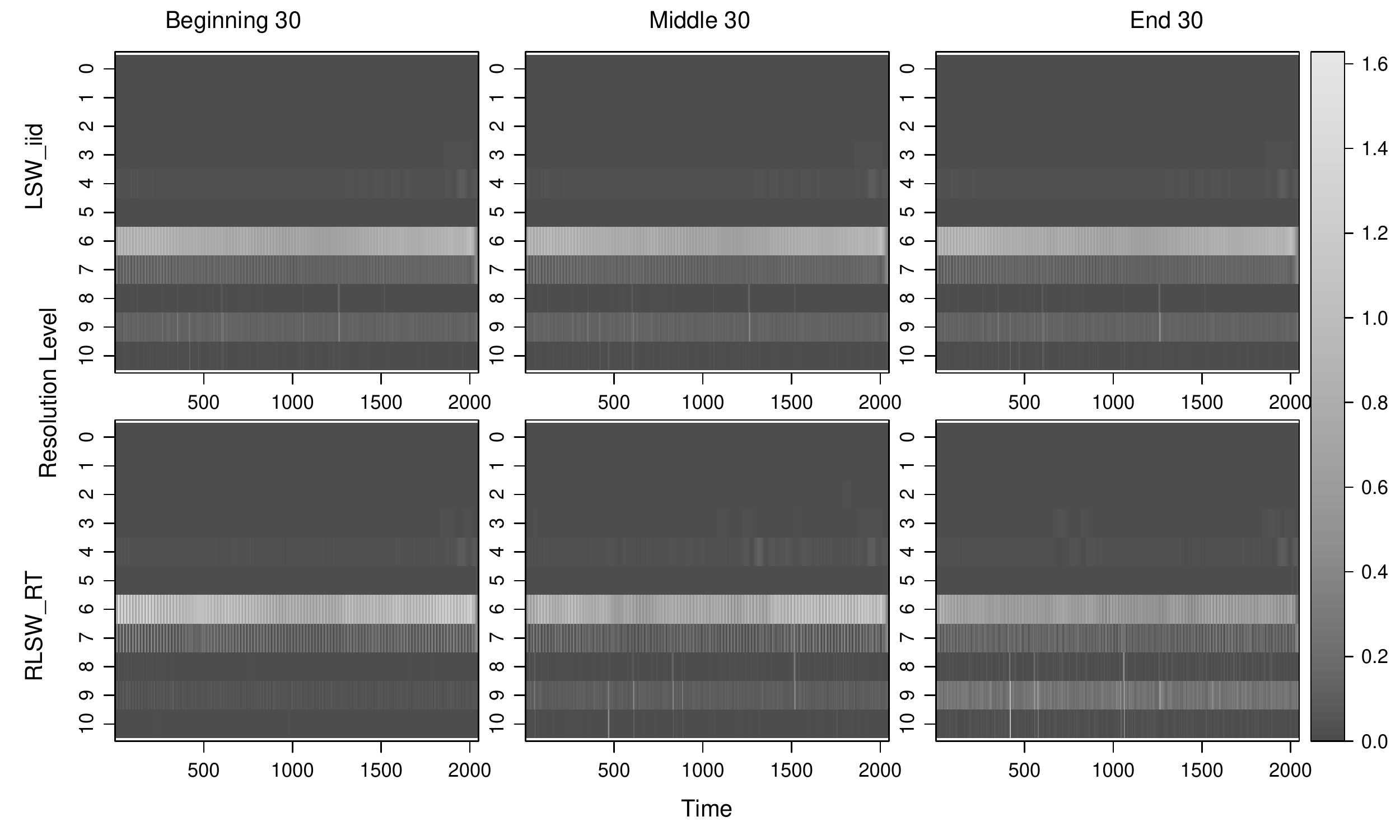}}
\caption{\small Time-scale nucleus accumbens (NAc) plots for the incorrect trials. Spectral estimates are shown for the average over 30 replicates in the beginning, middle and end of the experiment. \textit{Top}: estimates from the LSW method averaged over all replicates; \textit{Bottom}: RLSW method with smoothing over time and replicates.}
\label{fig:coef_ItNAc_M10D6}
\end{figure}

\clearpage
\section{Further simulation evidence for Section~\ref{sec:sims}}\label{supp:extra}

\subsection{Supporting evidence for the simulation study of Section~\ref{sec:sims}}\label{supp:extradetail}
For $R = 256$ replicates and $T=256=2^8$, we generated a RLSW process with spectral structure at scales 5 and 6, defined as
\begin{equation*}
S_{j}(z,\nu) =
  \begin{cases}
    4(1-\nu)\cos^{2}\left(\pi z \right), 		& \text{for } j=J(T)-3, z\in(65/256,1), \nu \in(0,1) \\
    4\cos^{2}\left(2\pi z + 5\nu \right), 		& \text{for } j=J(T)-2, z\in(0,128/256), \nu \in(0,1) \\
    0, 		& \text{otherwise}.
  \end{cases}
  \label{eq:specsim3}
\end{equation*}

The histogram in Figure~\ref{fig:hist_SIM3proc} illustrates that the expected (theoretical) asymptotic behaviour of the proposed estimator holds in practice.
\begin{figure}[h!]
\begin{center}
\resizebox{!}{0.4\textwidth}{\includegraphics{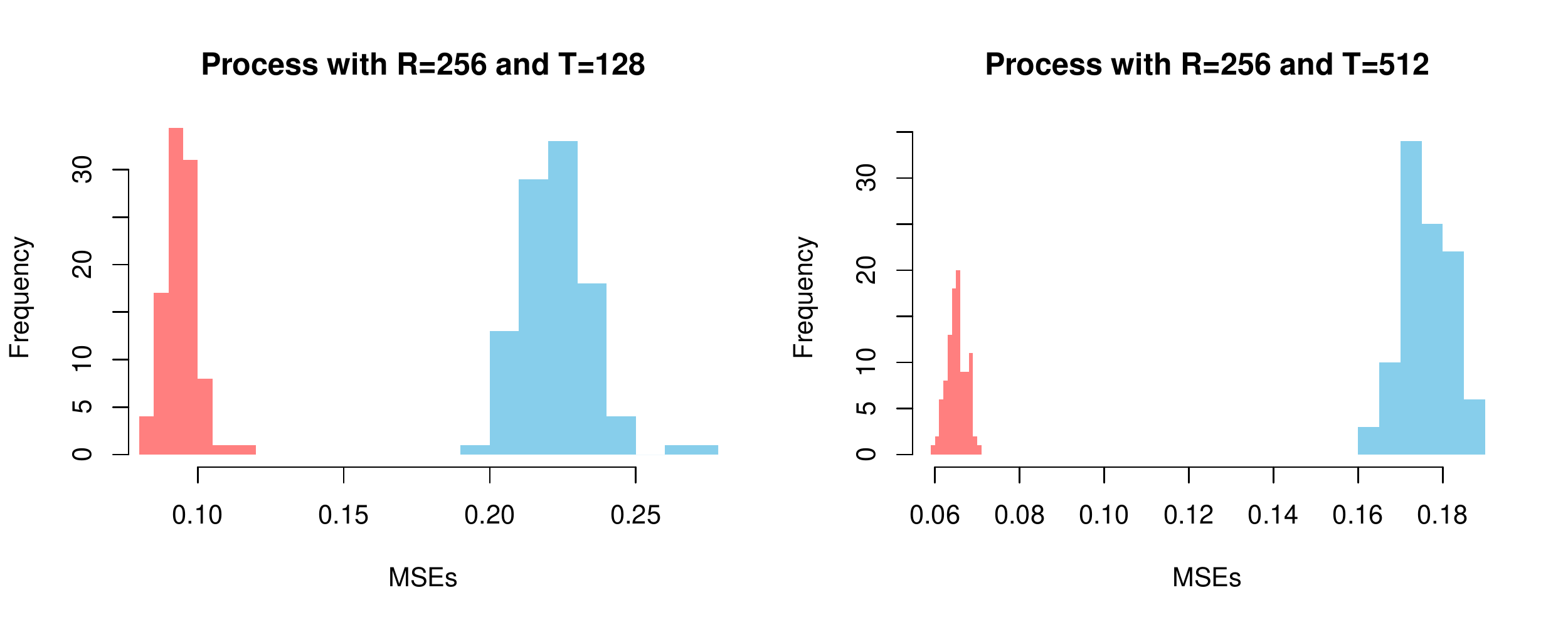}}
\end{center}
\vspace{-2em}
\caption{\small Histograms of the MSEs on the estimates from the RLSW model over 100 runs for the simulation in Section~\ref{sec:sims}. \textit{Red:} smoothing over replicates with $M=12$; \textit{Blue:} smoothing over replicates with $M=4$.}
\label{fig:hist_SIM3proc}
\end{figure}

\subsection{Further simulation studies}\label{supp:furthersims}

\noindent{\bf Simulation 1}.
We simulate a RLSW process consisting of $R = 128$ replicates, each of length $T=256=2^8$ and whose REWS, illustrated in Figure~\ref{fig:wavplots_usual}, evolves slowly over both rescaled time and replicates, as follows
\begin{equation}
S_{j}(z, \nu) =
  \begin{cases}
   4\nu \sin^{2}\left(2\pi z(1 + 2\nu) \right), 		& \text{for } j=J(T)-4, z\in(0,1), \nu \in(0,1) \\
    0, 		& \text{otherwise},
  \end{cases}
  \label{eq:specsim1}
\end{equation}
recalling that $z = k/T$ and $\nu = r/R$ for $k=0,\ldots,T-1$ and $r=0,\ldots,R-1$. The spectral characteristics thus appear at scale $j = 8-4=4$.

The periodicity and magnitude of the sine wave evolve slowly over the replicates in such a way that the spectral characteristics of neighbouring replicates do not look too dissimilar whilst there is a noticeable difference between replicates further apart. One concatenated realisation of the meta-process with the specified spectral structure in~\eqref{eq:specsim1}, viewed as a series of length $RT$, can be seen in Figure~\ref{fig:reptsBY7_usualproc}. 

\begin{figure}[h!]
\begin{center}
\resizebox{!}{0.4\textwidth}{\includegraphics{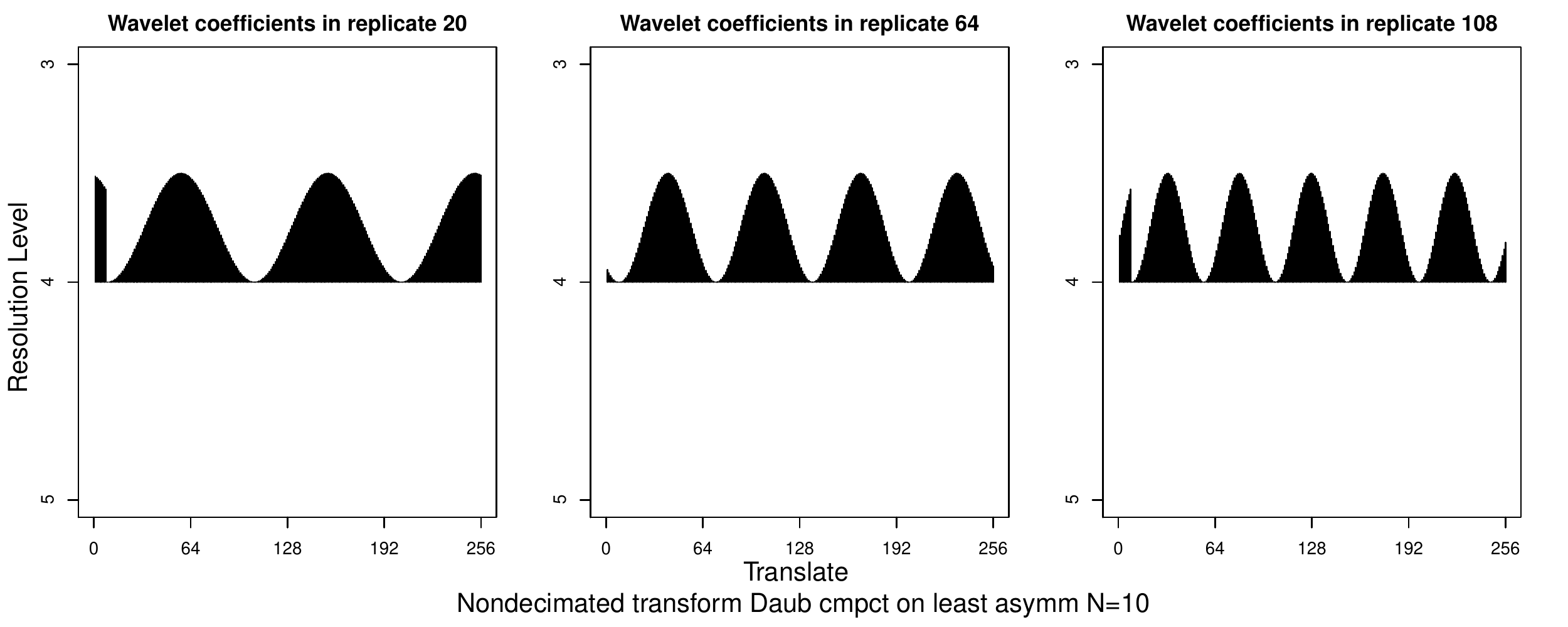}}
\end{center}
\vspace{-1em}
\caption{\small True wavelet spectra for replicates 20, 64 and 108 of Simulation 1.}
\label{fig:wavplots_usual}
\end{figure}

\begin{figure}[h!]
\begin{center}
\resizebox{!}{0.6\textwidth}{\includegraphics{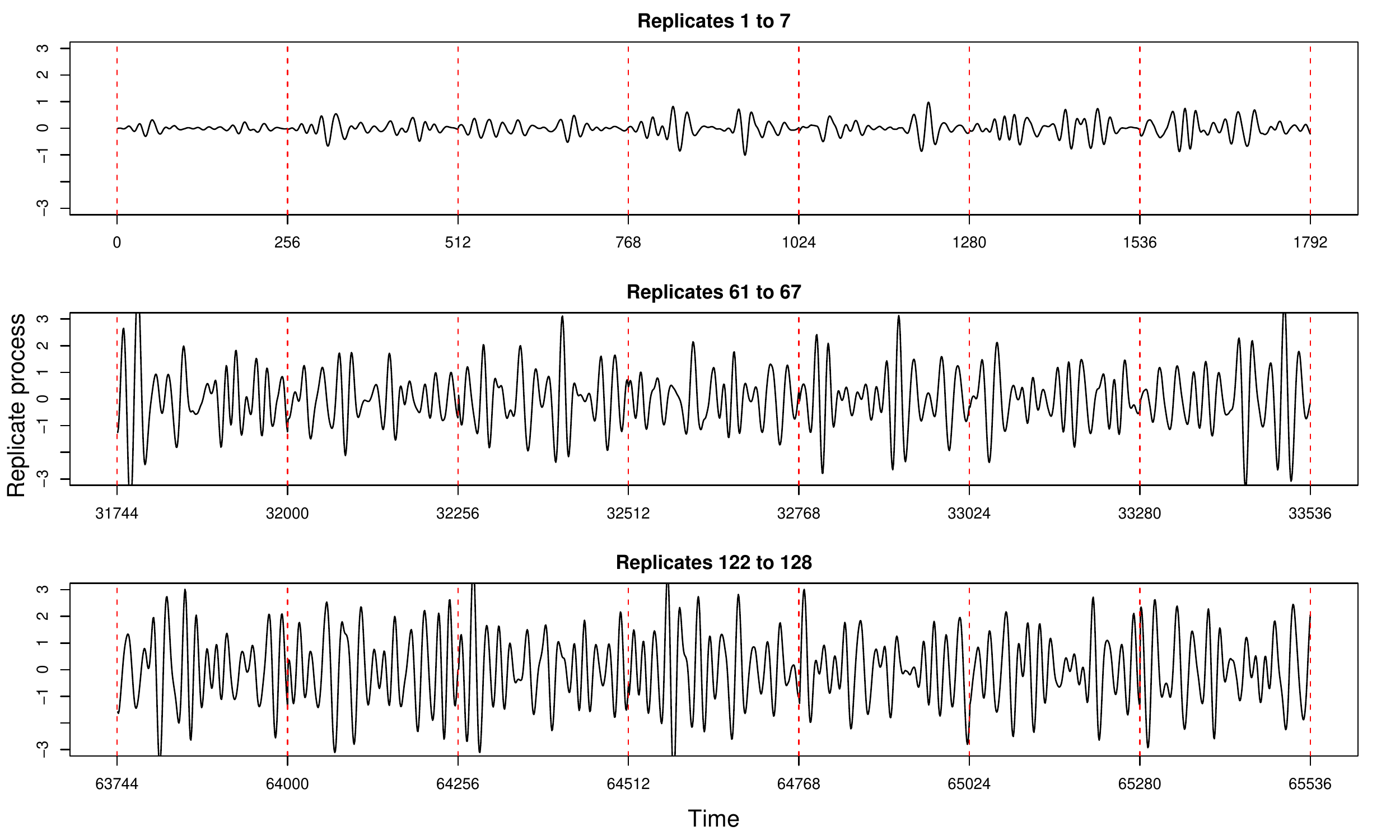}}
\end{center}
\vspace{-2em}
\caption{\small Realisation of a RLSW process with spectra defined in equation~\eqref{eq:specsim1}.}
\label{fig:reptsBY7_usualproc}
\end{figure}

We display in Figure \ref{fig:coef_usual_128_256} the true spectra and the average spectral estimates for replicates 20, 64 and 108. The non-decimated wavelet transform was computed using discrete wavelets built by means of Daubechies Least Asymmetric family with 10 vanishing moments and the local averaging for our RLSW$_1$ method was carried out using $M=4$, corresponding to a window of $9$ replicates (numerical MSE results in Table~\ref{tab:msesim1} highlight that we chose to visually present some of our least performant results). Figure \ref{fig:coef_usual_128_256} displays the danger of neglecting the possibility of an existing evolutionary behaviour over replicates (see e.g. level $4$ in the top row plots of the true spectrum), conducive to either under or over-estimation (see the middle row plots). The bottom row plots show that the RLSW$_1$ estimates do reflect the evolution over replicates. To further support this, Figure \ref{fig:lev4_usual_128_256} takes a closer look at the evolutionary behaviour of the spectral quantities over the time and replicates in level 4.

\begin{figure}[h!]
\begin{center}
\resizebox{!}{0.6\textwidth}{\includegraphics{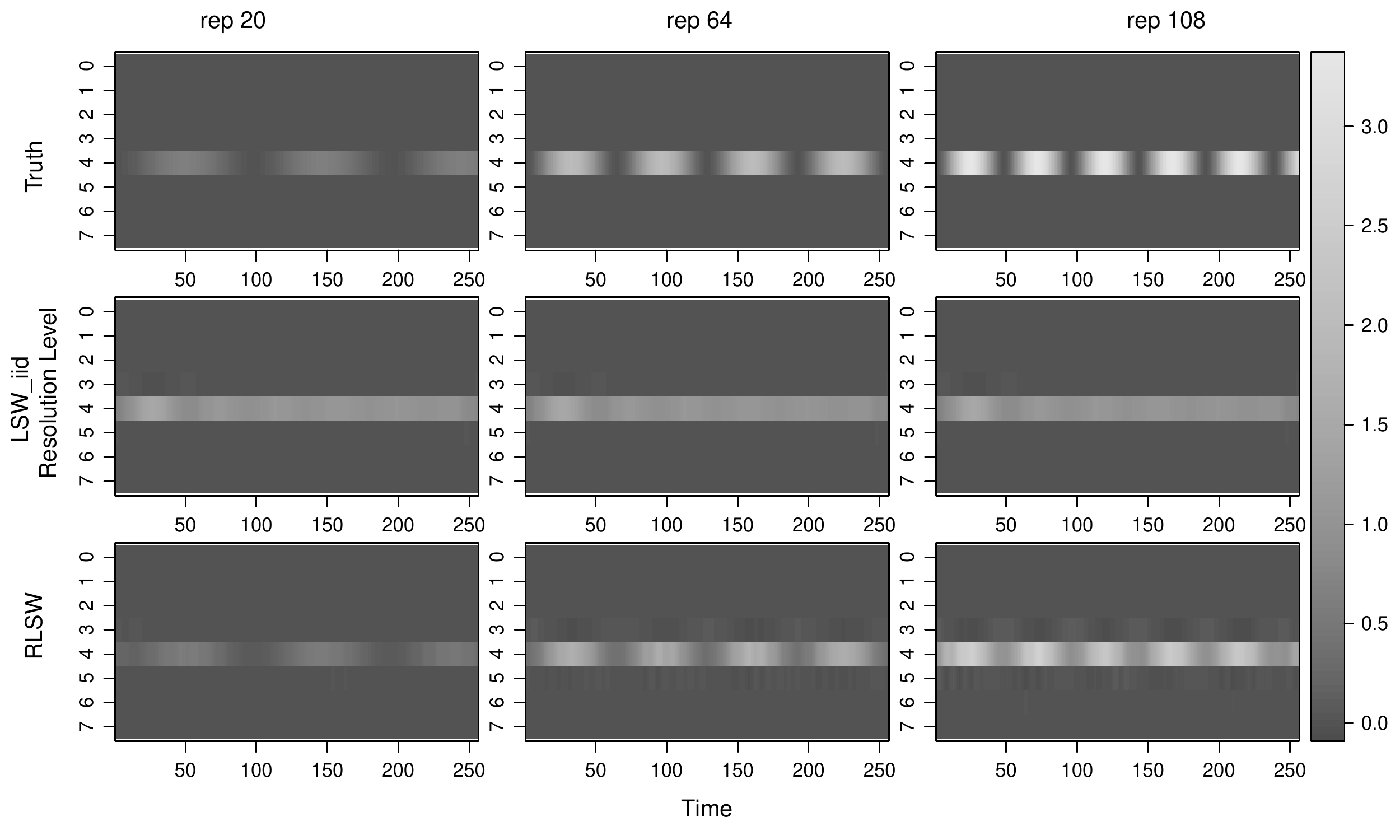}}
\end{center}
\vspace{-2em}
\caption{\small Simulation 1 time-scale plots for replicates 20, 64 and 108, respectively the first, second and third columns. Estimates are averaged  over 100 realisations. \textit{Top}: true spectra; \textit{Middle}: estimates from the LSW method averaged over all replicates; \textit{Bottom}: estimates using RLSW($_1$).}
\label{fig:coef_usual_128_256}
\end{figure}

\begin{figure}[h!]
\begin{center}
\resizebox{!}{0.6\textwidth}{\includegraphics{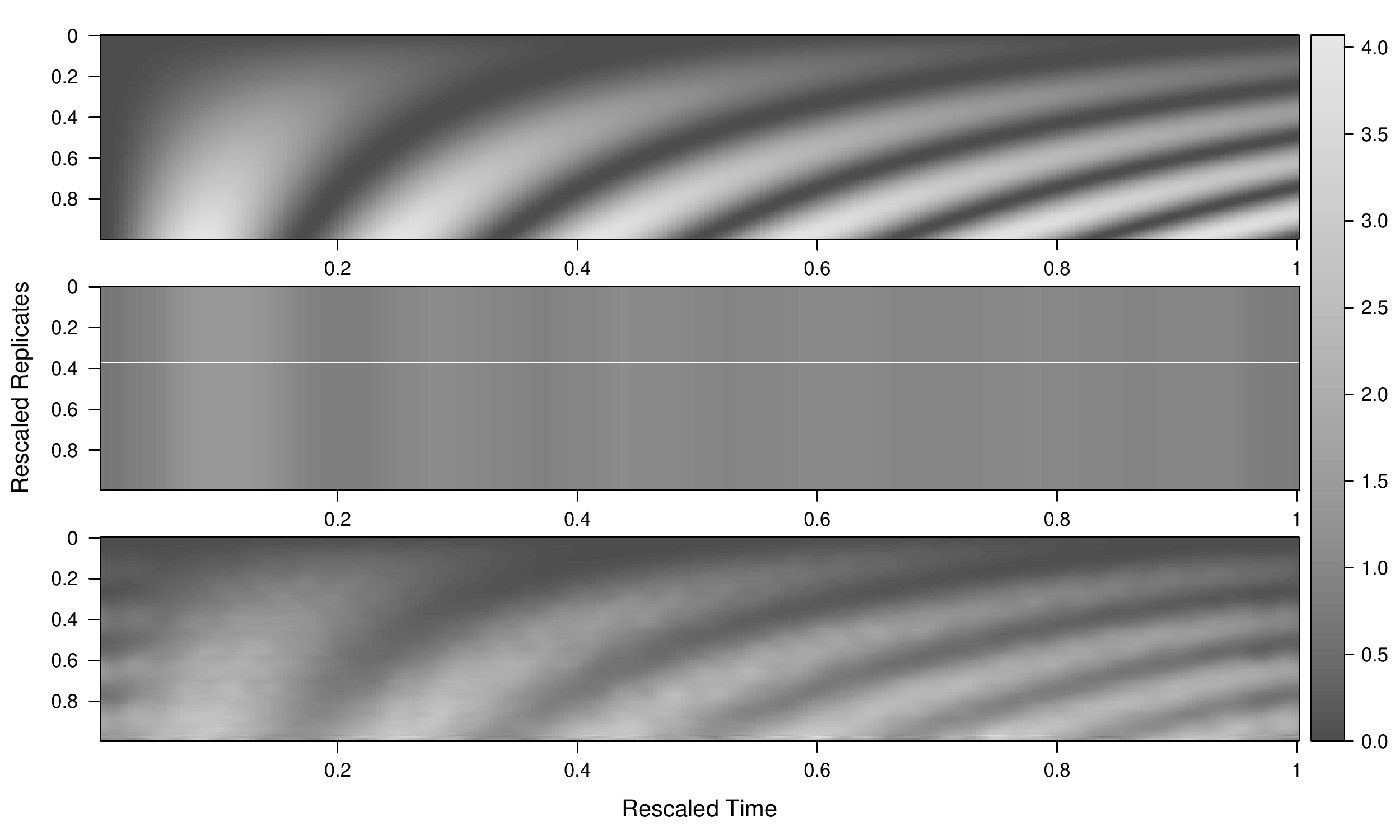}}
\end{center}
\vspace{-2em}
\caption{\small Simulation 1 time-replicate spectral plots in level 4. Estimates are averaged over 100 realisations. \textit{Top}: true spectra; \textit{Middle}:  estimated spectra from the LSW method averaged over all replicates; \textit{Bottom}: estimates using RLSW($_1$).}
\label{fig:lev4_usual_128_256}
\end{figure}

The MSE and squared bias results in Table \ref{tab:msesim1}, highlight that for this example, which adheres well to the RLSW assumptions, the MSEs for the LSW model are higher than those computed for the RLSW model. 
The benefit of taking a local smoothing approach over replicates results in spectral estimates with lower bias and MSE, although it is worth pointing out that taking a local smoothing approach over both time and replicates, while yielding lower MSEs, does increase the bias. Just as for the simulation in the main body of the paper, notice the RLSW methodology performance improves with the replicate local averaging window length increase ($2M + 1$).

Figure~\ref{fig:hist_usualproc} provides a visualisation on how the RLSW model performed over the 100 simulations via histograms of the simulation-specific MSE. The histograms highlight not only how the increase in $M$ improves performance but also how the increase in $R$ and $T$ reduces the MSEs, thus demonstrating the expected asymptotic behaviour of our smoothed estimator $\hat{S}_{j}\left(z,\nu\right)$.

\begin{figure}[h!]
\begin{center}
\resizebox{!}{0.4\textwidth}{\includegraphics{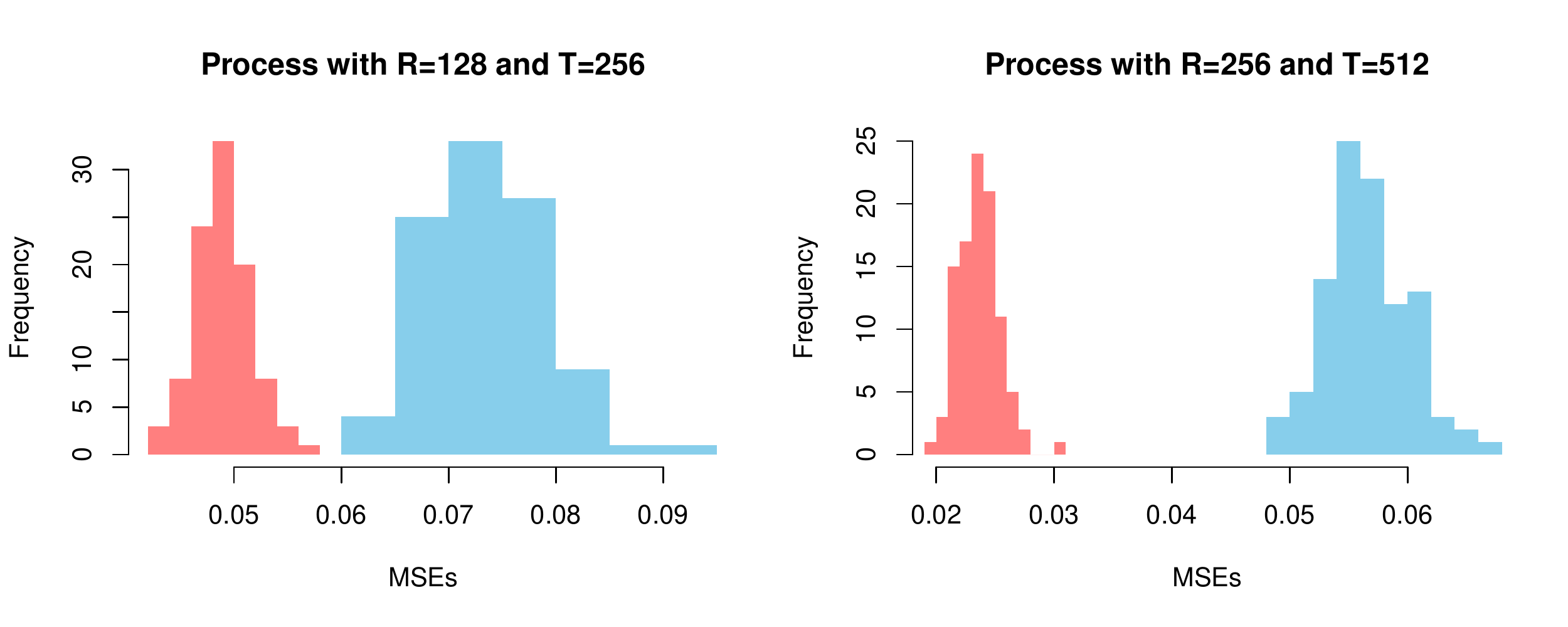}}
\end{center}
\vspace{-2em}
\caption{\small Histograms of the MSEs on the estimates from the RLSW model over 100 runs for Simulation 1. Smoothing over replicates with $M=12$ (\textit{red});  with $M=4$ (\textit{blue}). }
\label{fig:hist_usualproc}
\end{figure}

\begin{table}[h!]
\scriptsize
\hspace{-6em}
\begin{tabular}{@{\extracolsep{3pt}}ccccccccc@{}}
\multicolumn{1}{c}{} & \multicolumn{7}{c}{Mean squared errors $\left( \times 100 \right)$} & \multicolumn{1}{c}{}
\\
\cline{1-9}\noalign{\smallskip}
\multicolumn{3}{c}{} & \multicolumn{2}{c}{LSW} & \multicolumn{2}{c}{RLSW\tsbs{1}} & \multicolumn{2}{c}{RLSW\tsbs{2}}\\
\cline{4-5}\cline{6-7}\cline{8-9}\noalign{\smallskip}
R & T & M & mse & bias\tsps{2} & mse & bias\tsps{2} & mse & bias\tsps{2} \\
\firsthline\noalign{\smallskip}
64 & 128 & 4 & 13.21 & 12.68 & 12.92 & 7.91 & 11.59 & 7.97
\smallskip \\
 & & 7 & 12.01 & 11.47 & 10.74 & 7.81 & 9.99 & 7.87
\smallskip \\
 & & 10 & 11.01 & 10.47 & 9.88 & 7.81 & 9.36 & 7.87
\smallskip \\
 & & 12 & 10.48 & 9.94 & 9.51 & 7.79 & 9.08 & 7.84
\smallskip \\
 & 256 & 4 & 11.10 & 10.60 & 7.64 & 2.79 & 6.32 & 2.89
\smallskip \\
 & & 7 & 10.06 & 9.56 & 6.54 & 3.71 & 5.82 & 3.82
\smallskip \\
 & & 10 & 9.21 & 8.71 & 6.83 & 4.84 & 6.35 & 4.94
\smallskip \\
 & & 12 & 8.77 & 8.27 & 7.09 & 5.43 & 6.70 & 5.53
\smallskip \\
128 & 128 & 4 & 13.67 & 13.40 & 13.11 & 7.92 & 11.70 & 7.99
\smallskip \\
 & & 7 & 13.02 & 12.76 & 10.81 & 7.76 & 10.01 & 7.83
\smallskip \\
 & & 10 & 12.40 & 12.14 & 9.80 & 7.67 & 9.26 & 7.75
\smallskip \\
 & & 12 & 12.00 & 11.74 & 9.40 & 7.64 & 8.97 & 7.72
\smallskip \\
 & 256 & 4 & 11.49 & 11.25 & 7.34 & 2.41 & 6.01 & 2.52
\smallskip \\
 &  & 7 & 10.95 & 10.70 & 5.47 & 2.58 & 4.74 & 2.70
\smallskip \\
 & & 10 & 10.41 & 10.17 & 4.94 & 2.91 & 4.47 & 3.03
\smallskip \\
 & & 12 & 10.07 & 9.82 & 4.90 & 3.20 & 4.52 & 3.33
\smallskip \\
\lasthline
\end{tabular}
\quad
\vspace{1em}
\begin{tabular}{@{\extracolsep{3pt}}ccccccccc@{}}
\multicolumn{1}{c}{} & \multicolumn{7}{c}{Mean squared errors $\left( \times 100 \right)$} & \multicolumn{1}{c}{}
\\
\cline{1-9}\noalign{\smallskip}
\multicolumn{3}{c}{} & \multicolumn{2}{c}{LSW} & \multicolumn{2}{c}{RLSW\tsbs{1}} & \multicolumn{2}{c}{RLSW\tsbs{2}}\\
\cline{4-5}\cline{6-7}\cline{8-9}\noalign{\smallskip}
R & T & M & mse & bias\tsps{2} & mse & bias\tsps{2} & mse & bias\tsps{2} \\
\firsthline\noalign{\smallskip}
256 & 512 & 4 & 10.16 & 10.03 & 5.66 & 0.39 & 4.13 & 0.40
\smallskip \\
 & & 7 & 9.91 & 9.78 & 3.54 & 0.40 & 2.64 & 0.43
\smallskip \\
 & & 10 & 9.67 & 9.54 & 2.67 & 0.46 & 2.05 & 0.49
\smallskip \\
 & & 12 & 9.51 & 9.38 & 2.36 & 0.52 & 1.86 & 0.55
\smallskip \\
 & 1024 & 4 & 9.11 & 8.98 & 5.28 & 0.12 & 3.77 & 0.11
\smallskip \\
 & & 7 & 8.88 &  8.75 & 3.17 & 0.12 & 2.28 & 0.12
\smallskip \\
 & & 10 & 8.66 & 8.53 & 2.30 & 0.15 & 1.68 & 0.15
\smallskip \\
 & & 12 & 8.52 & 8.39 & 1.99 & 0.19 & 1.47 & 0.20
\smallskip \\
512 & 512 & 4 & 10.25 & 10.18 & 5.72 & 0.37 & 4.17 & 0.39
\smallskip \\
 & & 7 & 10.12 & 10.05 & 3.55 & 0.36 & 2.64 & 0.38
\smallskip \\
 & & 10 & 9.99 & 9.93 & 2.62 & 0.36 & 1.99 & 0.39
\smallskip \\
 & & 12 & 9.91 & 9.84 & 2.26 & 0.37 & 1.73 & 0.39
\smallskip \\
 & 1024 & 4 & 9.18 & 9.12 & 5.29 & 0.11 & 3.77 & 0.10
\smallskip \\
 &  & 7 & 9.06 & 9.00 & 3.19 & 0.09 & 2.28 & 0.09
\smallskip \\
 & & 10 & 8.95 & 8.89 & 2.29 & 0.09 & 1.65 & 0.09
\smallskip \\
 & & 12 & 8.87 & 8.81 & 1.93 & 0.09 & 1.40 & 0.09
\smallskip \\
\lasthline
\end{tabular}
\caption{\small MSE and squared bias results, averaged over all time-scale points and replicates for Simulation 1 above. `LSW' denotes the classical approach of averaging over the replicates. `RLSW$_1$' denotes our proposed approach using localised replicate smoothing. `RLSW$_2$' denotes our proposed approach using localised time and replicate smoothing. Our proposed methods `RLSW$_1$'and `RLSW$_2$' use a replicate smoothing window of length $(2M+1)$, while the time smoothing window for `LSW' and `RLSW$_2$' is automatically chosen.}
\label{tab:msesim1}
\end{table}

\noindent{\bf Simulation 2}.
We consider a RLSW process consisting of $R = 128$ replicates, each of length $T= 256 = 2^{8}$, and driven by the following REWS
\begin{equation}
S_{j}(z,\nu) =
  \begin{cases}
    \sin^{2}\left(2\pi z + 10\nu \right), 		& \text{for } j=J(T)-1, z\in(0,1), \nu \in(0,1) \\
    0, 		& \text{otherwise}.
  \end{cases}
  \label{eq:specsim2}
\end{equation}
Our spectra in this example are characterised by a squared sine wave in the finest level that experiences a shift from replicate to replicate. This behaviour is displayed in Figure~\ref{fig:wavplots_shift} and then a concatenated realisation of the meta-process appears in Figure~\ref{fig:reptsBY7_shiftproc}. Note that visually the meta-process behaviour does not offer any indication of transitioning through the replicates, despite this actually happening.

\begin{figure}[htbp]
\begin{center}
\resizebox{!}{0.4\textwidth}{\includegraphics{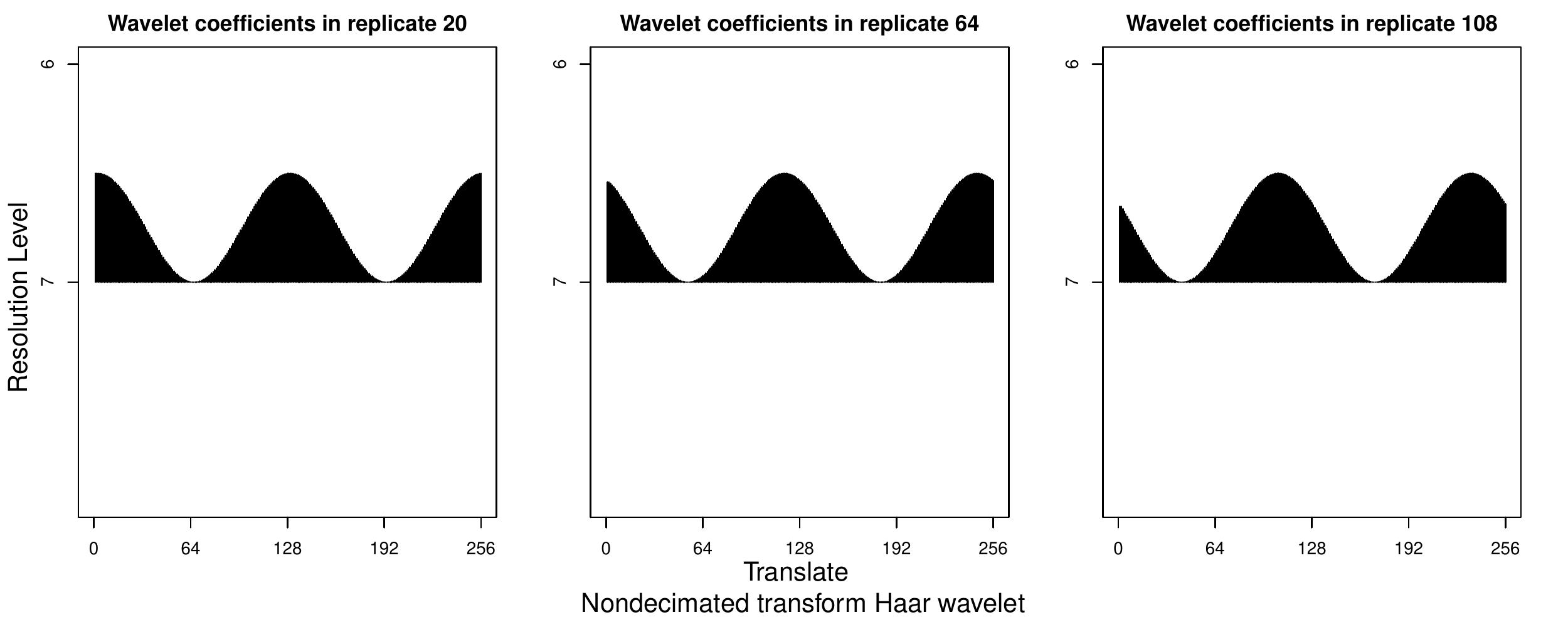}}
\end{center}
\vspace{-1em}
\caption{\small True wavelet spectra for replicates 20, 64 and 108 of Simulation 2.}
\label{fig:wavplots_shift}
\end{figure}

\begin{figure}[htbp]
\begin{center}
\resizebox{!}{0.6\textwidth}{\includegraphics{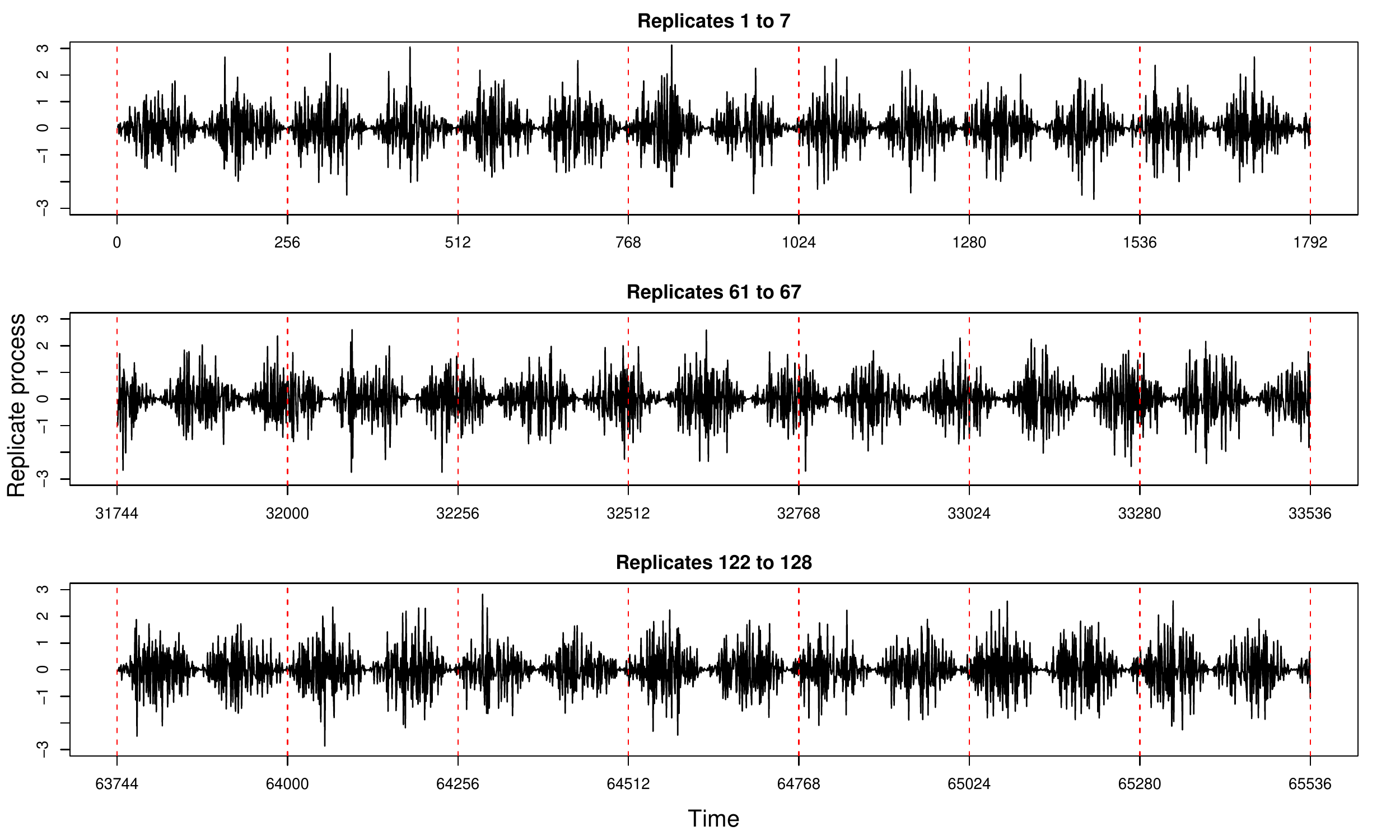}}
\end{center}
\vspace{-2em}
\caption{\small Realisation of a RLSW process with spectra defined in equation~\eqref{eq:specsim2}.}
\label{fig:reptsBY7_shiftproc}
\end{figure}

We obtain estimates for both the LSW model averaged over all replicates and for the RLSW model which adopts the local averaging procedure (over replicate, and over time and replicate dimensions). On visually examining the concatenated process in Figure \ref{fig:reptsBY7_shiftproc}, one may not question the existence of evolutionary behaviour across the replicates. Inspecting our RLSW estimates of the spectral characteristics of this process tells us otherwise. Figure \ref{fig:coef_shiftproc_R128_T256} highlights that the RLSW($_1$) method (bottom row) manages to capture the evolution of the spectra in the finest level. On the other hand, the LSW method (middle row) fails to capture this behaviour, and this is further demonstrated in Figure~\ref{fig:lev7_shiftproc_R128_T256} which shows the spectral estimates across rescaled time and replicate in the finest level. \\

Mean squared errors and squared bias results for Simulation 2 are given in Table~\ref{tab:msesim2}. When comparing models, we notice that for low $R$, performing local averaging over the replicates only (RLSW$_1$) appears to yield low bias estimates that nevertheless have poorer MSE results than the blanket LSW involving local time smoothing and then averaging over all replicates (despite its high bias). The small squared bias for the RLSW estimates implies that much of the MSE can be attributed to the variance. A possible explanation for this is the spectral leakage across neighbouring scales, a known artefact in the locally stationary spectral estimation context: the narrower choices of window to smooth over replicates in the RLSW model were not sufficient enough to remove the effects of the `leaked' characteristics. These simulations provide a stronger highlight of the impact of the ratio $M/R$ when choosing the replicate window width $(2M+1)$, with the guideline of a window width of 15\% of $R$ again appearing to yield competitive results. As $R$ and $T$ increase, and our choice of $M$ becomes larger with larger $R$ (increasing the smoothing window), the MSEs improve and RLSW$_2$ estimation  performs dramatically better than LSW. This is again in line with our asymptotic results. Also, we note here that while the spectra in Simulation 2 does not behave in quite as slowly evolving manner across replicates as in Definition \ref{def:rlsw}, nevertheless the RLSW methodology still performs well at estimating the spectra and capturing the evolutionary behaviour across replicates.

As with Simulation 1, we also provide histograms for the MSEs over 100 simulations involving $R=128$ replicates of length $T=256$. Figure~\ref{fig:hist_shiftproc} highlights how increasing the smoothing window improves the performance of our RLSW method and leads to better estimates. In this setting, Table~\ref{tab:msesim2} shows how the LSW model performs better in comparison to RLSW($_1$) when $M=4$ but as $M$ increased, the RLSW($_1$) model soon outperformed LSW.  This can also be visualised nicely through the histograms in Figure~\ref{fig:hist_shiftproc_B}, where we can see the switch in performance of the models and the MSE improvement with the increase of $M$.

\begin{figure}[htbp]
\begin{center}
\resizebox{!}{0.6\textwidth}{\includegraphics{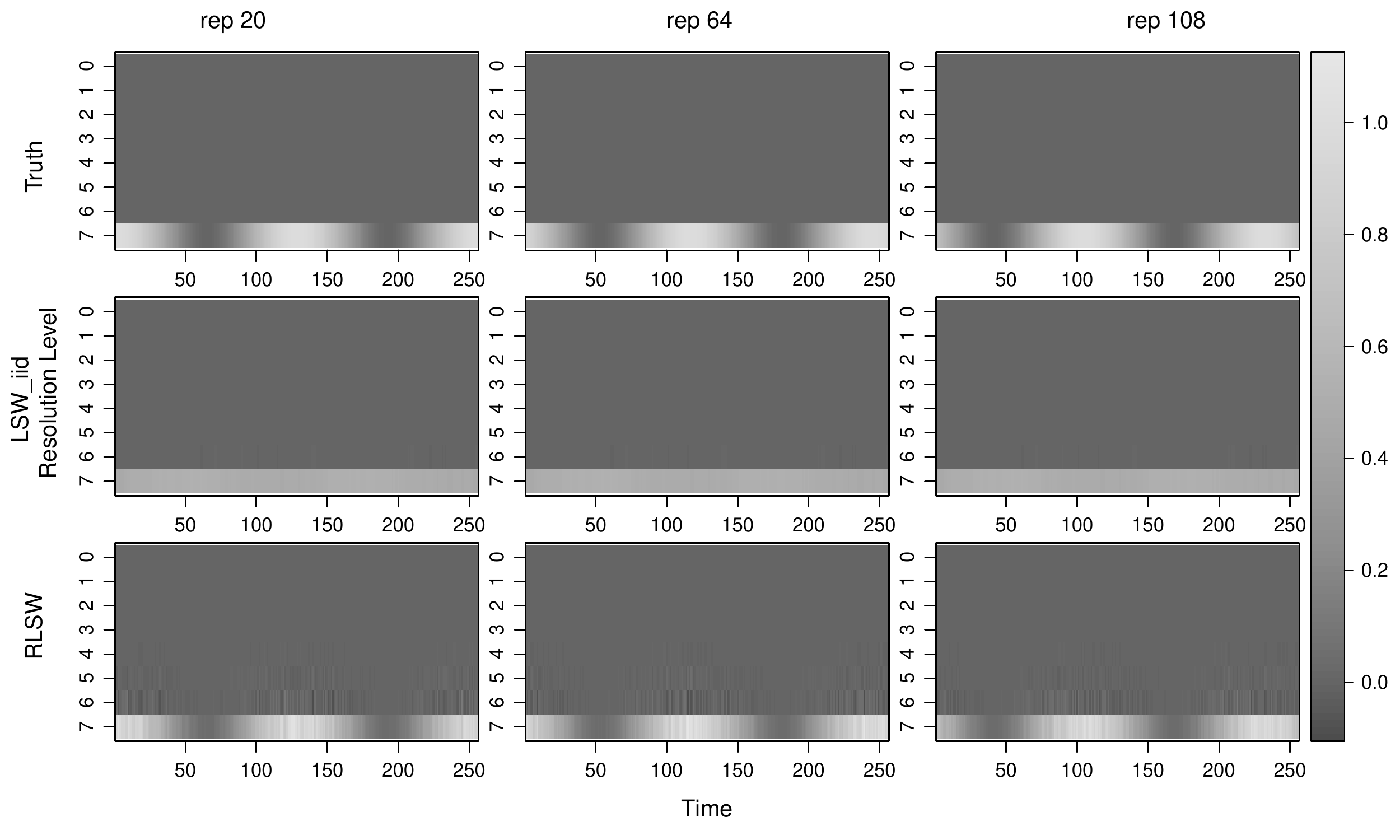}}
\end{center}
\vspace{-2em}
\caption{\small Simulation 2 time-scale plots for replicates 20, 60 and 108, respectively the first, second and third columns. Estimates are averaged  over 100 realisations. \textit{Top}: true spectra; \textit{Middle}: estimates from the LSW method averaged over all replicates; \textit{Bottom}: estimates using RLSW($_1$).}
\label{fig:coef_shiftproc_R128_T256}
\end{figure}

\begin{figure}[htbp]
\begin{center}
\resizebox{!}{0.6\textwidth}{\includegraphics{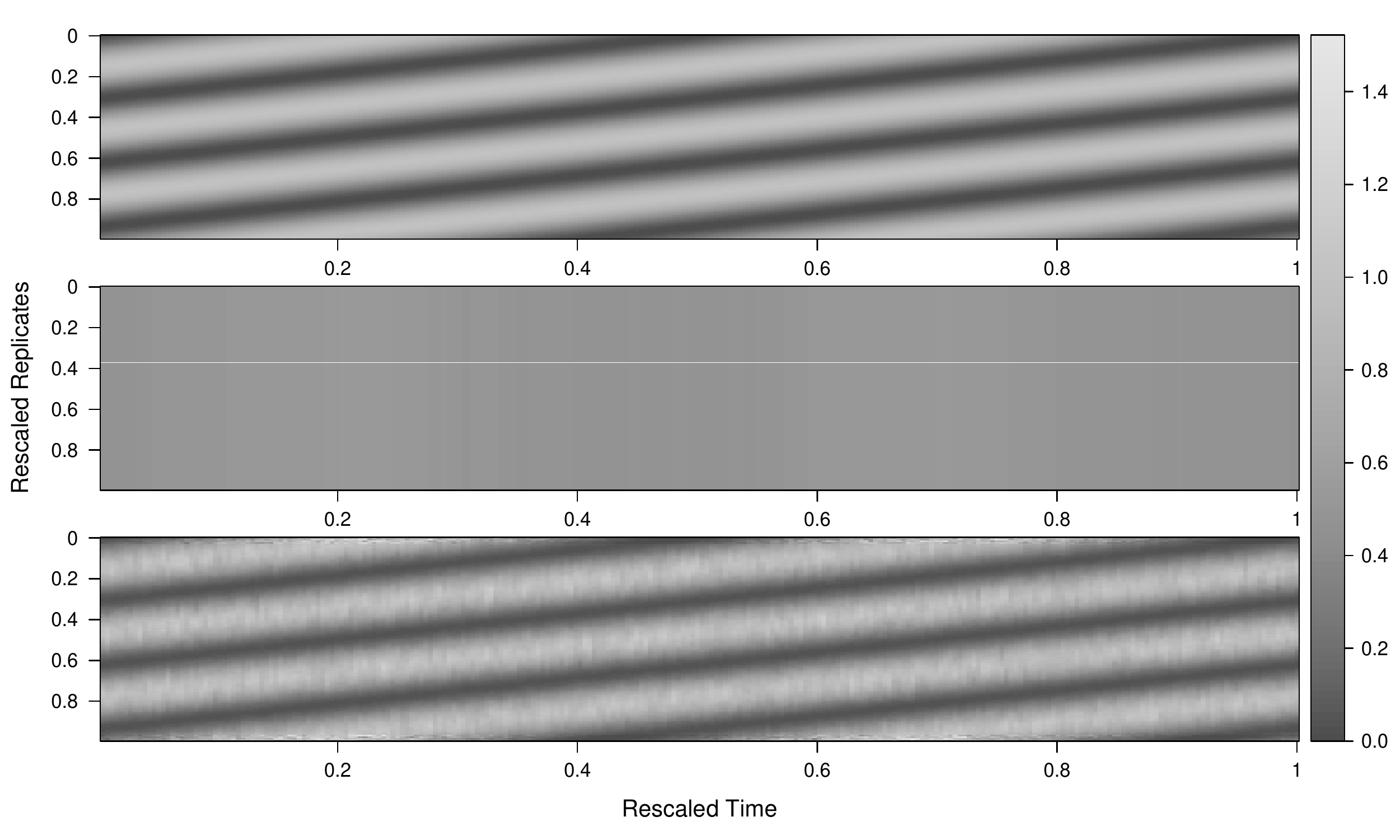}}
\end{center}
\vspace{-2em}
\caption{\small Simulation 2 time-replicate spectral plots in level 7. Estimates are averaged over 100 realisations. \textit{Top}: true spectra; \textit{Middle}:  estimated spectra from the LSW method averaged over all replicates; \textit{Bottom}: estimates using RLSW($_1$). }
\label{fig:lev7_shiftproc_R128_T256}
\end{figure}

\begin{table}[htbp!]
\scriptsize
\hspace{-6em}
\begin{tabular}{@{\extracolsep{3pt}}ccccccccc@{}}
\multicolumn{1}{c}{} & \multicolumn{7}{c}{Mean squared errors $\left( \times 1000 \right)$} & \multicolumn{1}{c}{}
\\
\cline{1-9}\noalign{\smallskip}
\multicolumn{3}{c}{} & \multicolumn{2}{c}{LSW} & \multicolumn{2}{c}{RLSW\tsbs{1}} & \multicolumn{2}{c}{RLSW\tsbs{2}}\\
\cline{4-5}\cline{6-7}\cline{8-9}\noalign{\smallskip}
R & T & M & mse & bias\tsps{2} & mse & bias\tsps{2} & mse & bias\tsps{2} \\
\firsthline\noalign{\smallskip}
64 & 128 & 4 & 19.37 & 18.09 & 33.91 & 1.97 & 10.69 & 1.85
\smallskip \\
 & & 7 & 19.49 & 18.21 & 28.04 & 8.85 & 14.10 & 8.79
\smallskip \\
 & & 10 & 19.39 & 18.11 & 33.34 & 19.57 & 23.24 & 19.44
\smallskip \\
 & & 12 & 19.20 & 17.92 & 36.49 & 24.87 & 27.88 & 24.69
\smallskip \\
 & 256 & 4 & 16.87 & 15.81 & 29.63 & 1.66 & 9.10 & 1.47
\smallskip \\
 & & 7 & 16.97 & 15.91 & 24.48 & 7.71 & 12.17 & 7.60
\smallskip \\
 & & 10 & 16.89 & 15.83 & 29.11 & 17.13 & 20.29 & 17.03
\smallskip \\
 & & 12 & 16.74 & 15.68 & 31.86 & 21.79 & 24.43 & 21.69
\smallskip \\
128 & 128 & 4 & 18.55 & 17.94 &  32.12 & 0.51 & 9.10 & 0.37
\smallskip \\
 & & 7 & 18.65 & 18.04 & 20.06 & 1.09 & 6.32 & 1.06
\smallskip \\
 & & 10 & 18.74 & 18.13 & 16.51 & 2.97 & 6.75 & 2.99
\smallskip \\
 & & 12 & 18.78 & 18.17 & 16.50 & 5.12 & 8.31 & 5.15
\smallskip \\
 & 256 & 4 & 16.23 & 15.69 & 28.27 & 0.39 & 7.85 & 0.21
\smallskip \\
 &  & 7 & 16.32 & 15.77 & 17.61 & 0.90 & 5.39 & 0.80
\smallskip \\
 & & 10 & 16.39 & 15.84 & 14.48 & 2.54 & 5.77 & 2.48
\smallskip \\
 & & 12 & 16.43 & 15.88 & 14.46 & 4.42 & 7.14 & 4.38
\smallskip \\
\lasthline
\end{tabular}
\quad
\begin{tabular}{@{\extracolsep{3pt}}ccccccccc@{}}
\multicolumn{1}{c}{} & \multicolumn{7}{c}{Mean squared errors $\left( \times 1000 \right)$} & \multicolumn{1}{c}{}
\\
\cline{1-9}\noalign{\smallskip}
\multicolumn{3}{c}{} & \multicolumn{2}{c}{LSW} & \multicolumn{2}{c}{RLSW\tsbs{1}} & \multicolumn{2}{c}{RLSW\tsbs{2}}\\
\cline{4-5}\cline{6-7}\cline{8-9}\noalign{\smallskip}
R & T & M & mse & bias\tsps{2} & mse & bias\tsps{2} & mse & bias\tsps{2} \\
\firsthline\noalign{\smallskip}
256 & 512 & 4 & 14.13 & 13.89 & 25.00 & 0.25 & 6.85 & 0.08
\smallskip \\
 & & 7 & 14.16 & 13.93 & 15.04 & 0.19 & 4.15 & 0.09
\smallskip \\
 & & 10 & 14.20 & 13.97 & 10.87 & 0.27 & 3.10 & 0.20
\smallskip \\
 & & 12 & 14.23 & 13.99 & 9.30 & 0.40 & 2.78 & 0.35
\smallskip \\
 & 1024 & 4 & 12.72 & 12.50 & 22.58 & 0.23 & 6.15 & 0.07
\smallskip \\
 & & 7 & 12.75 & 12.54 & 13.58 & 0.17 & 3.73 & 0.08
\smallskip \\
 & & 10 & 12.78 & 12.57 & 9.82 & 0.24 & 2.79 & 0.17
\smallskip \\
 & & 12 & 12.81 & 12.59 & 8.41 & 0.37 & 2.50 & 0.31
\smallskip \\
512 & 512 & 4 & 13.99 & 13.87 & 25.06 & 0.26 & 6.84 & 0.08
\smallskip \\
 & & 7 & 14.01 & 13.89 & 15.03 & 0.16 & 4.10 & 0.05
\smallskip \\
 & & 10 & 14.02 & 13.90 & 10.74 & 0.12 & 2.94 & 0.05
\smallskip \\
 & & 12 & 14.04 & 13.91 & 9.03 & 0.12 & 2.49 & 0.05
\smallskip \\
 & 1024 & 4 & 12.59 & 12.48 & 22.58 & 0.23 & 6.14 & 0.06
\smallskip \\
 &  & 7 & 12.60 & 12.50 & 13.55 & 0.14 & 3.69 & 0.04
\smallskip \\
 & & 10 & 12.62 & 12.51 & 9.69 & 0.11 & 2.64 & 0.04
\smallskip \\
 & & 12 & 12.63 & 12.52 & 8.15 & 0.10 & 2.23 & 0.04
\smallskip \\
\lasthline
\end{tabular}
\caption{\small MSE and squared bias results, averaged over all time-scale points and replicates for Simulation 2. `LSW' denotes the classical approach of averaging over the replicates. `RLSW$_1$' denotes our proposed approach using localised replicate smoothing. `RLSW$_2$' denotes our proposed approach using localised time and replicate smoothing. Our proposed methods `RLSW$_1$'and `RLSW$_2$' use a replicate smoothing window of length $(2M+1)$, while the time-smoothing for `LSW' and `RLSW$_2$' is automatically chosen.}
\label{tab:msesim2}
\end{table}

\begin{figure}[htbp]
\begin{center}
\resizebox{!}{0.4\textwidth}{\includegraphics{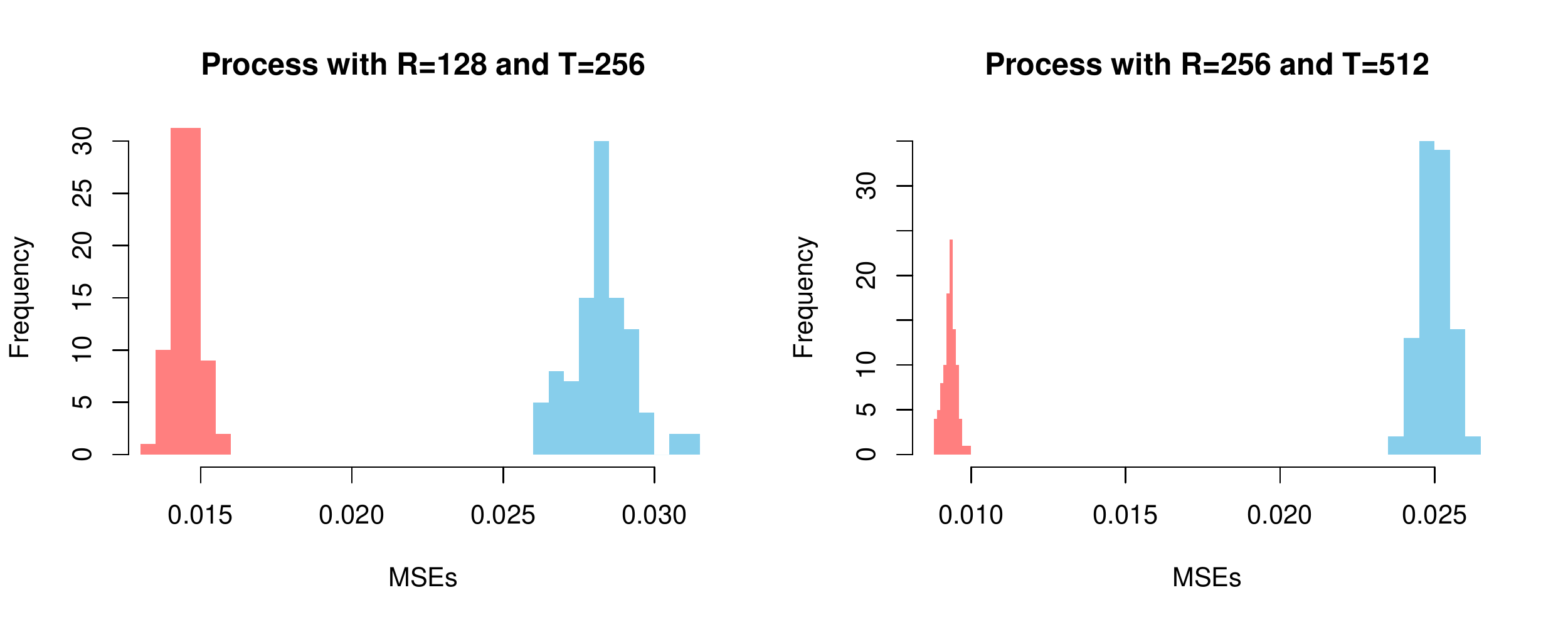}}
\end{center}
\vspace{-2em}
\caption{\small Histograms of the MSEs on the estimates from the RLSW model over 100 runs for Simulation 2. Smoothing over replicates with $M=12$ (\textit{red}); with $M=4$ (\textit{blue}). }
\label{fig:hist_shiftproc}
\end{figure}

\begin{figure}[htbp]
\begin{center}
\resizebox{!}{0.4\textwidth}{\includegraphics{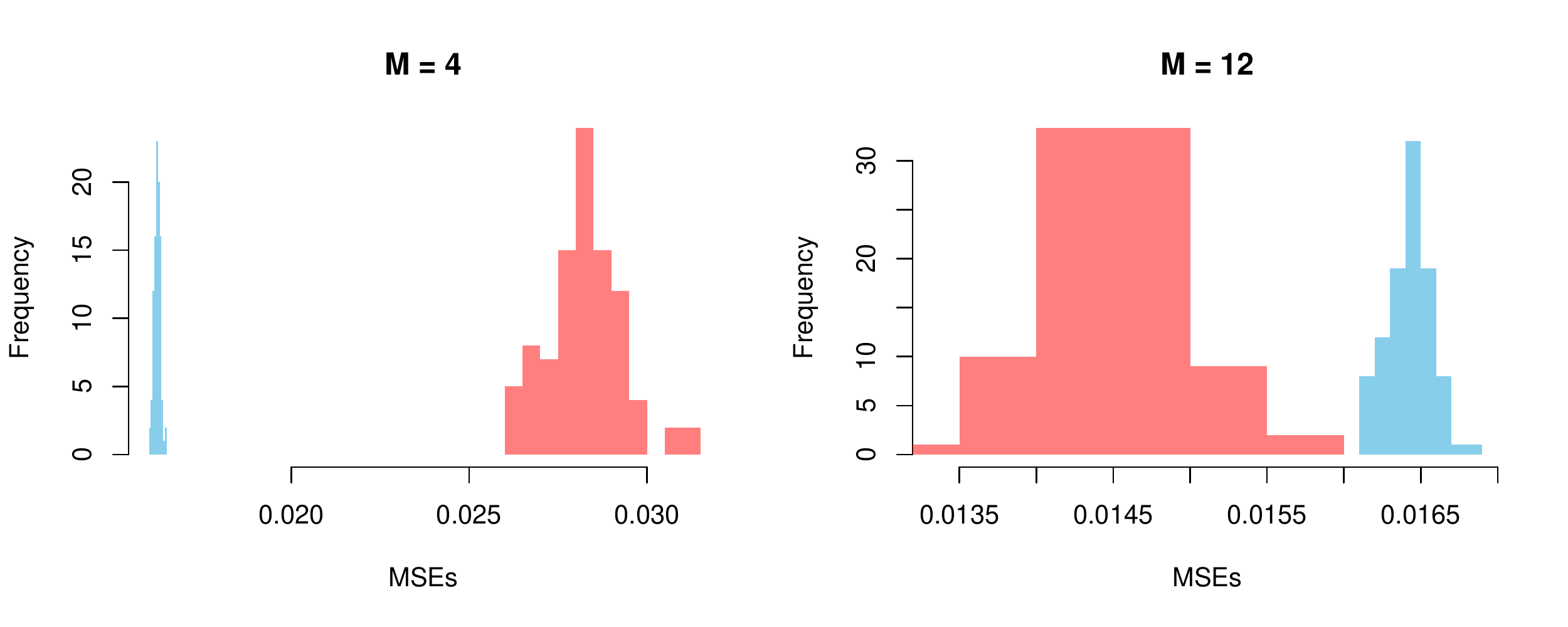}}
\end{center}
\vspace{-2em}
\caption{\small \small Histograms of the MSEs obtained over 100 runs for Simulation 2 with R = 128 and T = 256. Smoothing over replicates with $M=4$ (\textit{left}); with $M=12$ (\textit{right}). \textit{Red:} MSEs of RLSW($_1$) estimates; \textit{Blue:} MSEs of LSW estimates.}
\label{fig:hist_shiftproc_B}
\end{figure}

\clearpage
\section{Further simulation evidence for Section~\ref{sec:simsext}}\label{supp:furthersimscoh}

For level $j=5$ and time $k = 1,\ldots, 256$ we define the non-zero replicate coherence matrices of the illustrative example in Section~\ref{sec:simsext}, as follows
\begin{align*}
\left(\rho_{j}(\frac{k}{T},\frac{r}{R},\frac{r'}{R})\right)_{r,r'} &=
\begin{bmatrix}
	\rho_{5,k}^{1,1} & \cdots & \rho_{5,k}^{1,128} & \rho_{5,k}^{1,129} & \cdots & \rho_{5,k}^{1,R} \\
	\vdots & \ddots & \vdots & \vdots & \ddots & \vdots \\
	\rho_{5,k}^{128,1} & \cdots & \rho_{5,k}^{128,128} & \vdots & \cdots & \rho_{5,k}^{128,R} \\
	\rho_{5,k}^{129,1} & \cdots & \cdots & \rho_{5,k}^{129,129} & \cdots & \rho_{5,k}^{129,R} \\
	\vdots & \ddots & \vdots & \vdots & \ddots & \vdots \\
	\rho_{5,k}^{R,1} & \cdots & \rho_{5,k}^{R,128} & \rho_{5,k}^{R,129} & \cdots & \rho_{5,k}^{R,R}
\end{bmatrix} \\
&=
\begin{bmatrix}
	1 & 0.99 & \cdots & 0.99 & -0.71 & \cdots & \cdots & -0.71 \\
	0.99 & \ddots & \ddots & \vdots & \vdots & \ddots & \ddots & \vdots \\
	\vdots & \ddots & \ddots & 0.99 & \vdots & \ddots & \ddots & \vdots \\
	0.99 & \cdots & 0. 99 & 1 & -0.71 & \cdots & \cdots & -0.71 \\
	-0.71 & \cdots & \cdots & -0.71 & 1 & 0.5 & \cdots & 0.5\\
	\vdots & \ddots & \ddots & \vdots & 0.5 & \ddots & \ddots & \vdots \\
	\vdots & \ddots & \ddots & \vdots & \vdots & \ddots & \ddots & 0.5 \\	
	-0.71 & \cdots & \cdots & -0.71 & 0.5 & \cdots & 0.5 & 1
\end{bmatrix}
\end{align*}
where we have (abusively) let $\rho_{j}(\frac{k}{T}, \frac{r}{R},\frac{r'}{R}) = \rho_{j,k}^{r,r'}$ to ease notation.

\subsection*{Further simulation study}
We simulate a replicate locally stationary wavelet process with $R=256$ replicates that feature dependence, measured at $T=512=2^{9}$ time points. The within-replicate locally stationary wavelet (auto)spectra are as defined in Simulation 1 of Section~\ref{sec:sims}. Here we have $J(T)=9$ (in short, $J$) and the spectral characteristics are placed in level $j=J(T)-4=5$. In addition to the autospectral characteristics, we also define the cross-replicate spectral structure by means of defining their (true) coherence at each level $j$ and location $k$. For level $j = 5$, we choose a coherence of $0.7$ between all replicates over the first $256$ locations and zero (no) coherence over the last $256$ locations. All other levels have no coherence between replicates.  Visual representations of this dependence structure appear in Figure~\ref{fig:coh_sim1} (left panels). For $j=5$ and $k=1, \ldots,256$, the non-zero coherence matrices are defined as follows
\begin{equation}\nonumber
\left(\rho_{j}(\frac{k}{T},\frac{r}{R},\frac{r'}{R})\right)_{r,r'} =
\begin{bmatrix}
	\rho_{5,k}^{1,1} & \rho_{5,k}^{1,2} & \cdots & \rho_{5,k}^{1,R} \\
	\rho_{5,k}^{2,1} & \ddots & \ddots & \vdots \\
	\vdots & \ddots & \ddots & \rho_{5,k}^{R-1,R} \\
	\rho_{5,k}^{R,1} & \cdots & \rho_{5,k}^{R,R-1} & \rho_{5,k}^{R,R}
\end{bmatrix}
=
\begin{bmatrix}
	1 & 0.7 & \cdots & 0.7 \\
	0.7 & \ddots & \ddots & \vdots \\
	\vdots & \ddots & \ddots & 0.7 \\
	0.7 & \cdots & 0.7 & 1
\end{bmatrix}.
\end{equation}

Using the spectral estimation methodology proposed in Section~\ref{sec:modelext_est}, we obtain coherence estimates visually represented in Figure~\ref{fig:coh_sim1} (right panels) for replicate 50 (top row) and replicate 200 (bottom row). The non-decimated wavelet transform was computed using discrete wavelets built by means of Daubechies least asymmetric family with 10 vanishing moments and a value $M=4$ was chosen for the local replicate-smoothing.

\begin{figure}[htbp!]
\begin{center}
\resizebox{!}{0.6\textwidth}{\includegraphics{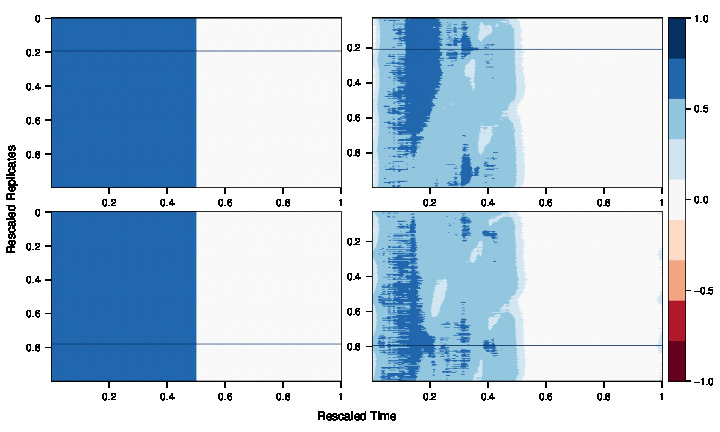}}
\end{center}
\vspace{-2em}
\caption{\small Further simulation: coherence plots for replicates 50 (\textit{top row}) and 200 (\textit{bottom row}) over rescaled time and replicates in level $5$. \textit{Left}: true coherence; \textit{Right}: coherence estimates averaged over 100 simulations.}
\label{fig:coh_sim1}
\end{figure}

It is apparent that the coherence structure is being picked up by the modelling framework in Section~\ref{sec:modelext} in terms of the locations and the positiveness of the defined true coherence. We do however note that the intensity of the estimated coherence is not quite as strong as the true coherence, just as for the simulation in Section~\ref{sec:simsext}. The numerical MSE and squared bias results reported in Table~\ref{tab:rhomsesim1} highlight that our correction procedure that aims to ensure positive spectral estimates has the undesired effect of increasing the MSEs and introducing bias as $R$ and $T$ increase, just as reported in a bivariate coherence estimation framework by \cite{sand:2010}.

\begin{table}[htbp!]
\small
\hspace{1em}
\centering
\begin{tabular}{cccccccc}
\cline{1-8}\noalign{\smallskip}
R & T & M & & mse\tsbs{1} & bias\tsps{2}\tsbs{1} & mse\tsbs{2} & bias\tsps{2}\tsbs{2} \\
\firsthline\noalign{\smallskip}
128 & 256 & 7 & & 17.29 & 7.41 & 15.16 & 9.66
\smallskip \\
 & & 12 & & 15.06 & 7.42 & 14.04 & 9.66
\smallskip \\
128 & 512 & 7 & & 17.64 & 8.11 & 16.01 & 10.66
\smallskip \\
 &  & 12 & & 15.29 & 8.01 & 14.73 & 10.52
\smallskip \\
256 & 512 & 7 & & 18.06 & 8.27 & 16.34 & 10.85
\smallskip \\
 & & 12 & & 15.85 & 8.18 & 15.18 & 10.75
\smallskip \\
\lasthline
\end{tabular}
\caption{\small Further simulation: MSE and squared bias $\left( \times 100 \right)$, averaged over all time-scale points and replicates. Subscripts 1 and 2 denote the models with smoothing over replicates only and time-replicate smoothing, respectively.}\label{tab:rhomsesim1}
\end{table}

\clearpage
\section{Further proofs for the material of Appendix~\ref{app:proofs}}\label{supp:proofs:sec3}

\begin{lemma}
Under the assumptions of Definition~\ref{def:rlsw}, we have for every rescaled time and replicate, $z, \, \nu$ respectively, and scale $j$,
 $\sum_{l=1}^{\infty}A_{j,l} S_{l}\left(z,\nu\right)=\ORD(2^{j})$.
\end{lemma}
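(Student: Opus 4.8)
The plan is to avoid bounding $A_{j,l}$ term-by-term and instead reorganise the double sum so that the spectrum gets bundled into the replicate local autocovariance $c(z,\nu;\uptau)=\sum_{l}S_{l}(z,\nu)\Psi_{l}(\uptau)$. Recalling that $A_{j,l}=\sum_{\uptau}\Psi_{j}(\uptau)\Psi_{l}(\uptau)$, I would first write
\[
\sum_{l=1}^{\infty}A_{j,l}S_{l}(z,\nu)=\sum_{l=1}^{\infty}\Big(\sum_{\uptau}\Psi_{j}(\uptau)\Psi_{l}(\uptau)\Big)S_{l}(z,\nu),
\]
and then interchange the two summations. The interchange is legitimate because $\Psi_{j}$ is compactly supported of order $2^{j}$, so the $\uptau$-sum ranges over only finitely many indices; swapping a finite sum past the $l$-sum is harmless, and for each fixed $\uptau$ the inner series $\sum_{l}\Psi_{l}(\uptau)S_{l}(z,\nu)$ converges absolutely (since $|\Psi_{l}(\uptau)|\le \Psi_{l}(0)=1$ and $\sum_{l}S_{l}(z,\nu)<\infty$ uniformly by~\eqref{eq:unifW}). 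This yields the key identity
\[
\sum_{l=1}^{\infty}A_{j,l}S_{l}(z,\nu)=\sum_{\uptau}\Psi_{j}(\uptau)\,c(z,\nu;\uptau).
\]

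Next I would bound the right-hand side directly. Since $|c(z,\nu;\uptau)|\le\sum_{l}S_{l}(z,\nu)|\Psi_{l}(\uptau)|\le\sum_{l}S_{l}(z,\nu)=\ORD(1)$ uniformly in $\uptau$, $z$ and $\nu$ (exactly the uniform boundedness already noted just after the definition of the RLACV, which follows from~\eqref{eq:unifW}), the triangle inequality gives
\[
\Big|\sum_{l=1}^{\infty}A_{j,l}S_{l}(z,\nu)\Big|\le\sum_{\uptau}|\Psi_{j}(\uptau)|\,|c(z,\nu;\uptau)|\le\ORD(1)\sum_{\uptau}|\Psi_{j}(\uptau)|.
\]
Finally, invoking the autocorrelation-wavelet property $\sum_{\uptau}|\Psi_{j}(\uptau)|=\ORD(2^{j})$ (a consequence of $\Psi_{j}$ being bounded with support of order $2^{j}$, cf.\ \cite{nvsk:2000}) delivers $\sum_{l}A_{j,l}S_{l}(z,\nu)=\ORD(2^{j})$, uniformly in $z$ and $\nu$, as required.

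The main obstacle is resisting the temptation to bound $A_{j,l}$ first and then sum. A term-wise Cauchy--Schwarz estimate only yields $A_{j,l}\le(A_{j,j}A_{l,l})^{1/2}=\ORD(2^{(j+l)/2})$, which combined with the $l$-sum would demand the weighted summability $\sum_{l}2^{l/2}S_{l}(z,\nu)<\infty$; this is strictly stronger than the hypothesis~\eqref{eq:unifW}, namely $\sum_{l}S_{l}(z,\nu)<\infty$, and is not available. The reordering step is precisely what sidesteps this difficulty: it transfers the scale growth onto the single factor $\sum_{\uptau}|\Psi_{j}(\uptau)|=\ORD(2^{j})$ while leaving only the uniformly bounded autocovariance to absorb the spectrum. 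The only points needing a little care are the justification of the interchange (handled by the finite support of $\Psi_{j}$) and the uniformity in $z,\nu$, which is inherited directly from the uniform summability in~\eqref{eq:unifW}.
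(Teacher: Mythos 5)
Your proposal is correct and follows essentially the same route as the paper's own proof: rewrite $A_{j,l}$ via its definition, interchange the sums to recognise the local autocovariance $c(z,\nu;\uptau)$, then apply the triangle inequality together with the uniform bound $|c(z,\nu;\uptau)|=\ORD(1)$ and the property $\sum_{\uptau}|\Psi_{j}(\uptau)|=\ORD(2^{j})$. The only difference is that you make explicit the justification for the interchange of summation and the uniformity in $z,\nu$, details the paper leaves implicit.
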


\begin{proof}
\begin{align*}
\left|\sum_l A_{j,l} S_{l}\left(z,\nu\right)\right| &= \left|\sum_{l}\sum _{\uptau}\Psi_{j}(\uptau)\Psi_{l}(\uptau) S_{l}\left(z,\nu\right)\right| \mbox{ from the definition of the matrix }A \\
&= \left|\sum _{\uptau}\left(\sum_{l} S_{l}\left(z,\nu\right)\Psi_{l}(\uptau)\right)\Psi_{j}(\uptau)\right|\\
&=  \left|\sum _{\uptau}c\left(z,\nu;\uptau \right)\Psi_{j}(\uptau)\right|   \mbox { from the local autocovariance definition}\\
& \leq \sum _{\uptau}\left|c\left(z,\nu;\uptau \right)\right| \left|\Psi_{j}(\uptau)\right| \mbox { using the triangle inequality}\\
&= \ORD(2^{j}),
\end{align*}
where we used $\left|c\left(z,\nu;\uptau \right)\right| <\infty$ for all $\nu, \uptau$ and $\sum _{\uptau}\left|\Psi_{j}(\uptau)\right| = \ORD(2^{j})$ \citep{nvsk:2000}.
\end{proof}

\begin{lemma}
Under the assumptions of Definition~\ref{def:rlsw}, we have for every rescaled time and replicate, $z, \, \nu$ respectively, lag $\uptau$ and scale $j$,
$\sum_{l=1}^{\infty}A_{l,j}^{\uptau} S_{l}\left(z,\nu\right)=\ORD(2^{j})$.
\end{lemma}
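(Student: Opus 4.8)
The plan is to mirror closely the argument already used for Lemma~\ref{lem:ASord}, the only essential difference being that the autocorrelation wavelet $\Psi_{j}$ is now evaluated at a shifted argument $n+\uptau$ rather than at $n$. First I would unfold the definition of the shifted inner-product matrix, writing $A_{l,j}^{\uptau}=\sum_{n}\Psi_{l}(n)\Psi_{j}(n+\uptau)$, and substitute it into the sum of interest. Interchanging the (absolutely convergent) summations over $l$ and $n$ then isolates the combination $\sum_{l}S_{l}(z,\nu)\Psi_{l}(n)$, which by the definition of the replicate local autocovariance is precisely $c(z,\nu;n)$. This recasts the target quantity as $\left|\sum_{n}c(z,\nu;n)\Psi_{j}(n+\uptau)\right|$.

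Next I would apply the triangle inequality to obtain the bound $\sum_{n}\left|c(z,\nu;n)\right|\left|\Psi_{j}(n+\uptau)\right|$. At this point two facts close the argument. First, the local autocovariance is uniformly bounded, i.e. $\sup_{n}\left|c(z,\nu;n)\right|=\ORD(1)$, which follows from the uniform bound in equation~\eqref{eq:unifW} (giving $\sum_{l}S_{l}(z,\nu)=\ORD(1)$ uniformly) together with $\left|\Psi_{l}(\uptau)\right|\le 1$. Second, since $\Psi_{j}$ is compactly supported of order $2^{j}$, a change of summation variable shows that $\sum_{n}\left|\Psi_{j}(n+\uptau)\right|=\sum_{m}\left|\Psi_{j}(m)\right|=\ORD(2^{j})$, independently of the lag $\uptau$. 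Pulling the uniform bound on $c$ out of the sum then yields $\ORD(1)\cdot\ORD(2^{j})=\ORD(2^{j})$, as required.

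The argument is essentially routine, so I do not anticipate a substantial obstacle. The one point warranting care is the justification that $\sup_{n}\left|c(z,\nu;n)\right|$ is finite \emph{uniformly} in $n$ (not merely finite for each fixed $n$): this uniformity is exactly what permits factoring it out of the sum before invoking the $\ORD(2^{j})$ support bound, and it is the step where the uniform summability assumption~\eqref{eq:unifW} is genuinely needed rather than mere pointwise finiteness of the autocovariance.
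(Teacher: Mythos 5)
Your proposal is correct and follows essentially the same route as the paper's proof: unfold $A_{l,j}^{\uptau}$, swap the sums to recognise $\sum_{l}S_{l}(z,\nu)\Psi_{l}(n)=c(z,\nu;n)$, apply the triangle inequality, and combine the uniform bound on $c$ with $\sum_{n}\left|\Psi_{j}(n+\uptau)\right|=\sum_{n}\left|\Psi_{j}(n)\right|=\ORD(2^{j})$. Your extra remark that the bound on $c(z,\nu;n)$ must hold uniformly in $n$ (via equation~\eqref{eq:unifW} and $\left|\Psi_{l}(n)\right|\leq 1$) is a slightly more careful statement of the fact the paper invokes when it writes $\left|c(z,\nu;n)\right|<\infty$.
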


\begin{proof}
\begin{align*}
\left|\sum_{l}A_{l,j}^{\uptau} S_{l}\left(z,\nu\right)\right|&= \left|\sum_{l}\sum _{n}\Psi_{l}(n) \Psi_{j}(n+\uptau) S_{l}\left(z,\nu\right)\right| \mbox{ from the definition of the matrix }A^\uptau \\
&= \left|\sum _{n}\left(\sum_{l} S_{l}\left(z,\nu\right)\Psi_{l}(n)\right)\Psi_{j}(n+\uptau)\right|  \\
&=  \left|\sum _{n}c\left(z,\nu;n \right)\Psi_{j}(n+\uptau)\right| \mbox { from the local autocovariance definition}\\
& \leq \sum _{n}\left|c\left(z,\nu;n \right)\right| \left|\Psi_{j}(n+\uptau)\right| \mbox { using the triangle inequality}\\
&= \ORD(2^{j}),
\end{align*}
where we used $\left|c\left(z,\nu;n \right)\right| <\infty$ for all $\nu, n$ and $\sum _{n}\left|\Psi_{j}(n+\uptau)\right| = \sum _{n}\left|\Psi_{j}(n)\right|= \ORD(2^{j})$.
\end{proof}

\begin{lemma}
Under the assumptions of Definition~\ref{def:rlsw}, we have  for every rescaled time and replicate, $z, \, \nu$ respectively, lag $\uptau$ and scales $j, j'$,
$$\sum_{l=1}^{\infty} \left|\sum _{n\in \Z} \Psi_{j,j'}(n) \Psi_{l}(n+\uptau) S_{l}\left(z,\nu\right)\right|=\ORD(2^{(j+j')/2}).$$
\end{lemma}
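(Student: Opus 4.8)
The plan is to first reduce the double-indexed inner sum to a form in which both scales $j$ and $j'$ enter symmetrically through the autocorrelation-wavelet inner-product matrix, and then to control the resulting expression by two applications of the Cauchy--Schwarz inequality together with Lemma~\ref{lem:ASord}.

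First I would establish the reshuffling identity
\begin{equation*}
\sum_{n\in\Z}\Psi_{j,j'}(n)\,\Psi_{l}(n+\uptau) = \sum_{n\in\Z}\Psi_{j,l}(n)\,\Psi_{j',l}(n+\uptau),
\end{equation*}
valid for all scales $j,j',l$ and any lag $\uptau$; this is the identity flagged as ``the first result in the proof of Lemma~\ref{lem:ASordjj}'' and invoked in the variance computation of Appendix~\ref{app:proofs:rawi}. Since each auto- and cross-correlation wavelet has compact support, all the sums over $n$ are finite and interchanges are unproblematic. The identity follows by writing $\Psi_{j,l}(\uptau)=\sum_k\psi_{j,k}\psi_{l,k-\uptau}$, substituting into both sides, and re-indexing the fourfold summations (the same substitution manoeuvres as in the $\beta^{1/2}$ computation of Appendix~\ref{app:proofs:rawi}); equivalently and more transparently, both sides are the lag-$\uptau$ cross-correlation of two finitely supported sequences whose discrete-time Fourier transforms coincide, since the DTFT sends $\Psi_{j,l}\mapsto\overline{\hat\psi_j}\hat\psi_l$ and $\Psi_l\mapsto|\hat\psi_l|^2$, so that both sides transform to $\hat\psi_j\,\overline{\hat\psi_{j'}}\,|\hat\psi_l|^2$. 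I expect this identity to be the main obstacle, being the only non-routine step; it is also essential, since a naive direct Cauchy--Schwarz that kept $\Psi_{j,j'}$ intact would leave a factor $\sum_l S_l(z,\nu)\,2^{l/2}$ whose convergence is not guaranteed.

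Granting the identity, I would use the non-negativity of the spectrum, $S_l(z,\nu)=|\Wt_l(z,\nu)|^2\ge 0$, to pull the ($n$-independent) factor $S_l(z,\nu)$ outside the modulus and apply the identity to the remaining $n$-sum, giving
\begin{equation*}
\sum_{l=1}^{\infty}\left|\sum_{n}\Psi_{j,j'}(n)\Psi_{l}(n+\uptau)S_{l}(z,\nu)\right|
= \sum_{l=1}^{\infty}S_{l}(z,\nu)\left|\sum_{n}\Psi_{j,l}(n)\Psi_{j',l}(n+\uptau)\right|.
\end{equation*}
A first Cauchy--Schwarz inequality on the inner sum over $n$, together with the shift-invariance $\sum_n\Psi_{j',l}(n+\uptau)^2=\sum_n\Psi_{j',l}(n)^2$ and the defining relation $A_{j,l}=\sum_n\Psi_{j,l}(n)^2$, bounds the inner modulus by $(A_{j,l})^{1/2}(A_{j',l})^{1/2}$.

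Finally I would apply a second Cauchy--Schwarz inequality to the sum over $l$, using the non-negative weights $S_l(z,\nu)$ and the factorisation $S_l(A_{j,l})^{1/2}(A_{j',l})^{1/2}=\bigl(S_l^{1/2}A_{j,l}^{1/2}\bigr)\bigl(S_l^{1/2}A_{j',l}^{1/2}\bigr)$, obtaining
\begin{equation*}
\sum_{l=1}^{\infty}S_{l}(z,\nu)(A_{j,l})^{1/2}(A_{j',l})^{1/2}
\le\Bigl(\sum_{l}A_{j,l}S_{l}(z,\nu)\Bigr)^{1/2}\Bigl(\sum_{l}A_{j',l}S_{l}(z,\nu)\Bigr)^{1/2}.
\end{equation*}
By Lemma~\ref{lem:ASord}, $\sum_l A_{j,l}S_l(z,\nu)=\ORD(2^{j})$ and $\sum_l A_{j',l}S_l(z,\nu)=\ORD(2^{j'})$, so the right-hand side is $\ORD(2^{j/2})\,\ORD(2^{j'/2})=\ORD(2^{(j+j')/2})$, which is exactly the claimed bound and completes the argument.
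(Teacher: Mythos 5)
Your proposal is correct and follows essentially the same route as the paper's own proof: the reshuffling identity $\sum_n\Psi_{j,j'}(n)\Psi_l(n+\uptau)=\sum_n\Psi_{j,l}(n)\Psi_{j',l}(n+\uptau)$ established by re-indexing the fourfold wavelet sums, followed by Cauchy--Schwarz over $n$ to obtain the bound $A_{j,l}^{1/2}A_{j',l}^{1/2}$, then Cauchy--Schwarz over $l$ and Lemma~\ref{lem:ASord} to conclude $\ORD(2^{(j+j')/2})$. The only additions are cosmetic (the DTFT verification of the identity and the explicit note on shift-invariance), so nothing further is needed.
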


\begin{proof}
Let us first show that for any $\uptau$ and scales $j,\, j',\, l$, we have $$\sum _{n\in\Z}\Psi_{j,j'}(n)\Psi_{l}(n+\uptau)= \sum _{u\in\Z}\Psi_{j,l}(u)\Psi_{j',l}(u+\uptau).$$

This can be seen by re-expressing the right hand of the equality above by means of the definition of the cross-correlation wavelets, hence obtaining
\begin{align*}
\sum _{u}\Psi_{j,l}(u)\Psi_{j',l}(u+\uptau)
&= \sum_{u} \left( \sum_{p} \psi_{j,p}\psi_{l,p-u}\right)\left( \sum_{p'} \psi_{j',p'}\psi_{l,p'-u-\uptau}\right),\\
&= \sum_{p}\sum_{p'} \psi_{j,p} \psi_{j',p'} \left( \sum_u \psi_{l,p-u} \psi_{l,p'-u-\uptau}\right),\\
&= \sum_{p}\sum_{p'} \psi_{j,p} \psi_{j',p'} \Psi_l(p-p'+\uptau),\\
&= \sum_{n} \left(\sum_{p'} \psi_{j,n+p'} \psi_{j',p'} \right)\Psi_l(n+\uptau), \mbox{ where }n:=p-p',\\
&= \sum_{n} \Psi_{j,j'}(n)\Psi_l(n+\uptau).
\end{align*}

Using the above equality, we use the fact that the spectrum is positive and now take the triangle inequality
\begin{align*}
\sum_{l} S_{l}\left(z,\nu\right) \left| \sum _{n}\Psi_{j,l}(n) \Psi_{j',l}(n+\uptau)\right|  &
\leq \sum_{l} S_{l}\left(z,\nu\right) \sum _{n} \left| \Psi_{j,l}(n) \Psi_{j',l}(n+\uptau)\right|, \\
&\leq \sum_{l} S_{l}\left(z,\nu\right) \left( \sum _{n} \Psi_{j,l}^2(n)\right)^{1/2} \left(\sum _{n}  \Psi_{j',l}^2(n+\uptau)\right)^{1/2}, \\
&= \sum_{l} S_{l}\left(z,\nu\right) A_{j,l}^{1/2}A_{j',l}^{1/2}, \mbox{ from the definition of matrix }A\\
&\leq \left( \sum_{l} A_{j,l}S_{l}\left(z,\nu\right) \right) ^{1/2}
\left(\sum_{l'}  A_{j',l'}S_{l'}\left(z,\nu\right) \right) ^{1/2},\\
&= \ORD(2^{(j+j')/2}),
\end{align*}
where for the last two inequalities above we used the Cauchy-Schwarz inequality and for the last equality we used the result in Lemma~\ref{lem:ASord}.
\end{proof}

\section{Proof of Proposition~\ref{prop:autocov} }\label{supp:proofs:autocov}

\begin{proof}
As the RLSW process is defined to have zero-mean, we have $\cov\left(X_{\lfloor zT\rfloor;T}^{\lfloor \nu R\rfloor;T}, X_{\lfloor zT  \rfloor+\uptau;T}^{\lfloor \nu R\rfloor;T} \right)= \E\left(X_{\lfloor zT\rfloor;T}^{\lfloor \nu R\rfloor;T} X_{\lfloor zT  \rfloor +\uptau ;T}^{\lfloor \nu R\rfloor;T} \right)$ and using the RLSW process definition in equation~\eqref{eq:rlsw}, we obtain
\beqann
\cov\left(X_{\lfloor zT\rfloor;T}^{\lfloor \nu R\rfloor;R}, X_{\lfloor zT  \rfloor +\uptau ;T}^{\lfloor \nu R\rfloor;R} \right)&=&
\sum_{j=1}^\infty\sum_{k\in\Z} \left( \omega_{j,k;T}^{\lfloor \nu R\rfloor;R} \right)^2
\psi_{j,k}(\lfloor zT\rfloor)\psi_{j,k}(\lfloor zT\rfloor+\uptau), \mbox{ take }k:=n+\lfloor zT\rfloor,\\
&=&\sum_{j=1}^\infty\sum_{n\in\Z} \left(\omega_{j,n+\lfloor zT\rfloor;T}^{\lfloor \nu R\rfloor;R}\right)^2
\psi_{j,n+\lfloor zT\rfloor}(\lfloor zT\rfloor)\psi_{j,n+\lfloor zT\rfloor}(\lfloor zT\rfloor+\uptau), \\
&=& \sum_{j=1}^\infty\sum_{n\in\Z} \left(\omega_{j,n+\lfloor zT\rfloor;T}^{\lfloor \nu R\rfloor;R}\right)^2 \psi_{j,n}(0)\psi_{j,n}(\uptau).
\eeqann
Denoting $A=\left| \cov\left(X_{\lfloor zT\rfloor;T}^{\lfloor \nu R\rfloor;R}, X_{\lfloor zT  \rfloor +\uptau ;T}^{\lfloor \nu R\rfloor;R} \right)-c(z,\nu;\uptau) \right|$ and using the local autocovariance definition as well as the amplitude approximations in equation~\eqref{eq:approx}, we obtain
\beq\nonumber
A =\left| \sum_{j=1}^\infty\sum_{n\in\Z} \left(\omega_{j,n+\lfloor zT\rfloor;T}^{\lfloor \nu R\rfloor;R}\right)^2 \psi_{j,n}(0)\psi_{j,n}(\uptau)
- c(z,\nu; \uptau)\right|,
\eeq
a quantity bounded by\\
$\left| \sum_{j}\sum_{n} \left(S_{j}(z+\frac{n}{T},\nu)+ (D_j R^{-1})+(C_j T^{-1})\right)
\psi_{j,n}(0)\psi_{j,n}(\uptau)- \sum_{j}S_j(z,\nu)\Psi_j(\uptau)\right|$. Using the spectrum Lipschitz continuity in time, we further bound $A$ by\\
$\left| \sum_{j}\sum_{n} \left(S_{j}(z,\nu)+ (D_j R^{-1})+(C_j T^{-1})+ (L_{j}|n|T^{-1})\right)
\psi_{j,n}(0)\psi_{j,n}(\uptau)- \sum_{j}S_j(z,\nu)\Psi_j(\uptau)\right|$.

Hence
\beqann
A &\leq& \left| \sum_{j}\sum_{n} \left((D_j R^{-1})+(C_j T^{-1})+ (L_{j}|n|T^{-1})\right) \psi_{j,n}(0)\psi_{j,n}(\uptau)\right| \mbox{ as } \Psi_{j}(\uptau)=\sum_{k}\psi_{j,k}(0)\psi_{j,k}(\uptau),\\
&\leq& \left| \sum_{j}\sum_{n} (D_j R^{-1}) \psi_{j,n}(0)\psi_{j,n}(\uptau)\right| \\
&+& \left| \sum_{j}\sum_{n} (C_j T^{-1} \psi_{j,n}(0)\psi_{j,n}(\uptau)\right|\\
&+& \left| \sum_{j}\sum_{n} (L_{j}|n|T^{-1})\psi_{j,n}(0)\psi_{j,n}(\uptau)\right|,\\
&=& \ORD(R^{-1})+ \ORD(T^{-1}),
\eeqann
where the last equality follows as the terms  $\sum_{j}\sum_{n} (D_j R^{-1}) \psi_{j,n}(0)\psi_{j,n}(\uptau)=R^{-1} \sum_{j} D_j \Psi_{j}(\uptau)= \ORD(R^{-1})$ since $\Psi_{j}(\uptau)=\ORD(1)$ \citep{nvsk:2000} and we assumed $\sum_{j} D_j <\infty$, and similarly $\sum_{j}\sum_{n} (C_j T^{-1}) \psi_{j,n}(0)\psi_{j,n}(\uptau)=\ORD(T^{-1})$ using $\sum_{j} C_j <\infty$; the last term \\ $\left|\sum_{j}\sum_{n} (L_{j}|n|T^{-1})\psi_{j,n}(0)\psi_{j,n}(\uptau)\right|\leq \ORD(T^{-1}) \sum_{j} 2^j L_{j}$, since there are at most order $2^j$ non-zero terms in the wavelet product, $\Psi_{j}(\uptau)=\ORD(1)$ and we assumed $\sum_{j} 2^j L_{j}<\infty$ (see also \cite{nvsk:2000}).
\end{proof}

\section{Further proofs for the material of Appendix~\ref{app:proofsc}}\label{supp:proofs:sec5}

\begin{lemma}
Under the assumptions of Definition~\ref{def:rlswc}, we have a sequence $\{B_j\}$ of uniformly bounded Lipschitz constants in $j$ with $\sum_j2^jB_j<\infty$ such that
\begin{align}
& \left|\Wt_{j}\left(\frac{k+n}{T},\frac{r}{R}\right)\Wt_{j}\left(\frac{k+n}{T},\frac{r'}{R}\right)\rho_{j}\left(\frac{k+n}{T},\frac{r}{R},\frac{r'}{R}\right) \right. \nonumber \\
& \qquad \qquad \left. - \Wt_{j}\left(\frac{k}{T},\frac{r}{R}\right)\Wt_{j}\left(\frac{k}{T},\frac{r'}{R}\right)\rho_{j}\left(\frac{k}{T},\frac{r}{R},\frac{r'}{R}\right)\right| \leq |n| B_{j}T^{-1}, \label{eq:crosslipz}
\end{align}
for any replicates $r$, $r'$ and times $k$, $n$.
\end{lemma}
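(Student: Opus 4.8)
The plan is to treat the replicate cross-spectrum $S_j(z,\nu,\nu')=\Wt_j(z,\nu)\Wt_j(z,\nu')\rho_j(z,\nu,\nu')$ as a product of three factors, each of which is Lipschitz continuous in the time argument, and to exploit the standard fact that a product of bounded Lipschitz functions is again Lipschitz. Writing $z=k/T$ and $z'=(k+n)/T$, so that $z'-z=n/T$, I would first secure the uniform boundedness of the individual factors. Setting $\nu'=\nu$ in the amended summability condition~\eqref{eq:unifWc} gives $\sum_{j}\left|\Wt_j(z,\nu)\right|^2<\infty$ uniformly in $z,\nu$, so there is a constant $C$ with $\left|\Wt_j(z,\nu)\right|\le C^{1/2}$ for every $j,z,\nu$; the coherence satisfies $\left|\rho_j(z,\nu,\nu')\right|\le 1$, inherited in the limit from the bound $\left|\rho_{j,k;T}^{r,r';R}\right|\le 1$ in property 2 of Definition~\ref{def:rlswc}.

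Next I would apply the telescoping decomposition of a product difference, namely
\begin{align*}
&\Wt_j(z',\nu)\Wt_j(z',\nu')\rho_j(z',\nu,\nu') - \Wt_j(z,\nu)\Wt_j(z,\nu')\rho_j(z,\nu,\nu') \\
&\quad= \left[\Wt_j(z',\nu)-\Wt_j(z,\nu)\right]\Wt_j(z',\nu')\rho_j(z',\nu,\nu') \\
&\quad\phantom{=}+ \Wt_j(z,\nu)\left[\Wt_j(z',\nu')-\Wt_j(z,\nu')\right]\rho_j(z',\nu,\nu') \\
&\quad\phantom{=}+ \Wt_j(z,\nu)\Wt_j(z,\nu')\left[\rho_j(z',\nu,\nu')-\rho_j(z,\nu,\nu')\right].
\end{align*}
I would then bound each bracketed difference by its Lipschitz increment: the first two via the time-Lipschitz continuity of $\Wt_j(\cdot,\nu)$ and $\Wt_j(\cdot,\nu')$, whose constants are at most $L_j$ (property 3(b) of Definition~\ref{def:rlsw}, which is \emph{not} amended in Definition~\ref{def:rlswc}), and the third via the time-Lipschitz continuity of $\rho_j(\cdot,\nu,\nu')$ with constant at most $Q_j$ (the additional property). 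Combining these with the uniform bounds $C^{1/2}$ and $1$ on the remaining factors yields, after collecting terms, a bound of the form $\left(2C^{1/2}L_j + C\,Q_j\right)|n|T^{-1}$.

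Setting $B_j := 2C^{1/2}L_j + C\,Q_j$ then delivers the stated inequality~\eqref{eq:crosslipz}. It remains to verify the two required properties of $\{B_j\}$: uniform boundedness in $j$ follows since $L_j$ and $Q_j$ are each uniformly bounded by assumption, and the weighted summability $\sum_j 2^j B_j<\infty$ follows from $\sum_j 2^j L_j<\infty$ (Definition~\ref{def:rlsw}) and $\sum_j 2^j Q_j<\infty$ (the additional property of Definition~\ref{def:rlswc}). The calculation is essentially routine; the only point requiring care is the bookkeeping that ensures the constructed sequence $\{B_j\}$ simultaneously inherits both the uniform boundedness and the $2^j$-weighted summability from the two source sequences $\{L_j\}$ and $\{Q_j\}$, which is what ultimately makes $B_j$ a legitimate Lipschitz constant in the sense demanded by the later proofs.
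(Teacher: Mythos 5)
Your proof is correct and follows essentially the same route as the paper's: both factor $S_j(\cdot,\nu,\nu')$ into the three time-Lipschitz factors $\Wt_j(\cdot,\nu)$, $\Wt_j(\cdot,\nu')$, $\rho_j(\cdot,\nu,\nu')$, secure their uniform bounds, and conclude via the product rule for bounded Lipschitz functions on a compact interval. If anything, your explicit telescoping with the constant $B_j = 2C^{1/2}L_j + C\,Q_j$ is slightly more careful than the paper, which cites the product rule as "standard Lipschitz function theory" and takes $B_j=\sup_{\nu,\nu'}\max\{L_j^{\nu},L_j^{\nu'},Q_j^{\nu,\nu'}\}$, leaving the sup-norm factors of the remaining terms implicit.
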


Note that the above result means that the replicate cross-spectrum, \i.e. $S_{j}(\cdotp,\nu,\nu')$, is Lipschitz continuous in the rescaled time argument for any rescaled replicates $\nu$, $\nu'$.

\begin{proof}
Under the assumptions of Definition~\ref{def:rlswc}, the functions $\Wt_{j}(\cdotp,\nu)$, $\Wt_{j}(\cdotp,\nu')$ and $\rho_{j}(\cdotp, \nu, \nu')$ are Lipschitz continuous in rescaled time, with Lipschitz constants $L_{j}^{\nu}$, $L_{j}^{\nu'}$ and $Q_{j}^{\nu,\nu'}$ respectively. Equivalently, this can be written as $\left|\Wt_{j}((k+n)/T,\nu)- \Wt_{j}(k/T,\nu)\right| \leq |n| L_{j}^{\nu}T^{-1}$, \\ $\left|\Wt_{j}((k+n)/T,\nu')- \Wt_{j}(k/T,\nu')\right| \leq |n| L_{j}^{\nu'}T^{-1}$ and $\left|\rho_{j}((k+n)/T,\nu,\nu')- \rho_{j}(k/T,\nu,\nu')\right| \leq |n| Q_{j}^{\nu,\nu'}T^{-1}$.

From standard Lipschitz function theory, the product of the Lipschitz continuous functions defined on a compact interval is also Lipschitz continuous with Lipschitz constant the maximum of the individual constants. Hence working across replicates and denoting  $B_{j}=\sup_{\nu,\nu'}\mbox{max}\{L_{j}^{\nu},L_{j}^{\nu'},Q_{j}^{(\nu,\nu')}\}$, we readily obtain that $S_{j}(\cdotp,\nu,\nu')=\Wt_{j}(\cdotp,\nu)\Wt_{j}(\cdotp,\nu')\rho_{j}(\cdotp, \nu, \nu')$ is Lipschitz continuous (the result in equation~\eqref{eq:crosslipz}) with constants $\{B_j\}$. The properties of the $\{B_j\}$ follow immediately from the similar properties of the individual Lipschitz constants $L_j=\sup_{\nu}L_j^\nu$ and $Q_j=\sup_{\nu,\nu'}Q_j^{\nu,\nu'}$.
\end{proof}

\begin{lemma}
Under the assumptions of Definition~\ref{def:rlswc}, we have a sequence $\{B_j^\prime\}$ of uniformly bounded Lipschitz constants in $j$ with $\sum_j2^jB^\prime_j<\infty$ such that
\begin{align}
& \left|\Wt_{j}\left(\frac{k}{T},\frac{r+s}{R}\right)\Wt_{j}\left(\frac{k}{T},\frac{r'+s}{R}\right)\rho_{j}\left(\frac{k}{T},\frac{r+s}{R},\frac{r'+s}{R}\right) \right. \nonumber\\
& \qquad \qquad \left. - \Wt_{j}\left(\frac{k}{T},\frac{r}{R}\right)\Wt_{j}\left(\frac{k}{T},\frac{r'}{R}\right)\rho_{j}\left(\frac{k}{T},\frac{r}{R},\frac{r'}{R}\right)\right| \leq |s| B^\prime_{j}R^{-1}, \label{eq:crosslipr}
\end{align}
for any times $k$, $n$ and replicates $r$, $r'$.
\end{lemma}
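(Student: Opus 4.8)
The plan is to mirror the proof of Lemma~\ref{lem:lipz} verbatim, swapping the time-direction Lipschitz constants for their replicate-direction analogues and dividing by $R$ rather than $T$. The expression inside the modulus in~\eqref{eq:crosslipr} is precisely $S_j(z,\tfrac{r+s}{R},\tfrac{r'+s}{R})-S_j(z,\tfrac{r}{R},\tfrac{r'}{R})$ with $z=k/T$, where $S_j(z,\nu,\nu')=\Wt_j(z,\nu)\,\Wt_j(z,\nu')\,\rho_j(z,\nu,\nu')$ is a product of three factors. First I would record that each factor is uniformly bounded and Lipschitz in its rescaled-replicate argument(s): $\Wt_j(z,\cdot)$ has replicate Lipschitz constant $N_j^z$ by property 3(b) of Definition~\ref{def:rlsw} (retained in Definition~\ref{def:rlswc}), and $\rho_j(z,\cdot,\cdot)$ has replicate Lipschitz constant $P_j^z$ by the additional property of Definition~\ref{def:rlswc}. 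Boundedness is immediate, since $|\rho_j|\le1$ and the uniform amplitude bound~\eqref{eq:unifWc} forces $|\Wt_j|\le C$ for a constant $C$ independent of $j$, $z$, $\nu$.

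Next I would compose these factors with the shift $s\mapsto(r+s)/R$. Displacing a rescaled-replicate argument by $s/R$ changes each factor by at most its Lipschitz constant times $|s|/R$, so $s\mapsto\Wt_j(z,\tfrac{r+s}{R})$ and $s\mapsto\Wt_j(z,\tfrac{r'+s}{R})$ are Lipschitz in $s$ with constant $N_j^z/R$. In the coherence factor both arguments are displaced by $s/R$ simultaneously; splitting the increment into two one-coordinate steps and applying the triangle inequality with the coordinatewise bound $P_j^z$ shows that $s\mapsto\rho_j(z,\tfrac{r+s}{R},\tfrac{r'+s}{R})$ is Lipschitz in $s$ with constant at most $2P_j^z/R$.

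The standard product rule for bounded Lipschitz functions then applies: the Lipschitz constant of $\Wt_j(z,\tfrac{r+s}{R})\Wt_j(z,\tfrac{r'+s}{R})\rho_j(z,\tfrac{r+s}{R},\tfrac{r'+s}{R})$ in $s$ is bounded by a sum of terms, each one factor's Lipschitz constant multiplied by the sup-norms of the remaining two. Using $|\Wt_j|\le C$ and $|\rho_j|\le1$, this bound takes the form $(\text{const})\times\max\{N_j^z,P_j^z\}/R$. Setting $B_j^\prime=\sup_z(\text{const})\max\{N_j^z,P_j^z\}$, equivalently a fixed multiple of $\max\{N_j,P_j\}$, delivers~\eqref{eq:crosslipr}. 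The required $\sum_j2^jB_j^\prime<\infty$ and uniform boundedness in $j$ follow at once from $\sum_j2^jN_j<\infty$ and $\sum_j2^jP_j<\infty$, assumed in property 3(b) of Definition~\ref{def:rlsw} and in the additional property of Definition~\ref{def:rlswc}.

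The one point demanding care—and the only genuine departure from Lemma~\ref{lem:lipz}—is the simultaneous displacement of \emph{both} replicate arguments in $\rho_j$: whereas in the time case only a single coordinate moves, here I must first separate the joint shift into two successive one-coordinate shifts before the coordinatewise Lipschitz constant $P_j^z$ can be invoked, which is the source of the harmless factor of two absorbed into $B_j^\prime$. Everything else is a routine transcription of the Lemma~\ref{lem:lipz} argument with $N_j$ in place of $L_j$, $P_j$ in place of $Q_j$, and $R$ in place of $T$.
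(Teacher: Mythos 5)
Your proof is correct and follows essentially the same route as the paper's: split the simultaneous shift of both replicate arguments of $\rho_j$ into coordinate-wise steps, invoke the replicate-direction Lipschitz constants $N_j^z$ and $P_j^z$ for the three factors, apply the product rule for bounded Lipschitz functions, and deduce the summability of $\{B_j'\}$ from that of $\{N_j\}$ and $\{P_j\}$. If anything, you are more explicit than the paper about where boundedness enters the product rule ($|\Wt_j|\le C$ via~\eqref{eq:unifWc} and $|\rho_j|\le 1$), a point the paper compresses into a citation of standard Lipschitz function theory.
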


Note that the above result effectively states that the replicate cross-spectrum, \i.e. $S_{j}(z,\nu+\cdotp,\nu'+ \cdotp)$, is Lipschitz continuous in the rescaled replicate arguments.

\begin{proof}
Under the assumptions of Definition~\ref{def:rlswc}, for any rescaled time $z$ and replicates respectively $\nu'$, $\nu$, the limiting coherence functions $\rho_{j}(z, \cdotp, \nu')$ and $\rho_{j}(z, \nu, \cdotp)$  are Lipschitz continuous with Lipschitz constants $\{P_j^z\}$ and are defined on a compact interval. Then it immediately follows that there exists a bounded constant $C$ such that $\left|\rho_{j}(z, \nu+a, \nu'+a) - \rho_{j}(z, \nu, \nu') \right| \leq C \, P_{j}^{z} |a|$ for any $a$ such that $\nu+a, \nu'+a \in (0,1)$.

Specifically, the Lipschitz continuity assumption in rescaled replicate-dimension implies that for each $z$ we have $\left|\Wt_{j}(z,(r+s)/R)- \Wt_{j}(z,r/R)\right| \leq |s| N_{j}^{z}R^{-1}$, $\left|\Wt_{j}(z,(r'+s)/R)- \Wt_{j}(z,r'/R)\right| \leq |s| N_{j}^{z}R^{-1}$ and\\ $\left|\rho_{j}(z,(r+s)/R,(r'+s)/R)- \rho_{j}(z,r/R,r'/R)\right| \leq |s| P_{j}^{z}R^{-1}$ at any replicates $r$, $r'$ and $s$.

Then, from standard Lipschitz function theory it follows that their product $S_{j}(z, \nu+\cdotp, \nu'+\cdotp)=\Wt_{j}(z,\nu+\cdotp)\Wt_{j}(z,\nu'+\cdotp)\rho_{j}(z, \nu+\cdotp, \nu'+\cdotp)$ is also Lipschitz continuous (the desired result in equation~\eqref{eq:crosslipr}) with constants $B^\prime_{j}=\sup_{z}\mbox{max}\{N_{j}^{z},P_{j}^{z}\}$. The properties of the $\{B^\prime_j\}$ follow immediately from the similar properties of the individual Lipschitz constants $N_j=\sup_{z}N_j^z$ and $P_j=\sup_{z}P_j^z$.
\end{proof}

\section{Proof of Proposition \ref{prop:autocovc}}\label{supp:proofs:autocovc}
\begin{proof}
The proof follows immediately from the amplitude and coherence approximation properties in Definition~\ref{def:rlswc} of the RLSW process, which yield
\beq \nonumber
\left|\omega_{j,\lfloor zT\rfloor;T}^{\lfloor \nu R\rfloor;R} \,
\omega_{j,\lfloor zT\rfloor;T}^{\lfloor \nu' R\rfloor;R} \,
\rho_{j,\lfloor zT\rfloor;T}^{\lfloor \nu R\rfloor, \lfloor \nu' R\rfloor;R}-
S_{j}(z,\nu,\nu')\right|=\ORD(C_j^\prime T^{-1})+ \ORD(D_j ^\prime R^{-1}),
\eeq
where $C_j^\prime$ and $D_j^\prime$ can be shown to also fulfill equations of the type in Definition~\ref{def:rlswc},~\ref{enum:seq1} and~\ref{enum:seq2}.

As the limiting amplitudes ($\Wt_{j}(\cdotp,\nu)$, $\Wt_{j}(\cdotp,\nu')$) and coherence ($\rho_{j}(\cdotp, \nu, \nu')$) are Lipschitz continuous functions in rescaled time for rescaled replicates $\nu$, $\nu'$ and are defined on a compact interval, it follows from standard Lipschitz function theory that their product ($S_{j}(\cdotp,\nu,\nu')$) is also Lipschitz continuous in rescaled time, as shown in Lemma~\ref{lem:lipz}. The desired conclusion then follows using the same type of arguments as employed in Section~\ref{supp:proofs:autocov}.
\end{proof}

\section{Proof of estimated coherence limits}\label{supp:proofs:rhoclim}
\begin{proof}
Let us first show that $\left|\It_{l,k;T}^{(r,r');R}\right| \leq \sqrt{\left|\It_{l,k;T}^{(r,r);R}\right|} \sqrt{\left|\It_{l,k;T}^{(r',r');R}\right|}$ for any scale $l$, time $k$ and replicates $r,\, r'$. Equivalently, we want to show that $\left(\It_{l,k;T}^{(r,r');R}\right)^2 \leq \left(\It_{l,k;T}^{(r,r);R}\right) \left(\It_{l,k;T}^{(r',r');R}\right)$.

Recalling that we use the same smoothing window and using the definitions of the smoothed periodogram in~\eqref{eq:reppersmooth} and of the raw periodogram in~\eqref{eq:rawIc}, the above is equivalent to having to show that
\beq\nonumber
\left( \sum_s d_{l,k;T}^{r+s;R}d_{l,k;T}^{r'+s;R} \right)^2\leq
\left( \sum_s \left(d_{l,k;T}^{r+s;R}\right)^2 \right)
\left( \sum_{s'} \left(d_{l,k;T}^{r'+s';R}\right)^2 \right),
\eeq
which indeed follows from the Cauchy-Schwarz inequality.

Recalling that the components in the estimated replicate--coherence equation~\eqref{eq:rhoc} are of the form $\hat{S}_{j,k;T}^{(r,r');R} = \sum_l A^{-1}_{j,l} \It_{l,k;T}^{(r,r');R}$ (as in equation~\eqref{eq:corrspecc}), next let us take
\beqann
\left| \sum_l A^{-1}_{j,l} \It_{l,k;T}^{(r,r');R} \right|^2
&\leq& \left( \sum_l \left|A^{-1}_{j,l}\right| \left|\It_{l,k;T}^{(r,r');R}\right| \right)^2, \mbox{ and using the relationship above }\\
&\leq& \left( \sum_l \left(\sqrt{\left|A^{-1}_{j,l}\right|}\right)^2 \sqrt{\left|\It_{l,k;T}^{(r,r);R}\right|}\sqrt{ \left|\It_{l,k;T}^{(r',r');R}\right|}
\right)^2, \\
&=& \left( \sum_l \left(\sqrt{\left|A^{-1}_{j,l}\right|}\sqrt{\left|\It_{l,k;T}^{(r,r);R}\right|}\right) \left(\sqrt{\left|A^{-1}_{j,l}\right|}\sqrt{\left|\It_{l,k;T}^{(r',r');R}\right|}\right)
\right)^2,\\
&\leq& \left( \sum_l A^{-1}_{j,l} \It_{l,k;T}^{(r,r);R} \right)
\left( \sum_{l'} A^{-1}_{j,l'} \It_{l',k;T}^{(r',r');R} \right),
\eeqann
where the last inequality resulted from the Cauchy-Schwarz inequality. Hence the values of the squared estimated replicate-coherence have the property $\left| \hat{\rho}_{j,k;T}^{(r,r');R} \right|\leq 1$.
\end{proof}

\end{appendix}

\end{spacing}
\end{document}